\def\bs{\boldsymbol}
\definecolor{red}{rgb}{1,0,0}
\definecolor{green}{rgb}{0,1,0}
\definecolor{blue}{rgb}{0,0,1}
\def\red{\textcolor{red}}
\definecolor{lxy}{RGB}{180,0,180}
\def\bs{\boldsymbol}
\DeclareMathOperator*{\argmin}{arg\,min}
\newcommand{\thetheorem}{{\thesection. \arabic{theorem}}}
\newcommand{\thelemma}{{\thesection. \arabic{lemma}}}
\newcommand{\theproposition}{{\thesection. \arabic{proposition}}}
\newcommand{\thecorollary}{{\thesection. \arabic{corollary}}}
\newtheorem{theorem}{{\sc Theorem}}
\newtheorem{lemma}{{\sc Lemma}}
\newtheorem{corollary}{{\sc Corollary}}
\begin{document}
	\renewcommand{\baselinestretch}{1.2}
	\markboth{\hfill{\footnotesize\rm Yueying Wang, Guannan Wang, Li Wang and R. Todd Ogden}\hfill}
	{\hfill {\footnotesize\rm Simultaneous Confidence Corridors for Mean Functions in Functional Data Analysis of Imaging Data} \hfill}
	\renewcommand{\thefootnote}{}
	$\ $\par \fontsize{10.95}{14pt plus.8pt minus .6pt}\selectfont
	\vspace{0.8pc} \centerline{\large\bf Simultaneous Confidence Corridors for Mean Functions}
	\centerline{\large\bf in Functional Data Analysis of Imaging Data}
	\vspace{.4cm} \centerline{Yueying Wang$^{a}$, Guannan Wang$^{b}$, Li Wang$^{a}$ and R. Todd Ogden$^{c}$
		\footnote{\emph{Address for correspondence}: Li Wang, Department of Statistics and the Statistical Laboratory, Iowa State University, Ames, IA, USA. Email: lilywang@iastate.edu}} \vspace{.4cm} \centerline{\it $^{a}$Iowa State University, $^{b}$College of William \& Mary and $^{c}$Columbia University} \vspace{.55cm}
	\fontsize{9}{11.5pt plus.8pt minus .6pt}\selectfont
	
\begin{quotation}
	\noindent \textit{Abstract:} Motivated by recent work involving the analysis of biomedical imaging data, we present a novel procedure for constructing simultaneous confidence corridors for the mean of imaging data. We propose to use flexible bivariate splines over triangulations to handle irregular domain of the images that is common in brain imaging studies and in other biomedical imaging applications. The proposed spline estimators of the mean functions are shown to be consistent and asymptotically normal under some regularity conditions. We also provide a computationally efficient estimator of the covariance function and derive its uniform consistency. The procedure is also extended to the two-sample case in which we focus on comparing the mean functions from two populations of imaging data. Through Monte Carlo simulation studies we examine the finite-sample performance of the proposed method. Finally, the proposed method is applied to analyze brain Positron Emission Tomography (PET) data in two different studies. One dataset used in preparation of this article was obtained from the Alzheimer's Disease Neuroimaging Initiative (ADNI) database.
	
	\vspace{9pt} \noindent \textit{Key words and phrases:} Bivariate splines, Functional principal component analysis, Image analysis, Semiparametric efficiency, Triangulation.
\end{quotation}

\fontsize{10.95}{14pt plus.8pt minus .6pt}\selectfont

\thispagestyle{empty}

\setcounter{chapter}{1} \label{SEC:introduction}
\setcounter{equation}{0}
\noindent \textbf{1. Introduction} \vskip 0.1in

In recent years, as digital technology has advanced significantly, valuable imaging data of body structures and organs can be easily collected during routine clinical practice. This new paradigm presents new opportunities to innovate in both research and clinical settings. Medical imaging technology has revolutionized health care over the past three decades, allowing doctors to find or detect tumors and other abnormalities and evaluate the effectiveness of treatment. For example, radiographic imaging is one of the effective and clinically useful tools for examining various body tissues to identify various conditions. Large-scale imaging data offers an incredibly rich data resource for scientific and medical discovery.

Functional data analysis provides modern analytical tools for imaging data, which can be viewed as realizations of random functions. Let $\Omega$ be a two-dimensional bounded domain, and $\bs{z}=(z_1,z_2)$ be a point in $\Omega$. The model we consider is:
\vspace{-0.15in}\begin{equation}
	Y_{i}(\bs{z})=\mu(\bs{z})+\eta_{i}(\bs{z})+\sigma(\bs{z})\varepsilon_{i}(\bs{z}), ~ i=1,\ldots,n,~ \bs{z}\in \Omega, \vspace{-0.15in}
	\label{model:FDA}
\end{equation}
which is one instance of the general {\it function-on-scalar regression model}. In model (\ref{model:FDA}), $Y_i(\bs{z})$ denotes the imaging measurement at location $\bs{z}\in \Omega$, $\eta_i(\bs{z})$ is a stochastic process indexed by $\bs{z}$ which characterizes subject-level image variations, $\sigma(\bs{z})$ is a positive deterministic function, and $\varepsilon(\bs{z})$ is a mean zero stochastic process. We assume that $\eta_i(\bs{z})$ and $\varepsilon_{i}(\bs{z})$ are mutually independent, $\eta_{i}(\bs{z})$ are i.i.d. copies of a $L_{2}$ stochastic process $\eta(\bs{z})$ with mean zero and covariance function $G_{\eta}(\bs{z},\bs{z}^{\prime})$, $\varepsilon_{i}(\bs{z})$ are i.i.d. instances of a stochastic process $\varepsilon(\bs{z})$ with mean zero and covariance function $\text{Cov}\{\varepsilon_{i}(\bs{z}), \varepsilon_{i}(\bs{z}')\}=I(\bs{z}=\bs{z}^{\prime})$.

For biomedical imaging data, the objects (e.g., tumor tissues, brain regions, etc.) appearing in the images are typically irregularly shaped. Many smoothing methods in the literature, such as tensor product smoothing, kernel smoothing, and wavelet smoothing, suffer from the problem of ``leakage" across the complex domains, i.e., poor estimation over difficult regions that is a result of smoothing inappropriately across boundaries of features.

In this paper, we endeavor to address these challenges by applying bivariate splines over triangulations \citep{Lai:Wang:13} to preserve important features (shape, smoothness) of imaging data.  Spline functions defined this way offer more flexibility and varying amounts of smoothness allowing us to better approximate the mean functions. We study the asymptotic properties of the spline estimators of $\mu(\bs{z})$ by using  bivariate penalized splines (BPS) defined on triangulations and show that our estimator is consistent and asymptotically normal.

In addition, when analyzing biomedical imaging data, such as brain images, typical questions lie in estimating the mean function, $\mu(\bs{z})$, together with quantifying the estimation uncertainty and making comparisons between populations. However, making statistically rigorous inference for imaging data is challenging, and one of the main obstacles is the complicated spatial correlation structure. The prevailing analytic technique, termed the ``mass univariate'' approach, involves regarding each pixel/voxel as a unit, and for each unit, making a traditional univariate statistical inference, such as a simple $t$ test. The obvious multiple comparisons issue can be dealt with in many ways; popular approaches include Bonferroni correction, random field theory \citep{Worsley:etal:04,Adler:Taylor:07,Siegmund:etal:11} and cluster threshold-based approach \citep{Forman:etal:95}.

However, many of the multiple testing methods are ad hoc methods, which involve setting the threshold by eye, based on the practitioner's experience and knowledge. Our simulation study in Web Appendix A also demonstrates that those ad hoc methods heavily depend on the choice of the threshold. In this paper, we propose an alternative approach which treats the imaging data as an instance of functional data, regarded as being continuously defined but observed on a regular grid. \red{If we consider the imaging data as being functional, attention naturally turns from considering each pixel/voxel as the basic analytical unit and towards simultaneous, for instance, calculating simultaneous confidence corridors (SCCs; also called ``simultaneous confidence bands" or ``uniform confidence band/region").} As pointed out in \cite{Choi:Reimherr:18} and \cite{Degras:17}, conventional multiple comparison methods are less useful in the functional data setup because the infinite cardinality of the domain would lead to unbounded confidence regions.

In statistics, SCCs are vital and fundamental tools for inference on the global behavior of functions, see for example, \cite{Wang:Yang:2009}, \cite{Wang:Yang:10}, \cite{Krivobokova:Kneib:Claeskens:10}, \cite{Wang:Liu:Cheng:Yang:14} and \cite{Cai:Liu:Wang:Yang:19}. However, they have received relatively little attention in the literature of functional data analysis (FDA). Moreover, existing SCC work for FDA has concentrated on the one-dimensional case. For the development of SCCs for mean curves of functional data, see the simulation-based techniques \citep{Degras:11,Cao:Yang:Todem:12,Cao:2014,Zheng:Yang:Hardle:14,Degras:17,Cao:Wang:18}, the functional principal component (FPC) decomposition-based approach \citep{Goldsmith:etal:13} and the geometric approach by \cite{Choi:Reimherr:18} in Hilbert spaces. \cite{Zhu:Li:Kong:12} proposed SCCs for the regression coefficient functions for multivariate varying coefficient model for functional responses. \cite{Gu:Wang:Wolfgang:Yang:2014} and \cite{Chang:Lin:Ogden:2017} proposed the SCC for coefficient functions in the function-on-scalar regression model. However, there is scant literature on SCCs for imaging data or other more general 2D functions. Although the geometric method in \cite{Choi:Reimherr:18} can be used to construct SCCs in Hilbert spaces over rectangular domains, it doesn't work well for objects over complex domains with arbitrary shape, which are very common in biomedical imaging studies. In addition, the geometric method is conservative because it is essentially based on a modification of Scheff\'{e}'s method.

In this paper, we derive SCCs with exact coverage probability for the 2D functional mean function $\mu(\bs{z})$, $\bs{z}\in \Omega$, in (\ref{model:FDA}) via extreme value theory of Gaussian processes \citep{Adler:90} and approximating mean functions with bivariate splines. Our simulation studies indicate the proposed SCCs are computationally efficient and have the correct coverage probability for finite samples. We also show that the spline estimator and the accompanying SCC are asymptotically the same as if all the images had been observed without noise.

Motivated by the need to statistically quantify the difference between two imaging datasets that arise in medical imaging studies, we further consider two-sample inference for imaging data and extend our SCC construction procedure to a two-sample problem. Specifically, we focus on constructing SCC for the difference of the mean functions from two independent samples. The comparison of mean functions is particularly useful for the analysis of imaging data in some biomedical settings such as comparing imaging outcomes for groups randomized either to placebo or to active treatment.  Any mean differences may be localized and irregularly shaped, and so an estimation method should be flexible enough to allow for such differences. The approach developed here allows comparison of treatments simultaneously across the entire domain of interest.

We organize our paper as follows. Section 2 describes the BPS estimators, and establishes their asymptotic properties for imaging data. Section 3 proposes asymptotic pointwise confidence intervals and SCCs that are constructed based on the BPS estimators. In Section 4, we discuss how to estimate the unknown components involved in the SCC construction and other issues of implementation. Section 5 reports findings from a simulation study. In Section 6, we apply the proposed methods to two real brain imaging datasets. In Section 7, we conclude the article with some discussions. Technical proofs of the theoretical results and additional results from the simulation studies as well as the application are provided in the Appendices.

\setcounter{chapter}{2}  \renewcommand{\thetheorem}{2.\arabic{theorem}}
\renewcommand{\thelemma}{2.\arabic{lemma}}
\renewcommand{\theproposition}{2.\arabic{proposition}}
\renewcommand{\thetable}{2.\arabic{table}} \setcounter{table}{0} 
\renewcommand{\thefigure}{2.\arabic{figure}} \setcounter{figure}{0} 
\setcounter{equation}{0} \setcounter{lemma}{0} \setcounter{theorem}{0}
\setcounter{proposition}{0}\setcounter{corollary}{0}
\vskip .12in \noindent \textbf{2. Models and Estimation Method} 
\label{SEC:method}

In practice, the functional imaging response variable, $Y_i(\cdot)$, is only measured on a regular grid of pixels, $\bs{z}_j\in \Omega,~j=1,\ldots,N$.  For notation simplicity, we let $Y_{ij}=Y_i(\bs{z}_j)$ be the imaging response of subject $i$ at location $j$, and the actual data set consists of $\{(Y_{ij}, \bs{z}_j)\}$, $i=1,\ldots,n$, $j=1,\ldots,N$, which can be modeled as
\begin{equation}
	Y_{ij}=\mu(\bs{z}_{j})+\eta_{i}(\bs{z}_{j})+\sigma(\bs{z}_j)\varepsilon_{ij}.
	\label{model2}
\end{equation}

\vskip .10in \noindent \textbf{2.1. Bivariate Spline Basis Approximation over Triangulations} \vskip .10in

For model (\ref{model2}), we first consider the estimation of the mean function, $\mu(\cdot)$. Medical imaging data are typically observed on an irregular domain $\Omega$, and thus triangulation is an effective strategy to handle such type of data. We approximate the mean function in (\ref{model2}) by the bivariate splines that are piecewise polynomial functions over a 2D triangulated domain; see \cite{Lai:Wang:13}. In the following, we briefly introduce the techniques of triangulations and describe the BPS smoothing method.

Triangulation is an effective tool for handling data distributed on irregular regions with complex boundaries and/or interior holes. In the following we use $T$ to denote a triangle which is a convex hull of three points that are not collinear. A collection $\triangle=\{T_1,\ldots,T_M\}$ of $M$ triangles is called a triangulation of $\Omega=\cup_{m=1}^{M} T_{m}$ if any nonempty intersection between a pair of triangles in $\triangle$ is either a shared vertex or a shared edge. In the rest of the paper, we assume that all points $\bs{z}_j$'s lie in the interior of some triangle in $\triangle$, i.e., they are not on edges or vertices of triangles in $\triangle$.

Given a triangle $T\in \triangle$, let $|T|$ be its longest edge length, and $\varrho_{T}$ be the radius of the largest disk which can be inscribed in $T$. Define the shape parameter of $T$ as the ratio $\pi_{T}=|T|/\varrho_T$. When $\pi_T$ is small, the triangles are relatively uniform in the sense that all angles of triangles in the triangulation $\triangle$ are roughly the same. Denote the size of $\triangle$ by $|\triangle|:=\max \{|T|,T \in \triangle \}$, i.e., the length of the longest edge of all triangles in $\triangle$.

For an integer $r\geq 0$, let $\mathcal{C}^r(\Omega)$ be the collection of all $r$-th continuously differentiable functions over $\Omega$. Given a triangulation $\triangle$, let $\mathcal{S}_{d}^{r}(\triangle )=\{s\in \mathcal{C}^{r}(\Omega ):s|_{T}\in \mathbb{P}_{d}(T), T \in \triangle \}$ be a spline space of degree $d$ and smoothness $r$ over triangulation $\triangle $, where $s|_{T}$ is the polynomial piece of spline $s$ restricted on triangle $T$, and $\mathbb{P}_{d}$ is the space of all polynomials of degree less than or equal to $d$.  We use Bernstein basis polynomials to represent the bivariate splines.  For any triangle $T \in \triangle$ and any fixed point $\bs{z} \in \Omega$, let $b_1$, $b_2$ and $b_3$ be the barycentric coordinates of $\bs{z}$ relative to $T$. Then, the Bernstein basis polynomials of degree $d$ relative to triangle $T$ is defined as $B_{ijk}^{T,d}(\bs{z})=\frac{d!}{i!j!k!}b_1^ib_2^jb_3^k$, $i+j+k=d$. Let $\{B_{m}\}_{m \in \mathcal{M}}$ be the set of degree-$d$ bivariate Bernstein basis polynomials for $\mathcal{S}_{d}^{r}(\triangle)$, where $\mathcal{M}$ stands for an index set of Bernstein basis polynomials. Denote by $\mathbf{B}$ the evaluation matrix of Bernstein basis polynomials, where the $j$-th row of $\mathbf{B}$ is given by $\mathbf{B}^{\top}(\bs{z}_{j})=\{B_{m}(\bs{z}_{j}), m\in \mathcal{M}\}$, for $j=1,\ldots, N$. We can approximate the mean function $\mu(\bs{z})$ by $\mu(\bs{z})\approx \mathbf{B}^{\top}(\bs{z})\bs{\gamma}$, where $\bs{\gamma}^{\top} =(\gamma_{m},m \in \mathcal{M})$ is the spline coefficient vector. The above bivariate spline basis can be easily constructed via the R package \texttt{BPST}.

To define the penalized spline method, for any function $g(\bs{z})$ and direction $z_{h}$, $h=1,2$, let $\nabla_{z_{h}}^{v}g(\bs{z})$ denote the $v$-th order derivative in the direction $z_{h}$ at the point $\bs{z}$. We consider the following penalized least squares problem:
\begin{equation*}
	\min_{g\in S_d^r(\triangle)} \sum_{i=1}^{n}\sum_{j=1}^{N}\left\{Y_{ij}-g(\bs{z}_{j})
	\right\}^{2}+\rho_{n}\mathcal{E}(g),
\end{equation*}
where
$\mathcal{E}(s)= \sum_{T\in\triangle}\int_{T} \sum_{i+j=2}
\binom{2}{i}(\nabla_{z_{1}}^{i}\nabla_{z_{2}}^{j}s)^{2}dz_{1}dz_{2}$ is the roughness penalty, and $\rho_{n}$ is the roughness penalty parameter.

To meet the smoothness requirement of the splines, we need to impose some linear constraints on the spline coefficients $\bs{\gamma}$: $\mathbf{H}\bs{\gamma}=\mathbf{0}$ to be specific. See Section B.2 of the Supplementary Material of \cite{Yu:etal:19} for a simple example of $\mathbf{H}$. Thus, we have to minimize
\begin{equation*}
	\sum_{i=1}^{n}\sum_{j=1}^{N}\left\{Y_{ij}-\mathbf{B}^{\top}(\bs{z}_{j})
	\bs{\gamma}\right\}^{2}+\rho_{n}\bs{\gamma}^{\top}\mathbf{P}\bs{\gamma}, \mathrm{~subject~to~} \mathbf{H}\bs{\gamma}=0,
\end{equation*}
where $\mathbf{P}$ is the block diagonal penalty matrix satisfying $\bs{\gamma}^{\top}\mathbf{P}\bs{\gamma}=\mathcal{E}(\mathbf{B}\bs{\gamma})$.

We first remove the constraint via QR decomposition of $\mathbf{H}^{\top}$:
$\mathbf{H}^{\top}=\mathbf{Q}\mathbf{R}=(\mathbf{Q}_{1}~\mathbf{Q}_{2})
\binom{\mathbf{R}_{1}}{\mathbf{R}_{2}}$, where $\mathbf{Q}$ is orthogonal and $\mathbf{R}$ is upper triangular, the submatrix $\mathbf{Q}_{1}$ is the first $p$ columns of $\mathbf{Q}$, where $p$ is the rank of $\mathbf{H}$, and $\mathbf{R}_{2}$ is a matrix of zeros. Next, we reparametrize using $\bs{\gamma} = \mathbf{Q}_2\bs{\theta}$ for some $\bs{\theta}$, then it is guaranteed that $\mathbf{H}\bs{\gamma}= 0$. The minimization problem is thus converted to a conventional unrestricted penalized regression problem:
\begin{equation}
	\sum_{i=1}^{n}\sum_{j=1}^{N}\left\{Y_{ij}-\widetilde{\mathbf{B}}^{\top}(\bs{z}_{j})
	\mathbf{Q}_2\bs{\theta}\right\}^{2}+\rho_{n}\bs{\theta}^{\top}
	\mathbf{Q}_{2}^{\top}\mathbf{P}\mathbf{Q}_{2}\bs{\theta},
	\label{EQ:PLS}
\end{equation}
where $\widetilde{\mathbf{B}}(\bs{z})=\mathbf{Q}_2^{\top}\mathbf{B}(\bs{z})$.

Denote $\bar{Y}_{\cdot,j} = \frac{1}{n} \sum_{i=1}^{n} Y_{ij}$,
$\bar{\mathbf{Y}}=(\bar{Y}_{\cdot,1},\ldots, \bar{Y}_{\cdot,N})^{\top}$, $\mathbf{U} = \mathbf{B} \mathbf{Q}_2$, and $\mathbf{D}=\mathbf{Q}_2^{\top} \mathbf{P} \mathbf{Q}_2$. Then, minimizing (\ref{EQ:PLS}) is equivalent to minimizing
\begin{equation*}
	\left\|\bar{\mathbf{Y}}-\mathbf{B}\mathbf{Q}_2 \bs{\theta}\right\|^{2}+\frac{\rho_{n}}{n}\bs{\theta}^{\top}
	\mathbf{Q}_{2}^{\top}\mathbf{P}\mathbf{Q}_{2}\bs{\theta}=\left\|\bar{\mathbf{Y}}-\mathbf{U} \bs{\theta}\right\|^{2}+\frac{\rho_{n}}{n}\bs{\theta}^{\top} \mathbf{D}\bs{\theta},
\end{equation*} and the solution is given by
$
\widehat{\bs{\theta}}
=\{\mathbf{U}^{\top}\mathbf{U} + n^{-1} \rho_{n}\mathbf{D}\}^{-1} \mathbf{U}^{\top} \bar{\mathbf{Y}}.
$
Thus, the estimator of $\bs{\gamma}$ and $\mu(\cdot)$ are:
$
\widehat{\bs{\gamma}}=\mathbf{Q}_{2}\widehat{\bs{\theta}}, ~~ \widehat{\mu}(\bs{z})=\mathbf{B}(\bs{z})^{\top}\widehat{\bs{\gamma}}.
$

\vskip .12in \noindent \textbf{2.2. Functional Principal Component Analysis} \vskip .10in

For the second component, $\eta_i(\bs{z})$, in model (\ref{model2}), we consider a spectral decomposition of its covariance function $G_{\eta}(\bs{z}, \bs{z}^{\prime})$. Denote the eigenvalue and eigenfunction sequences of the covariance operator $G_{\eta}(\bs{z}, \bs{z}^{\prime})$ as $\left\{\lambda_{k}\right\}_{k=1}^{\infty}$ and $\left\{\psi_{k}(\bs{z})\right\}_{k=1}^{\infty}$, in which $\lambda_{1}\geq \lambda_{2}\geq \cdots \geq 0$, $\sum_{k=1}^{\infty}\lambda_{k}<\infty $, and $\left\{\psi_{k}\right\}_{k=1}^{\infty}$ form an orthonormal basis of $L_{2}\left(\Omega\right) $. It follows from spectral theory that $G_{\eta}(\bs{z},\bs{z}^{\prime}) =\sum_{k=1}^{\infty}\lambda_{k}\psi_{k}(\bs{z})\psi_{k}(\bs{z}^{\prime})$. The $i$th stochastic process $\left\{\eta_{i}(\bs{z}), \bs{z} \in \Omega\right\}$ allows the Karhunen-Lo\'{e}ve $L_{2}$ representation \cite{Sang:Huang:2012}:
$\eta_{i}(\bs{z})=\sum_{k=1}^{\infty}\xi_{ik}\phi_{k}(\bs{z})$, where $\phi_{k}(\bs{z})=\sqrt{\lambda_k}\psi_{k}(\bs{z})$, and the coefficients  $\xi_{ik}$'s are uncorrelated random variables with mean $0$ and $E(\xi_{ik}\xi_{ik^{\prime}})=I(k=k^{\prime})$, referred to as the $k$th functional principal component score of the $i$th subject in classical functional principal component analysis (FPCA). Thus, the response measurements in (\ref{model2}) can be represented as follows
\begin{equation}
	Y_{ij}=\mu(\bs{z}_{j})
	+\sum_{k=1}^{\infty}\xi_{ik}\phi_{k}(\bs{z}_{j})+ \sigma(\bs{z}_j)\varepsilon_{ij}.
	\label{model3}
\end{equation}

Next, we describe the method of estimating the FPCA: the variance-covariance function $G_{\eta}(\bs{z}, \bs{z}^{\prime})$ and its eigenvalues and eigenfunctions. For any $i=1,\ldots,n$, $j=1,\ldots,N$, let $\widehat{R}_{ij}=Y_{ij}- \widehat{\mu}(\bs{z}_j)$ be the residual. We  estimate $\eta_{i}(\bs{z})$ individually by employing the bivariate spline smoothing method to $\{(\widehat{R}_{ij},\bs{z}_{j})\}_{j=1}^{N}$. To be more specific, for each $i=1,\ldots, n$, we define the spline estimator of $\eta_{i}(\bs{z})$ as
\begin{equation*}
	\widehat{\eta}_{i}(\bs{z})=\argmin_{g_{i}\in \mathcal{S}_{d}^{r}(\triangle^{*})} \sum_{j=1}^{N}\left\{\widehat{R}_{ij}-g_{i}(\bs{z}_{j})\right\}^{2}+\rho^{*}_{n}\mathcal{E}(g_{i}),
\end{equation*}
where the triangulation $\triangle^{*}$ and smoothness penalty $\rho^{*}_{n}$ may be different from those introduced in Section 2 when estimating $\mu(\bs{z})$.
Next, define the estimator of $G_{\eta}(\bs{z},\bs{z}^{\prime})$ as
\begin{equation}
	\widehat{G}_{\eta}(\bs{z},\bs{z}^{\prime})=n^{-1}\sum_{i=1}^{n}
	\widehat{\eta}_i(\bs{z})\widehat{\eta}_i(\bs{z}^{\prime}),
	\label{DEF:G_eta_hat}
\end{equation}
and we estimate the eigenfunctions $\psi_{k}(\cdot)$ using the following eigenequations:
\begin{equation}
	\int_{\Omega}\widehat{G}_{\eta}(\bs{z},\bs{z}^{\prime})\widehat{\psi}_{k}(\bs{z})d\bs{z} =\widehat{\lambda}_{k}\widehat{\psi}%
	_{k} (\bs{z}^{\prime}),
	\label{EQ:eigenfunction}
\end{equation}
where $\widehat{\psi}_{k}$'s are subject to $\int_{\Omega}\widehat{\psi}_{k}^{2}\left(
\bs{z}\right) d\bs{z}=1$ and $\int_{\Omega}\widehat{\psi}_{k}(\bs{z}) \widehat{\psi}%
_{k^{\prime }}(\bs{z})d\bs{z}=0$ for $k^{\prime }<k$. If $N$ is
sufficiently large, the left hand side of (\ref{EQ:eigenfunction}) can be
approximated by $\sum_{j=1}^{N}\widehat{G} (\bs{z}_{j},\bs{z}_{j^{\prime}})\widehat{\psi}_{k}\left( \bs{z}_{j}\right) A\left( \bs{z}_{j}\right)$,
where $A\left( \bs{z}_{j}\right)$ is the area of the pixel $\bs{z}_{j}$.

\vskip .12in \noindent \textbf{2.2. Theoretical Properties of the Estimators} \vskip .10in

In this section, we investigate the asymptotic properties for the proposed bivariate spline estimators. To discuss these properties, we introduce some notation of norms. For any function $g$ over the closure of domain $\Omega$, denote $\left\| g\right\|_{L ^2(\Omega)}^{2}=\int_{\Omega} g^2(\bs{z})d\bs{z}$ the regular $L_2$ norm of $g$, and $\Vert g\Vert_{\infty ,\Omega} =\sup_{\bs{z}\in \Omega} |g(\bs{z})|$ the supremum norm of $g$. Let $|g|_{\upsilon,\infty,\Omega}=\max_{i+j=\upsilon}\Vert \nabla_{z_{1}}^{i}\nabla_{z_{2}}^{j}g\Vert_{\infty ,\Omega}$ be the maximum norms of all the $\upsilon$th order derivatives of $g$ over $\Omega $. Let $\mathcal{W}^{d,\infty}(\Omega)=\left\{g:|g|_{k,\infty, \Omega}<\infty, 0\le k\le d \right\}$ be the standard Sobolev space. Given random variables $S_{n}$ for $n\geq 1$, we write $S_{n}=O_{P}(b_{n})$ if $\lim\nolimits_{c\rightarrow \infty}\lim \sup_{n}P(|S_{n}|\geq cb_{n})=0$. Similarly, we write $S_{n}=o_{P}(b_{n})$ if $\lim_{n}P(|S_{n}|\geq cb_{n})=0$, for any constant $c>0$. We next introduce some technical conditions.
\begin{itemize}
	\item[(A1)] The bivariate function $\mu(\cdot)\in \mathcal{W}^{d+1,\infty}(\Omega )$ for an integer $d \ge 1$.
	
	\item[(A2)] For any $k\geq 1$, $\xi_{ik}$'s are i.i.d. random variables with mean $0$, variance $1$ and $E\left\vert \xi _{ik}\right\vert ^{4+\delta_{1}}<+\infty$ for some constant $\delta _{1}>0$. For any $i=1,\ldots,n$, $j=1,\ldots,N$, $\varepsilon_{ij}$'s are i.i.d with mean 0, variance 1, and $E\left\vert \varepsilon _{ij}\right\vert ^{4+\delta _{2}}<+\infty $ for some constant $\delta_{2} > 0$.
	
	\item[(A3)] The function $\sigma \in \mathcal{C}^{(1)}(\Omega)$ with $0<c_{\sigma}\leq \sigma(\bs{z}) \leq C_{\sigma}\leq \infty$ for any $\bs{z}\in \Omega$; for any $k$, $\psi_{k}\in \mathcal{C}^{(1)}(\Omega)$ and the variance function $0<c_{G}\leq G_{\eta}(\bs{z},\bs{z})\leq C_{G}\leq \infty$, for any $\bs{z}\in \Omega$.
	
	\item[(A4)] The triangulation is $\pi$-quasi-uniform, that is, there exists a positive constant $\pi$ such that $(\min_{T\in \triangle}\varrho_T)^{-1} |\triangle| \leq \pi$.
	
	\item[(A5)] As $N \rightarrow \infty$, $n \rightarrow \infty$, $N^{-1}n^{1/(d+1)}\log (n) \rightarrow 0$, the triangulation size satisfies that $N^{-1}\log (n) \ll |\triangle|^2\ll \min\{n^{(2+\delta_2)/(4+\delta_2)}N^{-1}\log^{-1}(n), n^{-1/(d+1)}\}$, and the smoothing penalty parameter $\rho_n $ satisfies $\rho_n =o \{\min(n^{1/2}N|\triangle|^{3},nN^{3/2}|\triangle|^{6},nN|\triangle|^{5})\}$.
	
	\item[(A6)] For $k \in \{1,\ldots, \kappa\}$ and a nonnegative integer $s$, $\phi_k(\bs{z}) \in \mathcal{W}^{s+1,\infty}(\Omega ) ,\ \sum_{k=1}^{\kappa} \|\phi_k\|_{\infty}<\infty$. $\frac{\rho_{n}}{nN|\triangle|^{3}}\sum_{k=1}^{\kappa_n} \|\phi_k\|_{2,\infty}=o(1)$, $\left(1+\frac{\rho_{n}}{nN|\triangle|^{5}}\right) \sum_{k=1}^{\kappa_n} |\triangle|^{s+1}\|\phi_k\|_{s+1,\infty}=o(1)$ for a sequence $\{\kappa_n\}_{n=1}^{\infty}$ of increasing integers, with $\lim_{n \rightarrow \infty} \kappa_n = \kappa$, as $n\rightarrow\infty$. Meanwhile, $\sum_{k=\kappa_n+1}^{\kappa} \|\phi_k\|_{\infty} = o(1)$. The number $\kappa $ of nonzero eigenvalues is finite or $\kappa $ is infinite.
	
	\item[(A7)] As $N \rightarrow \infty$, $n \rightarrow \infty$, for some $0<\delta_3<1$, $N^{-1}n^{1/(d+1)+\delta_3} \rightarrow 0$, $N|\triangle_{\eta}|^2\rightarrow \infty$, $n^2|\triangle_{\eta}|^4/\log n \to \infty$.
\end{itemize}
\vspace{-0.08in}
The above assumptions are mild conditions that can be satisfied in many practical situations.  Assumption (A1) is  typically assumed about the true underlying functions in the nonparametric estimation literature. Assumption (A1) can be relaxed by only requiring $\mu(\cdot)\in \mathcal{C}^{(0)}(\Omega)$ if the  imaging data has sharp edges. Assumptions (A2) and (A3) are common conditions used in the literature; see for example, \cite{Cao:Yang:Todem:12}. Assumption (A4) suggests the use of more uniform triangulations with smaller shape parameters. Assumption (A5) describes the requirement of the growth rate of the dimension of the spline spaces relative to the sample size and the image resolution. 

The following theorem provides the $L_2$ and uniform convergence rate of $\widehat{\mu}(\cdot)$. The detailed proofs of this theorem are given in Web Appendix B.2.
\begin{theorem}
	\label{THM:convergence}
	Suppose Assumptions (A1)--(A4) hold and $N^{1/2}|\triangle|\rightarrow \infty$ as $N\rightarrow \infty$.  Then the bivariate penalized spline estimator $\widehat{\mu}(\cdot)$ is consistent and satisfies
	\begin{equation*}
		\Vert \widehat{\mu}-\mu \Vert_{L_2}=O_{P}\left\{\frac{\rho_{n}}{nN|\triangle|^{3}}\|\mu\|_{2,\infty}
		+\left(1+\frac{\rho_{n}}{nN|\triangle|^{5}}\right)
		|\triangle|^{d +1}\|\mu\|_{d+1,\infty}+\frac{1}{\sqrt{n}}+\frac{1}{\sqrt{nN}|\triangle|}\right\}.
	\end{equation*}
	In addition, if Assumptions (A1)--(A5) hold, we have
	$\Vert \widehat{\mu}-\mu \Vert_{\infty}=O_{P}\{(n^{-1}\log (n))^{1/2}\}$ and $\Vert \widehat{\mu}-\mu \Vert_{L _2}=O_{P}(n^{-1/2})$.
\end{theorem}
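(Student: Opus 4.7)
The plan is to bound $\widehat\mu - \mu$ by decomposing it into a deterministic spline approximation error, a penalty-induced shrinkage bias, and two stochastic components corresponding to the subject-level process $\eta_i$ and the measurement noise $\varepsilon_{ij}$, then to balance these four pieces using the growth conditions in (A4)--(A5). The backbone is the standard ridge-type identity: writing $\mu(\bs z) \approx \mathbf B(\bs z)^\top \bs\gamma^*$ with $\bs\gamma^* = \mathbf Q_2 \bs\theta^*$ a quasi-interpolant from $\mathcal S_d^r(\triangle)$ supplied by Lai and Wang's (2013) bivariate spline approximation theory, one has
\begin{equation*}
\widehat{\bs\theta} - \bs\theta^* = \bigl(\mathbf U^\top \mathbf U + n^{-1}\rho_n \mathbf D\bigr)^{-1}\!\Bigl\{\mathbf U^\top\!\bigl(\bs\mu - \mathbf U \bs\theta^*\bigr) + \mathbf U^\top \bar{\bs e} - n^{-1}\rho_n \mathbf D \bs\theta^*\Bigr\},
\end{equation*}
where $\bar{\bs e}_j = \bar\eta_\cdot(\bs z_j) + \sigma(\bs z_j)\bar\varepsilon_{\cdot,j}$. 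Translating this back through $\widehat\mu(\bs z) = \mathbf B(\bs z)^\top \mathbf Q_2 \widehat{\bs\theta}$ and adding $\mu^*-\mu$ yields the four-term decomposition that matches the four summands in the theorem.

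First I would record the eigenvalue/stability lemma for the triangulated spline basis: under (A4), the quasi-uniform triangulation gives $\|\mathbf U \bs\theta\|_2^2 \asymp N|\triangle|^2\|\bs\theta\|_2^2$, together with the inverse Markov bound $\bs\theta^\top \mathbf D \bs\theta \lesssim |\triangle|^{-2}\|\mathbf U\bs\theta\|_2^2 / N = |\triangle|^{0} \|\bs\theta\|_2^2$ (so the penalty is dominated by $\mathbf U^\top \mathbf U$ whenever $n^{-1}\rho_n |\triangle|^{-2}/(N|\triangle|^2) = o(1)$, which is exactly the regime of (A5)). Combining these with the quasi-interpolation inequalities $\|\mu - \mu^*\|_{L_2}, \|\mu - \mu^*\|_\infty \lesssim |\triangle|^{d+1}\|\mu\|_{d+1,\infty}$ and $\bs\theta^{*\top}\mathbf D\bs\theta^* \lesssim \|\mu\|_{2,\infty}^2$ handles the deterministic parts: the approximation error delivers the $|\triangle|^{d+1}\|\mu\|_{d+1,\infty}$ piece, and the penalty bias $\|(\mathbf U^\top \mathbf U + n^{-1}\rho_n \mathbf D)^{-1} n^{-1}\rho_n \mathbf D \bs\theta^*\|$ is bounded via the eigenvalue estimate by $\rho_n/(nN|\triangle|^3)\,\|\mu\|_{2,\infty}$, together with the inflation factor $1 + \rho_n/(nN|\triangle|^5)$ on the leading bias.

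For the stochastic piece $\mathbf U^\top \bar{\bs e}$ I would split it as a contribution from $\bar\eta_\cdot$ and one from $\bar\varepsilon_{\cdot}$. Since the $\eta_i$ are i.i.d.\ $L_2$ processes and $\bar\eta_\cdot$ is an average of $n$ of them, $E\|\mathbf U^\top \bar{\bs\eta}\|_2^2$ is controlled by $n^{-1}$ times the trace of $\mathbf U^\top G_\eta \mathbf U$ evaluated on the grid, which gives the $n^{-1/2}$ rate after multiplication by $(\mathbf U^\top\mathbf U)^{-1}$. The $\bar\varepsilon_{\cdot,j}$ piece contributes variance of order $(nN)^{-1}$ per grid point and yields the $(\sqrt{nN}|\triangle|)^{-1}$ rate because of the $N|\triangle|^2$ scaling of $\mathbf U^\top\mathbf U$. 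Assembling the four bounds establishes the first displayed inequality.

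For the refined statement under (A5), my plan is to insert the size conditions on $|\triangle|$ and $\rho_n$ into the generic bound: (A5) forces $|\triangle|^{d+1}\|\mu\|_{d+1,\infty} = o(n^{-1/2})$, $\rho_n/(nN|\triangle|^3) = o(n^{-1/2})$, and $(\sqrt{nN}|\triangle|)^{-1} = o(n^{-1/2})$, leaving only the $n^{-1/2}$ stochastic term, which yields the $L_2$ rate. The uniform rate requires more work: for the variance piece I would apply the $\ell_\infty$-to-$L_2$ equivalence $\sup_{\bs z}|\mathbf B(\bs z)^\top \mathbf Q_2 \bs v| \lesssim \|\bs v\|_\infty$ on the coefficient side combined with $\|\bs v\|_\infty \le \|\bs v\|_2$, together with a Bernstein/maximal-inequality argument of the kind used by Cao, Yang and Todem (2012) to bound $\sup_{\bs z}|\bar\eta_\cdot(\bs z)|$ and $\sup_{j}|\bar\varepsilon_{\cdot,j}|$; the fourth-moment condition in (A2) and the constraint $N^{-1}n^{1/(d+1)}\log n \to 0$ in (A5) are exactly what are needed to push the supremum rate to $(n^{-1}\log n)^{1/2}$. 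The main obstacle I anticipate is this last step: carefully separating the contributions of the two noise sources at the supremum level without losing the logarithmic factor, and showing that the penalty inflation and the spline bias do not spoil the $n^{-1/2}$ rate when $|\triangle|$ is chosen within the window allowed by (A5). The deterministic linear algebra on $\mathbf U^\top \mathbf U + n^{-1}\rho_n \mathbf D$ is routine once the stability lemmas are in place, but the uniform control of $(\mathbf U^\top \mathbf U)^{-1}\mathbf U^\top \bar{\bs e}$ over $\bs z \in \Omega$ is where the most delicate probabilistic reasoning lives.
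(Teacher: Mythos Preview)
Your overall strategy—decompose $\widehat\mu-\mu$ into approximation bias, penalty shrinkage, and the two stochastic pieces from $\eta$ and $\varepsilon$, then balance under (A5)—is the same as the paper's, and your handling of the stochastic terms (second moments for the $L_2$ rate, Bernstein-type truncation/maximal inequalities for the $L_\infty$ rate) is exactly what the paper carries out in its Lemmas~B.6, B.8 and B.9.

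The one substantive difference is in the deterministic bias. You pivot around a \emph{quasi-interpolant} $\mu^*=\widetilde{\mathbf B}^\top\bs\theta^*$ and use direct ridge algebra. The paper instead pivots around the \emph{unpenalized least-squares projection} $\widetilde\mu^o$ and uses an orthogonality argument: from $nN\langle\mu-\widehat\mu^o,g\rangle_N=\rho_n\langle\widehat\mu^o,g\rangle_{\mathcal E}$ and $\langle\mu-\widetilde\mu^o,g\rangle_N=0$ for all $g\in\mathcal S_d^r(\triangle)$ one obtains $nN\|\widetilde\mu^o-\widehat\mu^o\|_N^2=\rho_n\langle\widehat\mu^o,\widetilde\mu^o-\widehat\mu^o\rangle_{\mathcal E}$, whence $\|\widehat\mu^o\|_{\mathcal E}\le\|\widetilde\mu^o\|_{\mathcal E}$ and $\|\widetilde\mu^o-\widehat\mu^o\|_N\le(nN)^{-1}\rho_n\,\overline S_N\,\|\widetilde\mu^o\|_{\mathcal E}$. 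This buys two things. First, the cross-term $\mathbf U^\top(\bs\mu-\mathbf U\bs\theta^*)$ in your ridge identity vanishes identically by the normal equations, so only $\|\mu-\widetilde\mu^o\|_\infty$ (handled by Proposition~1 of Lai--Wang, 2013) and the penalty difference survive. Second, the inflation factor $1+\rho_n/(nN|\triangle|^5)$ falls out transparently from $\|\widetilde\mu^o\|_{\mathcal E}\le C(|\mu|_{2,\infty}+|\triangle|^{d-1}|\mu|_{d+1,\infty})$; in your route it comes from the same estimate on $\bs\theta^{*\top}\mathbf D\bs\theta^*=\|\mu^*\|_{\mathcal E}^2$, i.e.\ from the penalty-shrinkage term, not from the ``leading bias'' as you wrote.

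There is a small but genuine gap in your plan at the $L_\infty$ level: with a quasi-interpolant pivot, the cross-term $(\mathbf U^\top\mathbf U+n^{-1}\rho_n\mathbf D)^{-1}\mathbf U^\top(\bs\mu-\mathbf U\bs\theta^*)$ does not vanish, and a crude operator-norm bound costs a factor $|\triangle|^{-1}$, giving $|\triangle|^d$ rather than $|\triangle|^{d+1}$; under (A5) this is not $o(n^{-1/2})$. The fix is trivial—take $\bs\theta^*$ to be the unpenalized LS coefficient so the cross-term is exactly zero and invoke Lai--Wang for $\|\widetilde\mu^o-\mu\|_\infty$—which is precisely the paper's pivot choice. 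Also, a minor slip: the inverse Markov bound is $\bs\theta^\top\mathbf D\bs\theta\lesssim|\triangle|^{-4}\|\mathbf U\bs\theta\|^2/N$, not $|\triangle|^{-2}$, so $\lambda_{\max}(\mathbf D)\lesssim|\triangle|^{-2}$; your parenthetical uses the right exponent, so this appears to be a typo.
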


Theorem \ref{THM:Ghat-G} below characterizes the uniform weak convergence of $\widehat{G}_{\eta}(\bs{z},\bs{z}^{\prime})$ and the convergence of $\widehat{\psi}_{k}$ and $\widehat{\lambda}_k$.

\begin{theorem}
	\label{THM:Ghat-G}
	Under Assumptions (A1)--(A7), we have the following results:
	\begin{itemize}
		\item[(i)] The spline estimator $\widehat{G}_{\eta}(\bs{z},\bs{z}^{\prime})$ in (\ref{DEF:G_eta_hat}) uniformly converges to $G_{\eta}(\bs{z},\bs{z}^{\prime})$ in probability, i.e.,
		$\sup_{(\bs{z},\bs{z}^{\prime})\in \Omega^{2}}|\widehat{G}_{\eta}(\bs{z},\bs{z}^{\prime})
		-G_{\eta}(\bs{z},\bs{z}^{\prime})|=o_{P}(1)$.
		\item[(ii)] $\|\widehat{\psi}_{k}-\psi_{k}\|=o_{P}(1)$, $|\widehat{\lambda}_k-\lambda_k|=o_{P}(1)$, for $k=1,\ldots,\kappa$.
	\end{itemize}
\end{theorem}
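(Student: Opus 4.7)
The plan is to establish part (i) by decomposing $\widehat{\eta}_i$ according to the residual structure and then to deduce part (ii) from standard perturbation bounds for compact self-adjoint integral operators. Writing $\widehat{R}_{ij}=\eta_i(\bs{z}_j)+\sigma(\bs{z}_j)\varepsilon_{ij}+[\mu(\bs{z}_j)-\widehat{\mu}(\bs{z}_j)]$ and letting $\mathcal{S}_n$ denote the linear penalized spline smoother on $\triangle^{*}$ with penalty $\rho_n^{*}$, linearity of $\mathcal{S}_n$ gives
\begin{equation*}
\widehat{\eta}_i \;=\; \mathcal{S}_n(\eta_i) \;+\; \mathcal{S}_n(\sigma\varepsilon_i) \;+\; \mathcal{S}_n(\mu-\widehat{\mu}).
\end{equation*}
I would further split $\mathcal{S}_n(\eta_i)=\eta_i+[\mathcal{S}_n(\eta_i)-\eta_i]$, insert the Karhunen--Lo\`{e}ve expansion $\eta_i=\sum_{k=1}^{\kappa}\xi_{ik}\phi_k$ truncated at the level $\kappa_n$ from Assumption (A6), and invoke the bivariate spline approximation bounds of \cite{Lai:Wang:13} to control $\|\mathcal{S}_n(\phi_k)-\phi_k\|_{\infty}$ by a multiple of $|\triangle_\eta|^{s+1}\|\phi_k\|_{s+1,\infty}$; summability across $k\le\kappa_n$ and decay of the tail $\sum_{k>\kappa_n}\|\phi_k\|_{\infty}$ both follow from (A6).

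For the noise piece I would express $\mathcal{S}_n(\sigma\varepsilon_i)(\bs{z})=\sum_j w_{nj}(\bs{z})\sigma(\bs{z}_j)\varepsilon_{ij}$, where the influence functions $w_{nj}$ satisfy the usual bounds implied by (A4); together with the $4{+}\delta_2$ moment in (A2) and a Bernstein-type deviation inequality, this yields $\sup_{\bs{z}}|\mathcal{S}_n(\sigma\varepsilon_i)(\bs{z})|=O_P\{(N|\triangle_\eta|^2)^{-1/2}(\log n)^{1/2}\}=o_P(1)$ under (A7). For the plug-in piece, Theorem \ref{THM:convergence} gives $\|\widehat{\mu}-\mu\|_{\infty}=O_P\{(n^{-1}\log n)^{1/2}\}$, and since $\mathcal{S}_n$ has $\ell^{\infty}$ operator norm $O(1)$ on quasi-uniform triangulations, $\|\mathcal{S}_n(\mu-\widehat{\mu})\|_{\infty}=o_P(1)$. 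Expanding $n^{-1}\sum_i\widehat{\eta}_i(\bs{z})\widehat{\eta}_i(\bs{z}')$ and applying Cauchy--Schwarz to the cross and quadratic remainders, using $n^{-1}\sum_i\|\eta_i\|_{\infty}^{2}=O_P(1)$ (a consequence of (A2) together with $\sum_k\|\phi_k\|_{\infty}<\infty$ from (A6)), isolates the dominant term $n^{-1}\sum_i\eta_i(\bs{z})\eta_i(\bs{z}')-G_\eta(\bs{z},\bs{z}')$. Re-expanding in the KL basis reduces this to a finite double sum of $n^{-1}\sum_i\xi_{ik}\xi_{ik'}-\delta_{kk'}$ weighted by $\phi_k(\bs{z})\phi_{k'}(\bs{z}')$, which tends to zero uniformly in $(\bs{z},\bs{z}')$ in probability by the strong LLN combined with the summability from (A6) to control the truncation remainder.

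Part (ii) is then immediate from part (i): since $\Omega$ is bounded, $\sup_{\Omega^{2}}|\widehat{G}_\eta-G_\eta|=o_P(1)$ implies Hilbert--Schmidt convergence $\|\widehat{G}_\eta-G_\eta\|_{L_2(\Omega^{2})}=o_P(1)$ of the associated integral operators, so standard perturbation results for compact self-adjoint operators yield $|\widehat{\lambda}_k-\lambda_k|=o_P(1)$ and $\|\widehat{\psi}_k-\psi_k\|=o_P(1)$ for every fixed $k\le\kappa$, with signs chosen so that $\langle\widehat{\psi}_k,\psi_k\rangle\ge 0$. The main obstacle lies in tracking two distinct errors through the per-subject smoother: the noise $\sigma\varepsilon_i$, which does not average away within a single $\widehat{\eta}_i$, and the plug-in bias $\widehat{\mu}-\mu$ inherited from Section 2.1. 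Simultaneously reconciling these with the exact KL structure of $G_\eta$ requires the joint scaling in (A6)--(A7) to make $\triangle^{*}$ fine enough to resolve the eigenfunctions $\{\phi_k\}$ while keeping the subject-wise smoothed-noise variance of order $(N|\triangle_\eta|^2)^{-1}$ negligible.
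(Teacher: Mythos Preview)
Your proposal is correct and follows essentially the same route as the paper: the same linear decomposition of $\widehat{\eta}_i$ into smoothed-$\eta_i$, smoothed-noise, and plug-in pieces; the same use of bivariate spline approximation for $\|\mathcal{S}_n(\phi_k)-\phi_k\|_\infty$, a Bernstein/truncation argument for the noise, and Theorem \ref{THM:convergence} for the plug-in; and the same perturbation-theoretic passage from $\sup|\widehat{G}_\eta-G_\eta|=o_P(1)$ to eigenvalue/eigenfunction consistency (the paper invokes the expansion of Hall and Hosseini-Nasab and cites \cite{Yu:etal:18} for the cross and quadratic remainder bounds rather than spelling out a Cauchy--Schwarz argument). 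One small point to tighten: your bound $\sup_{\bs{z}}|\mathcal{S}_n(\sigma\varepsilon_i)(\bs{z})|=O_P\{(N|\triangle_\eta|^2)^{-1/2}(\log n)^{1/2}\}$ is stated for a fixed $i$, but your Cauchy--Schwarz step on the quadratic remainder needs $\sup_{\bs{z}} n^{-1}\sum_i \widetilde{\varepsilon}_i(\bs{z})^2=o_P(1)$, which requires a version uniform in $i$ (obtainable by a union bound over $i$ at the cost of an extra $\log n$ factor, which is where the condition $n^2|\triangle_\eta|^4/\log n\to\infty$ in (A7) enters).
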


Although, in theory, the Karhunen-Lo\'{e}ve representation of the covariance function consists of infinite number of terms. In applications, it is typical to truncate the spectral decomposition to an integer chosen so as to account for some predetermined proportion of the variance \cite{Hall:Muller:Wang:06,Li:Wang:Carroll:13}. One can select the number of principal component using the Akaike information criterion (AIC) suggested by \cite{Yao:Muller:Wang:05} or Bayesian information criterion (BIC) proposed by \cite{Li:Wang:Carroll:13}.

\setcounter{chapter}{3} \renewcommand{\thetheorem}{3.\arabic{theorem}}
\renewcommand{\thelemma}{3.\arabic{lemma}}
\renewcommand{\theproposition}{3.\arabic{proposition}}
\renewcommand{\thetable}{3.\arabic{table}} \setcounter{table}{0} 
\renewcommand{\thefigure}{3.\arabic{figure}} \setcounter{figure}{0} 
\setcounter{equation}{0} \setcounter{lemma}{0} \setcounter{theorem}{0}
\setcounter{proposition}{0}\setcounter{corollary}{0}
\vskip .12in \noindent \textbf{3. Simultaneous Confidence Corridors (SCCs)} \vskip 0.10in
\label{SEC:SCC}

\vskip .12in \noindent \textbf{3.1. One Sample} \vskip .10in

In this section, we develop the SCCs for the mean function $\mu(\cdot)$ in (\ref{model3}).

Let $G_{\eta} (\cdot,\cdot)$ be a positive definite function defined as $G_{\eta} (\bs{z},\bs{z}^{\prime})=\sum_{k=1}^{\kappa } \lambda_{k}\psi _{k} (\bs{z})\psi _{k} (\bs{z}^{\prime}) $, $\bs{z},\bs{z}^{\prime} \in \Omega$. Denote by $\zeta (\bs{z})$, $\bs{z}\in \Omega$ a standardized Gaussian process such
that $E\zeta \left(\bs{z}\right) = 0$, $E\zeta ^{2}(\bs{z}) = 1$ with covariance function $E\zeta(\bs{z})\zeta
(\bs{z}^{\prime}) =G_{\eta} \left(\bs{z},\bs{z}^{\prime}\right) \left\{ G_{\eta} \left(\bs{z},\bs{z}\right)
G_{\eta} \left(\bs{z}^{\prime},\bs{z}^{\prime}\right) \right\} ^{-1/2}$, $\bs{z},\bs{z}^{\prime}\in \Omega$. Denote by $q_{1-\alpha }$ the $100\left( 1-\alpha \right) ^{th}$ percentile of the distribution of the absolute maximum of $\zeta \left( \bs{z}\right) $, $\bs{z}\in \Omega$, i.e.
$P\left\{ \sup_{\bs{z}\in \Omega}\left| \zeta (\bs{z}) \right| \leq
q_{1-\alpha }\right\} =1-\alpha$, $\alpha \in \left( 0,1\right)$.

Define the ``oracle" estimator $\bar{\mu}(\bs{z}) = \mu(\bs{z}) + \frac{1}{n} \sum_{i=1}^n \eta_i(\bs{z})$. Of course this is infeasible due to the finite pixel grid $\{\bs{z}_j : j = 1,\ldots, N\}$ and the measurement error. The following theorem presents the asymptotic properties of $\bar{\mu}(\bs{z})$ and shows that the difference between the BPS estimator $\widehat{\mu}(\bs{z})$ and the ``oracle" smoother $\bar{\mu}(\bs{z})$ is uniformly bounded at an $o_P(n^{1/2})$ rate.
\begin{theorem}
	\label{THM:band}
	Under Assumptions (A1)--(A6), for any $\alpha \in \left( 0,1\right)$, as $N\rightarrow \infty$, $n\rightarrow \infty$,
	\begin{equation*}
		P\left\{ \sup_{\bs{z}\in \Omega}n^{1/2}\left| \bar{\mu}(\bs{z})-\mu(\bs{z})\right| G_{\eta} \left(\bs{z},\bs{z}\right)^{-1/2}\leq q_{1-\alpha }\right\} \rightarrow 1-\alpha,
	\end{equation*}
	\begin{equation*}
		\sup_{\bs{z} \in \Omega} |\bar{\mu}(\bs{z}) - \widehat{\mu}(\bs{z})| = o_P(n^{-1/2}).
	\end{equation*}
\end{theorem}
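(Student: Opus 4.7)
My plan is to prove the two displayed statements in Theorem \ref{THM:band} separately. For the first, the oracle fluctuation reduces to an average of i.i.d.\ mean-zero processes: $n^{1/2}(\bar{\mu}(\bs{z})-\mu(\bs{z}))=n^{-1/2}\sum_{i=1}^{n}\eta_i(\bs{z})=\sum_{k=1}^{\kappa}\bar{\xi}_k\phi_k(\bs{z})$ with $\bar{\xi}_k=n^{-1/2}\sum_{i=1}^{n}\xi_{ik}$, using the Karhunen--Lo\`eve expansion from Section 2.2. Under Assumption (A2), the multivariate CLT gives joint convergence of $(\bar{\xi}_1,\ldots,\bar{\xi}_{\kappa_n})$ to i.i.d.\ standard Gaussians, while the tail $\sup_{\bs{z}}|\sum_{k>\kappa_n}\bar{\xi}_k\phi_k(\bs{z})|\le(\max_k|\bar{\xi}_k|)\sum_{k>\kappa_n}\|\phi_k\|_\infty$ is $o_P(1)$ by Assumption (A6). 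Combining these with a chaining/Kolmogorov tightness argument in $\mathcal{C}(\Omega)$, which uses the $4+\delta_1$ moment bound on $\xi_{ik}$ and the Sobolev regularity of the $\phi_k$ from (A6), yields weak convergence of $n^{1/2}(\bar{\mu}-\mu)/G_\eta(\cdot,\cdot)^{1/2}$ to the standardized Gaussian process $\zeta$. The continuous mapping theorem for the sup-norm functional, together with continuity of the distribution of $\sup_{\bs{z}\in\Omega}|\zeta(\bs{z})|$ at $q_{1-\alpha}$, closes the first assertion.

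For the second assertion, view the BPS smoother as a linear map $S$ acting on the grid vector $\bar{\mathbf{Y}}$, and decompose $\bar{\mathbf{Y}}=\bs{\mu}_{\text{grid}}+\bar{\bs{\eta}}_{\text{grid}}+\bar{\bs{\varepsilon}}_\sigma$ with $(\bar{\bs{\varepsilon}}_\sigma)_j=\sigma(\bs{z}_j)n^{-1}\sum_{i}\varepsilon_{ij}$. Linearity gives $\widehat{\mu}(\bs{z})-\bar{\mu}(\bs{z})=\bigl([S\bs{\mu}_{\text{grid}}](\bs{z})-\mu(\bs{z})\bigr)+\bigl([S\bar{\bs{\eta}}_{\text{grid}}](\bs{z})-\bar{\eta}(\bs{z})\bigr)+[S\bar{\bs{\varepsilon}}_\sigma](\bs{z})$, and I would bound each term uniformly. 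Term one is the deterministic penalized spline bias for $\mu$; the arguments behind Theorem \ref{THM:convergence} give $O(|\triangle|^{d+1}\|\mu\|_{d+1,\infty})+O((\rho_n/(nN|\triangle|^{3}))\|\mu\|_{2,\infty})=o(n^{-1/2})$ by (A5). Term three is the smoother of an independent noise average; standard variance and maximal-inequality bounds for bivariate penalized splines give $O_P((nN|\triangle|^{2})^{-1/2}\sqrt{\log n})=o_P(n^{-1/2})$ by (A5). The middle term is handled by termwise expansion of $\bar{\eta}=\sum_k(n^{-1}\sum_i\xi_{ik})\phi_k$, applying $S$ to each $\phi_k$ and combining the pointwise spline approximation rate with $|n^{-1}\sum_i\xi_{ik}|=O_P(n^{-1/2})$.

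The main obstacle I anticipate is this middle term, because it requires a uniform-in-$k$ approximation inequality of the form $\|S\phi_k-\phi_k\|_\infty\lesssim |\triangle|^{s+1}\|\phi_k\|_{s+1,\infty}+(\rho_n/(nN|\triangle|^{5}))\|\phi_k\|_{2,\infty}$ for the penalized spline projection (via Bernstein--B\'ezier estimates and $\pi$-quasi-uniformity from (A4)), together with a careful summation over $k\le\kappa_n$ plus the tail $\sum_{k>\kappa_n}\|\phi_k\|_\infty=o(1)$. Here the two smallness conditions in Assumption (A6), namely $(\rho_n/(nN|\triangle|^{3}))\sum\|\phi_k\|_{2,\infty}=o(1)$ and $(1+\rho_n/(nN|\triangle|^{5}))\sum|\triangle|^{s+1}\|\phi_k\|_{s+1,\infty}=o(1)$, are exactly what is needed to push the accumulated error to $o_P(n^{-1/2})$ after multiplying by the $O_P(n^{-1/2})$ score coefficients. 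A secondary technical point is process tightness in part one, which follows from the $4+\delta_1$ moment in (A2) and the Sobolev regularity in (A6) via standard chaining.
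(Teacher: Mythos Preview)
Your proposal is essentially correct, and for the second assertion your decomposition and term-by-term bounds coincide with the paper's: the same three pieces $\widehat{\mu}^{o}-\mu$, $\widehat{\eta}-\bar{\eta}$, $\widehat{\varepsilon}$, the same bias rate (Lemma \ref{LEM:uniformbiasrate}), the same noise rate (Lemma \ref{LEM:maxnorm}), and for the middle term the same termwise application of the penalized-spline approximation bound to each $\phi_k$ followed by summation over $k\le\kappa_n$ plus the tail $\sum_{k>\kappa_n}\|\phi_k\|_\infty=o(1)$ (this is exactly Lemma \ref{LEM:etatilde-etabar-penalized}).

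For the first assertion your route differs from the paper's. The paper does \emph{not} prove a functional CLT via tightness. Instead it uses a strong approximation (Lemma \ref{LEM: Cao A.5}, a KMT-type coupling): one constructs i.i.d.\ standard Gaussians $Z_{ik,\xi}$ on the same probability space with $\max_{k}E|\bar{\xi}_{\cdot k}-\bar{Z}_{\cdot k,\xi}|\le Cn^{\beta-1}$ for some $\beta\in(0,1/2)$, notes that $\widetilde{\zeta}(\bs{z})=n^{1/2}G_\eta(\bs{z},\bs{z})^{-1/2}\sum_{k}\bar{Z}_{\cdot k,\xi}\phi_k(\bs{z})$ has \emph{exactly} the law of $\zeta$, and then bounds
\[
E\Big[\sup_{\bs{z}\in\Omega}\big|\widetilde{\zeta}(\bs{z})-n^{1/2}G_\eta(\bs{z},\bs{z})^{-1/2}\bar{\eta}(\bs{z})\big|\Big]
\le Cn^{1/2}\sum_{k}\|\phi_k\|_\infty\,E|\bar{Z}_{\cdot k,\xi}-\bar{\xi}_{\cdot k}|
\le \widetilde{C}\,n^{\beta-1/2}\sum_{k}\|\phi_k\|_\infty=o(1).
\]
This delivers the sup-statement directly, with no chaining, no tightness, and no appeal to continuity of the sup distribution. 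Your weak-convergence approach is also valid, but two small points deserve care: (i) your tail bound $(\max_k|\bar{\xi}_k|)\sum_{k>\kappa_n}\|\phi_k\|_\infty$ is formally problematic when $\kappa=\infty$ since $\max_k|\bar{\xi}_k|$ need not be $O_P(1)$; use instead $\sum_{k>\kappa_n}|\bar{\xi}_k|\,\|\phi_k\|_\infty$, whose expectation is at most $\sum_{k>\kappa_n}\|\phi_k\|_\infty=o(1)$; (ii) Assumption (A6) gives $\sum_k\|\phi_k\|_\infty<\infty$ but not summability of derivative norms, so a Kolmogorov moment-increment criterion is not immediate---it is cleaner to run the truncation argument (fixed-$K$ multivariate CLT, then $K\to\infty$) rather than a genuine chaining bound. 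The coupling route in the paper buys you exactly the avoidance of these subtleties.
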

Based on Theorems \ref{THM:convergence} and \ref{THM:band}, we obtain the following asymptotic SCCs for $\mu (\bs{z})$, $\bs{z}\in \Omega$.
\begin{corollary}
	\label{COR:Asymp muhat}
	Under Assumptions (A1)--(A6), for any $\alpha \in \left( 0,1\right)$, as $N\rightarrow \infty$, $n\rightarrow \infty $, an asymptotic $100 \left( 1-\alpha \right) \% $ exact SCC for $\mu (\bs{z})$ is
	\begin{equation*}
		P\left\{\mu (\bs{z}) \in \widehat{\mu}(\bs{z})\pm n^{-1/2}q_{1-\alpha}G_{\eta} \left(\bs{z},\bs{z}\right) ^{1/2}, ~ \bs{z}\in \Omega \right\}\rightarrow 1-\alpha .
	\end{equation*}
\end{corollary}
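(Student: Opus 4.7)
\textbf{Proof proposal for Corollary~\ref{COR:Asymp muhat}.}
The plan is to deduce the corollary as a Slutsky-type consequence of Theorem~\ref{THM:band}. The target event $\{\mu(\bs{z}) \in \widehat{\mu}(\bs{z}) \pm n^{-1/2}q_{1-\alpha}G_\eta(\bs{z},\bs{z})^{1/2},\,\bs{z}\in\Omega\}$ is algebraically equivalent to $\{\sup_{\bs{z}\in\Omega} n^{1/2}|\widehat{\mu}(\bs{z})-\mu(\bs{z})|G_\eta(\bs{z},\bs{z})^{-1/2} \leq q_{1-\alpha}\}$, which differs from the event in the first assertion of Theorem~\ref{THM:band} only by the replacement of the oracle smoother $\bar{\mu}$ with the BPS estimator $\widehat{\mu}$. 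I would therefore decompose $\widehat{\mu}-\mu = (\bar{\mu}-\mu) + (\widehat{\mu}-\bar{\mu})$, standardize by $G_\eta(\bs{z},\bs{z})^{-1/2}$, and show the second piece is asymptotically negligible uniformly in $\bs{z}$.

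Controlling the perturbation is routine. Assumption~(A3) guarantees $G_\eta(\bs{z},\bs{z}) \geq c_G > 0$ uniformly over $\Omega$, so the standardizing weight $G_\eta(\bs{z},\bs{z})^{-1/2}$ is bounded above by $c_G^{-1/2}$. Combining this uniform bound with the second assertion of Theorem~\ref{THM:band} gives
\begin{equation*}
\sup_{\bs{z}\in\Omega} n^{1/2}\,|\widehat{\mu}(\bs{z})-\bar{\mu}(\bs{z})|\,G_\eta(\bs{z},\bs{z})^{-1/2} \;\leq\; c_G^{-1/2}\,n^{1/2}\sup_{\bs{z}\in\Omega}|\widehat{\mu}(\bs{z})-\bar{\mu}(\bs{z})| \;=\; o_P(1),
\end{equation*}
so by the reverse triangle inequality applied to the supremum norm, the standardized suprema $\sup_{\bs{z}} n^{1/2}|\widehat{\mu}-\mu|G_\eta^{-1/2}$ and $\sup_{\bs{z}} n^{1/2}|\bar{\mu}-\mu|G_\eta^{-1/2}$ differ by an $o_P(1)$ amount.

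Finally, Theorem~\ref{THM:band} identifies the limiting distribution of the oracle supremum as that of $\sup_{\bs{z}\in\Omega}|\zeta(\bs{z})|$. Under Assumptions~(A3) and~(A6), $\zeta$ is a non-degenerate continuous Gaussian field with a bounded continuous covariance kernel, so its absolute supremum has a continuous distribution function on $(0,\infty)$ by classical results for smooth Gaussian processes \citep{Adler:90}. A standard Slutsky argument then absorbs the $o_P(1)$ perturbation and yields $P\{\sup_{\bs{z}} n^{1/2}|\widehat{\mu}-\mu|G_\eta^{-1/2} \leq q_{1-\alpha}\} \to 1-\alpha$, which is the claimed SCC. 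The only mildly delicate point is the continuity of the limiting CDF at $q_{1-\alpha}$ required to close the Slutsky step; this is a classical fact about suprema of smooth Gaussian fields rather than a real obstacle, and essentially all of the analytic work has already been packaged inside Theorem~\ref{THM:band}, making the corollary a short deduction.
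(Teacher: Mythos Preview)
Your proposal is correct and matches the paper's own approach: the paper does not give a separate proof of the corollary but simply states that it follows from Theorems~\ref{THM:convergence} and~\ref{THM:band}, and your Slutsky-type argument using the two assertions of Theorem~\ref{THM:band} together with the lower bound $G_\eta(\bs{z},\bs{z})\ge c_G$ from (A3) is exactly how one fills in the details. Your remark about needing continuity of the limiting supremum distribution at $q_{1-\alpha}$ is a fair observation that the paper leaves implicit.
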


\vskip .12in \noindent \textbf{3.2. Extension to Two-sample Case} \vskip .10in

While one-sample confidence bands are of primary interest in many situations, in some brain imaging analysis, interest lies in comparing two groups, e.g., patients and normal control subjects. In this section, we extend our method to two-sample problems, constructing SCCs for the difference between mean functions from two independent groups, analogous to a two-sample $t$-test. With these two-sample SCCs, we can assess differences of images with quantified uncertainty.

Given two groups of imaging observations with sample sizes $n_{1}$ and $n_{2}$, respectively, defined on a common region $\Omega$. For $H=1,2$, let  $G_{H\eta}\left(\bs{z},\bs{z}^{\prime}\right)=\sum_{k=1}^{\kappa_{H}}\phi_{Hk}\left(\bs{z}\right) \phi_{Hk} \left(\bs{z}^{\prime }\right)$ be a positive definite function and $\widehat{\mu}_{H}$ be the spline estimates for the group mean function $\mu_{H}$. Let $V\left(\bs{z}, \bs{z}^{\prime} \right)=G_{1 \eta}(\bs{z}, \bs{z}^{\prime}) +\tau G _{2 \eta}(\bs{z}, \bs{z}^{\prime})$, where $\tau=\lim_{n_1\rightarrow \infty}n_{1}/n_{2}$. Denote by $W(\bs{z})$, $\bs{z}\in \Omega$, a standardized Gaussian process such that $EW(\bs{z})=0$, $EW^{2}(\bs{z})=1$ with covariance
$E[W\left(\bs{z}\right) W\left(\bs{z}^{\prime }\right)] =\{V\left(\bs{z}, \bs{z}\right)\}^{-1/2}
V\left(\bs{z}, \bs{z}^{\prime }\right) \{V\left( \bs{z}^{\prime}, \bs{z}^{\prime }\right)\}^{-1/2}$. Denote $q_{12,\alpha}$ the $(1-\alpha)$-th quantitle of the absolute maximal distribution of $W\left(\bs{z}\right)$, $\bs{z}\in \Omega$.

\begin{theorem}
	\label{THM:muA-muB}
	Under Assumptions (A1)--(A6), for any $\alpha \in \left( 0,1\right)$, as $N\rightarrow \infty$, $n_{1}\rightarrow \infty$,
	\begin{equation*}
		P\left\{ \sup_{\bs{z}\in \Omega }\frac{n_{1}^{1/2}\left\vert \left(\widehat{\mu}_{1}-\widehat{\mu}_{2}\right)(\bs{z})-\left( \mu_{1}-\mu_{2}\right)(\bs{z}) \right\vert }{\sqrt{V\left(\bs{z},\bs{z}\right)}}\leq q_{12,\alpha}\right\} \rightarrow 1-\alpha.
	\end{equation*}
\end{theorem}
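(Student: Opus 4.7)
The plan is to mirror the one-sample argument in Theorem \ref{THM:band} via two independent oracle decompositions and then invoke extreme value theory of Gaussian processes for a limiting process with covariance structure $V$. For $H=1,2$, define the group-level oracle smoothers $\bar{\mu}_H(\bs{z}) = \mu_H(\bs{z}) + n_H^{-1}\sum_{i=1}^{n_H}\eta_{Hi}(\bs{z})$. Applying the second assertion of Theorem \ref{THM:band} to each sample separately yields $\sup_{\bs{z}\in\Omega}|\widehat{\mu}_H(\bs{z})-\bar{\mu}_H(\bs{z})| = o_P(n_H^{-1/2})$ for $H=1,2$. Since $n_1/n_2\to \tau\in(0,\infty)$ and $V(\bs{z},\bs{z})$ is bounded away from $0$ by Assumption (A3), this gives
$$\sup_{\bs{z}\in\Omega} \frac{n_1^{1/2}\bigl|(\widehat{\mu}_1-\widehat{\mu}_2)(\bs{z}) - (\bar{\mu}_1-\bar{\mu}_2)(\bs{z})\bigr|}{\sqrt{V(\bs{z},\bs{z})}}= o_P(1),$$
so by Slutsky it suffices to prove the conclusion with $\widehat{\mu}_H$ replaced by $\bar{\mu}_H$.

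Next, write the oracle difference as $(\bar{\mu}_1-\bar{\mu}_2-\mu_1+\mu_2)(\bs{z}) = n_1^{-1}\sum_{i=1}^{n_1}\eta_{1i}(\bs{z}) - n_2^{-1}\sum_{i=1}^{n_2}\eta_{2i}(\bs{z})$, i.e.\ a sum of two independent centered empirical processes. By independence of the two groups,
$$n_1\cov\{(\bar{\mu}_1-\bar{\mu}_2)(\bs{z}),(\bar{\mu}_1-\bar{\mu}_2)(\bs{z}')\}= G_{1\eta}(\bs{z},\bs{z}') + \tfrac{n_1}{n_2}G_{2\eta}(\bs{z},\bs{z}') \to V(\bs{z},\bs{z}').$$
Substituting the truncated Karhunen--Lo\`eve expansion $\eta_{Hi}(\bs{z}) = \sum_{k\le \kappa_n}\xi_{Hik}\phi_{Hk}(\bs{z}) + r_{Hi,\kappa_n}(\bs{z})$ and applying the multivariate Lindeberg CLT, using the $(4+\delta_1)$-moment hypothesis in Assumption (A2), gives convergence of the finite-dimensional distributions of the standardized oracle process to those of $W$. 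Stochastic equicontinuity in sup-norm over $\Omega$ follows from the smoothness control $\phi_{Hk}\in\mathcal{W}^{s+1,\infty}(\Omega)$ in Assumption (A6) via standard chaining. Thus the standardized oracle process converges weakly in $C(\Omega)$ to the Gaussian process $W$, and the continuous mapping theorem together with the definition of $q_{12,\alpha}$ as the $(1-\alpha)$-quantile of $\sup_{\bs{z}\in\Omega}|W(\bs{z})|$ closes the argument.

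The main obstacle is controlling the tail of the Karhunen--Lo\`eve expansion uniformly in $\bs{z}$: both showing that $\sup_{\bs{z}\in\Omega}\bigl|\sum_{k>\kappa_n}\xi_{Hik}\phi_{Hk}(\bs{z})\bigr|$ is asymptotically negligible at the $n_1^{-1/2}$ scale, and establishing a modulus-of-continuity bound adequate for tightness in $C(\Omega)$. Assumption (A6) is tailored to precisely these two tasks---the summability $\sum_{k=\kappa_n+1}^{\kappa}\|\phi_k\|_\infty = o(1)$ annihilates the tail while the Sobolev regularity of $\phi_k$ supplies the H\"older-type continuity required for chaining. Once these uniform bounds are in hand, the remaining arguments (CLT, tightness, continuous mapping) amount to a routine two-sample adaptation of the one-sample proof of Theorem \ref{THM:band}.
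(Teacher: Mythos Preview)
Your reduction to the oracle estimators via the second assertion of Theorem~\ref{THM:band} is correct and matches the paper's opening move. Where you diverge is in the treatment of the oracle process $n_1^{1/2}V(\bs{z},\bs{z})^{-1/2}\{\bar\eta_1(\bs{z})-\bar\eta_2(\bs{z})\}$: you argue via weak convergence in $C(\Omega)$ (finite-dimensional CLT plus tightness through chaining and the continuous mapping theorem), whereas the paper uses a \emph{strong approximation}. It invokes the Koml\'os--Major--Tusn\'ady type coupling recorded in Lemma~\ref{LEM: Cao A.5} to replace each $\bar\xi_{H\cdot k}$ by a Gaussian $\bar Z_{H\cdot k,\xi}$ with $\max_k E|\bar\xi_{H\cdot k}-\bar Z_{H\cdot k,\xi}|\le C n_H^{\beta-1}$ for some $\beta<1/2$, builds an exact Gaussian field $\widetilde W(\bs{z})$ with the same law as $W$, and shows directly that $E\sup_{\bs{z}}|\widetilde W(\bs{z})-n_1^{1/2}V(\bs{z},\bs{z})^{-1/2}\{\bar\eta_1(\bs{z})-\bar\eta_2(\bs{z})\}|\le Cn_1^{\beta-1/2}\sum_H\sum_k\|\phi_{Hk}\|_\infty=o(1)$ (Lemma~\ref{LEM:two_group_penalized}). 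No tightness argument enters.

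The paper's route sidesteps precisely the step you flag as the main obstacle. Your chaining argument would need a uniform modulus-of-continuity bound---for instance $\sum_k\|\nabla\phi_{Hk}\|_\infty<\infty$, or equivalently $G_{H\eta}(\bs{z},\bs{z})-2G_{H\eta}(\bs{z},\bs{z}')+G_{H\eta}(\bs{z}',\bs{z}')\le C|\bs{z}-\bs{z}'|^\gamma$---and neither is explicitly delivered by (A3) or (A6): Assumption~(A6) controls $\sum_{k\le\kappa_n}|\phi_k|_{s+1,\infty}$ only up to a factor $|\triangle|^{s+1}$, so the implied Lipschitz sum may diverge as $|\triangle|\to 0$. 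Your outline can be completed, but the equicontinuity step is more delicate than ``standard chaining'' conveys. The coupling approach is cheaper here because it needs only the sup-norm summability $\sum_k\|\phi_k\|_\infty<\infty$ already present in (A6), and because the limiting quantile $q_{12,\alpha}$ is attained exactly by $\widetilde W$ rather than only asymptotically through weak convergence.
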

Theorem \ref{THM:muA-muB} suggests that an asymptotic $100 \left( 1-\alpha \right)\%$ exact SCC for $\left(\mu_{1}-\mu_{2}\right)(\bs{z})$ can be constructed as
$\left(\widehat{\mu}_{1}-\widehat{\mu}_{2}\right)(\bs{z})\pm n_{1}^{-1/2}q_{12,\alpha}\{V\left(\bs{z},\bs{z}\right)\}^{1/2}$.

\setcounter{chapter}{4} 
\renewcommand{\thetable}{4.\arabic{table}} \setcounter{table}{0} 
\renewcommand{\thefigure}{4.\arabic{figure}} \setcounter{figure}{0} 
\setcounter{equation}{0} \setcounter{lemma}{0} \setcounter{theorem}{0}
\setcounter{proposition}{0}\setcounter{corollary}{0}
\vskip .12in \noindent \textbf{4. Implementation} \vskip 0.10in
\label{SEC:implementation}
Without loss of generality, we describe the implementation of the proposed SCCs for the one-sample case. The procedure can be similarly adopted to the two-sample mean cases.

\vskip .12in \noindent \textbf{4.1. Quantile Estimation and Smoothing Parameter Selection} \vskip .10in

The quantile $q_{1-\alpha}$ used to construct the SCCs in Corollary \ref{COR:Asymp muhat} cannot be obtained analytically, however, it can be approximated by numerical simulation as follows: first, we simulate $\zeta_b(\bs{z}) = \widehat{G}_{\eta}^{-1/2}(\bs{z},\bs{z})\sum_{k=1}^{\kappa}\widehat{\lambda}_{k}^{1/2}
Z_{k,b}\widehat{\psi}_k(\bs{z})$,
where $Z_{k,b}$ are i.i.d standard normal variables with $1\leq k \leq \kappa$ and $b=1, \ldots, B$ for a preset large integer $B$. Next, we estimate the quantile $q_{1-\alpha}$ by the corresponding empirical quantile of these maximum values by taking the maximal absolute
value for each copy of $\zeta_b(\bs{z})$.

To construct the SCC for the two-sample case, denote $\widehat{V}\left(\bs{z}, \bs{z}^{\prime} \right)=\widehat{G}_{1 \eta}(\bs{z}, \bs{z}^{\prime}) +\tau \widehat{G} _{2 \eta}(\bs{z}, \bs{z}^{\prime})$.  We simulate
\[
\widehat{W}_{b}(\bs{z})=\{\widehat{V}(\bs{z},\bs{z})\}^{-1/2}\left\{\sum_{k=1}^{\kappa_{1} }\widehat{\lambda} _{1k}^{1/2}Z_{1k,b}\widehat{\psi}_{1k}(\bs{z})- (n_1/n_2)^{1/2}\sum_{k=1}^{\kappa_{2}}\widehat{\lambda}_{2k}^{1/2}Z_{2k,b}\widehat{\psi}_{2k}(\bs{z})\right\}, ~ \bs{z}\in \Omega.
\]
Then, $q_{12,\alpha}$ can be estimated by the empirical quantile of level $1-\alpha$ of the $B$ simulated $\|\widehat{W}_{b}\|_{\infty}$'s, $b=1,\ldots,B$.

Next, for a good fit of the data, it is necessary to choose a suitable value of the smoothing parameter $\rho_{n}$. A large value of $\rho_{n}$ enforces a smoother fitted function with larger fitting errors, while a small $\rho_{n}$ may result in overfitting of the data. Since the in-sample fitting errors can not gauge the prediction accuracy of the fitted function, we select a criterion function that attempts to measure the out-of-sample performance of the fitted model. Minimizing the generalized cross-validation (GCV) criterion is one computationally efficient approach to selecting smoothing parameters that also has good theoretical properties. We choose the smoothing parameter by minimizing the following
\[
\mathrm{GCV}(\rho_n)=\frac{\left\|\bar{\mathbf{Y}}-\mathbf{S}(\rho_{n})\bar{\mathbf{Y}} \right\|^{2}}
{N\{1-\mathrm{tr}\{\mathbf{S}(\rho_n)\}/N\}^2}
\]
over a grid of values of $\rho_n$, where $\mathbf{S}(\rho_{n})=\mathbf{U}(\mathbf{U}^{\top}\mathbf{U}+n^{-1}\rho_{n}\mathbf{D})^{-1}\mathbf{U}^{\top}$.

\vskip .12in \noindent \textbf{4.2. Spline Basis and Triangulation Selection} \vskip .10in

To construct the SCC, we need to choose the spline basis functions and triangulation used in the BPS, a notoriously difficult task for constructing nonparametric pointwise confidence intervals or simultaneous confidence bands. 

When the resolution of the imaging is relatively high and the mean imaging seems to be a realization from some smooth function without sharp edges, we suggest using smooth parameter $r=1$ with degree $d\geq4$. When $d\geq5$, the proposed spline achieves full estimation power asymptotically \citep{Lai:Schumaker:07}. It is generally believed that subject-level image variation $\eta_i$'s are less smooth than the mean function. Thus, we suggest considering lower order splines, such as $d=2$, when estimating the $\eta_i$'s.

An optimal triangulation is a partition of the domain which is best according to some criterion that measures the shape, size or number of triangles. For example, a ``good" triangulation usually refers to those with well-shaped triangles, no small angles or/and no obtuse angles. Other criteria include the density control (adaptivity) and optimal size (number of triangles), etc. For a fixed number of triangles, \cite{Lai:Schumaker:07} recommend selecting the triangulation according to ``max-min" criterion which maximizes the minimum angle of all the angles of the triangles in the triangulation.

We suggest building the triangulated meshes using typical triangulation construction methods such as Delaunay Triangulation \citep{DeLoera:etal:2010}. The Matlab code \texttt{DistMesh} and R package \texttt{Triangulation} can be used to construct the triangulation. When estimating the mean function $\mu(\cdot)$, we suggest choosing the triangulation $\triangle_\mu$ based on leave-images-out $k$-fold cross-validation (CV). In the estimation of the $\eta_i(\cdot)$'s, we suggest choosing the triangulation $\triangle_\eta$ so as to minimize a bootstrap estimator of the coverage error of the SCCs. In Algorithm \ref{ALGO:tri}, we describe our selection scheme for the one-sample case.

\normalem
\begin{algorithm}
	\SetKwInOut{Input}{Input}
	\SetKwInOut{Output}{Output}
	\SetKwBlock{Begin}{Step 2.}{}
	\caption{Triangulation selection.}
	\Input{Images $\left\{ Y_{ij}\right\}_{j=1, i=1}^{N,n}$.\\}
	\Output{Triangulations $\triangle_{\mu}$ and $\triangle_{\eta}$.}
	\BlankLine
	\textbf{Step 1.} Selecting $\triangle_{\mu}$ and estimating $\mu(\bs{z})$. Based on $\left\{ Y_{ij}\right\}_{j=1, i=1}^{N,n}$, select   $\triangle_{\mu}$ via the leave-images-out $k$-fold CV, and obtain $\widehat{\mu}(\bs{z})$ using the BPS method. Define $\widehat{R}_{ij}=Y_{ij}-\widehat{\mu}(\bs{z}_j)$.
	
	\Begin(Selecting $\triangle_{\eta}$ from a set of triangulations \{$\triangle_{\eta}^{q}$, $q\in \mathcal{Q}$\}.){
		\ForEach {$q\in \mathcal{Q}$}{
			~(i) For $i=1,\ldots,n$, estimate $\widehat {\eta}_i(\bs{z})$ by smoothing $\widehat{R}_{ij}$ via the bivariate spline smoothing method based on triangulation $\triangle_{\eta}^{q}$, and let $\widehat{\varepsilon}_{ij}=\widehat{R}_{ij}-\widehat {\eta}_i(\bs{z}_j)$.\\
			\noindent (ii) Generate an independent random sample $\delta_{i}^{(b)}$ and $\delta_{ij}^{(b)}$ from $\{-1,1\}$ with probability 0.5 each, and define $Y_{ij}^{\ast(b)}=\widehat{\mu}(\bs{z}_j)+\delta_{i}^{(b)} \widehat {\eta}_i(\bs{z}_j) + \delta_{ij}^{(b)} \widehat{\varepsilon}_{ij}$. \\
			(iii) Based on $\left\{ Y_{ij}^{\ast (b)}\right\}_{j=1, i=1}^{N,n}$, obtain the estimators of the mean and covariance functions $\widehat{\mu}^{*(b)}$ and $\widehat{G}_{\eta}^{*(b)}$ using $\triangle_\mu$ and $\triangle_{\eta}^{q}$, respectively.\\
			(iv) For any fixed $\alpha\in (0,1)$,  construct $100(1-\alpha)\%$ SCCs for resampled data $\left\{ Y_{ij}^{\ast (b)}\right\}_{j=1, i=1}^{N,n}$: $\mathcal{B}^{*(b)}(\alpha),\ b=1,\ldots, B$, $\mathcal{B}^{*(b)}(\alpha) = \widehat{\mu}^{*(b)}(\bs{z})\pm n^{-1/2}q_{1-\alpha}^{*(b)}\widehat{G}_{\eta}^{*(b)} \left(\bs{z},\bs{z}\right) ^{1/2}$.}
		\vskip -.1in
		Select $\triangle_{\eta}$ by minimizing the objective function
		\begin{equation*}
			\min_{q \in \mathcal{Q}} \int_{\alpha-\delta}^{\alpha+\delta} \left\{\frac{1}{B} \sum_{b=1}^{B} I(\widehat{\mu}\in \mathcal{B}^{*(b)}(\alpha);\triangle_{\eta}^{q})-(1-\alpha)\right\}^2d\alpha,
		\end{equation*}
		\hskip -.1in for some constant $0<\delta<\alpha$, which is taken to be $0.005$ in our simulation studies.}
	\label{ALGO:tri}
\end{algorithm}
\ULforem

\vskip .12in \noindent \textbf{4.3. Variance Estimation for Measurement Errors and SCC Adjustment} \vskip .10in

For certain imaging types and modalities, our assumptions (A2) and (A3) about the measurement errors may not be completely satisfied. We propose a modification to the SCC procedure in Section 3 to deal with images with relatively large measurement errors. 

For the one-sample SCC, for any $j=1,\ldots, N$, let $\widehat{\varepsilon}_{ij} = \widehat{R}_{ij} - \widehat{\eta}_{i}(\bs{z}_{j})$, and we estimate $\sigma^{2}(\bs{z}_{j})$ by
$\widehat{\sigma}^{2}(\bs{z}_{j}) = n^{-1} \sum_{i=1}^{n} \widehat{\varepsilon}_{ij} \widehat{\varepsilon}_{ij}$. Next, denote $\widehat{\varepsilon}(\bs{z})=\widetilde{\mathbf{B}}(\bs{z})^{\top} \bs{\Gamma}_{N,\rho}^{-1}
\frac{1}{nN}\sum_{i=1}^{n}\sum_{j=1}^{N} \widetilde{\mathbf{B}}(\bs{z}_{j})\sigma(\bs{z}_j)\varepsilon_{ij}$. We estimate the variance-covariance function of $\widehat{\varepsilon}(\bs{z})$, 
$\widetilde{G}_{\varepsilon} (\bs{z}, \bs{z}^{\prime})=\mathrm{Cov} \{\widehat{\varepsilon}(\bs{z}),
\widehat{\varepsilon}(\bs{z}')\}$, by
\[
\widehat{G}_{\varepsilon} (\bs{z}, \bs{z}^{\prime})
= n^{-1}N^{-2} \widetilde{\mathbf{B}}(\bs{z})^{\top} \bs{\Gamma}_{N,\rho}^{-1} \left\{ \sum_{j=1}^{N}  \widetilde{\mathbf{B}}(\bs{z}_{j}) \widehat{\sigma}^{2}(\bs{z}_{j}) \widetilde{\mathbf{B}}(\bs{z}_{j}) ^{\top} \right\} \bs{\Gamma}_{N,\rho}^{-1} \widetilde{\mathbf{B}}(\bs{z}^{\prime}),
\] 
where $\bs{\Gamma}_{N,\rho}$ is given in (\ref{DEF:Gamma_rho}) in the Appendices.

Denote $\widehat{\Sigma}(\bs{z},\bs{z}^{\prime}) = \widehat{G}_{\eta}(\bs{z},\bs{z}^{\prime}) + n \widehat{G}_{\varepsilon} (\bs{z}, \bs{z}^{\prime})$. We adjust the approximation procedure of quantile $q_{1-\alpha}$ as follows: first, we simulate
\[
\zeta_b(\bs{z}) = \widehat{\Sigma}^{-1/2}(\bs{z},\bs{z})\left\{\sum_{k=1}^{\kappa} \widehat{\lambda}_{k}^{1/2}
\widehat{\psi}_k(\bs{z}) Z^{(b)}_{k, \xi} + \widetilde{\mathbf{B}}(\bs{z})^{\top}\bs{\Gamma}_{N,\rho}^{-1}\frac{1}{N}\sum_{j=1}^{N} \widetilde{\mathbf{B}}(\bs{z}_{j})\widehat{\sigma}(\bs{z}_j)Z^{(b)}_{j,\varepsilon} \right\},
\]
where $Z^{(b)}_{k, \xi}$ and $Z^{(b)}_{j,\varepsilon} $ are i.i.d standard normal variables with $1\leq k \leq \kappa, 1\leq j \leq N $; next, we estimate the quantile $q_{1-\alpha}$ by the corresponding empirical quantile of the $B$ simulated $\|\zeta_b\|_{\infty}$; finally, we construct the SCC as $\widehat{\mu}(\bs{z})\pm n^{-1/2}q_{1-\alpha}\widehat{\Sigma} \left(\bs{z},\bs{z}\right) ^{1/2}$, $\bs{z} \in \Omega$.

For the two-sample case, we  can similarly modify the procedure by defining $\widehat{\Sigma}_{H}(\bs{z},\bs{z}^{\prime}) = \widehat{G}_{\eta,H} + n_{H} \widehat{G}_{\varepsilon,H}$, for $H=1,2$, and $\widehat{\Xi}(\bs{z}, \bs{z}^{\prime} )= \widehat{\Sigma}_{1}(\bs{z}, \bs{z}^{\prime} )+ n_{1}/n_{2}\widehat{\Sigma}_{2}(\bs{z}, \bs{z}^{\prime} )$. Let $\widehat{\sigma}_{H}(\bs{z})$ be the estimator of $\sigma_{H}(\bs{z})$, for $H=1,2$. To estimate $q_{12,\alpha}$, we simulate
\begin{align*}
	&\widehat{W}_{b}(\bs{z})=\left\{\widehat{\Xi}(\bs{z},\bs{z})\right\}^{-1/2}\left\{\sum_{k=1}^{\kappa_{2}}\widehat{\lambda} _{1k}^{1/2}Z^{(b)}_{1k,\xi}\widehat{\psi}_{1k}(\bs{z})
	-\left(\frac{n_1}{n_2}\right)^{1/2}\sum_{k=1}^{\kappa_{2} }\widehat{\lambda}_{2k}^{1/2}Z^{(b)}_{2k,\xi}\widehat{\psi}_{2k}(\bs{z})\right. \\
	&+ \left.\widetilde{\mathbf{B}}(\bs{z})^{\top}\bs{\Gamma}_{N,\rho_1}^{-1}\frac{1}{N}\sum_{j=1}^{N} \widetilde{\mathbf{B}}(\bs{z}_{j})\widehat{\sigma}_{1}(\bs{z}_j)Z^{(b)}_{1 j,\varepsilon} 
	-\left(\frac{n_1}{n_2}\right)^{1/2} \widetilde{\mathbf{B}}(\bs{z})^{\top}\bs{\Gamma}_{N,\rho_2}^{-1}\frac{1}{N}\sum_{j=1}^{N} \widetilde{\mathbf{B}}(\bs{z}_{j})\widehat{\sigma}_{2}(\bs{z}_j)Z^{(b)}_{2 j,\varepsilon} \right\},
\end{align*}
where $Z^{(b)}_{Hk, \xi}$ and $Z^{(b)}_{Hj,\varepsilon} $ are i.i.d standard normal variables with $1\leq k \leq \kappa_{H}, 1\leq j \leq N $ for $H=1,2$. Then, $q_{12,\alpha}$ can be estimated by the empirical quantile of the $B$ simulated $\|\widehat{W}_{b}\|_{\infty}$'s, $b=1,\ldots,B$.
A modified SCC for $\mu_{1}(\bs{z}) - \mu_{2}(\bs{z})$ can thus be constructed as $\left(\widehat{\mu}_{1}-\widehat{\mu}_{2}\right)(\bs{z})\pm n_{1}^{-1/2}q_{12,\alpha}\{\widehat{\Xi}\left(\bs{z},\bs{z}\right)\}^{1/2}$.

\setcounter{chapter}{5}\setcounter{equation}{0} 
\renewcommand{\thetable}{5.\arabic{table}} \setcounter{table}{0} 
\renewcommand{\thefigure}{5.\arabic{figure}} \setcounter{figure}{0} 
\vskip .12in \noindent \textbf{5. Simulation Studies} \label{SEC:simulation} \vskip 0.10in

In this section, we describe two Monte Carlo simulations to examine the finite sample performance of the proposed method.  

\vskip .12in \noindent \textbf{5.1. One Sample SCC} \vskip .10in

In this simulation study, the measurements on the images are generated from the model:
\begin{equation*}
	Y_{ij} = \mu(\bs{z}_{j}) + \sum_{k=1}^{2} \sqrt{\lambda_k} \xi_{ij} \psi_{k}(\bs{z}_{j}) + \sigma(\bs{z}_{j}) \varepsilon_{ij}, ~ i=1,\ldots, n, ~ j=1,\ldots, N,
\end{equation*}
where $\bs{z}_j= (z_{1j}, z_{2j})\in \Omega\subset [0,1]^2$, and $\Omega$ is the same as the domain of the brain images shown in Section 6. To demonstrate the practical performance of our theoretical results, we consider the following four mean functions:
\vspace{-0.1in}
\begin{itemize}
	\item (quadratic) $\mu(\bs{z})=20\left\{(z_1-0.5)^2+(z_2-0.5)^2\right\}$,
	\item (exponential) $\mu(\bs{z})=5\exp\left[-15\left\{(z_1-0.5)^2+(z_2-0.5)^2\right\}\right]+0.5$,
	\item (cubic) $\mu(\bs{z})=3.2(-z_1^3+z_2^3)+2.4$,
	\item (sine) $\mu(\bs{z})=-10[\sin\{5\pi(z_1+0.22)\}-\sin\{5\pi(z_2-0.18)\}]+2.8$,
\end{itemize}
\vspace{-0.1in}and the corresponding mean images are shown in the first column of Figures \ref{FIG:S01} -- \ref{FIG:S06} in the Appendices.

To simulate the within-image dependence, we generate $\xi_{ik}\overset{\text{i.i.d}}{\sim}N(0,1)$, for $k=1,2$. For the eigenvalues, we set  $\lambda_1=0.5$, $\lambda_2=0.2$. For the eigenfunctions, we let $\psi_1(\bs{z})=c_1\sin(\pi z_{1})+c_2,~\psi_2(\bs{z})=c_3\cos(\pi z_{2})+c_4$, where $c_1=0.988,~c_2=0.5,~c_3=2.157,~\mathrm{and}~c_4=-0.084$ to guarantee that the eigenfunctions are orthonormal. We generate heterogenous measurement errors with $\sigma(\bs{z})=0.25\{1-(z_{1}-0.5)^2-(z_{2}-0.5)^2\}$. We consider $n=50,~100,~200$, and for each image, we consider two types of resolution: $40\times 40$ and $79\times 79$ with $N=921$ and $3682$ pixels falling inside the domain, respectively.

To apply our method, we consider three different triangulations which are also shown in the first column of Figures \ref{FIG:S01} -- \ref{FIG:S06} in the Appendices. The first triangulation ($\triangle_1$) contains 49 triangles and 38 vertices; the second triangulation ($\triangle_2$) contains 80 triangles and 54 vertices; while the third triangulation ($\triangle_3$) contains 144 triangles and 87 vertices. 
The estimated mean function based on these three triangulations are shown in the second columns of Figures \ref{FIG:S01} -- \ref{FIG:S06},  and the corresponding 99\% SCCs are given in the last two columns. From these figures, one can see that all three triangulations result in almost the same estimates and SCCs. One can also see that even when the number of images is moderately large, the estimation is very accurate regardless of the type of underling mean functions.


Table \ref{TAB:1} and Table \ref{TAB:S02} in the Appendices summarize the estimated coverage rate of the SCCs based on 1000 replications for $N=921$ and $3682$, respectively. The number in parenthesis represents the average bandwidth. These two tables also confirm that there is little difference among the three triangulations and that the coverage rate is closer to the nominal confidence level for larger values of $n$.

\begin{table}
	\caption{Empirical coverage rates of the SCCs ($N=921$).}
	\begin{center}
		\begin{tabular}{ccccccccccccccc} \hline \hline
			\multirow{2}{*}{$n$} &\multicolumn{3}{c}{$\alpha=0.10$} &\multicolumn{3}{c}{$\alpha=0.05$} &\multicolumn{3}{c}{$\alpha=0.01$}\\ \cline{2-10}
			&$\triangle_1$ &$\triangle_2$ &$\triangle_3$ &$\triangle_1$ &$\triangle_2$ &$\triangle_3$ &$\triangle_1$ &$\triangle_2$ &$\triangle_3$\\ \hline
			
			\multicolumn{10}{c}{$\mu(\bs{z})=20\left\{(z_1-0.5)^2+(z_2-0.5)^2\right\}$}\\ \hline
			\multirow{2}{*}{50} & 0.858 & 0.860 & 0.874 & 0.928 & 0.929 & 0.935 & 0.977 & 0.981 & 0.981 \\
			& (0.651) & (0.651) & (0.659) & (0.739) & (0.739) & (0.747) & (0.908) & (0.908) & (0.916) \\
			\multirow{2}{*}{100} & 0.891 & 0.893 & 0.897 & 0.944 & 0.947 & 0.949 & 0.979 & 0.979 & 0.980 \\
			& (0.473) & (0.473) & (0.474) & (0.535) & (0.535) & (0.537) & (0.657) & (0.657) & (0.659)  \\
			\multirow{2}{*}{200} & 0.896 & 0.897 & 0.897 & 0.942 & 0.949 & 0.948 & 0.987 & 0.988 & 0.988  \\
			& (0.335) & (0.336) & (0.337) & (0.379) & (0.380) & (0.381) & (0.465) & (0.466) & (0.467)  \\ \hline
			
			\multicolumn{10}{c}{$\mu(\bs{z})=5\exp\left[-15\left\{(z_1-0.5)^2+(z_2-0.5)^2\right\}\right]+0.5$}\\ \hline
			\multirow{2}{*}{50} & 0.877 & 0.879 & 0.879 & 0.939 & 0.941 & 0.937 & 0.983 & 0.983 & 0.982 \\
			&  (0.664) & (0.666) & (0.667) & (0.752) & (0.754) & (0.755) & (0.921) & (0.923) & (0.924)\\
			\multirow{2}{*}{100} & 0.888 & 0.892 & 0.892 & 0.942 & 0.944 & 0.945 & 0.979 & 0.980 & 0.980 \\
			& (0.473) & (0.474) & (0.474) & (0.535) & (0.536) & (0.537) & (0.657) & (0.658) & (0.659)  \\
			\multirow{2}{*}{200} & 0.904 & 0.890 & 0.902 & 0.947 & 0.942 & 0.949 & 0.986 & 0.986 & 0.986  \\
			& (0.341) & (0.336) & (0.342) & (0.385) & (0.381) & (0.386) & (0.470) & (0.466) & (0.472)  \\ \hline
			
			\multicolumn{10}{c}{$\mu(\bs{z})=3.2(-z_1^3+z_2^3)+2.4$}\\ \hline
			\multirow{2}{*}{50} & 0.876 & 0.879 & 0.880 & 0.934 & 0.937 & 0.938 & 0.980 & 0.981 & 0.981 \\
			& (0.639) & (0.639) & (0.639) & (0.727) & (0.728) & (0.728) & (0.896) & (0.896) & (0.897)  \\
			\multirow{2}{*}{100} & 0.870 & 0.876 & 0.884 & 0.929 & 0.935 & 0.938 & 0.979 & 0.980 & 0.980 \\
			& (0.455) & (0.455) & (0.457) & (0.517) & (0.517) & (0.519) & (0.639) & (0.640) & (0.642) \\
			\multirow{2}{*}{200} & 0.890 & 0.889 & 0.906 & 0.941 & 0.942 & 0.953 & 0.984 & 0.986 & 0.985 \\
			& (0.326) & (0.325) & (0.329) & (0.370) & (0.370) & (0.373) & (0.456) & (0.456) & (0.459)  \\ \hline
			
			\multicolumn{10}{c}{$\mu(\bs{z})=-10[\sin\{5\pi (z_{1}+0.22)\}-\sin\{5\pi (z_{2}-0.18)\}]+2.8$}\\ \hline
			\multirow{2}{*}{50} & 0.882 & 0.869 & 0.879 & 0.937 & 0.930 & 0.939 & 0.981 & 0.976 & 0.980 \\
			& (0.734) & (0.740) & (0.754) & (0.821) & (0.828) & (0.843) & (0.989) & (0.996) & (1.011) \\
			\multirow{2}{*}{100} & 0.886 & 0.901 & 0.880 & 0.938 & 0.946 & 0.935 & 0.982 & 0.983 & 0.982 \\
			& (0.522) & (0.534) & (0.536) & (0.584) & (0.596) & (0.598) & (0.705) & (0.718) & (0.721)  \\
			\multirow{2}{*}{200} & 0.877 & 0.891 & 0.887 & 0.937 & 0.951 & 0.947 & 0.985 & 0.986 & 0.984  \\
			& (0.370) & (0.378) & (0.384) & (0.414) & (0.423) & (0.429) & (0.499) & (0.508) & (0.514)  \\ \hline
			\hline
		\end{tabular}
	\end{center}
	\label{TAB:1}
\end{table}

\vskip .12in \noindent \textbf{5.2. Two Sample SCC} \vskip .10in

In this simulation study, we examine the power of detecting a difference in mean images based on the proposed two-sample SCC. Two group of images are generated from the model:
\begin{equation*}
	Y_{H,ij} = \mu_{H}(\bs{z}_{j}) + \sum_{k=1}^{\kappa} \sqrt{\lambda_k} \xi_{ij} \psi_{k}(\bs{z}_{j}) + \sigma(\bs{z}_{j}) \varepsilon_{ij}, ~ H=1,2,
\end{equation*}
where $\psi_k$'s are generated as in the simulation in Section 5.1.
We consider the following:
\begin{equation}
	H_0:~\mu_1(\bs{z})=\mu_2(\bs{z}), \hbox{~for~all~} \bs{z}\in \Omega \quad\mathrm{v.s.}\quad H_a:~\mu_1(\bs{z})\neq \mu_2(\bs{z}) \hbox{~for~some~} \bs{z} \in \Omega.
	\label{EQ:H0_2g}
\end{equation}
The mean functions for two groups considered here are $\mu_1(\bs{z}) = 20\{(z_1 -0.5)^2+(z_2 -0.5)^2\}$, and $\mu_2(\bs{z}) = \mu_1(\bs{z})+\delta (-z_1^3+z_2^3)$. The value of $\delta$ controls the difference between the two groups. The eigenvalues $\lambda_k$'s, eigenfunctions $\psi_k$'s and the measurement errors $\varepsilon_{ij}$'s are generated in the same way as in the simulation presented in Section 5.1, and we set $\sigma(\bs{z}) =0.1$.

Figure \ref{FIG:2} and Table \ref{TAB:S01} in the Appendices summarize the estimated probability of rejecting $H_0$ in (\ref{EQ:H0_2g}) with nominal level $\alpha=0.10,~0.05~\mathrm{and}~0.01$. When $\delta=0$, the probability should be close to the nominal level, and when $\delta$ is large, the estimated power should be close to $1$. From Figure \ref{FIG:2} and Table \ref{TAB:S01}, one can see even when the numbers of the images $n_1$ and $n_2$ are moderately large, the size of the test is very close to the nominal level. The estimated power increases quickly as $n_1$ and $n_2$ increase. The performance of the procedure is similar and consistent for different triangulations.


\begin{figure}
	\begin{center}
		\begin{tabular}{ccc}
			\includegraphics[height=2in,width=2.45in]{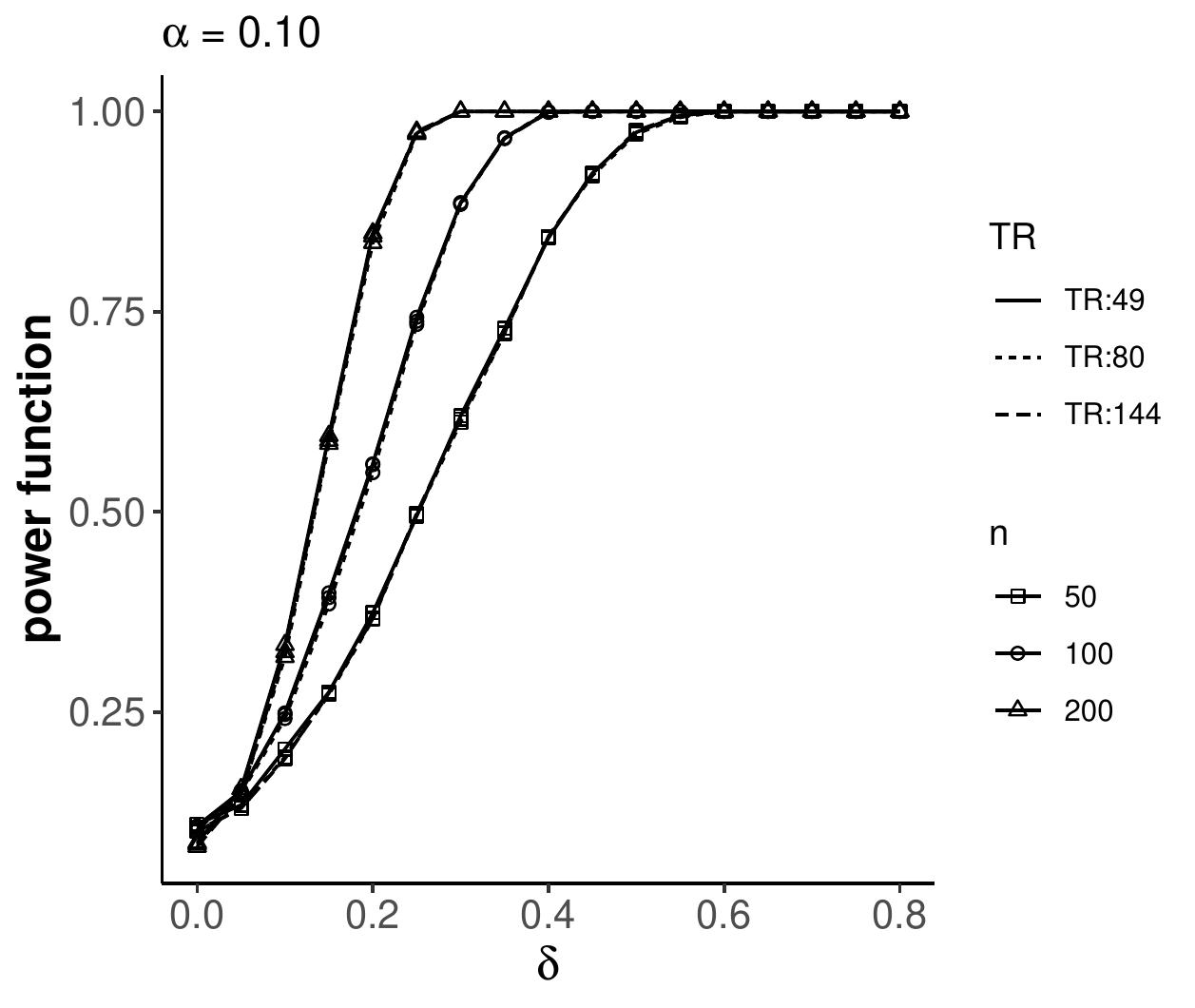} \hspace{-.7in} &
			\includegraphics[height=2in,width=2.45in]{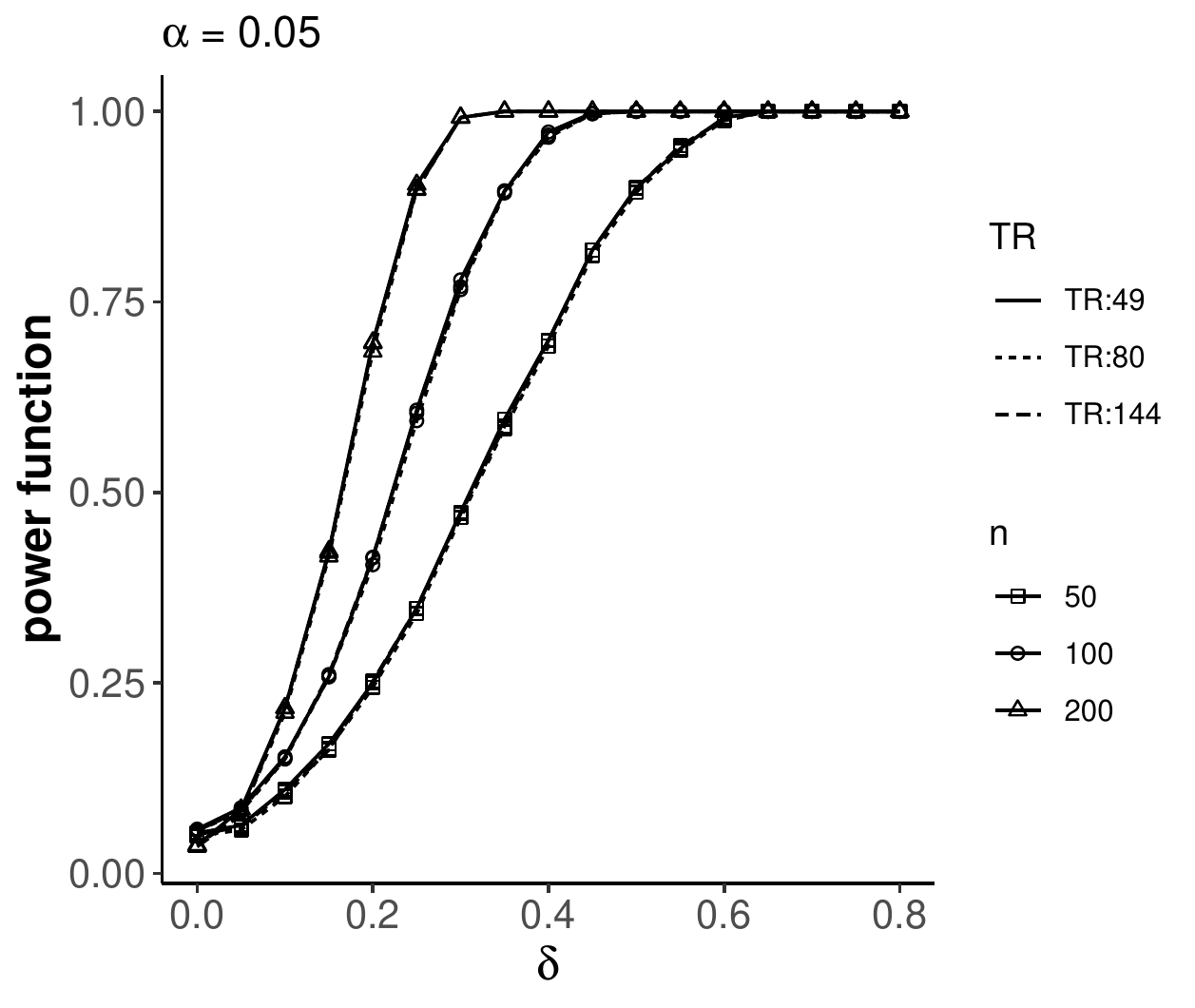} \hspace{-.7in} &
			\includegraphics[height=2in,width=2.45in]{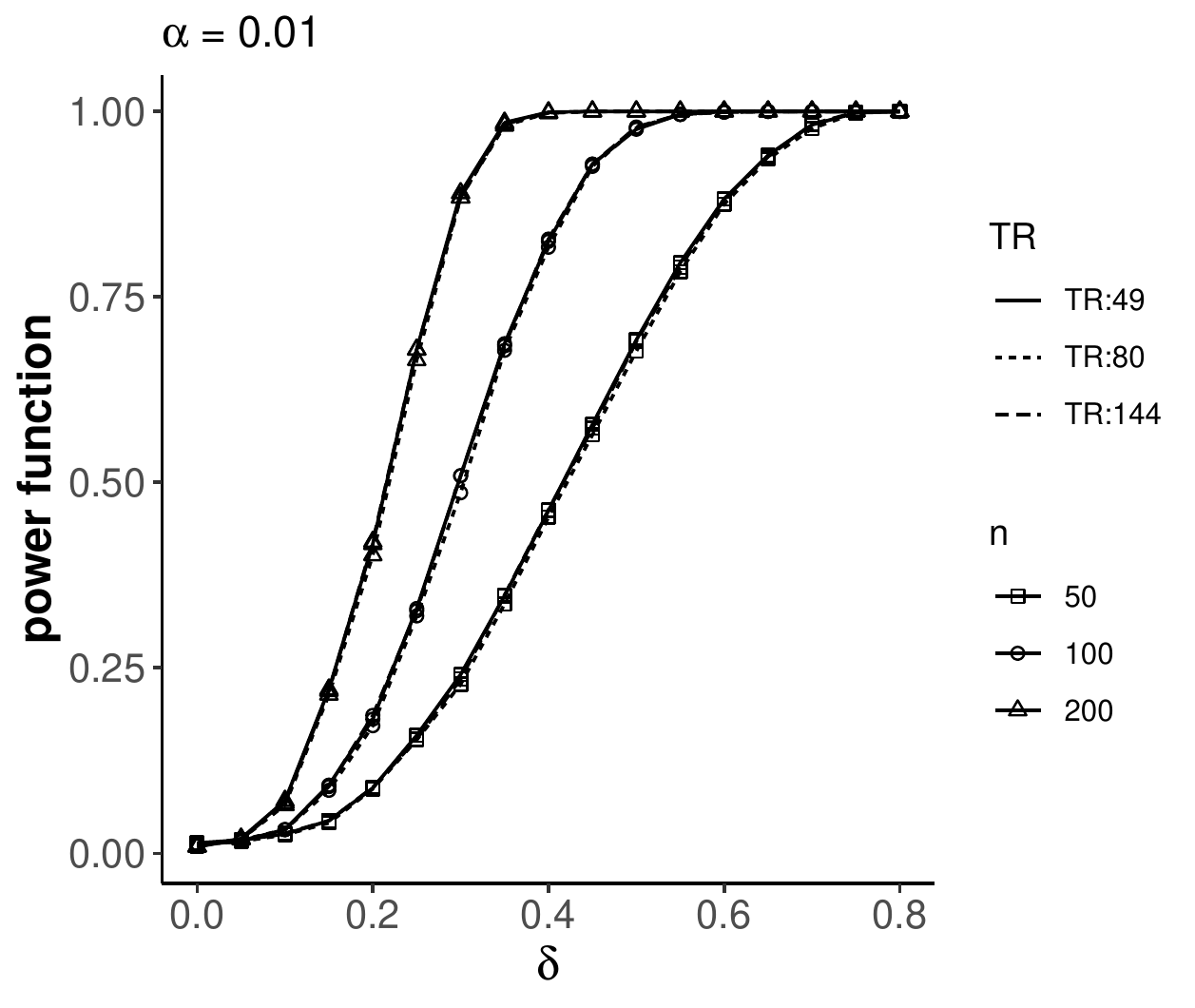}\\
			~~~~~(a) $\alpha=0.10$ & ~~~(b) $\alpha=0.05$ & ~~~(c) $\alpha=0.01$ \\[12pt]
		\end{tabular}    
	\end{center}
	\caption{Type I error and empirical power of two-sample test for different $\alpha$'s.}
	\label{FIG:2}
\end{figure}

\setcounter{chapter}{6} \renewcommand{\thetheorem}{6.\arabic{theorem}}
\renewcommand{\thelemma}{6.\arabic{lemma}}
\renewcommand{\theproposition}{6.\arabic{proposition}}
\renewcommand{\thetable}{6.\arabic{table}} \setcounter{table}{0} 
\renewcommand{\thefigure}{6.\arabic{figure}} \setcounter{figure}{0} 
\setcounter{equation}{0} \setcounter{lemma}{0} \setcounter{theorem}{0}
\setcounter{proposition}{0}\setcounter{corollary}{0}
\vskip .12in \noindent \textbf{6. Applications to Brain Imaging Data} \vskip 0.10in
\label{SEC:application}

In this section, we implement the proposed SCCs to analyze brain imaging data. In particular, we consider data taken from positron emission tomography (PET) studies with two different settings: one using the tracer [C${}^{11}$]WAY100635 that has an affinity for the serotonin 1A receptor in a study of major depressive disorder (MDD); and one using the fluorodeoxyglucose tracer [F${}^{18}$]FDG, a glucose analog, in a study of dementia. The imaging data are naturally three-dimensional in each case, but we focus here on one strategically selected slice in each setting. For the MDD study, we select the horizontal slice which passes through the midbrain and the amygdala, two regions implicated in MDD \citep{Parsey:Ogden:Tin:10}. As pointed out by \cite{Marcus:Mena:Subramaniam:14}, within the brain, the anatomical regions that are commonly affected by Alzheimer diseases are the bilateral superior medial frontal, anterior, middle cingulate and bilateral parietal cortices, while the regions such as the bilateral medial temporal lobes are usually less affected. Therefore, for the [F${}^{18}$]FDG study, we focus on the 48th horizontal slice of the brain since it passes through the frontal and parietal lobes. In each case, we consider the hypotheses in (\ref{EQ:H0_2g})  for the difference between two mean functions.

For the [C${}^{11}$]WAY100635 data, we have 40 subjects who are classified as normal controls and 26 who have been diagnosed with MDD \citep{Parsey:Oquendo:Ogden:06}. Figure \ref{FIG:app_04} displays the results of the application of the proposed procedure to these data.  The portions of the SCCs not containing zero can be seen in (a); the estimation of the mean difference between the two groups is shown in (b), and the lower and upper SCCs are shown in (c) and (d).

\begin{figure}
	\begin{center}
		\begin{tabular}{cccc}
			\includegraphics[scale=0.15]{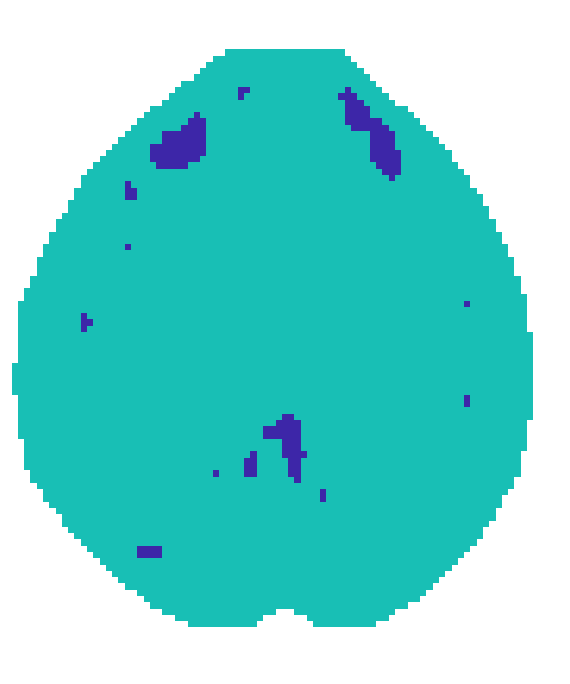} &\includegraphics[scale=0.15]{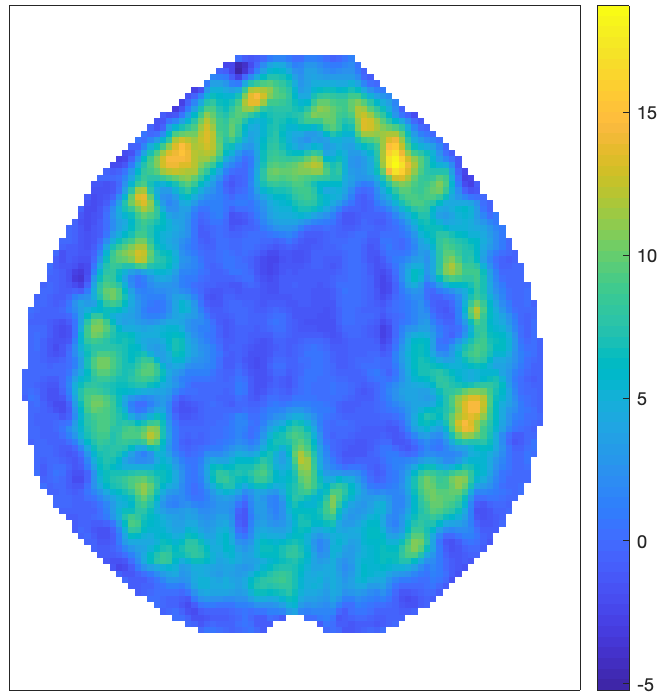} &\includegraphics[scale=0.15]{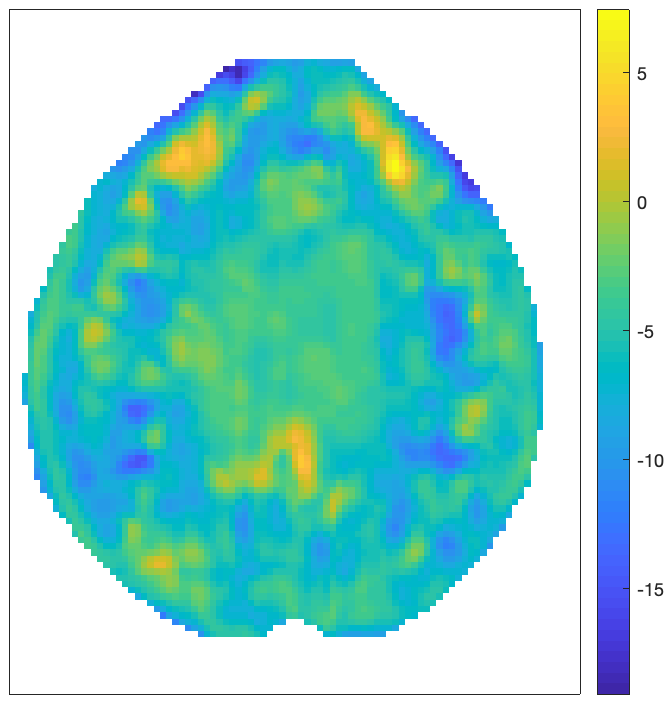} &\includegraphics[scale=0.15]{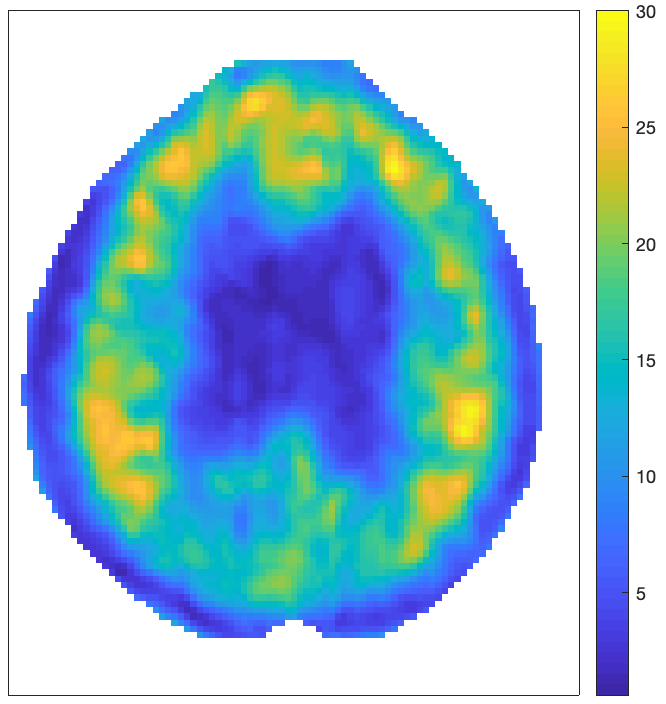}\\
			(a) Coverage of zero  &(b) $\widehat{\mu}_{\mathrm{MDD}}-\widehat{\mu}_{\mathrm{CON}}$  &(c) Lower SCC &(d) Upper SCC\\
			& & &\\
		\end{tabular}
	\end{center}
	\caption{SCC for comparison between CON and MDD. (In (a), yellow color indicates zero falls above the upper band and blue color indicates zero falls beneath the lower band.)}
	\label{FIG:app_04}
\end{figure}

Next, we illustrate these procedures by applying them to PET data from the Alzheimer's Disease Neuroimaging Initiative (ADNI; \url{adni.loni.usc.edu}). One of the primary goals of the ADNI study is to test whether PET and some other biological markers can be combined to measure the progression of mild cognitive impairment (MCI) and early Alzheimer's disease (AD). This dataset consists of 112 subjects with normal cognitive functions (control group; CON), 213 subjects with mild cognitive impairment (MCI), and 122 subjects who have been diagnosed  with Alzheimer's Disease (AD).

We use the proposed method in Section 4.1 to choose the triangulation. Among the three triangulation candidates ($\triangle_1$--$\triangle_3$) considered in simulation studies, we choose $\triangle_3$ when estimating the mean functions, and $\triangle_1$ when estimating the covariance functions. As we suggested in Section 4.2, we use smooth parameter $r=1$ with degree $d=5$ for the estimation of mean function and $d=2$ for the estimation of $\eta_{i}$'s. The results of this application are displayed in Figure \ref{FIG:app_02}. The first row of Figure \ref{FIG:app_02} displays the areas in which zero is not contained within the 95\% SCC comparing each pair of diagnostic groups.  This suggests that the AD group has widespread mean differences from each of the other two groups.  Since this dataset is relatively large, we also stratify the data according to sex and age (greater or less than 75 years) and within each stratum we examine the SCC for the difference between all pairs of diagnostic groups.  The breakdowns of these data in terms of these variables are given in Table \ref{TAB:S03}.

\begin{figure}
	\begin{center}
		\begin{tabular}{cccc}
			Group  &\hspace{0.0in} CON vs MCI & CON vs AD & MCI vs AD\\
			
			Entire Group & \includegraphics[scale=0.15]{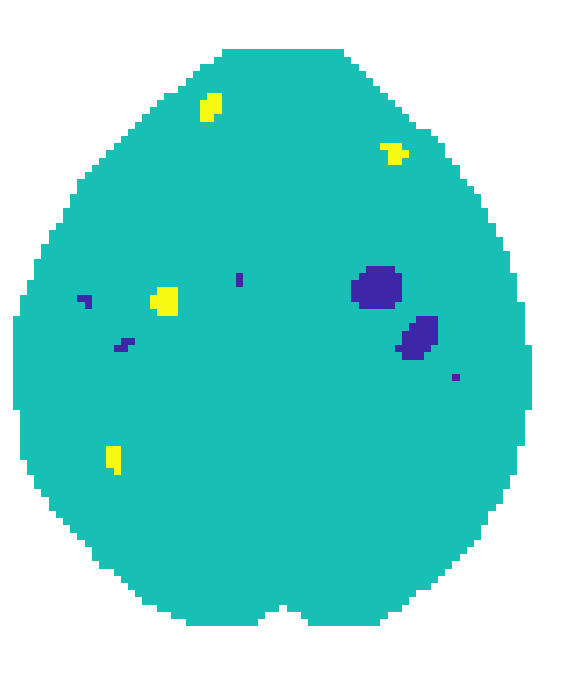} &   \includegraphics[scale=0.15]{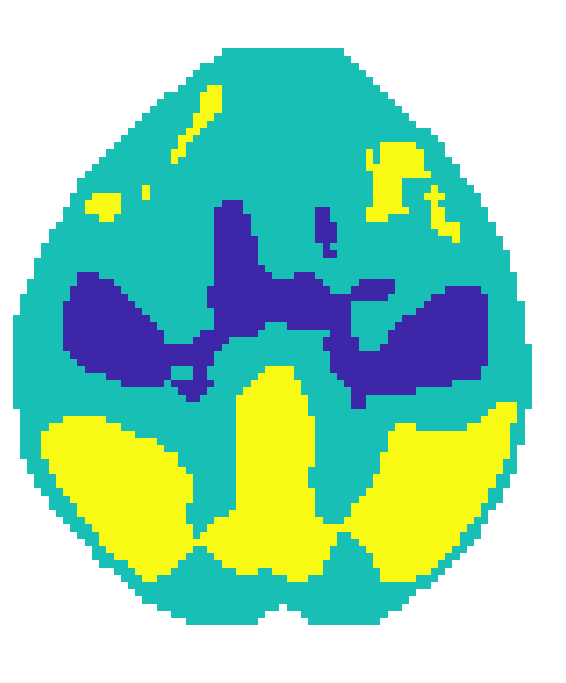} &   \includegraphics[scale=0.15]{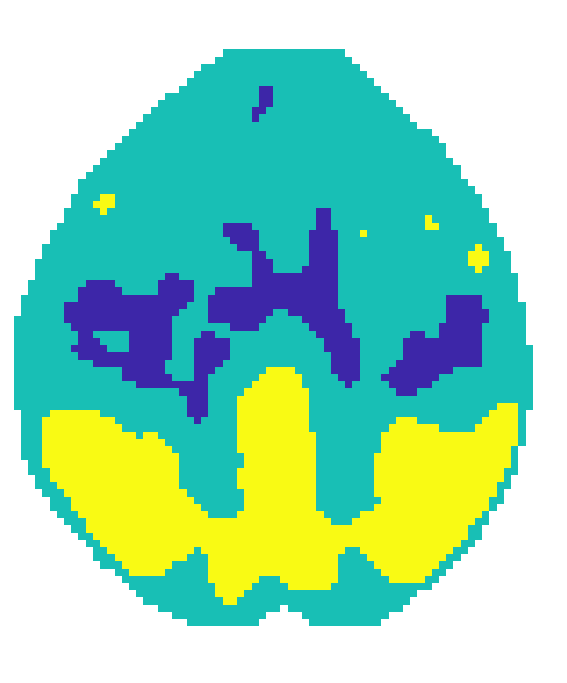}\\
			
			Female &  \includegraphics[scale=0.15]{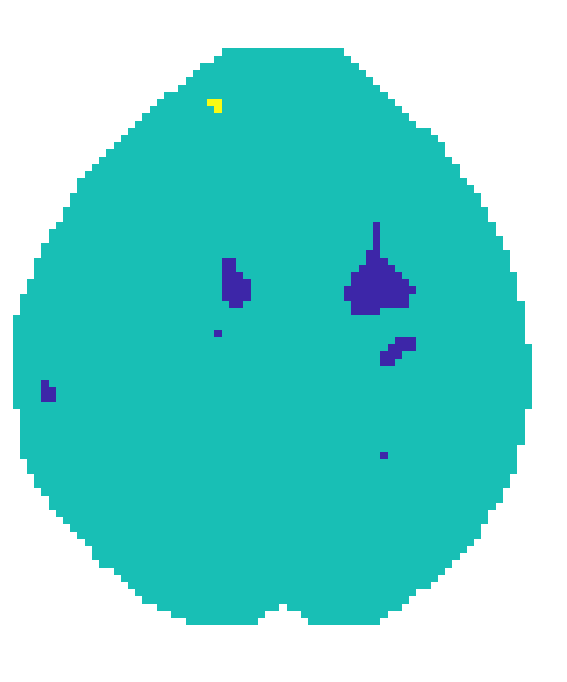} & \includegraphics[scale=0.15]{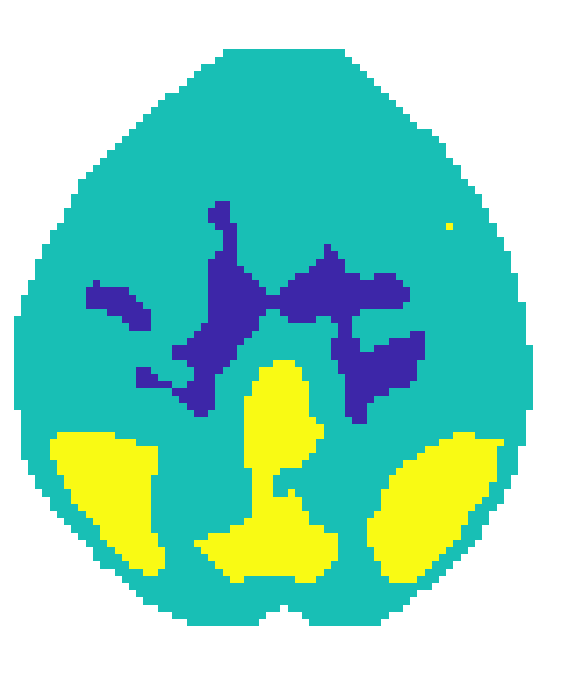} & \includegraphics[scale=0.15]{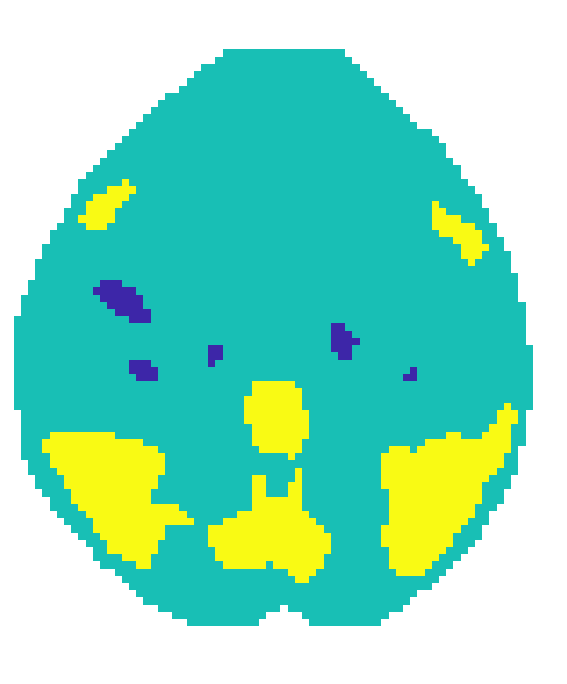}\\
			
			Male & \includegraphics[scale=0.15]{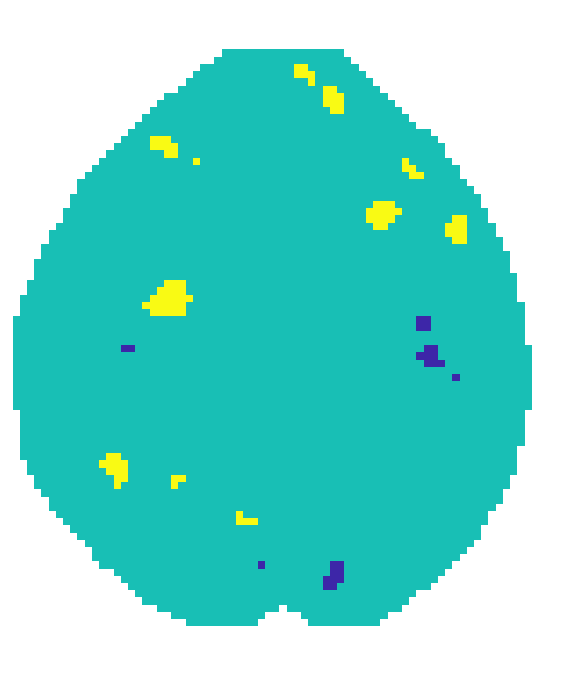} &\includegraphics[scale=0.15]{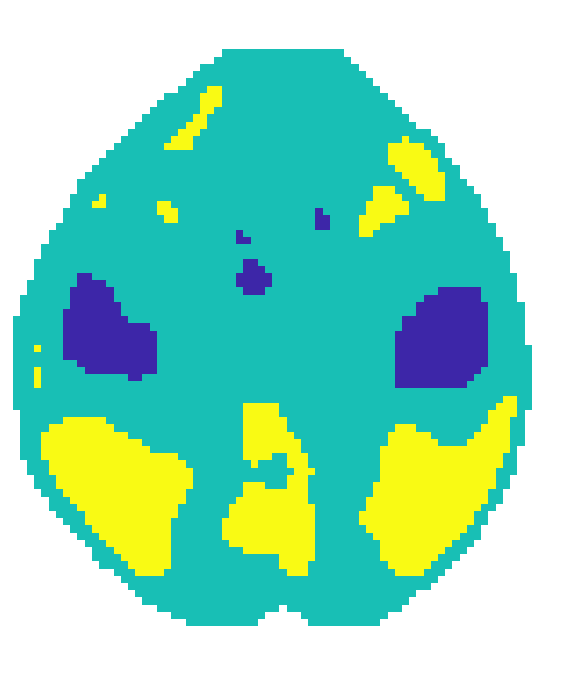} & \includegraphics[scale=0.15]{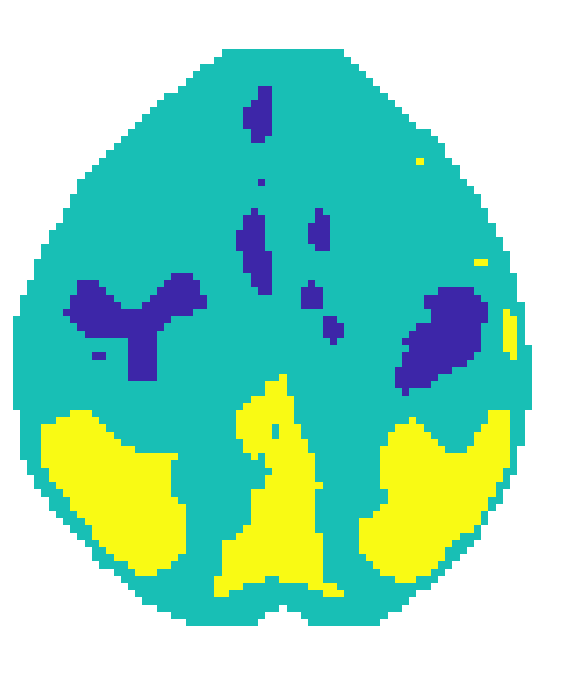}\\
			
			Age $\leq 75$ & \includegraphics[scale=0.15]{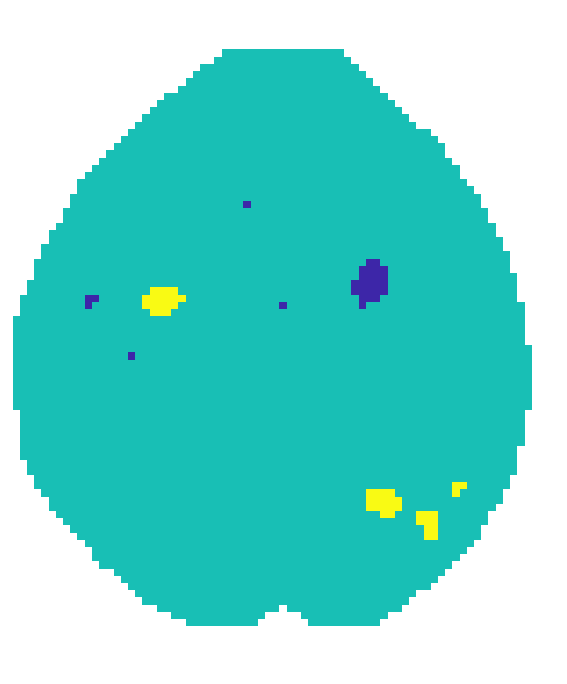} & \includegraphics[scale=0.15]{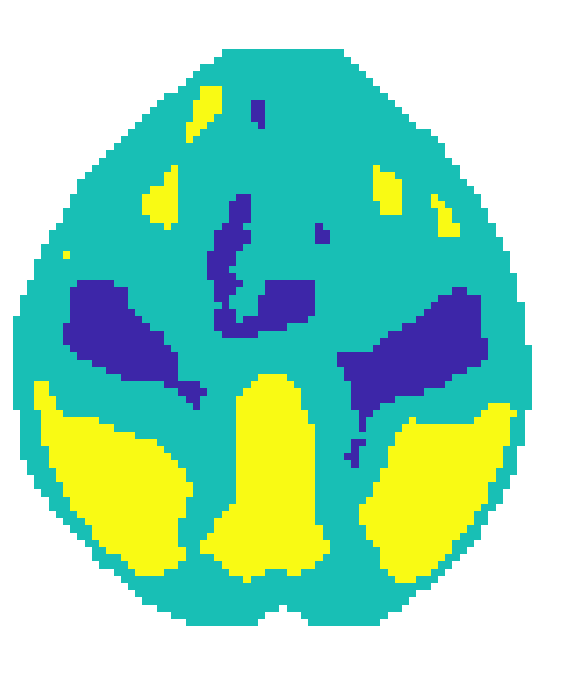} & \includegraphics[scale=0.15]{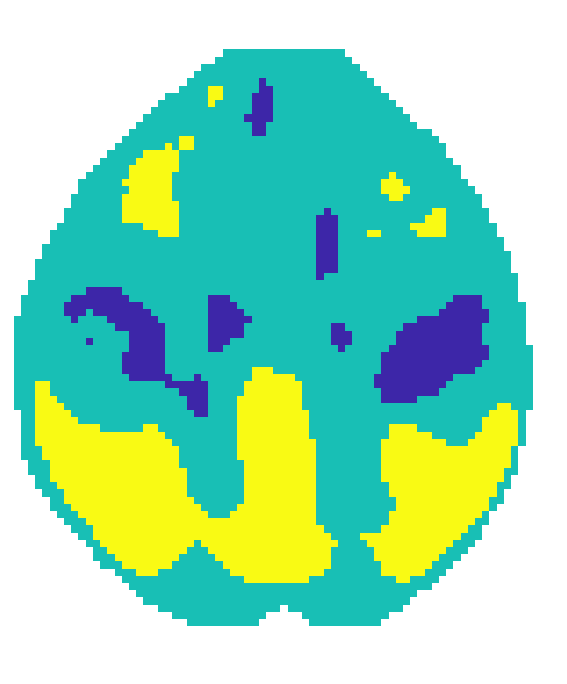}\\
			
			Age $>75$ & \includegraphics[scale=0.15]{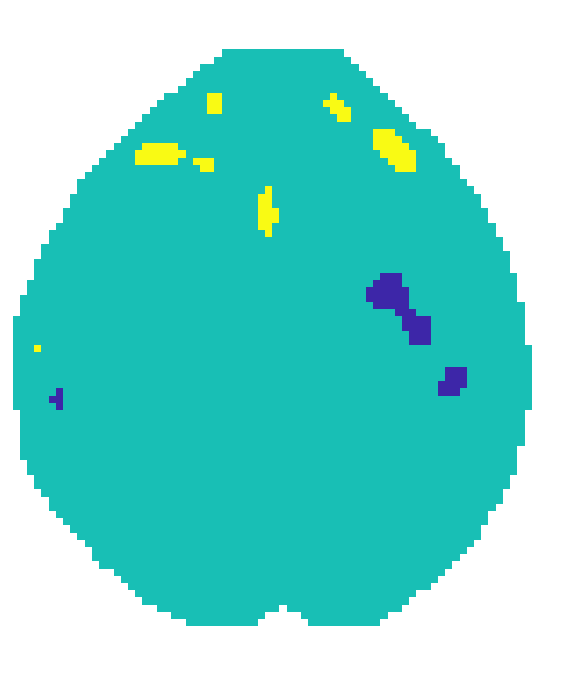} & \includegraphics[scale=0.15]{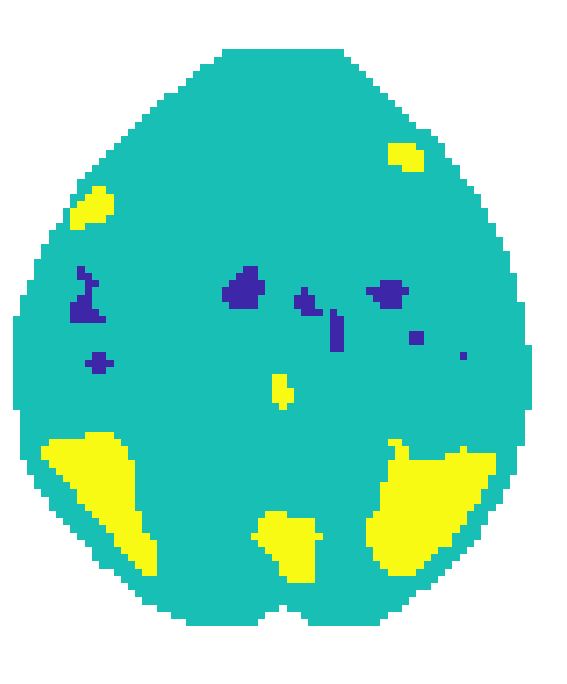} & \includegraphics[scale=0.15]{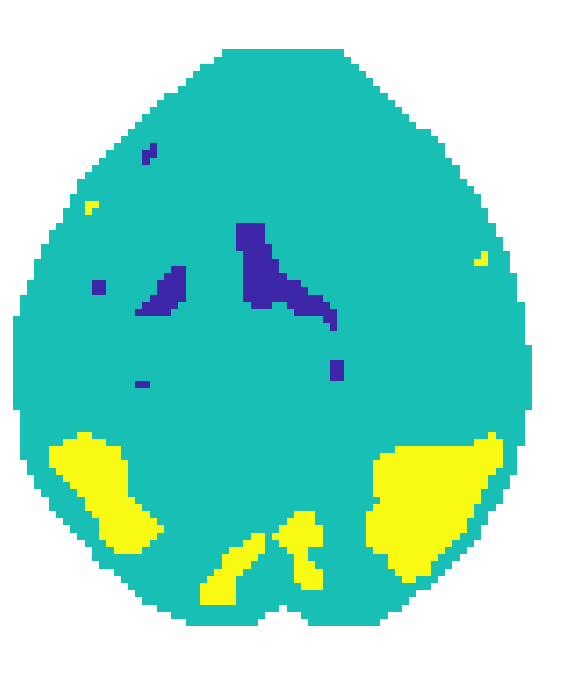}\\
			& & &\\
		\end{tabular}
	\end{center}
	\caption{Coverage of zero of SCC for pairwise comparisons among CON, MCI and AD. (Yellow color indicates zero falls above the upper band and blue color indicates zero falls beneath the lower band.)}
	\label{FIG:app_02}
\end{figure}

\begin{table}
	\caption{Two way table of diagnosis vs. gender and age group.}
	\begin{center}
		\begin{tabular}{llcccc} \hline\hline
			& &\multicolumn{3}{c}{Diagnosis} & \multirow{2}{*}{Total}\\ \cline{3-5}
			& &CON &MCI &AD & \\ \hline
			\multirow{2}{*}{Gender} &Female & 42 & 77 & 50 & 169\\
			&Male & 70 & 136 & 72 & 278\\ \hline
			\multirow{2}{*}{Age} &Age $\leq 75$ & 54 & 107 & 60 & 221\\
			&Age $> 75$ & 58 & 106 & 62 & 226\\ \hline
			Total & & 112 & 213 & 122 & 447\\ \hline\hline
		\end{tabular}
	\end{center}
	\label{TAB:S03}
\end{table}
The large apparent differences in the full group analysis can be seen (but to a lesser extent) in the comparisons among the males and among the relatively younger population, but are less pronounced in the other sub-group analyses.

\setcounter{chapter}{7} \setcounter{equation}{0} \vskip .10in
\noindent \textbf{7. Discussion} \label{SEC:discussion} \vskip 0.05in

We develop SCCs for mean functions of imaging data in the functional data framework. We show that the proposed procedure has desirable statistical properties: the estimators are semiparametrical efficient, asymptotically efficient as if all images were observed with no error. One main advantage of our method is its computational efficiency and feasibility for large-scale imaging data. It greatly enhances the application of SCCs to imaging data in biomedical studies.

In this paper, we approximate the bivariate function of the spatial effect using the bivariate splines over triangulations. We prefer the bivariate penalized splines (BPS) due to their (i) convenient representations with flexible degrees and various smoothness, (ii) computational efficiency, and (iii) great ability of handling the sparse designs. 

A few more issues still merit further research. For instance, the triangulation selection using the cross-validation and wild bootstrap works well in practice, but a stronger theoretical justification for their use is still needed in the FDA context. In recent years, there has been a great deal of work on functional regression. It is interesting to extend the proposed methodology to functional regression models. The construction of SCCs in such models is a significant challenge and requires more in-depth investigation. Last but not least, it is also interesting to develop SCCs for large-scale longitudinal imaging data, in which accounting for the dependence within the subject as well as for the longitudinal design is crucial for making inference.

\setcounter{chapter}{8} \setcounter{equation}{0} \vskip .10in
\noindent \textbf{Acknowledgment}

The  authors  are  truly  grateful to the editor, the associate editor and two reviewers for their constructive suggestions that led to significant improvement of the article. Li Wang's research was supported in part by National Science Foundation award DMS-1542332. Todd Ogden's work was partially supported by NIH grants 5 R01 EB024526 and 2 P50 MH090964. Data used in preparation of this article were obtained from the Alzheimer's Disease Neuroimaging Initiative (ADNI) database (\url{adni.loni.usc.edu}). As such, the investigators within the ADNI contributed to the design and implementation of ADNI and/or provided data but did not participate in analysis or writing of this report. A complete listing of ADNI investigators can be found at: \url{http://adni.loni.usc.edu/wp-content/uploads/how_to_apply/ADNI_Acknowledgement_List.pdf}.
\vspace*{-8pt}

\newpage
\vskip 0.10in \noindent \textbf{Appendices}

\setcounter{chapter}{9} \renewcommand{\thetheorem}{A.\arabic{theorem}}
\renewcommand{\theproposition}{A.\arabic{proposition}}
\renewcommand{\thelemma}{A.\arabic{lemma}}
\renewcommand{\thecorollary}{A.\arabic{corollary}}
\renewcommand{\theequation}{A.\arabic{equation}} \renewcommand{\thesubsection}{A.\arabic{subsection}}
\renewcommand{\thetable}{{\arabic{table}}} \setcounter{table}{0}
\renewcommand{\thefigure}{\arabic{figure}} \setcounter{figure}{0}
\setcounter{equation}{0} \setcounter{lemma}{0} \setcounter{proposition}{0}
\setcounter{theorem}{0} \setcounter{subsection}{0}\setcounter{corollary}{0}

\vskip .05in \noindent \textbf{A. More Results from Simulation Studies}
\label{sec:web-A}
In this section, we present more simulation results from Sections 5.1 and 5.2 in the main paper.
For the simulation example presented in Section 5.1, Figures \ref{FIG:S01} - \ref{FIG:S03} present the 99\% SCCs for the quadratic mean function based on sample size $n = 50$, $100$ and $200$. Figures \ref{FIG:S04} - \ref{FIG:S06} present the 99\% SCCs for the exponential, cubic and sine mean functions with $n=50$, respectively. Table \ref{TAB:S02} summarizes the estimated coverage rate of the SCCs based on 1000 replications for $3682$. Table \ref{TAB:S01} provides the type I error and the empirical power of the two-sample test presented in Section 5.2.

To illustrate the benefits of using our method, we conduct the following simulation study to compare the proposed SCC with the traditional multiple testing with Bonferroni correction and the cluster threshold-based method \citep{Poldrack:Mumford:Nichols:11}. Similar as in Sections 5.1 in the main paper, we generate the images from the following model:
\begin{equation*}
	Y_{ij} = \mu(\bs{z}_{j}) + \sum_{k=1}^{\kappa} \sqrt{\lambda_k} \xi_{ij} \psi_{k}(\bs{z}_{j}) + \sigma(\bs{z}_{j}) \varepsilon_{ij},~\bs{z}_j\in \Omega\subset [0,1]^2.
\end{equation*}
For comparison, we consider the following mean function, which is similar as the exponential function in Example 1 in Section 5.1:
\begin{equation*}
	\mu(\bs{z})=\left\{\begin{array}{ll}
		\exp\left[-30\left\{(z_1-0.5)^2+(z_2-0.5)^2\right\}\right], &(z_1-0.5)^2+(z_2-0.5)^2\leq 0.10\\
		0, &(z_1-0.5)^2+(z_2-0.5)^2>0.10\\ \end{array}\right.,
\end{equation*}
and the corresponding images are shown in Figure \ref{Fig:S07}. To simulate the within-image dependence, we generate $\xi_{ik}\overset{\text{i.i.d}}{\sim}N(0,1)$ for $i=1,\ldots,n,~k=1,2$, and orthonormal basis functions $\psi_1(\bs{z})=0.988\sin(\pi z_1)+0.5,~\psi_2(\bs{z})=2.157\cos(\pi z_2)-0.084$. For the eigenvalues, we set $\lambda_1=0.2$, $\lambda_2=0.05$. We consider $n=100,~200$ and for each image, the number of pixels is set to be the same as in typical brain imaging which is $N=79\times 95=7,505$.

Based on these images, we are interested in testing $H_0:~\mu(\bs{z}_{j})=0,~\bs{z}_{j}\in\Omega$, $j=1,\ldots,N$, at significance level $\alpha=0.05$. For the cluster approach, the threshold is usually set by the practitioner's experience and prior knowledge. In this example, we consider three thresholds: 0.1, 0.05 and 0.01, as suggested in \cite{Poldrack:Mumford:Nichols:11}. For comparison, we consider the following criteria:
\begin{itemize}
	\item False Positive Rate (FPR): the proportion of pixels within the domain which are discovered incorrectly as positive (significantly different from zero);
	\item False Negative Rate (FNR): the proportion of pixels within the domain which are discovered incorrectly as negative (not significantly different from zero);
	\item False Discovery Rate (FDR): the proportion of detected pixels that are false positives.
\end{itemize} 

Table \ref{TAB:S00} summarizes all results based on 100 replications. Figure \ref{FIG:S08} shows the discovery of the true signal via different methods for a typical replication with $n=200$. Based on Table \ref{TAB:S00} and Figure \ref{FIG:S08}, it is obvious that the pixel-wise inference with Bonferroni correction is very conservative. The FPRs and FDRs of the Bonferroni correction are very close to zero, while the FNRs are very high, even greater than 30\%. Although the FPRs and FDRs for the proposed SCC are above zero, they are still very small, usually less than 1\%. Meanwhile, the FNRs for the proposed SCC are much smaller than the Bonferroni correction. In addition, one sees that the cluster threshold-based method heavily depends on the choice of threshold. When using 0.01 as the threshold instead of 0.1, the FPR dramatically decreases while the FNR considerably increases. For $n=200$, the FPR and FDR of the SCC are both smaller than those of the Cluster-threshold method. From Figure \ref{FIG:S08}, we can also see that our method aims at detecting contiguous groups of active pixels because it is able to account for the spatial dependence within data. 

\setcounter{chapter}{10} \renewcommand{\thetheorem}{B.\arabic{theorem}}
\renewcommand{\theproposition}{B.\arabic{proposition}}
\renewcommand{\thelemma}{B.\arabic{lemma}}
\renewcommand{\thecorollary}{B.\arabic{corollary}}
\renewcommand{\theequation}{B.\arabic{equation}} \renewcommand{\thesubsection}{B.\arabic{subsection}}
\renewcommand{\thetable}{{B.\arabic{table}}} \setcounter{table}{0}
\renewcommand{\thefigure}{B.\arabic{figure}} \setcounter{figure}{0}
\setcounter{equation}{0} \setcounter{lemma}{0} \setcounter{proposition}{0}
\setcounter{theorem}{0} \setcounter{subsection}{0} \setcounter{corollary}{0}
\vskip .05in \noindent \textbf{B. Technical Proofs}
\label{sec:web-B}

In the following, we use $c$, $C$, $c_1$, $c_2$, $C_1$, $C_2$, etc. as generic constants, which may be different even in the same line. For any sequence $a_n$ and $b_n$, we write $a_n \asymp b_n$ if there exist two positive constants $c_1, c_2$ such that $c_1 |a_n| \le |b_n| \le c_2 |a_n|$, for all $n\ge 1$. For a real valued vector $\bs{a}$, denote $\|\bs{a}\|$ its Euclidean norm. For a matrix $\mathbf{A}=(a_{ij})$, denote $\|\mathbf{A}\|_{\infty}=\max_{i,j} |a_{ij}|$. For any positive definite matrix $\mathbf{A}$, let $\lambda_{\min}(\mathbf{A})$ and $\lambda_{\max}(\mathbf{A})$ be the
smallest and largest eigenvalues of $\mathbf{A}$.

For $g_{1}(\bs{z})$, $g_{1}(\bs{z})$, define the theoretical and empirical inner products as
\begin{equation}
	\langle g_{1}, g_{2}\rangle = \int_{\Omega}g_{1}(\bs{z})g_{2}(\bs{z})d\bs{z},~\langle g_{1}, g_{2}\rangle_{N} = \frac{1}{N}\sum_{j=1}^N g_{1}(\bs{z}_j) g_{2}(\bs{z}_j),
	\label{DEF:inner_product}
\end{equation}
and denote the corresponding theoretical and empirical norms $\|\cdot\|$ and $\|\cdot\|_{N}$.
Furthermore, let $\| \cdot\|_{\mathcal{E}}$ be the norm introduced by the inner product $\langle \cdot, \cdot\rangle_{\mathcal{E}}$, where, for $g_{1}(\bs{z})$ and $g_{2}(\bs{z})$,
\begin{equation*}
	\left\langle g_{1},g_{2}\right\rangle_{\mathcal{E}}=\int_{\Omega}\left\{\sum_{i+j=2}
	\binom{2}{i}
	(\nabla_{z_{1}}^{i}\nabla_{z_{2}}^{j}g_{1}(\bs{z}))\right\}\!\left\{\sum_{i+j=2}
	\binom{2}{i}
	(\nabla_{z_{1}}^{i}\nabla_{z_{2}}^{j}g_{2}(\bs{z}))\right\}dz_{1}dz_{2}.
	\label{DEF:energy_product}\end{equation*}
Let $A(\Omega)$ be the area of the domain $\Omega$, and without loss of generality, we assume $A(\Omega)=1$ in the rest of the article.

\vskip .05in \noindent \textbf{B.1. Properties of Bivariate Splines}

We cite two important results from \cite{Lai:Schumaker:07}.

\begin{lemma}[Theorem 2.7, \cite{Lai:Schumaker:07}]
	\label{LEM:normequity}
	Let $\{B_m\}_{m\in \mathcal{M}}$ be the Bernstein polynomial basis for spline space $\mathcal{S}_{d}^{r}(\triangle)$ defined over a $\pi$-quasi-uniform triangulation $\triangle$.  Then there exist positive constants $c$, $C$ depending on the smoothness $r$, $d$, and the shape parameter $\pi$ such that
	\[
	c|\triangle|^{2}\sum_{m\in \mathcal{M}}\gamma_{m}^{2}\leq
	\left\Vert \sum_{m\in \mathcal{M}} \gamma_{m}B_{m}\right\Vert^{2}\leq C|\triangle|^{2}\sum_{m\in \mathcal{M}}\gamma_{m}^{2}.
	\label{EQ:normequity}
	\]
\end{lemma}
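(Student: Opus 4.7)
The plan is to reduce the global norm equivalence to a local one on each triangle, exploit the affine invariance of Bernstein polynomials to pass to a reference triangle, and then use shape regularity to replace areas of individual triangles by a uniform power of $|\triangle|$. Since each Bernstein basis polynomial $B_{ijk}^{T,d}$ is supported on a single triangle $T$, the square of the $L^2$ norm decomposes as
\begin{equation*}
\left\|\sum_{m\in \mathcal{M}} \gamma_{m}B_{m}\right\|^{2}
=\sum_{T\in\triangle}\left\|\sum_{m\in \mathcal{M}_T}\gamma_{m}B_{m}\right\|_{L^2(T)}^{2},
\end{equation*}
where $\mathcal{M}_T$ indexes the Bernstein basis polynomials whose support is $T$. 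So it suffices to establish, on each $T$, the local norm equivalence
\begin{equation*}
c_{d,\pi}\,|T|^2 \sum_{m\in \mathcal{M}_T}\gamma_{m}^{2}
\;\le\;
\left\|\sum_{m\in \mathcal{M}_T}\gamma_{m}B_{m}\right\|_{L^2(T)}^{2}
\;\le\;
C_{d,\pi}\,|T|^2\sum_{m\in \mathcal{M}_T}\gamma_{m}^{2}.
\end{equation*}

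First I would fix a reference triangle $\widehat T$ (say with vertices $(0,0),(1,0),(0,1)$) and note that any triangle $T$ is the image of $\widehat T$ under an affine map $F_T(\widehat{\bs{z}})=A_T\widehat{\bs{z}}+b_T$. Barycentric coordinates are affine invariants, so $B_{ijk}^{T,d}\circ F_T = B_{ijk}^{\widehat T,d}$. A change of variables gives $\|\sum \gamma_m B_m\|_{L^2(T)}^2 = |\det A_T|\,\|\sum \gamma_m B_{ijk}^{\widehat T,d}\|_{L^2(\widehat T)}^2$ with $|\det A_T| = 2\,\mathrm{Area}(T)$. On the fixed finite-dimensional space $\mathbb{P}_d(\widehat T)$, any two norms are equivalent; in particular there exist absolute constants $c_d,C_d>0$ depending only on the degree $d$ such that
\begin{equation*}
c_d \sum_{i+j+k=d}\gamma_{ijk}^{2}
\;\le\;
\Big\|\textstyle\sum\gamma_{ijk}B_{ijk}^{\widehat T,d}\Big\|_{L^2(\widehat T)}^{2}
\;\le\;
C_d\sum_{i+j+k=d}\gamma_{ijk}^{2}.
\end{equation*}
Combining these two displays reduces the local norm equivalence to the geometric statement $\mathrm{Area}(T)\asymp |T|^2$.

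The $\pi$-quasi-uniformity assumption (A4) yields exactly this: by definition $\mathrm{Area}(T)\le \tfrac{\sqrt3}{4}|T|^2$, while the inscribed-disk radius bound gives $\mathrm{Area}(T)\ge \pi \varrho_T^2 \ge \pi\, |T|^2/\pi_T^2$, where the shape parameter $\pi_T$ is uniformly bounded by the quasi-uniformity constant. Hence $\mathrm{Area}(T)\asymp |T|^2$, and with the same constant also $|T|\asymp |\triangle|$, since the upper bound $|T|\le|\triangle|$ is definitional and the lower bound $|T|\ge \varrho_T\ge \pi^{-1}|\triangle|$ follows from the definition of quasi-uniformity applied twice (once to $T$ and once to a triangle realizing $|\triangle|$). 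Substituting $|T|^2\asymp|\triangle|^2$ into the local bound and summing over the $M$ triangles yields the claimed two-sided bound with constants depending only on $d$, $r$, and $\pi$.

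The main obstacle is purely bookkeeping: one needs to check that the constants produced at the reference-triangle stage are genuinely independent of $T$, which requires the affine invariance of Bernstein polynomials under $F_T$ and the shape-regularity arguments to be carried out uniformly. Everything else (locality, change of variables, finite-dimensional norm equivalence) is routine, and no smoothness constraint on $\mathcal{S}_d^r(\triangle)$ is actually used — the lemma is a statement about the full local Bernstein system on each triangle, and any global basis of the spline space will inherit the two-sided bound up to the linear constraints encoded by $\mathbf{H}$.
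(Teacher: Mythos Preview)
The paper does not prove this lemma at all: it is cited verbatim as Theorem~2.7 of \cite{Lai:Schumaker:07} and used as a black box. Your sketch is the standard proof of that theorem (localize to each triangle, pull back to a reference triangle by affine invariance of barycentric coordinates, invoke finite-dimensional norm equivalence, then use $\pi$-quasi-uniformity to replace $\mathrm{Area}(T)$ by $|\triangle|^2$), and it is correct. Your closing remark is also apt: the inequality is really a statement about the full Bernstein system on $\mathcal{S}_d^{-1}(\triangle)$ and holds for arbitrary coefficient vectors $(\gamma_m)$, so the smoothness parameter $r$ plays no role beyond restricting to a subspace.
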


\begin{lemma}[Theorems 10.2 and 10.10, \cite{Lai:Schumaker:07}]
	\label{LEM:appord}
	Suppose that $\triangle$ is a $\pi$-quasi-uniform triangulation of a polygonal domian $\Omega$, and $g(\cdot) \in  \mathcal{W}^{d+1,\infty}(\Omega)$.
	\begin{itemize}
		\item[(i)] For bi-integer $(a_{1},a_{2})$ with $0\leq {a_{1}}+{a_{2}} \leq d $, there exists a spline $g^{\ast}(\cdot)\in \mathcal{S}_{d}^{0}(\triangle)$ such that $\Vert \nabla_{z_{1}}^{a_{1}}\nabla_{z_{2}}^{a_{2}}\left(g-g^{\ast}\right) \Vert_{\infty}\leq C|\triangle|^{d+1-a_{1}-a_{2}}|g|_{d+1,\infty}$, where $C$ is a constant depending on $d$, and the shape parameter $\pi$.
		\item[(ii)] For bi-integer $(a_{1},a_{2})$ with $0\leq {a_{1}}+{a_{2}} \leq d $, there exists a spline $g^{\ast\ast}(\cdot)\in \mathcal{S}_{d}^{r}(\triangle)$ ($d\geq 3r+2$) such that $\Vert \nabla_{z_{1}}^{a_{1}}\nabla_{z_{2}}^{a_{2}}\left(g-g^{\ast\ast}\right) \Vert_{\infty}\leq C|\triangle|^{d+1-a_{1}-a_{2}}|g|_{d+1,\infty}$, where $C$ is a constant depending on $d$, $r$, and the shape parameter $\pi$.
	\end{itemize}
\end{lemma}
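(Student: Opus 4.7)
The plan is to exhibit explicit quasi-interpolation operators $Q_0 : \mathcal{W}^{d+1,\infty}(\Omega) \to \mathcal{S}_d^0(\triangle)$ and $Q_r : \mathcal{W}^{d+1,\infty}(\Omega) \to \mathcal{S}_d^r(\triangle)$, then bound $\|\nabla_{z_1}^{a_1}\nabla_{z_2}^{a_2}(g - Qg)\|_{\infty,T}$ on each triangle separately using local polynomial approximation combined with Markov-type inverse inequalities and the shape-parameter control supplied by $\pi$-quasi-uniformity. Taking the maximum over triangles in $\triangle$ then produces the global $L^{\infty}$ bound with the advertised rate $|\triangle|^{d+1-a_1-a_2}$.

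For part (i), I would build $g^{\ast} = Q_0 g$ as follows. On every triangle $T$, let $p_T$ be the degree-$d$ Taylor polynomial of $g$ at the barycenter of $T$. The Taylor remainder estimate immediately gives
\begin{equation*}
\|\nabla_{z_1}^{a_1}\nabla_{z_2}^{a_2}(g - p_T)\|_{\infty,T} \leq C|T|^{d+1-a_1-a_2}|g|_{d+1,\infty,\Omega}
\end{equation*}
for any bi-integer with $a_1+a_2\leq d$. To glue the pieces into a $C^0$ spline, I would express each $p_T$ in Bernstein--Bézier form and replace the coefficients at every shared domain point on an edge or vertex by a convex combination of the neighboring values. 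Since two polynomials $p_T$, $p_{T'}$ on adjacent triangles both approximate $g$ to order $|T|^{d+1}$, their Bernstein coefficients along the common edge agree up to $O(|T|^{d+1}|g|_{d+1,\infty})$, so the averaging only perturbs coefficients at this same rate. Lemma B.1 (coefficient--norm equivalence) together with the standard Markov inequality on $T$ (which costs a factor $|T|^{-(a_1+a_2)}$ per derivative) then transfers this coefficient bound to the derivative norm, yielding the required estimate.

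For part (ii), the construction must enforce $C^r$ continuity across edges, which is more delicate. Here I would invoke the minimal determining set (MDS) machinery for $\mathcal{S}_d^r(\triangle)$: when $d\geq 3r+2$, each triangle retains enough Bernstein coefficients that the cross-edge $C^r$-smoothness equations uniquely determine the remaining coefficients, and the extension operator from MDS values to the full coefficient vector is bounded with constant depending only on $d$, $r$, and $\pi$. I would specify the MDS values of $g^{\ast\ast}$ by local averages of appropriate derivatives of $g$ near each MDS point, designed so that restricted to any single triangle $T$, $g^{\ast\ast}|_T$ is within $O(|T|^{d+1}|g|_{d+1,\infty})$ of the Taylor polynomial $p_T$. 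The Bramble--Hilbert inequality applied on $T$ then yields the local $|T|^{d+1-a_1-a_2}$ rate on $g - g^{\ast\ast}$, uniformly in $T$.

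The main obstacle will be establishing stability of the $C^r$ gluing in part (ii): one must show that the linear map from MDS coefficients to the full Bernstein coefficient vector has operator norm bounded independently of $\triangle$. This is precisely where $\pi$-quasi-uniformity is essential, because it controls the geometric constants appearing in the smoothness-condition matrices across edges, and where the assumption $d\geq 3r+2$ guarantees the non-degeneracy of the MDS (without it the spline space does not have full approximation order). Once this stability is in hand, coupling it with the local Taylor approximation estimate and the Markov inverse inequality closes the argument for both parts.
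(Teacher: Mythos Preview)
The paper does not supply its own proof of this lemma: it is stated with an explicit citation to Theorems~10.2 and~10.10 of Lai and Schumaker (2007) and used as a black box throughout the appendix. So there is no ``paper's proof'' to compare against; you are effectively reconstructing the argument from the cited monograph.

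Your sketch is faithful to that original argument. The construction of $Q_0$ via trianglewise Taylor polynomials followed by averaging of Bernstein--B\'ezier coefficients at shared domain points is exactly how $C^0$ quasi-interpolants are built there, and for part~(ii) the minimal determining set machinery with the stability of the MDS-to-full-coefficient extension (bounded in terms of the smallest angle, hence of $\pi$) is precisely the content of Theorem~10.10. Your identification of $d\ge 3r+2$ as the threshold guaranteeing a stable local MDS, and of $\pi$-quasi-uniformity as what controls the geometric constants in the smoothness conditions, is accurate. One small refinement: in Lai--Schumaker the local approximation step is typically phrased via averaged Taylor polynomials and the Bramble--Hilbert lemma on a star of triangles rather than a single-point Taylor expansion, which is needed because the MDS coefficients for a given triangle can depend on values of $g$ over neighboring triangles; your outline alludes to this in part~(ii) but you should make the locality of the quasi-interpolant (support contained in a bounded star) explicit when you carry out the details.
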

Lemma \ref{LEM:appord} shows that $\mathcal{S}_{d}^{0}(\triangle)$ has full approximation power, and $\mathcal{S}_{d}^{r}(\triangle)$ also has full approximation power if $d\geq 3r+2$.

\begin{lemma}[Lemma B.4 in Supplemental Materials, \cite{Yu:etal:18}]
	\label{LEM:integration}
	Under Assumptions (A3) and (A4), for any Bernstein basis polynomials $B_{m}(\bs{z}),~m \in \mathcal{M}$, of degree $d\geq0$, one has
	\begin{equation*}
		\max_{m\in\mathcal{M}}\left|\frac{1}{N}\sum_{j=1}^{N}B_{m}^{k}(\bs{z}_{j})
		-\int_{\Omega}B_{m}^{k}(\bs{z})d\bs{z}\right|
		= O(N^{-1/2}|\triangle|),~~ 1 \leq k < \infty,
	\end{equation*}
	\begin{equation*}
		\max_{m, m'\in\mathcal{M}}\left|\frac{1}{N}\sum_{j=1}^{N}B_{m}(\bs{z}_{j})B_{m'}(\bs{z}_{j})
		-\int_{\Omega}B_{m}(\bs{z})B_{m'}(\bs{z})d\bs{z}\right|\\
		= O(N^{-1/2}|\triangle|),~~ 1 \leq k < \infty,
	\end{equation*}
	\begin{equation*}
		\max_{m, m^{\prime} \in \mathcal{M}}\left|\frac{1}{N^2}\!\!\sum_{j,j^{\prime}=1}^{N}G_{\eta}(\bs{z}_{j},\bs{z}_{j^{\prime}})B_{m}(\bs{z}_{j})B_{m^{\prime}}(\bs{z}_{j^{\prime}})
		-\!\!\!\int_{\Omega^{2} }G_{\eta}(\bs{z},\bs{z}^{\prime})B_m(\bs{z})B_{m^{\prime}}(\bs{z}^{\prime})
		d\bs{z}d\bs{z}^{\prime}\right|
		=O(N^{-1/2}|\triangle|^{3}),
	\end{equation*}
	\begin{equation*}
		\max_{m\in \mathcal{M}}\left|\|\sigma B_m\|_{N}^2 - \|\sigma B_m\|^2\right|=\max_{m\in \mathcal{M}}\left| \frac{1}{N}
		\sum_{j=1}^{N}B_{m}^{2}(\bs{z}_{j})\sigma^{2}(\bs{z}_{j})
		-\int_{\Omega}\sigma^{2}(\bs{z})B_{m}^{2}(\bs{z})d\bs{z}\right|=O(N^{-1/2}|\triangle|).
	\end{equation*}
\end{lemma}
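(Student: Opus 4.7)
The plan is to view all four displays as quadrature errors for approximating an integral by a Riemann sum over the regular pixel grid, and then exploit three structural facts about the Bernstein basis: $\|B_m\|_\infty \le 1$; each $B_m$ is supported on (a single triangle, or the star around, a triangle) of area $\asymp |\triangle|^2$; and the Markov-type inverse inequality, which gives $\|\nabla^\ell B_m\|_\infty = O(|\triangle|^{-\ell})$ for $\ell=1,2$. Because $A(\Omega)=1$, a regular grid of $N$ pixels has spacing $h \asymp N^{-1/2}$ and each pixel $P_j$ has area $1/N$. The $\pi$-quasi-uniformity of $\triangle$ (Assumption (A4)) will be used to guarantee that each triangle has perimeter $O(|\triangle|)$ and its inscribed disk has radius $\asymp |\triangle|$, so that no hidden factors of $|\triangle|^{-1}$ leak in from geometric distortion.

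For the first, second, and fourth displays I would write
\begin{equation*}
\frac{1}{N}\sum_{j=1}^{N} f(\bs{z}_j) - \int_{\Omega} f(\bs{z})\,d\bs{z} \;=\; \sum_{j=1}^{N}\Bigl(\tfrac{1}{N}f(\bs{z}_j) - \int_{P_j} f(\bs{z})\,d\bs{z}\Bigr),
\end{equation*}
with $f$ equal to $B_m^k$, $B_m B_{m'}$, or $\sigma^{2} B_m^2$, respectively, and split the sum according to whether $P_j$ is contained in $\mathrm{supp}(f)$ or meets its boundary. On an interior pixel, a first-order Taylor expansion of $f$ about $\bs{z}_j$ gives a per-pixel error $O(h^{3}\,\|\nabla f\|_{\infty,P_j})=O(h^{3}/|\triangle|)$, and the $O(|\triangle|^{2}/h^{2})$ such pixels contribute in total $O(h\,|\triangle|)=O(N^{-1/2}|\triangle|)$. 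On a boundary pixel the error is bounded by $(1/N)\|f\|_\infty=O(1/N)$, and the number of such pixels is at most $O(|\triangle|/h)=O(|\triangle|\sqrt{N})$, again totalling $O(N^{-1/2}|\triangle|)$. Uniformity in $m,m'$ is automatic because the constants depend only on $d$, $r$ and the shape parameter $\pi$; the smoothness and positive-boundedness of $\sigma$ in (A3) enter the fourth display only through constant factors.

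For the third display I would carry out the same argument on $\Omega^{2}$ with integrand $f(\bs{z},\bs{z}')=G_{\eta}(\bs{z},\bs{z}')B_m(\bs{z})B_{m'}(\bs{z}')$, which is supported in a 4D set of volume $\asymp |\triangle|^{4}$ and satisfies $\|\nabla f\|_\infty=O(|\triangle|^{-1})$ by (A3) combined with the Markov inequality. The $O(|\triangle|^{4}/h^{4})$ interior 4D pixels each contribute $O(h^{5}/|\triangle|)$, giving $O(h\,|\triangle|^{3})=O(N^{-1/2}|\triangle|^{3})$. The support's boundary is the 3-dimensional set $\partial T\times T' \cup T\times \partial T'$ of 3-measure $O(|\triangle|^{3})$, so there are $O(|\triangle|^{3}/h^{3})$ boundary pixels, each with error $O(h^{4})=O(N^{-2})$, which again sums to $O(N^{-1/2}|\triangle|^{3})$.

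The main obstacle I anticipate is the careful geometric book-keeping of the boundary-pixel count — in particular, controlling the number of pixels within the ``skin'' of thickness $h$ around $\partial\,\mathrm{supp}(f)$ in each dimension — because this is exactly where quasi-uniformity of $\triangle$ is essential and where dimension-counting is easy to get wrong. Once the interior/boundary pixel counts are established, the Taylor-plus-Markov estimate fits all four bounds into the same template and yields the stated rates.
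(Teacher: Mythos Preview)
The paper does not prove this lemma at all; it is quoted verbatim from the supplementary materials of \cite{Yu:etal:18}, so there is no in-paper argument to compare against. Your self-contained quadrature argument is correct and is exactly the standard route: the interior/boundary pixel decomposition, the Markov inverse inequality $\|\nabla B_m\|_\infty=O(|\triangle|^{-1})$, and the counts $O(|\triangle|^2/h^2)$ interior pixels and $O(|\triangle|/h)$ boundary pixels combine to give $O(N^{-1/2}|\triangle|)$ in each of the first, second and fourth displays; the tensorised version on $\Omega^2$ gives $O(N^{-1/2}|\triangle|^3)$ in the third. Your identification of the boundary-layer pixel count as the delicate step is accurate, and (A4) is precisely what makes that count $O(|\triangle|/h)$ rather than something worse.

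One small point to tighten: for the third display you invoke $\|\nabla f\|_\infty=O(|\triangle|^{-1})$ ``by (A3) combined with the Markov inequality'', but (A3) only asserts $\psi_k\in\mathcal{C}^{(1)}(\Omega)$ for each $k$ and $0<c_G\le G_\eta(\bs{z},\bs{z})\le C_G$; it does not immediately give a uniform Lipschitz bound on $G_\eta(\cdot,\cdot)$ over $\Omega^2$. You need either to assume $G_\eta\in\mathcal{C}^{(1)}(\Omega^2)$ (which is implicit in the paper's use of the lemma), or to argue via (A6)-type summability that $\sum_k\lambda_k\|\nabla\psi_k\|_\infty<\infty$. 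Either way the fix is one sentence; the architecture of your proof is sound.
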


The following lemma provides the uniform convergence rate at which the empirical
inner product approximates the theoretical inner product defined in (\ref{DEF:inner_product}).
\begin{lemma}
	\label{LEM:Rnorder-vec}
	Let $g_{1}(\bs{z})=\sum_{m\in \mathcal{M}}\gamma_{1,m}B_{m}(\bs{z})$, $g_{2}(\bs{z})=\sum_{m\in \mathcal{M}}\gamma_{2, m}B_{m}(\bs{z})$ be any spline functions in $\mathcal{S}_{d}^{r}(\triangle)$. Suppose Assumptions (A1), (A2) and (A4) hold, and  $N^{1/2}|\triangle|\rightarrow \infty$ as $N\rightarrow \infty$, then
	\[
	\omega_{N}=\sup\limits_{g_{1},g_{2}\in \mathcal{S}_{d}^{r}(\triangle)}\left|
	\frac{\left\langle g_{1},g_{2}\right\rangle_{N}-\left\langle g_{1},g_{2}\right\rangle}{\left\|
		g_{1}\right\|\left\|g_{2}\right\|}\right| =O_{P}\left(N^{-1/2}|\triangle|^{-1}\right)=o_P(1).
	\]
\end{lemma}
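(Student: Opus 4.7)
The plan is to express the numerator as a quadratic form in the spline coefficient vectors and reduce the lemma to a uniform entrywise bound on the discrepancy between the empirical and theoretical Gram matrices of the Bernstein basis. Writing $g_\ell(\bs{z})=\sum_{m\in\mathcal{M}}\gamma_{\ell,m}B_m(\bs{z})$ for $\ell=1,2$ and setting
\[
\Delta_{m,m'}:=\langle B_m,B_{m'}\rangle_N-\langle B_m,B_{m'}\rangle,
\]
one has
\[
\langle g_1,g_2\rangle_N-\langle g_1,g_2\rangle \;=\; \sum_{m,m'\in\mathcal{M}} \gamma_{1,m}\gamma_{2,m'}\Delta_{m,m'},
\]
so the task becomes bounding this bilinear form by $C\,N^{-1/2}|\triangle|^{-1}\|g_1\|\|g_2\|$ uniformly in $g_1,g_2 \in \mathcal{S}_d^r(\triangle)$.

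Two structural facts do almost all the work. First, the second estimate in Lemma \ref{LEM:integration} provides the uniform entrywise control $\max_{m,m'\in\mathcal{M}}|\Delta_{m,m'}| = O(N^{-1/2}|\triangle|)$. Second, each Bernstein basis polynomial $B_m$ is supported on a single triangle of $\triangle$, and the number of such polynomials living on a given triangle is the fixed constant $K=\binom{d+2}{2}$; consequently $\Delta_{m,m'}=0$ unless $B_m$ and $B_{m'}$ lie on the same triangle, so each row of the matrix $(\Delta_{m,m'})$ has at most $K$ nonzero entries. Applying Cauchy--Schwarz to the sparse bilinear form (each $\gamma_{1,m}^2$ is paired with at most $K$ values of $m'$, and symmetrically) yields
\[
\left|\sum_{m,m'}\gamma_{1,m}\gamma_{2,m'}\Delta_{m,m'}\right| \;\leq\; K\,\max_{m,m'}|\Delta_{m,m'}|\,\|\bs{\gamma}_1\|\,\|\bs{\gamma}_2\|.
\]

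The final step is to trade coefficient-vector norms for function norms via the spline norm equivalence of Lemma \ref{LEM:normequity}, which gives $\|\bs{\gamma}_\ell\|\asymp |\triangle|^{-1}\|g_\ell\|$. Combining this with the previous display produces
\[
|\langle g_1,g_2\rangle_N-\langle g_1,g_2\rangle| \;\leq\; C\,N^{-1/2}|\triangle|\cdot |\triangle|^{-2}\|g_1\|\|g_2\| \;=\; C\,N^{-1/2}|\triangle|^{-1}\|g_1\|\|g_2\|,
\]
uniformly over $\mathcal{S}_d^r(\triangle)$. Dividing by $\|g_1\|\|g_2\|$ and taking the supremum gives the claimed rate, and the hypothesis $N^{1/2}|\triangle|\to\infty$ turns this into $o(1)$, hence $o_P(1)$.

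The only subtlety I expect is the combinatorial bookkeeping in the Cauchy--Schwarz step: one must extract a constant $K$ independent of $|\mathcal{M}|$ rather than a factor that scales with the mesh size, so that the two powers of $|\triangle|^{-1}$ from the norm equivalence combine with the single power of $|\triangle|$ from Lemma \ref{LEM:integration} to leave precisely $N^{-1/2}|\triangle|^{-1}$. Because the pixel grid $\{\bs{z}_j\}$ is deterministic, the entire bound is in fact deterministic; the $O_P$ conclusion is automatic since any deterministic $o(1)$ quantity is $o_P(1)$.
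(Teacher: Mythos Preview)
Your proof is correct and follows essentially the same route as the paper: both expand the numerator as a bilinear form in the coefficient vectors, exploit the local support of the Bernstein basis so that only same-triangle pairs $(m,m')$ contribute, apply the uniform entrywise bound from Lemma~\ref{LEM:integration}, and convert coefficient norms to function norms via Lemma~\ref{LEM:normequity}. Your treatment of the Cauchy--Schwarz/sparsity step is in fact more explicit than the paper's, and your closing remark that the bound is actually deterministic (with $O_P$ following trivially) is a correct and useful observation.
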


\noindent\begin{proof}
	It is easy to see
	\begin{align*}
		\langle g_{1},g_{2}\rangle_{N}&=\frac{1}{N}\sum_{j=1}^{N}\left\{\sum_{m\in \mathcal{M}}\gamma_{1,m}B_{m}(\bs{z}_{j})\right\} \left\{\sum_{m^{\prime}\in \mathcal{M}}\gamma_{2,m^\prime}B_{m^\prime}(\bs{z}_{j})\right\}\\
		&=\sum_{m}\sum_{m^\prime}\gamma_{1,m}\gamma_{2,m^\prime}\frac{1}{N}\sum_{j=1}^{N} B_{m}(\bs{z}_{j})B_{m^\prime}(\bs{z}_{j}).
	\end{align*}
	Note that $\langle g_{1}, g_{2}\rangle =\sum_{m}\sum_{m^\prime}\gamma_{1,m}\gamma_{2,m^{\prime}}
	\int_{\Omega}B_{m}(\bs{z})B_{m^\prime}(\bs{z})d\bs{z}$. It follows from Assumptions (A1), (A2) and Lemma \ref{LEM:normequity} that, for any $l=1,2$, $\widetilde{c}_{l}|\triangle|^{2}\sum_{m}\gamma_{l,m}^2\leq \Vert g_{l}\Vert^{2} \leq \widetilde{C}_{l}|\triangle|^{2}\sum_{m}\gamma_{l,m}^2$, and
	\[ C_{1}|\triangle|^{2}\left(\sum_{m}\gamma_{1,m}^2\sum_{m^\prime}\gamma^2_{2,m^{\prime}}\right)^{1/2}\leq \| g_{1}\| \| g_{2}\| \leq C_{2}|\triangle|^{2}\left(\sum_{m}\gamma_{1,m}^2\sum_{m^\prime}\gamma_{2,m^{\prime}}^2\right)^{1/2}.
	\]
	Therefore, one has
	\begin{align*}
		\nonumber
		\omega_{N}&\leq 
		\frac{\sum_{|m^{\prime}-m| \leq (d+2)(d+1)/2}|\gamma_{1,m}\gamma_{2,m^{\prime}}|}{C_{1}|\triangle|^{2}\left[\sum_{m}\gamma_{1,m}^2\sum_{m^\prime}\gamma_{2,m^{\prime}}^2\right]^{1/2}}
		\max_{m,m^{\prime}\in\mathcal{M}}\left| \frac{1}{N}\sum_{j=1}^{N}B_{m}(\bs{z}_{j})B_{m^{\prime}}(\bs{z}_{j})-\int_{\Omega}B_{m}(\bs{z})B_{m^{\prime}}(\bs{z})d\bs{z}\right| \\
		&\leq C|\triangle|^{-2} \max_{m,m^{\prime}\in\mathcal{M}}\left| \frac{1}{N}\sum_{j=1}^{N}B_{m}(\bs{z}_{j})B_{m^{\prime}}(\bs{z}_{j})-\int_{\Omega}B_{m}(\bs{z})B_{m^{\prime}}(\bs{z})d\bs{z}\right|.
	\end{align*}
	The desired result follows from Lemma \ref{LEM:integration}.	
\end{proof}

As a direct result of Lemma \ref{LEM:Rnorder-vec}, we can see that
\begin{eqnarray}
	\sup_{g\in \mathcal{S}_{d}^{r}(\triangle)}\left| \left. \left\| g\right\|_{N}^{2}\right/ \Vert g\Vert^{2}-1\right| =O_{P}\left(N^{-1/2}|\triangle|^{-1}\right)=o_{P}(1).  \label{EQ:normratio}
\end{eqnarray}

\begin{lemma}
	\label{LEM:Normratios}
	Suppose Assumption (A4) hold, and  $N^{1/2}|\triangle|\rightarrow \infty$ as $N\rightarrow \infty$, then
	\begin{equation}
		S_{N}=\sup_{g \in \mathcal{S}_{d}^{r}(\triangle)}\left\{ \frac{\|g\|
			_{\infty}}{\|g\|_{N}},\|g\|_{N}\neq 0\right\}=O(|\triangle|^{-1}),  \label{EQ:Sn}
	\end{equation}
	\begin{equation}
		\overline{S}_{N}=\sup_{g \in \mathcal{S}_{d}^{r}(\triangle)}\left\{\frac{\|g\|_{\mathcal{E}}}{\|g\|_{N}},\|g\|_{N}\neq 0\right\} = O(|\triangle|^{-2}). \label{EQ:Snbar}
	\end{equation}
\end{lemma}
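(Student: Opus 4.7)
The plan is to reduce both bounds to deterministic Markov-type (inverse) inequalities on bivariate splines, after replacing the empirical norm $\|\cdot\|_N$ by the theoretical $L_2$ norm $\|\cdot\|$. From equation (\ref{EQ:normratio}) one has $\|g\|_N^2=\|g\|^2\{1+o_P(1)\}$ uniformly for $g\in\mathcal{S}_d^r(\triangle)$, so on an event of probability tending to one, $\|g\|_N\geq c\|g\|$ for a fixed $c>0$. Thus it suffices to establish
\[
\|g\|_\infty\leq C_1|\triangle|^{-1}\|g\|,\qquad \|g\|_{\mathcal{E}}\leq C_2|\triangle|^{-2}\|g\|,\qquad g\in \mathcal{S}_d^r(\triangle),
\]
with $C_1,C_2$ depending only on $d$, $r$, and the shape parameter $\pi$ from Assumption (A4).

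For the sup-norm bound I would argue triangle-by-triangle. On any $T\in\triangle$, $g|_T$ is a polynomial of degree at most $d$ living in the finite-dimensional space $\mathbb{P}_d(T)$. Pulling $T$ back to a reference triangle via an affine map and invoking norm equivalence on $\mathbb{P}_d$ together with $\pi$-quasi-uniformity yields the standard inverse estimate $\|g\|_{L_\infty(T)}\leq C|T|^{-1}\|g\|_{L_2(T)}$, with $C$ depending only on $d$ and $\pi$. Since (A4) gives $\min_{T\in\triangle}|T|\asymp |\triangle|$, taking a maximum over $T$ and using $\max_T\|g\|_{L_2(T)}^2\leq \sum_T\|g\|_{L_2(T)}^2=\|g\|^2$ delivers (\ref{EQ:Sn}).

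For the energy bound I would iterate the standard polynomial Markov inequality for derivatives on each $T$ to obtain $\|\nabla_{z_1}^i\nabla_{z_2}^j g\|_{L_\infty(T)}\leq C|T|^{-2}\|g\|_{L_\infty(T)}$ for $i+j=2$, then combine this with the sup-to-$L_2$ inequality from the previous step and the trivial bound $\|f\|_{L_2(T)}^2\leq \mathrm{area}(T)\|f\|_{L_\infty(T)}^2$ to get $\|\nabla_{z_1}^i\nabla_{z_2}^j g\|_{L_2(T)}^2\leq C|T|^{-4}\|g\|_{L_2(T)}^2$. Summing the three terms in the energy integrand over all $T\in\triangle$ and pulling out a uniform factor $|\triangle|^{-4}$ by quasi-uniformity yields $\|g\|_{\mathcal{E}}^2\leq C|\triangle|^{-4}\|g\|^2$, which is (\ref{EQ:Snbar}).

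The main technical obstacle is ensuring the constants in these Markov-type inequalities are uniform across all $T\in\triangle$, not triangle-dependent. This is precisely the role of Assumption (A4): after pulling each $T$ back to a reference triangle, the shape parameter $\pi$ bounds the distortion of the affine map and hence the constant in norm equivalence on $\mathbb{P}_d$, so every intermediate estimate depends only on $(d,r,\pi)$. Together with $\min_T|T|\asymp|\triangle|$, this cleanly produces the claimed rates $O(|\triangle|^{-1})$ and $O(|\triangle|^{-2})$.
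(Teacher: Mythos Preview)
Your proposal is correct and follows essentially the same approach as the paper: reduce $\|\cdot\|_N$ to $\|\cdot\|$ via (\ref{EQ:normratio}), then invoke Markov-type inverse inequalities $\|g\|_\infty\leq C|\triangle|^{-1}\|g\|$ and $\|g\|_{\mathcal{E}}\leq C|\triangle|^{-2}\|g\|$ on the spline space. The paper simply cites these inverse estimates as ``Markov's inequality'' without further justification, whereas you spell out the triangle-by-triangle affine pullback argument; the extra detail is fine but not a genuinely different route.
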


\noindent\begin{proof}
	By Markov's inequality, for any $g\in \mathcal{S}_{d}^{r}(\triangle)$, $\|g\|_{\infty}\leq C|\triangle| ^{-1}\|g\|$, $\|g\|_{\mathcal{E}}\leq C|\triangle|^{-2}\|g\|$.
	Equation (\ref{EQ:normratio}) implies that
	$
	\|g\|_{N}/\|g\| \geq \left[1-O_{P}\left\{N^{-1/2}|\triangle|^{-1}\right\}\right]^{1/2}.
	$
	Thus, one has
	\begin{align*}
		S_{N} &\leq C|\triangle| ^{-1}\left[1-O_{P}\left\{N^{-1/2}|\triangle|^{-1}\right\} \right]
		^{-1/2}=O_{P}\left( |\triangle| ^{-1}\right) , \\
		\overline{S}_{N} &\leq C|\triangle|
		^{-2}\left[1-O_{P}\left\{N^{-1/2}|\triangle|^{-1}\right\}
		\right]^{-1/2}=O_{P}\left( |\triangle| ^{-2}\right).
	\end{align*}
	Lemma \ref{LEM:Normratios} is established.
\end{proof}

\vskip .05in \noindent \textbf{B.2. Convergence of Penalized Spline Estimators}

Let $\left\{ \widetilde{B}_{m}(\bs{z}), m \in \widetilde{\mathcal{M}}\right\}$ be a set of transformed Bernstein basis polynomials and $\widetilde{\mathbf{B}}(\bs{z})=\mathbf{Q}_2^{\top}\mathbf{B}(\bs{z})$,  then, for $\mathbf{U}=\mathbf{B}\mathbf{Q}_{2}$ defined in Section 2.1, $
\mathbf{U}^{\top}\mathbf{U}=\sum_{j=1}^{N} \widetilde{\mathbf{B}}(\bs{z}_{j})\widetilde{\mathbf{B}}^{\top}(\bs{z}_{j})$ and $
\mathbf{U}^{\top}\mathbf{Y}=\sum_{j=1}^{N} \widetilde{\mathbf{B}}(\bs{z}_{j})Y_{ij}$.

Denote by
\vspace{-0.1in}\begin{equation}
	\bs{\Gamma}_{N,\rho}=\frac{1}{N}\sum_{j=1}^{N}
	\{\widetilde{\mathbf{B}}(\bs{z}_{j})\widetilde{\mathbf{B}}^{\top}(\bs{z}_{j})\}
	+\frac{\rho_{n}}{nN}  \mathbf{Q}_2^{\top} [\langle B_{m},B_{m^{\prime}}\rangle_{\mathcal{E}}]_{m,m^{\prime}\in \mathcal{M}}\mathbf{Q}_2
	\label{DEF:Gamma_rho}
\end{equation}
a symmetric positive definite matrix.

The following lemma shows that the maximum and minimum eigenvalue of $\bs{\Gamma}_{N,\rho}$ are bounded by certain orders.
\begin{lemma}
	\label{LEM:Gamma_rho}
	Under Assumption (A4), if $N^{1/2}|\triangle|\rightarrow \infty$ as $N\rightarrow \infty$, then there exist constants $0 < c_{\rho} < C_{\rho} < \infty$, such that with probability approaching 1 as $N\rightarrow \infty$, $n\rightarrow \infty$,
	\[
	c_{\rho}|\triangle|^{2} \leq \lambda_{\min}(\bs{\Gamma}_{N,\rho}) \leq
	\lambda_{\max}(\bs{\Gamma}_{N,\rho}) \leq  C_{\rho}\left(|\triangle|^{2}+\frac{\rho_{n}}{nN|\triangle|^{2}}\right).
	\]
	Specifically, when $\rho_{n}=0$, one has
	$
	c_{0}|\triangle|^{2} \leq \lambda_{\min}(\bs{\Gamma}_{N,0}) \leq
	\lambda_{\max}(\bs{\Gamma}_{N,0}) \leq  C_{0}|\triangle|^{2}.
	$
\end{lemma}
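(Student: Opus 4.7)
The plan is to reduce everything to quadratic forms on splines by identifying $\bs{\theta}$ with a bivariate spline, and then to exploit the norm equivalences already established in Lemmas B.1, B.5 and in (\ref{EQ:normratio}). For any $\bs{\theta}$ of matching dimension set $\bs{\gamma}=\mathbf{Q}_{2}\bs{\theta}$ and
\[
g(\bs{z})=\widetilde{\mathbf{B}}(\bs{z})^{\top}\bs{\theta}
=\mathbf{B}(\bs{z})^{\top}\mathbf{Q}_{2}\bs{\theta}
=\sum_{m\in\mathcal{M}}\gamma_{m}B_{m}(\bs{z})\in\mathcal{S}_{d}^{r}(\triangle).
\]
Because the columns of $\mathbf{Q}_{2}$ are orthonormal, $\|\bs{\theta}\|^{2}=\|\bs{\gamma}\|^{2}=\sum_{m}\gamma_{m}^{2}$. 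Moreover, a direct calculation using the definition (\ref{DEF:Gamma_rho}) gives the key identity
\[
\bs{\theta}^{\top}\bs{\Gamma}_{N,\rho}\bs{\theta}
=\frac{1}{N}\sum_{j=1}^{N}g^{2}(\bs{z}_{j})+\frac{\rho_{n}}{nN}\langle g,g\rangle_{\mathcal{E}}
=\|g\|_{N}^{2}+\frac{\rho_{n}}{nN}\|g\|_{\mathcal{E}}^{2}.
\]

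First I would handle the lower bound on $\lambda_{\min}(\bs{\Gamma}_{N,\rho})$. Dropping the nonnegative energy term yields $\bs{\theta}^{\top}\bs{\Gamma}_{N,\rho}\bs{\theta}\geq\|g\|_{N}^{2}$. By Lemma \ref{LEM:normequity}, $\|g\|^{2}\geq c|\triangle|^{2}\sum_{m}\gamma_{m}^{2}=c|\triangle|^{2}\|\bs{\theta}\|^{2}$, and by (\ref{EQ:normratio}), which holds uniformly on $\mathcal{S}_{d}^{r}(\triangle)$ under $N^{1/2}|\triangle|\to\infty$, we have $\|g\|_{N}^{2}\geq\tfrac{1}{2}\|g\|^{2}$ with probability tending to one. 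Combining these gives $\lambda_{\min}(\bs{\Gamma}_{N,\rho})\geq c_{\rho}|\triangle|^{2}$ on an event of probability tending to one.

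Next I would treat the upper bound on $\lambda_{\max}(\bs{\Gamma}_{N,\rho})$. For the first piece, the same two ingredients give the reverse bound $\|g\|_{N}^{2}\leq\tfrac{3}{2}\|g\|^{2}\leq C|\triangle|^{2}\|\bs{\theta}\|^{2}$. For the energy piece, note that Lemma \ref{LEM:Normratios}, together with the norm equivalence $\|g\|^{2}\asymp|\triangle|^{2}\|\bs{\theta}\|^{2}$, implies a Markov-type inverse inequality
\[
\|g\|_{\mathcal{E}}^{2}\leq C|\triangle|^{-4}\|g\|^{2}\leq C'|\triangle|^{-2}\|\bs{\theta}\|^{2},
\]
which comes from writing $\nabla_{z_{1}}^{i}\nabla_{z_{2}}^{j}g$ as a spline of lower degree and invoking Lemma \ref{LEM:normequity} on each triangle after rescaling. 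Adding the two contributions gives
\[
\bs{\theta}^{\top}\bs{\Gamma}_{N,\rho}\bs{\theta}\leq C\Bigl(|\triangle|^{2}+\frac{\rho_{n}}{nN|\triangle|^{2}}\Bigr)\|\bs{\theta}\|^{2},
\]
which is the desired upper bound on $\lambda_{\max}(\bs{\Gamma}_{N,\rho})$. The case $\rho_{n}=0$ is immediate from the above since the energy term vanishes.

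I expect the only nonroutine step to be the Markov-type reverse inequality $\|g\|_{\mathcal{E}}\leq C|\triangle|^{-2}\|g\|$; this is stated essentially in Lemma \ref{LEM:Normratios} and its proof, so I would simply cite it. All remaining work is bookkeeping with the orthogonality of $\mathbf{Q}_{2}$ and the uniform passage from empirical to theoretical inner products provided by Lemma \ref{LEM:Rnorder-vec} and (\ref{EQ:normratio}).
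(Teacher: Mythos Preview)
Your proposal is correct and follows essentially the same route as the paper's own proof: both identify $\bs{\theta}^{\top}\bs{\Gamma}_{N,\rho}\bs{\theta}=\|g\|_{N}^{2}+\frac{\rho_{n}}{nN}\|g\|_{\mathcal{E}}^{2}$ for the spline $g=\widetilde{\mathbf{B}}^{\top}\bs{\theta}$, invoke Lemma~\ref{LEM:normequity} together with (\ref{EQ:normratio}) for the empirical--theoretical norm comparison, and use the Markov-type bound $\|g\|_{\mathcal{E}}\leq C|\triangle|^{-2}\|g\|$ for the penalty term. Your explicit observation that $\|\bs{\theta}\|=\|\bs{\gamma}\|$ via orthonormality of the columns of $\mathbf{Q}_{2}$ is a nice clarification the paper leaves implicit.
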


\noindent\begin{proof}
	For any vector $\bs{\theta}$ with the same dimension as that of $\widetilde{\mathbf{B}}(\bs{z})$, there exists $h \in \mathcal{S}_{d}^{r}(\triangle)$ such that $h(\bs{z}) = \widetilde{\mathbf{B}}^{\top}(\bs{z}) \bs{\theta}=\mathbf{B}^{\top}(\bs{z}) \bs{\gamma}$, where $\bs{\gamma}=\mathbf{Q}_2\bs{\theta} $ and
	\begin{equation*}
		\bs{\theta}^{\top}\bs{\Gamma}_{N,\rho}\bs{\theta} =\frac{1}{N}\bs{\gamma}^{\top}\sum_{j=1}^{N} \{\mathbf{B}(\bs{z}_{j})\mathbf{B}^{\top}(\bs{z}_{j})\}\bs{\gamma}+ \frac{\rho_n}{nN} \bs{\gamma}^{\top}[\langle B_{m},B_{m^{\prime}}\rangle_{\mathcal{E}}]_{m,m^{\prime}\in \mathcal{M}} \bs{\gamma}=\|h\|_{N}^2
		+\frac{\rho_n}{nN} \| h\|_{\mathcal{E}}^2.
	\end{equation*}
	By (\ref{EQ:normratio}) and Lemma \ref{LEM:normequity}, one has $\Big| \|h\|^2_{N}/\|h\|^2 -1\Big| \leq \omega_{N}$ and
	\[
	c (1-\omega_{N}) |\triangle|^{2}\Vert \bs{\gamma}\Vert^{2}   \leq (1-\omega_{N}) \Vert h\Vert^{2}
	\leq \Vert h \Vert_{N}^{2} = (1+\omega_{N}) \Vert h \Vert^{2}\leq C (1+\omega_{N})
	|\triangle|^{2}\Vert \bs{\gamma}\Vert^{2}.
	\]
	Thus, $\lambda_{\min}(\mathbf{\Gamma}_{N,\rho}) \geq c_{\rho} |\triangle|^2$ for some positive constant $c_{\rho}$.
	
	On the other hand, similar as in the supplement of \cite{Lai:Wang:13}, using the Markov's inequality and Lemma \ref{LEM:normequity}, one has
	$
	\left\|h\right\|_{\mathcal{E}}^2\leq C|\triangle|^{-4}\left\Vert h\right\Vert^{2}\leq C|\triangle|^{-2}\Vert\bs{\gamma}\Vert ^{2}.
	$
	Thus, the largest eigenvalue of the matrix $\bs{\Gamma}_{N,\rho}$ in (\ref{DEF:Gamma_rho}) satisfies that
	\[
	\lambda_{\max}(\bs{\Gamma}_{N,\rho}) \leq C\left\{(1+\omega_{N})
	|\triangle|^{2}+\frac{\rho_{n}}{nN}\frac{1}{|\triangle|^{2}}\right\} \leq C_{\rho} \left(|\triangle|^{2}+\frac{\rho_{n}}{nN|\triangle|^{2}}\right),
	\]
	for some positive constant $C_{\rho}$.
\end{proof}

Using $\bs{\Gamma}_{N,\rho}$ defined in (\ref{DEF:Gamma_rho}), the solution of the penalized regression problem (\ref{EQ:PLS}) is given by
\[
\widehat{\bs{\theta}} = \bs{\Gamma}_{N,\rho}^{-1} \frac{1}{nN}\sum_{i=1}^{n}\sum_{j=1}^{N} \widetilde{\mathbf{B}}(\bs{z}_{j}) Y_{ij}.
\]
Next we define
\begin{align}
	\nonumber
	\widehat{\bs{\theta}}_{\mu}&=\bs{\Gamma}_{N,\rho}^{-1}\frac{1}{N}\sum_{j=1}^{N} \widetilde{\mathbf{B}}(\bs{z}_{j}) \mu(\bs{z}_j), ~~
	\widehat{\bs{\theta}}_{\eta}=\bs{\Gamma}_{N,\rho}^{-1}
	\frac{1}{nN}\sum_{i=1}^{n}\sum_{j=1}^{N}  \widetilde{\mathbf{B}}(\bs{z}_{j})\sum_{k=1}^{\infty}\xi_{ik}\phi_{k}(\bs{z}_j), \\
	\widehat{\bs{\theta}}_{\varepsilon}&=\bs{\Gamma}_{N,\rho}^{-1}
	\frac{1}{nN}\sum_{i=1}^{n}\sum_{j=1}^{N} \widetilde{\mathbf{B}}(\bs{z}_{j})\sigma(\bs{z}_j)\varepsilon_{ij}.
	\label{DEF:theta-eta-eps}
\end{align}
Note that, the BPS estimator $\widehat{\mu}$ in Section 2.1 can be written as $\widehat{\mu}(\bs{z})=\widehat{\mu}^{o}(\bs{z})+\widehat{\eta}(\bs{z})
+\widehat{\varepsilon}(\bs{z})$, where
\begin{equation}
	\widehat{\mu}^{o}(\bs{z})=\widetilde{\mathbf{B}}(\bs{z})^{\top}\widehat{\bs{\theta}}_{\mu},~~
	\widehat{\eta}(\bs{z})=\widetilde{\mathbf{B}}(\bs{z})^{\top}\widehat{\bs{\theta}}_{\eta},~~
	\widehat{\varepsilon}(\bs{z})=\widetilde{\mathbf{B}}(\bs{z})^{\top}\widehat{\bs{\theta}}_{\varepsilon},
	\label{DEF:pls_estimator}
\end{equation}
Therefore,
\begin{equation}
	\widehat{\mu}(\bs{z})-\mu(\bs{z})=\widehat{\mu}^{o}(\bs{z})-\mu(\bs{z})
	+\widehat{\eta}(\bs{z})+\widehat{\varepsilon}(\bs{z}).
	\label{EQ:pbeta_decompose}
\end{equation}

\begin{lemma}
	\label{LEM:thetatilde-eta}
	Suppose Assumptions (A2)--(A4) hold and  $N^{1/2}|\triangle|\rightarrow \infty$ as $N\rightarrow \infty$, then $\|\widehat{\bs{\theta}}_{\eta}\|^2=O_P(n^{-1}|\triangle|^{-2})$ and $\|\widehat{\bs{\theta}}_{\varepsilon}\|^2=O_P(n^{-1}N^{-1}|\triangle|^{-4})$.
\end{lemma}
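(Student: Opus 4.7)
The plan is to factor each estimator as $\widehat{\bs{\theta}}_{\eta}=\bs{\Gamma}_{N,\rho}^{-1}\mathbf{v}_{\eta}$ and $\widehat{\bs{\theta}}_{\varepsilon}=\bs{\Gamma}_{N,\rho}^{-1}\mathbf{v}_{\varepsilon}$, where $\mathbf{v}_{\eta}$ and $\mathbf{v}_{\varepsilon}$ are the score-like vectors appearing in the definitions (\ref{DEF:theta-eta-eps}). By Lemma~\ref{LEM:Gamma_rho}, with probability tending to one $\lambda_{\min}(\bs{\Gamma}_{N,\rho})\geq c_{\rho}|\triangle|^{2}$, so $\|\widehat{\bs{\theta}}_{\eta}\|^{2}\leq C|\triangle|^{-4}\|\mathbf{v}_{\eta}\|^{2}$ and similarly for $\widehat{\bs{\theta}}_{\varepsilon}$. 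The problem therefore reduces to sharp second-moment bounds $E\|\mathbf{v}_{\eta}\|^{2}=O(n^{-1}|\triangle|^{2})$ and $E\|\mathbf{v}_{\varepsilon}\|^{2}=O(n^{-1}N^{-1})$, and a Markov-inequality step.

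\noindent\textbf{The error piece.} For $\mathbf{v}_{\varepsilon}=\frac{1}{nN}\sum_{i,j}\widetilde{\mathbf{B}}(\bs{z}_{j})\sigma(\bs{z}_{j})\varepsilon_{ij}$, the $\varepsilon_{ij}$ are i.i.d.\ mean zero, so the cross terms vanish and
\[
E\|\mathbf{v}_{\varepsilon}\|^{2}=\frac{1}{n^{2}N^{2}}\sum_{i,j}\sigma^{2}(\bs{z}_{j})\|\widetilde{\mathbf{B}}(\bs{z}_{j})\|^{2}=\frac{1}{nN^{2}}\sum_{j}\sigma^{2}(\bs{z}_{j})\|\widetilde{\mathbf{B}}(\bs{z}_{j})\|^{2}.
\]
Since $\widetilde{\mathbf{B}}(\bs{z})=\mathbf{Q}_{2}^{\top}\mathbf{B}(\bs{z})$ with $\mathbf{Q}_{2}$ having orthonormal columns, $\|\widetilde{\mathbf{B}}(\bs{z})\|^{2}\leq\|\mathbf{B}(\bs{z})\|^{2}$; and the Bernstein basis is locally supported with $|B_{m}|\leq 1$, so only $O(1)$ basis functions are nonzero at any $\bs{z}$, giving $\|\mathbf{B}(\bs{z})\|^{2}=O(1)$ uniformly. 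Combined with $\sigma\leq C_{\sigma}$ from (A3), we obtain $E\|\mathbf{v}_{\varepsilon}\|^{2}=O((nN)^{-1})$, hence $\|\widehat{\bs{\theta}}_{\varepsilon}\|^{2}=O_{P}(n^{-1}N^{-1}|\triangle|^{-4})$.

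\noindent\textbf{The subject-level piece.} Substituting the Karhunen--Lo\`eve expansion $\eta_{i}=\sum_{k}\xi_{ik}\phi_{k}$ and using $E(\xi_{ik}\xi_{i'k'})=I(i=i',k=k')$ from (A2),
\[
E\|\mathbf{v}_{\eta}\|^{2}=\frac{1}{n}\sum_{k=1}^{\infty}\|\mathbf{a}_{k}\|^{2},\qquad \mathbf{a}_{k}:=\frac{1}{N}\sum_{j=1}^{N}\widetilde{\mathbf{B}}(\bs{z}_{j})\phi_{k}(\bs{z}_{j}).
\]
The key bound, and what I expect to be the main obstacle, is to show $\|\mathbf{a}_{k}\|^{2}\leq C|\triangle|^{2}\|\phi_{k}\|^{2}$ uniformly in $k$. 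I would obtain this as follows: writing $\mathbf{a}_{k}=\mathbf{Q}_{2}^{\top}\bigl(\tfrac{1}{N}\sum_{j}\mathbf{B}(\bs{z}_{j})\mathbf{B}^{\top}(\bs{z}_{j})\bigr)\bs{\gamma}_{k}+\text{approximation remainder}$, where $\bs{\gamma}_{k}$ are the spline coefficients of a quasi-interpolant of $\phi_{k}$ in $\mathcal{S}_{d}^{r}(\triangle)$. Lemma~\ref{LEM:integration} shows the empirical Gram matrix is close to its theoretical counterpart $\int\mathbf{B}\mathbf{B}^{\top}$, whose operator norm is $O(|\triangle|^{2})$ by Lemma~\ref{LEM:normequity}. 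On the other hand, Lemma~\ref{LEM:normequity} also gives $\|\bs{\gamma}_{k}\|^{2}\asymp|\triangle|^{-2}\|\mathbf{B}^{\top}\bs{\gamma}_{k}\|^{2}\leq C|\triangle|^{-2}\|\phi_{k}\|^{2}$, and the quasi-interpolation remainder is absorbed using Lemma~\ref{LEM:appord}. Multiplying the two factors yields $\|\mathbf{a}_{k}\|^{2}\leq C|\triangle|^{2}\|\phi_{k}\|^{2}$.

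\noindent\textbf{Conclusion.} Summing, and using $\sum_{k}\|\phi_{k}\|^{2}=\sum_{k}\lambda_{k}<\infty$ from the covariance decomposition, gives $E\|\mathbf{v}_{\eta}\|^{2}=O(n^{-1}|\triangle|^{2})$. Markov's inequality then yields $\|\mathbf{v}_{\eta}\|^{2}=O_{P}(n^{-1}|\triangle|^{2})$, and combining with the operator norm bound on $\bs{\Gamma}_{N,\rho}^{-1}$ gives $\|\widehat{\bs{\theta}}_{\eta}\|^{2}=O_{P}(n^{-1}|\triangle|^{-2})$, completing the proof. The delicate step is the uniform-in-$k$ control of $\|\mathbf{a}_{k}\|^{2}$: a naive Cauchy--Schwarz bound $\|\mathbf{a}_{k}\|^{2}\leq(\sum_{m}\|\widetilde{B}_{m}\|^{2})\|\phi_{k}\|^{2}$ would only give $\|\mathbf{a}_{k}\|^{2}=O(\|\phi_{k}\|^{2})$ and lose a factor of $|\triangle|^{2}$, producing the weaker rate $O_{P}(n^{-1}|\triangle|^{-4})$; the gain comes from recognizing that $\mathbf{a}_{k}$ is effectively a Gram matrix acting on coefficients, rather than a sum of squared norms.
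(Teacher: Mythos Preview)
Your overall strategy --- factor out $\bs{\Gamma}_{N,\rho}^{-1}$ via Lemma~\ref{LEM:Gamma_rho} and bound the second moment of the residual vector --- coincides with the paper's, and your treatment of the $\varepsilon$-piece is essentially identical to theirs. For the $\eta$-piece you take a genuinely different route. The paper never works eigenfunction by eigenfunction: after reducing to the $i=i'$ diagonal it collapses $\sum_{k}\phi_k(\bs{z}_j)\phi_k(\bs{z}_{j'})=G_\eta(\bs{z}_j,\bs{z}_{j'})$ and applies the third display of Lemma~\ref{LEM:integration} to obtain
\[
\frac{1}{N^2}\sum_{j,j'} B_m(\bs{z}_j)B_m(\bs{z}_{j'})G_\eta(\bs{z}_j,\bs{z}_{j'})
=\int_{\Omega^2}G_\eta(\bs{z},\bs{z}')B_m(\bs{z})B_m(\bs{z}')\,d\bs{z}\,d\bs{z}'\,\{1+o(1)\}=O(|\triangle|^4),
\]
then sums over the $O(|\triangle|^{-2})$ indices $m$. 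This uses only the boundedness and regularity of $G_\eta$ from (A3), nothing about individual $\phi_k$.

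Your per-$k$ route can be made to work, but the quasi-interpolation step as written has a gap under the stated hypotheses. Lemma~\ref{LEM:appord} requires $\phi_k\in\mathcal{W}^{d+1,\infty}(\Omega)$, whereas (A2)--(A4) give only $\psi_k\in\mathcal{C}^{(1)}(\Omega)$; higher smoothness of $\phi_k$ appears only in (A6), which this lemma does not assume. Moreover, even with a weaker approximation bound, the remainder would involve seminorms like $|\phi_k|_{1,\infty}$ whose summability in $k$ is not implied by $\sum_k\lambda_k<\infty$. The fix is that the key inequality $\|\mathbf{a}_k\|^2\le C|\triangle|^2\|\phi_k\|^2$ you correctly isolate needs no quasi-interpolation at all: because each $B_m$ is supported on a single triangle $T_m$, a \emph{triangle-local} Cauchy--Schwarz gives
\[
\langle B_m,\phi_k\rangle_N^2\le\|B_m\|_N^2\cdot\frac{1}{N}\sum_{\bs{z}_j\in T_m}\phi_k(\bs{z}_j)^2,
\]
and summing over $m$ (each pixel counted $\binom{d+2}{2}$ times) yields $\sum_m\langle B_m,\phi_k\rangle_N^2\le C|\triangle|^2\|\phi_k\|_N^2$. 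This is the paper's $G_\eta$-argument unpacked componentwise; the ``naive'' bound you warn against loses the factor $|\triangle|^2$ only because it applies Cauchy--Schwarz globally over $\Omega$ rather than over each triangle.
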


\noindent\begin{proof}
	Note that
	\vspace{-0.1in}\[
	\widehat{\bs{\theta}}_{\eta}=\bs{\Gamma}_{n,\rho}^{-1}
	\frac{1}{nN}\sum_{i=1}^{n}\sum_{j=1}^{N} \widetilde{\mathbf{B}}(\bs{z}_{j})\sum_{k=1}^{\infty}\xi_{ik}\phi_{k}(\bs{z}_j).
	\]
	By Lemma \ref{LEM:Gamma_rho}, one has
	\[
	\|\widehat{\bs{\theta}}_{\eta}\|^2
	\asymp \frac{1}{n^2N^2|\triangle|^4}  \sum_{i,i^{\prime}=1}^{n}\sum_{j,j^{\prime}=1}^{N} \widetilde{\mathbf{B}}(\bs{z}_j)^{\top}
	\widetilde{\mathbf{B}}(\bs{z}_{j^{\prime}}) \sum_{k, k^{\prime}=1}^{\infty}\xi_{ik}\phi_{k}(\bs{z}_j)
	\xi_{i^{\prime}k^{\prime}}\phi_{k^{\prime}}(\bs{z}_{j^{\prime}}).
	\]
	Note that by Assumption (A2), for any $i\neq i^{\prime}$, $j$, $j^{\prime}$, one has
	\[
	E\Big\{\widetilde{\mathbf{B}}(\bs{z}_j)^{\top}
	\widetilde{\mathbf{B}}(\bs{z}_{j^{\prime}})\!\! \sum_{k, k^{\prime}=1}^{\infty}\xi_{ik}\phi_{k}(\bs{z}_j)
	\xi_{i^{\prime}k^{\prime}}\phi_{k^{\prime}}(\bs{z}_{j^{\prime}})\Big\}
	=\!\!\!\sum_{m\in \widetilde{\mathcal{M}}}\!\widetilde{B}_{m}(\bs{z}_j)\widetilde{B}_{m}(\bs{z}_{j^{\prime}}) \!\!\sum_{k,k^{\prime}} E\xi_{ik}\xi_{i^{\prime}k^{\prime}}\phi_{k}(\bs{z}_{j})\phi_{k^{\prime}}(\bs{z}_{j^{\prime}})=0.
	\]
	Next, for any $i$, because $ \widetilde{\mathbf{B}}(\bs{z}_j)^{\top} \widetilde{\mathbf{B}}(\bs{z}_{j^{\prime}}) = \mathbf{B}(\bs{z}_j)^{\top}\mathbf{Q}_2 \mathbf{Q}_2^{\top}\mathbf{B}(\bs{z}_{j^{\prime}}) $ and the eigenvalues of $\mathbf{Q}_2\mathbf{Q}_2^{\top}$ are either 0 or 1,
	\begin{align*}
		\frac{1}{N^2}&\sum_{j=1}^N\sum_{j^{\prime}=1}^{N}
		E\left\{\widetilde{\mathbf{B}}(\bs{z}_j)^{\top}\widetilde{\mathbf{B}}(\bs{z}_{j^{\prime}})\sum_{k, k^{\prime}=1}^{\infty}\xi_{ik}\phi_{k}(\bs{z}_j)
		\xi_{ik^{\prime}}\phi_{k^{\prime}}(\bs{z}_{j^{\prime}})\right\}\\
		&\leq \frac{1}{N^2}\sum_{j=1}^N\sum_{j^{\prime}=1}^{N} E\left\{ \mathbf{B}(\bs{z}_j)^{\top}\mathbf{B}(\bs{z}_{j^{\prime}})\sum_{k=1}^{\infty}\xi_{ik}^{2}
		\phi_{k}(\bs{z}_j)\phi_{k}(\bs{z}_{j^{\prime}})\right\}\\
		&= \sum_{m\in \mathcal{M}}
		\frac{1}{N^2}\sum_{j=1}^N\sum_{j^{\prime}=1}^{N} B_{m}(\bs{z}_j)B_{m}(\bs{z}_{j^{\prime}})
		G_{\eta}(\bs{z}_j,\bs{z}_{j^{\prime}}).
	\end{align*}
	Assumption (A4) and Lemma \ref{LEM:integration} imply that
	\begin{align*}
		\frac{1}{N^2}\sum_{j\neq j^{\prime}} B_{m}(\bs{z}_j)B_{m}(\bs{z}_{j^{\prime}})
		G_{\eta}(\bs{z}_j,\bs{z}_{j^{\prime}})=&\int_{\Omega^{2}}G_{\eta}(\bs{z},\bs{z}^{\prime})B_m(\bs{z})B_m(\bs{z}^{\prime})
		d\bs{z}d\bs{z}^{\prime}\\
		& \times \{1+O(N^{-1/2}|\triangle|^{3})\}=O(|\triangle|^{4}).
	\end{align*}
	Thus,
	\[
	\frac{1}{N^2}\sum_{j=1}^N\sum_{j^{\prime}=1}^{N} E\left\{\widetilde{\mathbf{B}}(\bs{z}_j)^{\top}\widetilde{\mathbf{B}}(\bs{z}_{j^{\prime}})\sum_{k, k^{\prime}=1}^{\infty}\xi_{ik}\phi_{k}(\bs{z}_j)
	\xi_{ik^{\prime}}\phi_{k^{\prime}}(\bs{z}_{j^{\prime}})\right\} \leq C|\triangle|^{2}.
	\]
	Therefore, $E\|\widehat{\bs{\theta}}_{\eta}\|^2\leq C  (n^{-1}|\triangle|^{-2})$.
	
	Similarly, by the definition of $\widehat{\bs{\theta}}_{\varepsilon}$ in (\ref{DEF:theta-eta-eps}) and Lemma \ref{LEM:Gamma_rho}, one has
	\[
	\|\widehat{\bs{\theta}}_{\varepsilon}\|^2  \asymp \frac{1}{n^2N^2|\triangle|^4}  \sum_{i,i^{\prime}=1}^{n}\sum_{j,j^{\prime}=1}^{N} \widetilde{\mathbf{B}}(\bs{z}_j)^{\top}
	\widetilde{\mathbf{B}}(\bs{z}_{j^{\prime}}) \sigma(\bs{z}_j) \sigma(\bs{z}_{j^{\prime}})\varepsilon_{ij} \varepsilon_{i^{\prime}j}.
	\]
	Note that for any $i\neq i^{\prime}$, $j$, $j^{\prime}$, $E(\varepsilon_{ij}\varepsilon_{i^{\prime}j^{\prime}})=0$ and for any $i,j\neq j^{\prime}$, $E(\varepsilon_{ij} \varepsilon_{ij^{\prime}})=0$. Because the eigenvalues of $\mathbf{Q}_2\mathbf{Q}_2^{\top}$ are either 0 or 1, by Assumption (A2) and Lemma \ref{LEM:integration}, for any $i$,
	\begin{align*}
		&E\Big\{\frac{1}{N}\sum_{j,j^{\prime}=1}^{N} \widetilde{\mathbf{B}}(\bs{z}_j)^{\top}
		\widetilde{\mathbf{B}}(\bs{z}_{j^{\prime}}) \sigma(\bs{z}_j) \sigma(\bs{z}_{j^{\prime}})\varepsilon_{ij} \varepsilon_{i^{\prime}j}\Big\}
		=\frac{1}{N}\sum_{j=1}^{N}\mathbf{B}(\bs{z}_j)^{\top}\mathbf{Q}_2\mathbf{Q}_2^{\top} \mathbf{B}(\bs{z}_{j})\sigma^{2}(\bs{z}_j)\\
		&\leq C \sum_{m\in \mathcal{M}}\frac{1}{N}\sum_{j=1}^{N}B_{m}^{2}(\bs{z}_j) \sigma^{2}(\bs{z}_j)
		\leq C \sum_{m\in \mathcal{M}} \int_{\Omega}\sigma^{2}(\bs{z})B_{m}^{2}(\bs{z})d\bs{z}\{1+O(N^{-1/2}|\triangle|^{-1})\}= O(1).
	\end{align*}
	Therefore,
	\[
	E\|\widehat{\bs{\theta}}_{\varepsilon}\|^2 \asymp \frac{1}{nN|\triangle|^4}\frac{1}{N}\sum_{j=1}^{N} \widetilde{\mathbf{B}}(\bs{z}_j)^{\top}
	\widetilde{\mathbf{B}}(\bs{z}_{j}) \sigma^{2}(\bs{z}_j)
	\leq C (nN)^{-1} |\triangle|^{-4}.
	\]
	The conclusion of the lemma follows.
\end{proof}

Next, the following lemmas give the uniform convergence rate of $\widehat{\mu}(\bs{z})$ to $\mu(\bs{z})$. We start by introducing some notations for the specific situation when there is no penalty in the regression problem, i.e., $\rho_n=0$.
Denote
$
\bs{\Gamma}_{N,0}=\frac{1}{N}\sum_{j=1}^{N}
\widetilde{\mathbf{B}}(\bs{z}_{j})\widetilde{\mathbf{B}}^{\top}(\bs{z}_{j}).
$
Let $\bar{\xi}_{\cdot k} = \frac{1}{n} \sum_{i=1}^{n} \xi_{ik}$, for any $k\geq 1$, and $\bar{\varepsilon}_{\cdot j}= \frac{1}{n} \sum_{i=1}^{n} \varepsilon_{ij}$ for any $j=1,\ldots, N$, and denote
\begin{align*}
	\widetilde{\bs{\theta}}_{\mu}&=\bs{\Gamma}_{N,0}^{-1}
	\frac{1}{N}\sum_{j=1}^{N} \widetilde{\mathbf{B}}(\bs{z}_{j})\mu(\bs{z}_j),\notag\\
	\widetilde{\bs{\theta}}_{\eta}&=\bs{\Gamma}_{N,0}^{-1}\frac{1}{nN}\sum_{i=1}^{n}\sum_{j=1}^{N} \widetilde{\mathbf{B}}(\bs{z}_{j})\eta_{i}(\bs{z}_j)=\bs{\Gamma}_{N,0}^{-1}\frac{1}{N}\sum_{j=1}^{N}\sum_{k=1}^{\kappa}  \widetilde{\mathbf{B}}(\bs{z}_{j})\bar{\xi}_{\cdot k} \phi_k(\bs{z_{j}}),\notag\\
	\widetilde{\bs{\theta}}_{\varepsilon}&=\bs{\Gamma}_{N,0}^{-1}
	\frac{1}{nN}\sum_{i=1}^{n}\sum_{j=1}^{N} \widetilde{\mathbf{B}}(\bs{z}_{j})\sigma(\bs{z}_j)\varepsilon_{ij}=\bs{\Gamma}_{N,0}^{-1}
	\frac{1}{N}\sum_{j=1}^{N} \widetilde{\mathbf{B}}(\bs{z}_{j})\sigma(\bs{z}_j)\bar{\varepsilon}_{\cdot j},
\end{align*}
Then we can have the following decomposition $\widetilde{\mu}(\bs{z})=\widetilde{\mu}^{o}(\bs{z})
+\widetilde{\eta}(\bs{z})+\widetilde{\varepsilon}(\bs{z})$, where
\begin{equation}
	\widetilde{\mu}^{o}(\bs{z})=\widetilde{\mathbf{B}}(\bs{z})^{\top}\widetilde{\bs{\theta}}_{\mu},~
	\widetilde{\eta}(\bs{z})=\widetilde{\mathbf{B}}(\bs{z})^{\top}\widetilde{\bs{\theta}}_{\eta},~
	\widetilde{\varepsilon}(\bs{z})=\widetilde{\mathbf{B}}(\bs{z})^{\top}\widetilde{\bs{\theta}}_{\varepsilon}.
	\label{DEF:beta_decomp}
\end{equation}

\begin{lemma}
	\label{LEM:uniformbiasrate}
	Under Assumptions (A1) and (A4), if $N^{1/2}|\triangle|\rightarrow \infty$ as $N\rightarrow \infty$, the functions $\widehat{\mu}^{o}(\bs{z})$  satisfy $\|\widehat{\mu}^{o}-\mu\|_{\infty}=O_{P}\left\{
	\frac{\rho_{n}}{nN|\triangle|^{3}}|\mu|_{2,\infty}
	+\left(1+\frac{\rho_{n}}{nN|\triangle|^{5}}\right)
	|\triangle|^{d +1}|\mu|_{d+1,\infty}
	\right\}$.
\end{lemma}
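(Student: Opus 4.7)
The plan is to compare $\widehat{\mu}^{o}$ with its unpenalized analogue $\widetilde{\mu}^{o}$ (defined just before this lemma, using $\bs{\Gamma}_{N,0}$), writing
$\widehat{\mu}^{o}-\mu = (\widehat{\mu}^{o}-\widetilde{\mu}^{o})+(\widetilde{\mu}^{o}-\mu)$,
and controlling the two pieces separately. The second summand is pure bivariate-spline approximation error, and the standard full-power estimate (Lemma B.2 together with the Bernstein-basis locality used in Lai--Schumaker) gives $\|\widetilde{\mu}^{o}-\mu\|_{\infty}\leq C|\triangle|^{d+1}|\mu|_{d+1,\infty}$, producing the ``1'' part of the $\left(1+\tfrac{\rho_{n}}{nN|\triangle|^{5}}\right)|\triangle|^{d+1}|\mu|_{d+1,\infty}$ factor. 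The first summand is the penalty perturbation and must deliver the remaining $\rho_{n}$-dependent terms.

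For that perturbation I would use the normal equations: $\bs{\Gamma}_{N,\rho}\widehat{\bs{\theta}}_{\mu} = \bs{\Gamma}_{N,0}\widetilde{\bs{\theta}}_{\mu}$ together with $\bs{\Gamma}_{N,\rho}-\bs{\Gamma}_{N,0}=\tfrac{\rho_{n}}{nN}\mathbf{D}$ gives the matrix identity $\widehat{\bs{\theta}}_{\mu}-\widetilde{\bs{\theta}}_{\mu}=-\tfrac{\rho_{n}}{nN}\bs{\Gamma}_{N,\rho}^{-1}\mathbf{D}\widetilde{\bs{\theta}}_{\mu}$. Setting $g=\widetilde{\mathbf{B}}^{\top}(\widehat{\bs{\theta}}_{\mu}-\widetilde{\bs{\theta}}_{\mu})\in\mathcal{S}_{d}^{r}(\triangle)$ and multiplying both sides of this identity by $(\widehat{\bs{\theta}}_{\mu}-\widetilde{\bs{\theta}}_{\mu})^{\top}\bs{\Gamma}_{N,\rho}$ yields the energy identity $\|g\|_{N}^{2}+\tfrac{\rho_{n}}{nN}\|g\|_{\mathcal{E}}^{2}=-\tfrac{\rho_{n}}{nN}\langle g,\widetilde{\mu}^{o}\rangle_{\mathcal{E}}$. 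Applying Cauchy--Schwarz on the right and the inverse Markov estimate $\|g\|_{\mathcal{E}}\le C|\triangle|^{-2}\|g\|_{N}$ from Lemma B.5 gives $\|g\|_{N}\leq C\tfrac{\rho_{n}}{nN|\triangle|^{2}}\|\widetilde{\mu}^{o}\|_{\mathcal{E}}$, and a second use of Lemma B.5 ($\|g\|_{\infty}\le C|\triangle|^{-1}\|g\|_{N}$) produces $\|g\|_{\infty}\leq C\tfrac{\rho_{n}}{nN|\triangle|^{3}}\|\widetilde{\mu}^{o}\|_{\mathcal{E}}$. I would then bound $\|\widetilde{\mu}^{o}\|_{\mathcal{E}}\leq \|\mu\|_{\mathcal{E}}+\|\widetilde{\mu}^{o}-\mu\|_{\mathcal{E}}$: the first summand is at most $C|\mu|_{2,\infty}$ since $\Omega$ has bounded area, while for the second Lemma B.2 supplies a quasi-interpolant $\mu^{*}$ with $|\mu-\mu^{*}|_{2,\infty}\leq C|\triangle|^{d-1}|\mu|_{d+1,\infty}$ and the Markov inequalities in Lemma B.5 applied to the spline $\widetilde{\mu}^{o}-\mu^{*}$ deliver $\|\widetilde{\mu}^{o}-\mu\|_{\mathcal{E}}\leq C|\triangle|^{d-1}|\mu|_{d+1,\infty}$. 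Since $|\triangle|^{d-1}/|\triangle|^{3}=|\triangle|^{d+1}/|\triangle|^{5}$, these assemble into exactly the two penalty terms in the statement.

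The main obstacle will be the sharp unpenalized bias bound $\|\widetilde{\mu}^{o}-\mu\|_{\infty}\leq C|\triangle|^{d+1}|\mu|_{d+1,\infty}$: Lemma B.2 only asserts existence of a good spline $\mu^{*}$, whereas $\widetilde{\mu}^{o}$ is the discrete least-squares projection of $\mu$ onto $\mathcal{S}_{d}^{r}(\triangle)$, so one must show that this data-driven projection inherits the full approximation rate in the supremum norm. My plan is to analyze $\widetilde{\mu}^{o}-\mu^{*}$ as a spline, use the norm equivalences of Lemmas B.1, B.3 and B.5 with Assumption (A4) and $N^{1/2}|\triangle|\to\infty$ to guarantee $\bs{\Gamma}_{N,0}$ is well conditioned, and then exploit the locality of Bernstein basis polynomials (only $O(1)$ are nonzero at any $\bs{z}$) to pass from empirical $L^{2}$ to $L^{\infty}$ at the correct power of $|\triangle|$; once this ingredient is in place, all other estimates above are mechanical applications of Cauchy--Schwarz and the Markov-type lemmas already established.
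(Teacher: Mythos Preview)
Your proposal is correct and follows essentially the same route as the paper's proof. Both decompose $\widehat{\mu}^{o}-\mu=(\widehat{\mu}^{o}-\widetilde{\mu}^{o})+(\widetilde{\mu}^{o}-\mu)$, bound the penalty perturbation via the energy identity for $g=\widehat{\mu}^{o}-\widetilde{\mu}^{o}$ together with the Markov-type ratios $S_{N}$ and $\overline{S}_{N}$ of Lemma~\ref{LEM:Normratios}, and then control $\|\widetilde{\mu}^{o}\|_{\mathcal{E}}$ by $|\mu|_{2,\infty}+|\triangle|^{d-1}|\mu|_{d+1,\infty}$. Your matrix-form identity $\|g\|_{N}^{2}+\tfrac{\rho_{n}}{nN}\|g\|_{\mathcal{E}}^{2}=-\tfrac{\rho_{n}}{nN}\langle g,\widetilde{\mu}^{o}\rangle_{\mathcal{E}}$ is algebraically equivalent to the paper's orthogonality relation $nN\|\widetilde{\mu}^{o}-\widehat{\mu}^{o}\|_{N}^{2}=\rho_{n}\langle\widehat{\mu}^{o},\widetilde{\mu}^{o}-\widehat{\mu}^{o}\rangle_{\mathcal{E}}$; the paper takes the small extra step of deducing $\|\widehat{\mu}^{o}\|_{\mathcal{E}}\leq\|\widetilde{\mu}^{o}\|_{\mathcal{E}}$ before applying Cauchy--Schwarz, whereas you drop the nonnegative $\|g\|_{\mathcal{E}}^{2}$ term and bound directly---either works. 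On your flagged ``main obstacle,'' the paper simply invokes Proposition~1 of \cite{Lai:Wang:13} for the sup-norm stability of the unpenalized discrete projection, so you need not redo that argument from scratch.
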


\noindent\begin{proof}
	Note that
	$\|\mu-\widehat{\mu}^{o}\|_{\infty}\leq \|\mu-\widetilde{\mu}^{o}\|_{\infty}+\| \widetilde{\mu}^{o}-\widehat{\mu}^{o}\|_{\infty}$,
	where $\widetilde{\mu}^{o}$ is given in (\ref{DEF:beta_decomp}).\\
	According to Proposition 1 in \cite{Lai:Wang:13}, $\|\widetilde{\mu}^{o}-\mu\|_{\infty}\leq C|\triangle|^{d+1}|\mu|_{d+1,\infty}$. Thus we only need to show the order of $\|\widetilde{\mu}^{o}-\widehat{\mu}^{o}\|_{\infty}$.
	
	By the definition of $S_{N}$ in  (\ref{EQ:Sn}), one has
	\vspace{-0.1in}\begin{equation}
		\|\widetilde{\mu}^{o}-\widehat{\mu}^{o}\|_{\infty}\leq
		S_{N}\|\widetilde{\mu}^{o}-\widehat{\mu}^{o}\|_{N}.
		\label{EQ:pls-ps-sup}
	\end{equation}
	Note that the penalized spline $\widehat{\mu}^{o}$ of $\mu$ is
	characterized by the orthogonality relation: 
	$nN\langle\mu-\widehat{\mu}^{o},g\rangle _{N}=\rho_{n} \langle \widehat{\mu}^{o},g\rangle _{\mathcal{E}},\quad \textrm{for all } g\in \mathcal{S}_{d}^{r}(\triangle)$, while $\widetilde{\mu}^{o}$ is characterized by
	$
	\langle\mu -\widetilde{\mu}^{o},g\rangle _{N}=0$, for all $g\in \mathcal{S}_{d}^{r}(\triangle)$. Combining the two orthogonality relations, one has $nN\langle \widetilde{\mu}^{o}-\widehat{\mu}^{o},g\rangle _{N}=\rho_{n} \langle
	\widehat{\mu}^{o},g\rangle _{\mathcal{E}}$, for all
	$g\in \mathcal{S}_{d}^{r}(\triangle)$. Inserting $g=\widetilde{\mu}^{o}-\widehat{\mu}^{o}$ yields
	that
	\begin{eqnarray*}
		nN\|\widetilde{\mu}^{o}-\widehat{\mu}^{o}\|_{N}^{2}=\rho_{n} \langle \widehat{\mu}^{o},\widetilde{\mu}^{o}-\widehat{\mu}^{o}\rangle _{\mathcal{E}} = \rho_{n} \left\{\langle \widehat{\mu}^{o}, \widetilde{\mu}^{o} \rangle_{\mathcal{E}} - \|\widehat{\mu}^{o}\|^{2}_{\mathcal{E}} \right\} \geq 0.
	\end{eqnarray*}
	Thus, by Cauchy-Schwarz inequality, $\|\widehat{\mu}^{o}\|_{\mathcal{E}}^{2}\leq \langle \widehat{\mu}^{o},\widetilde{\mu}^{o}\rangle _{\mathcal{E}}\leq \|\widehat{\mu}^{o}\|_{\mathcal{E}}\|\widetilde{\mu}^{o}\|_{\mathcal{E}}$,
	which implies that $\|\widehat{\mu}^{o}\|_{\mathcal{E}}\leq \|\widetilde{\mu}^{o}\|_{\mathcal{E}}$. Meanwhile, by the definition of $\overline{S}_{N}$,
	\vspace{-0.1in}\[
	nN\|\widetilde{\mu}^{o}-\widehat{\mu}^{o}\|_{N}^{2}\leq
	\rho_n \|\widehat{\mu}^{o}\|_{\mathcal{E}}\|\widetilde{\mu}^{o}
	-\widehat{\mu}^{o}\|_{\mathcal{E}}\leq \rho_{n} \overline{S}_{N} \|\widehat{\mu}^{o}\|_{\mathcal{E}}\| \widetilde{\mu}^{o}-\widehat{\mu}^{o}\|_{N} \leq  \rho_{n} \overline{S}_{N} \|\widetilde{\mu}^{o}\|_{\mathcal{E}}\| \widetilde{\mu}^{o}-\widehat{\mu}^{o}\|_{N}.
	\vspace{-0.1in}\]
	Therefore,
	\vspace{-0.1in}\begin{eqnarray}
		\|\widetilde{\mu}^{o}-\widehat{\mu}^{o}\|_{N}\leq
		\rho_n (nN)^{-1}\overline{S}_{N} \|\widetilde{\mu}^{o}\|_{\mathcal{E}}. \label{EQ:pls-ps2}
	\end{eqnarray}
	Combining (\ref{EQ:pls-ps-sup}) and (\ref{EQ:pls-ps2}) yields that
	$\|\widetilde{\mu}^{o}-\widehat{\mu}^{o}\|_{\infty}\leq S_{N}
	\|\widetilde{\mu}^{o}-\widehat{\mu}^{o}\|_{N}\leq
	\rho_n (nN)^{-1}S_{N} \overline{S}_{N} \|\widetilde{\mu}^{o}\|_{\mathcal{E}}$. 
	By Lemma \ref{LEM:appord}, one has
	\[
	\|\widetilde{\mu}^{o}\|_{\mathcal{E}} =C_{1}
	\{|\mu|_{2,\infty}+\sum_{a_{1}+a_{2}=2 }\|\nabla_{z_{1}}^{a_{1}}\nabla_{z_{2}}^{a_{2}}(
	\mu- \widetilde{\mu}^{o})\|_{\infty}\}
	\leq C_{2}(|\mu|_{2,\infty}+
	|\triangle|^{d -1}|\mu|_{d +1,\infty}).
	\]
	It follows
	$
	\|\widetilde{\mu}^{o}-\widehat{\mu}^{o}\|_{\infty}=\rho_n (nN)^{-1}S_{N}\overline{S}_{N}
	C_{2}(|\mu|_{2,\infty}+|\triangle|^{d -1}|\mu|_{d+1,\infty})
	$.
	By Lemma \ref{LEM:Normratios}, one has $S_{N} =O_{P}\left( |\triangle| ^{-1}\right)$ and
	$\overline{S}_{N}=O_{P}\left( |\triangle| ^{-2}\right)$.
	Thus,
	\[
	\|\widetilde{\mu}^{o}-\widehat{\mu}^{o}\|_{\infty}
	=O_{P}\left\{ \frac{\rho_n}{nN|\triangle| ^{3}}(|\mu|_{2,\infty}
	+|\triangle|^{d -1}|\mu|_{d +1,\infty}) \right\}.
	\]
	Hence,
	$
	\|\widehat{\mu}^{o}-\mu\|_{\infty}\leq C_{1}|\triangle|^{d +1}|\mu|_{d+1,\infty}+O_{P}\left\{ \frac{\rho_n}{nN|\triangle|^{3}}
	\left(|\mu|_{2,\infty}+|\triangle|^{d -1}|\mu|_{d +1,\infty}\right) \right\}
	$.
	Lemma \ref{LEM:uniformbiasrate} is established.
\end{proof}

\begin{lemma}
	\label{LEM:epshatorder}
	Under Assumptions (A2)--(A4), if $N^{1/2}|\triangle|\rightarrow \infty$ as $N\rightarrow \infty$ and $n^{1/(4+\delta_{2})} \ll n^{1/2}N^{-1/2}|\triangle|^{-1}$, then $\left\|\widetilde{\varepsilon} \right\|_{\infty}=O_{P}\{(nN)^{-1/2}(\log n)^{1/2} |\triangle|^{-1}\}$. In addition, if Assumption (A6) holds, then $\|\widetilde{\eta}\|_{\infty}=O_{P}\{n^{-1/2}(\log n)^{1/2}\}$.
\end{lemma}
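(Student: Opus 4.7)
The plan is to treat both $\widetilde{\varepsilon}$ and $\widetilde{\eta}$ as centered linear statistics built from the unpenalized equivalent kernel $K_N(\bs{z},\bs{z}') := \widetilde{\mathbf{B}}(\bs{z})^{\top}\bs{\Gamma}_{N,0}^{-1}\widetilde{\mathbf{B}}(\bs{z}')$, and to obtain the claimed rates by combining a pointwise second-moment calculation (via Lemmas \ref{LEM:integration} and \ref{LEM:Gamma_rho}) with a truncation-plus-Bernstein bound, followed by a discretization step that uses the spline structure.

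For $\widetilde{\varepsilon}$, I would first rewrite $\widetilde{\varepsilon}(\bs{z}) = (nN)^{-1}\sum_{i,j} K_N(\bs{z},\bs{z}_j)\sigma(\bs{z}_j)\varepsilon_{ij}$ as an average of $nN$ independent mean-zero variables. Lemma \ref{LEM:Gamma_rho} with $\rho_n=0$ gives $\lambda_{\min}(\bs{\Gamma}_{N,0})\asymp|\triangle|^{2}$, and since $\|\widetilde{\mathbf{B}}(\bs{z})\|^2 \le \|\mathbf{B}(\bs{z})\|^2 = O(1)$ uniformly in $\bs{z}$ (Bernstein polynomials are bounded and locally supported), one immediately gets $|K_N(\bs{z},\bs{z}')|=O(|\triangle|^{-2})$. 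Combining this bound with (A3) and the discrete-to-integral approximation in Lemma \ref{LEM:integration} yields $\mathrm{Var}\{\widetilde{\varepsilon}(\bs{z})\} = O((nN)^{-1}|\triangle|^{-2})$ at each $\bs{z}$. I would then truncate $\varepsilon_{ij}$ at level $\tau_n\asymp n^{1/(4+\delta_{2})}$: the $(4+\delta_2)$-moment bound in (A2) makes the discarded tail contribute $o_P((nN)^{-1/2}|\triangle|^{-1})$ in sup-norm, and the stated hypothesis $n^{1/(4+\delta_{2})}\ll n^{1/2}N^{-1/2}|\triangle|^{-1}$ is exactly what forces the linear (``bias'') term in Bernstein's inequality to be dominated by the sub-Gaussian term, so that at a single $\bs{z}$ the deviation is of order $(nN)^{-1/2}|\triangle|^{-1}\sqrt{\log n}$. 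To pass to the supremum, I would use a $\delta_N$-net of $\Omega$ with $\delta_N$ a suitable small polynomial in $|\triangle|$ and $N^{-1}$; since $\widetilde{\varepsilon}$ is a degree-$d$ polynomial on each triangle of a $\pi$-quasi-uniform mesh, Markov's inequality for polynomials makes the interpolation error between the net and $\Omega$ negligible, and a union bound over $O(N^c)$ grid points only inflates $\sqrt{\log n}$ by a constant.

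For $\widetilde{\eta}$, the argument is cleaner. Writing $\eta_i(\bs{z})=\sum_k\xi_{ik}\phi_k(\bs{z})$ and using linearity of the oracle smoother gives $\widetilde{\eta}(\bs{z}) = \sum_{k=1}^{\infty}\bar{\xi}_{\cdot k}\widetilde{\phi}_k(\bs{z})$, where $\widetilde{\phi}_k$ is the unpenalized spline projection of $\phi_k$. Lemma \ref{LEM:appord} together with (A6) yields $\|\widetilde{\phi}_k\|_{\infty} \le \|\phi_k\|_{\infty} + C|\triangle|^{s+1}|\phi_k|_{s+1,\infty}$, and (A6) also supplies $\sum_k\|\phi_k\|_{\infty}<\infty$ together with $\sum_{k>\kappa_n}\|\phi_k\|_{\infty}=o(1)$. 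A truncation-plus-Bernstein argument applied to the i.i.d. mean-zero sequence $\{\xi_{ik}\}_{i=1}^n$, with tail control via the $(4+\delta_1)$-moment bound in (A2) and a union bound over $k\le\kappa_n$, gives $\max_{k\le\kappa_n}|\bar{\xi}_{\cdot k}| = O_P(n^{-1/2}\sqrt{\log n})$. Splitting the $k$-sum at $\kappa_n$ and combining then yields $\|\widetilde{\eta}\|_{\infty} = O_P(n^{-1/2}\sqrt{\log n})$, with the tail piece controlled by the summability of $\|\phi_k\|_{\infty}$ and a trivial $O_P(n^{-1/2})$ bound on each $|\bar{\xi}_{\cdot k}|$.

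The main obstacle is the $\widetilde{\varepsilon}$ step: the three scales $n$, $N$, $|\triangle|$ must be kept in the right relationship so that (i) the truncation-induced bias is $o_P((nN)^{-1/2}|\triangle|^{-1})$, (ii) the linear term of Bernstein is dominated by the sub-Gaussian term, and (iii) the discretization error on the polynomial net is asymptotically negligible. The hypothesis $n^{1/(4+\delta_{2})}\ll n^{1/2}N^{-1/2}|\triangle|^{-1}$ is exactly what balances (i) and (ii), so the remaining work is careful bookkeeping of $\tau_n$ and the net spacing in light of Lemma \ref{LEM:Gamma_rho}.
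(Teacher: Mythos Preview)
Your proposal is correct, and the truncation-plus-Bernstein core matches the paper's argument. The main difference is in how you pass from pointwise control to the supremum. For $\widetilde{\varepsilon}$, the paper does not introduce an $\varepsilon$-net of $\Omega$ at all: since $\widetilde{\varepsilon}\in\mathcal{S}_d^r(\triangle)$, it writes $\widetilde{\varepsilon}(\bs{z})=\sum_{m\in\widetilde{\mathcal{M}}}\widetilde{\theta}_{\varepsilon,m}\widetilde{B}_m(\bs{z})$ and uses Lemma~\ref{LEM:Gamma_rho} to get $\|\widetilde{\varepsilon}\|_\infty\le C|\triangle|^{-2}\max_{m\in\widetilde{\mathcal{M}}}\big|\frac{1}{nN}\sum_{i,j}\widetilde{B}_m(\bs{z}_j)\sigma(\bs{z}_j)\varepsilon_{ij}\big|$. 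Bernstein is then applied coefficient-by-coefficient, and the union bound runs over the $O(|\triangle|^{-2})$ basis indices rather than over a spatial grid. This is cleaner than your net-plus-Markov step and is essentially the ``right'' discretization for a spline: your $\delta_N$-net argument, done carefully, would anyway reduce to controlling the polynomial on each triangle via $O(1)$ points, i.e.\ back to $O(|\triangle|^{-2})$ evaluations. For $\widetilde{\eta}$, the paper again works through basis coefficients and truncates the $\xi_{ik}$'s inside $u_{i,m}=\frac{1}{nN}\sum_j\eta_i(\bs{z}_j)\widetilde{B}_m(\bs{z}_j)$, using only the aggregate bound on $G_\eta$ from (A3); your eigenexpansion $\widetilde{\eta}=\sum_k\bar{\xi}_{\cdot k}\widetilde{\phi}_k$ with a union over $k\le\kappa_n$ is a legitimate alternative (and is in fact the route the paper takes later in Lemma~\ref{LEM:etatilde-etabar-penalized}), though it leans more heavily on (A6). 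One small correction: your truncation level should be taken strictly larger than $n^{1/(4+\delta_2)}$, i.e.\ $n^{1/(4+\delta_2)}\ll\tau_n\ll n^{1/2}N^{-1/2}|\triangle|^{-1}(\log n)^{-1/2}$; with $\tau_n\asymp n^{1/(4+\delta_2)}$ the Borel--Cantelli sum $\sum_n\tau_n^{-(4+\delta_2)}$ diverges and the tail control fails.
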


\noindent\begin{proof}
	Note that $\widetilde{\varepsilon}(\bs{z})=\sum_{m \in \widetilde{\mathcal{M}}}\widetilde{\theta}_{\varepsilon,m}\widetilde{B}_{m}(\bs{z})$ for some coefficients $\widetilde{\theta}_{\varepsilon,m}$, so the order of $\widehat{\varepsilon}(\bs{z})$ is related to that of $\widetilde{\theta}_{\varepsilon,m}$. In fact
	\begin{align*}
		\left\|\widetilde{\varepsilon}\right\|_{\infty} &= \left\| \mathbf{\widetilde{B}}(\bs{z})^{\top}\mathbf{\Gamma}_{N,0}^{-1} \left[\frac{1}{nN}\sum_{i=1}^n\sum_{j=1}^{N} \widetilde{B}_{m}(\bs{z}_{j})\sigma(\bs{z}_j)\varepsilon_{ij}\right] _{m \in \widetilde{\mathcal{M}}}\right\| _{\infty} \\
		& \leq C |\triangle|^{-2} \max_{m \in \widetilde{\mathcal{M}}}\ \left| \frac{1}{nN}\sum_{i=1}^n\sum_{j=1}^{N} \widetilde{B}_{m}(\bs{z}_{j})\sigma(\bs{z}_j)\varepsilon_{ij}\right|,
	\end{align*}
	almost surely, where $\widetilde{\bs{\theta}}_{\varepsilon}=(\widetilde{\theta}_{\varepsilon,m})_{m \in \widetilde{\mathcal{M}}}$ with $\widetilde{\mathcal{M}}$ being an index set of the transformed Bernstein basis polynomials $\widetilde{B}_{m}(\bs{z})$.
	Next we show that with probability $1$
	\begin{equation}
		\max_{m \in \widetilde{\mathcal{M}}}\left| \frac{1}{nN}\sum_{i=1}^n\sum_{j=1}^{N} \widetilde{B}_{m}(\bs{z}_{j})\sigma(\bs{z}_j)\varepsilon_{ij}\right| =O\left\{(\log n)^{1/2}|\triangle|/(nN)^{1/2}\right\}.  \label{EQ:BEsupnorm}
	\end{equation}
	To prove (\ref{EQ:BEsupnorm}), let $\tau_{i}=\tau_{i,m}=\frac{1}{nN}\sum_{j=1}^{N}\widetilde{B}_{m}(\bs{z}_j)\sigma(\bs{z}_j)\varepsilon_{ij}$. We decompose the random variable $\tau_{i}$ into a truncated part and a tail part,
	\begin{align*}
		\tau_{i,1}^{L_{n}}&=\frac{1}{nN}\sum_{j=1}^{N}\widetilde{B}_{m}(\bs{z}_j)\sigma(\bs{z}_j)\varepsilon_{ij}I\left\{ \left|\varepsilon_{ij}\right|
		>L_{n}\right\},~
		\tau_{i,2}^{L_{n}}=\frac{1}{nN}\sum_{j=1}^{N}\widetilde{B}_{m}(\bs{z}_j)\sigma(\bs{z}_j)\varepsilon_{ij}I\left\{ \left| \varepsilon_{ij}\right| \leq L_{n}\right\} -\mu_{i}^{L_{n}}, \\
		\mu_{i}^{L_{n}}&=
		\frac{1}{nN}\sum_{j=1}^{N}\widetilde{B}_{m}(\bs{z}_j)\sigma(\bs{z}_j)E\left[\varepsilon_{ij}I\left\{ \left| \varepsilon_{ij}\right| \leq L_{n}\right\} \right],
	\end{align*}
	where $%
	L_{n}=n^{\alpha }$, and $n^{1/(4+\delta_2)} \ll n^{\alpha} \ll \sqrt{\frac{n}{N\log n}}|\triangle|^{-1} $.
	
	It is straightforward to verify that $\mu_{i}^{L_{n}} =O(n^{-1}L_{n}^{-2}|\triangle|^2)$.
	Next we show that tail part vanishes almost surely. Note that
	\begin{equation}
		\sum_{n=1}^{\infty }P\left\{ \left| \varepsilon_{nj} \right| >L_{n}\right\} \leq
		\sum_{n=1}^{\infty }\frac{E\left| \varepsilon_{nj} \right| ^{4+\delta_2}}{%
			L_{n}^{4+\delta_2}}\leq \upsilon _{\delta }\sum_{n=1}^{\infty }L_{n}^{-(4+\delta_2)}<\infty .  \label{EQ:tail_vanish}
	\end{equation}
	By Borel Cantelli lemma,  one has
	$
	\left| \sum_{i=1}^{n}\tau_{i,1}^{L_{n}} \right| =O_{a.s.}\left( n^{-k}\right),\textrm{\ for\ any\ } k>0.
	$
	Next, note that $E\left(\tau_{i,2}^{L_{n}}\right) =0$, one has
	\begin{align*}
		\mathrm{Var}(\tau_{i,2}^{L_{n}})  &=\frac{1}{n^2N^2}\sum_{j=1}^{N} \widetilde{B}^2_{m}(\bs{z}_j) \sigma^2(\bs{z}_j)
		\left\{E\left( \varepsilon_{ij}^{2}\right) -E[ \varepsilon_{ij}^{2}I\{| \varepsilon_{ij}| >L_{n}\}] -\left(  E\left[\varepsilon_{ij}I\left\{ \left| \varepsilon_{ij}\right| \leq L_{n}\right\} \right] \right) ^{2} \right\} \\
		&\asymp n^{-2}N^{-1}|\triangle|^2.
	\end{align*}
	Using the independence of $\tau_{i,2}^{L_{n}}$, $i=1,\ldots ,n$, one has $\mathrm{Var}\left( \sum_{i=1}^{n}\tau_{i,2}^{L_{n}}\right) \asymp \left( nN\right)^{-1} |\triangle|^2$.
	
	Now Minkowski's inequality implies that
	\begin{align*}
		E&\left|\tau_{i,2}^{L_{n}}\right| ^{k} = E\Big|\frac{1}{nN}\sum_{j=1}^{N}\widetilde{B}_{m}(\bs{z}_j)\sigma(\bs{z}_j)\varepsilon_{ij}I\left\{ \left| \varepsilon_{ij}\right| \leq L_{n}\right\} -\mu _{i}^{L_{n}}\Big|^{k} \\
		&\leq 2^{k-1} \Bigg[ \left\{\frac{1}{nN}\sum_{j=1}^{N}\widetilde{B}_{m}(\bs{z}_j)\sigma(\bs{z}_j) L_{n} \right\}^{k-2}
		\!\!E \left|\frac{1}{nN}\sum_{j=1}^{N}\widetilde{B}_{m}(\bs{z}_j)\sigma(\bs{z}_j)\varepsilon_{ij}I\left\{ \left| \varepsilon_{ij}\right| \leq L_{n}\right\}\right|^2\!\!
		+\!\! \left( \mu _{i}^{L_{n}}\right)^{k}\Bigg]  \\
		&\leq 2\left(\frac{C|\triangle|^2L_{n}}{n}\right)^{k-2}E\left|\tau_{i,2}^{L_{n}}\right| ^{2},\quad
		k\geq 3.
	\end{align*}
	Thus, $E\left| \tau_{i,2}^{L_{n}}\right| ^{k}\leq \left( \frac{C L_{n}|\triangle|^2}{n}\right)^{k-2}k!E|\tau_{i,2}^{L_{n}}|^2<\infty $ with the Cramer constant $c^{*}=Cn^{-1}L_{n}|\triangle|^2$.\\
	By the Bernstein inequality, for any large enough $\delta >0$,
	\begin{align*}
		&\left. P\left( \left|\frac{1}{nN}\sum_{i=1}^n\sum_{j=1}^{N} \widetilde{B}_{m}(\bs{z}_{j})\sigma(\bs{z}_j)\varepsilon_{ij} \right| \ge \delta |\triangle|\sqrt{\frac{\log n}{nN}}\right)
		=P\left( \left| \sum_{i=1}^{n}\tau_{i,2}^{L_{n}}\right| \ge \delta |\triangle| \sqrt{\frac{\log n}{nN}%
		}\right) \right.  \\
		&\leq 2\exp \left\{ \frac{-\delta ^{2}|\triangle|^2\frac{\log n}{nN}}{4\mathrm{Var}\left( \sum_{i=1}^{n}\tau_{i,2}^{L_{n}}\right) +2c^{*}\delta |\triangle| \sqrt{\frac{\log n}{nN}}}\right\}
		=2\exp \left\{ \frac{-\delta ^{2}|\triangle|^2\frac{\log n}{nN}}{\frac{4c}{nN}|\triangle|^2+2CL_{n}{n}^{-1}\delta |\triangle| ^3\sqrt{\frac{\log n}{nN}}}\right\} \\
		&=2\exp \left\{ \frac{-\delta ^{2} \log n}{4c+2CL_{n}\delta |\triangle| \sqrt{\frac{N\log n}{n}}}\right\} \leq
		2n^{-3},
	\end{align*}
	given that $L_{n} = n^{\alpha} = o\left( \sqrt{\frac{n}{N \log n}} |\triangle|^{-1} \right)$.
	Hence
	\[
	\sum_{n=1}^{\infty }P\left( \max_{m \in \mathcal{M}}\left|\frac{1}{nN}\sum_{i=1}^n\sum_{j=1}^{N} \widetilde{B}_{m}(\bs{z}_{j})\sigma(\bs{z}_j)\varepsilon_{ij} \right|
	\geq \delta |\triangle| \sqrt{\frac{\log n}{nN}}\right) \leq c |\triangle|^{-2} \sum_{n=1}^{\infty }n^{-3}<\infty,
	\]
	for such $\delta >0$. Thus, Borel-Cantelli's lemma implies (\ref{EQ:BEsupnorm}).
	
	Similarly, for $\widetilde{\eta}(\bs{z})=\sum_{m \in \widetilde{\mathcal{M}}}\widetilde{\theta}_{\eta,m}\widetilde{B}_{m}(\bs{z})$ one has
	\[
	\left\|\widetilde{\eta}\right\|_{\infty} \leq C|\triangle|^{-2}\max_{m \in \widetilde{\mathcal{M}}}\left| \frac{1}{nN} \sum_{i=1}^{n}\sum_{j=1}^{N}
	\eta_{i}(\bs{z}_j)\widetilde{B}_{m}(\bs{z}_j)\right|,
	\]
	almost surely. Then we can show that with probability $1$
	\begin{equation*}
		\max_{m \in \widetilde{\mathcal{M}}}\left| \frac{1}{nN} \sum_{i=1}^{n}\sum_{j=1}^{N} \eta_{i}(\bs{z}_j)\widetilde{B}_{m}(\bs{z}_j)\right| =O\left\{n^{-1/2}|\triangle|^{2}(\log n)^{1/2}\right\},
	\end{equation*}
	by decomposing mean $0$ random variable $u_{i}=u_{i,m}=\frac{1}{N}\sum_{j=1}^{N}\eta_{i}(\bs{z}_j)\widetilde{B}_{m}(\bs{z}_j)$ into
	\begin{align*}
		u_{i,1}^{T_{n}}&=\sum_{k=1}^{\infty}\left\{\frac{1}{nN}\sum_{j=1}^{N}
		\widetilde{B}_{m}(\bs{z}_j)\phi_{k}(\bs{z}_{j})\right\}\xi_{ik}I\left\{ \left|\xi_{ik}\right|
		>T_{n}\right\},\\
		u_{i,2}^{T_{n}}&=\sum_{k=1}^{\infty}\left\{\frac{1}{nN}\sum_{j=1}^{N}
		\widetilde{B}_{m}(\bs{z}_j)\phi_{k}(\bs{z}_{j})\right\} \xi_{ik}I\left\{ \left|\xi_{ik}\right| \leq T_{n}\right\} -\mu_{i}^{T_{n}}, \\
		\mu_{i}^{T_{n}}&=
		\sum_{k=1}^{\infty}\left\{\frac{1}{nN}\sum_{j=1}^{N}\widetilde{B}_{m}(\bs{z}_j)
		\phi_{k}(\bs{z}_{j})\right\} E\left[\xi_{ik}I\left\{ \left| \xi_{ik}\right| \leq T_{n}\right\} \right],
	\end{align*}
	where
	$T_{n}=n^{\alpha }$ and  $n^{1/(4+\delta_1)} \ll n^{\alpha} \ll (n/\log n)^{1/2}$.
	
	Using Borel Cantelli lemma and similar method in (\ref{EQ:tail_vanish}), we can show that tail part vanishes almost surely, i.e., $\left| \sum_{i=1}^{n}u_{i,1}^{T_{n}} \right| =O_{a.s.}\left( n^{-r}\right),\textrm{\ for\ any\ } r>0.$ As $E u_{i} =0$, then it is straightforward to verify that $\mu_{i}^{T_{n}}=-Eu_{i,1}^{T_{n}}=O(n^{-1}T_{n}^{-2}|\triangle|^2)$.
	
	Next, notice that $Eu_{i,2}^{T_{n}}=0$. Then, one has
	\begin{align*}
		\mathrm{Var}\left(u_{i,2}^{T_{n}}\right)=&\frac{1}{n^{2}N^{2}}\sum_{k=1}^{\infty} \left\{\sum_{j=1}^{N} \sum_{j^{\prime}=1}^{N}\widetilde{B}_{m}(\bs{z}_{j})\widetilde{B}_{m}(\bs{z}_{j^{\prime}}) \phi_{k}(\bs{z}_l) \phi_{k}(\bs{z}_{j^{\prime}})\right\} \\
		& \times\left\{ E(\xi_{ik}^{2}) - E \Big[ \xi_{ik}^2 I\{|\xi_{ik}| > T_{n}\}\Big] - \left(  E\left[\xi_{ik}I\left\{ \left| \xi_{ik}\right| \leq T_{n}\right\} \right] \right) ^{2}  \right\} = O(n^{-2}|\triangle|^4),
	\end{align*}
	which indicates $\mathrm{Var}\left(\sum_{i=1}^{n}u_{i,2}^{T_{n}}\right) = n^{-1}|\triangle|^{4}$.
	
	Similarly, we can show that there exists some constant $C$, such that for any $r\geq 3$, we have $E|u_{i,2}^{T_{n}}|^{r} \leq  \left(C|\triangle|^2 T_{n}/n\right)^{r-2}r! E|u_{i,2}^{T_{n}}|^2$. Using Bernstein inequality, one has
	\[
	P\left\{\left| \sum_{i=1}^{n}u_{i}\right| \geq \delta n^{-1/2}|\triangle|^{2}(\log n)^{1/2}\right\}
	\leq 2\exp \left\{\frac{-\delta^{2}\log n}{4c+2\delta CT_{n} (\log n)^{1/2} n^{-1/2}} \right\} \leq
	2n^{-3}.
	\]
	Hence,
	\[
	\sum_{n=1}^{\infty}P\left\{ \max_{m \in \widetilde{\mathcal{M}}}\left|\sum_{i=1}^{n}u_{i} \right| \geq \delta n^{-1/2}|\triangle|^{2}(\log n)^{1/2}\right\} \leq C |\triangle|^{-2} \sum_{n=1}^{\infty}n^{-3}<\infty
	\]
	for such $\delta >0$. Thus, Borel-Cantelli's lemma implies that $\left\|\widetilde{\eta} \right\|_{\infty}=O_{P}\{n^{-1/2}(\log n)^{1/2}\}$.
\end{proof}

\begin{lemma}
	\label{LEM:maxnorm}
	Under Assumptions (A2)--(A4),  one has
	\[
	\|\widehat{\varepsilon}\|_{\infty} = O_{P}\left\{\frac{(\log n)^{1/2}}{
		\sqrt{nN}|\triangle|}+ \frac{\rho_n}{n^{3/2}N^{3/2} |\triangle|^6} \right\}, ~
	\|\widehat{\eta}\|_{\infty} = O_{P}\left\{\frac{(\log n)^{1/2}}{
		\sqrt{n}}+ \frac{\rho_n}{n^{3/2}N|\triangle|^5} \right\}.
	\]
\end{lemma}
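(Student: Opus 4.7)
The plan is to handle both bounds by comparing each penalized estimator to its unpenalized pilot and invoking Lemma~\ref{LEM:epshatorder}. Write
\[
\|\widehat{\varepsilon}\|_\infty \le \|\widetilde{\varepsilon}\|_\infty + \|\widehat{\varepsilon}-\widetilde{\varepsilon}\|_\infty,\qquad
\|\widehat{\eta}\|_\infty \le \|\widetilde{\eta}\|_\infty + \|\widehat{\eta}-\widetilde{\eta}\|_\infty.
\]
Lemma~\ref{LEM:epshatorder} gives $\|\widetilde{\varepsilon}\|_\infty = O_P\{(\log n)^{1/2}/(\sqrt{nN}|\triangle|)\}$ and $\|\widetilde{\eta}\|_\infty = O_P\{n^{-1/2}(\log n)^{1/2}\}$, which are exactly the stochastic terms in the stated rates. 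The task therefore reduces to showing that the two penalization gaps are $O_P\{\rho_n/(n^{3/2}N^{3/2}|\triangle|^6)\}$ and $O_P\{\rho_n/(n^{3/2}N|\triangle|^5)\}$, respectively.

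To bound the gaps I will mimic the orthogonality argument used in the proof of Lemma~\ref{LEM:uniformbiasrate}. By linearity of the ridge solution in the response vector, $\widehat{\varepsilon}$ is precisely the BPS fit of the pseudo-response $\bs{z}_j \mapsto \sigma(\bs{z}_j)\bar{\varepsilon}_{\cdot j}$ with penalty $\rho_n/n$, while $\widetilde{\varepsilon}$ is its OLS counterpart. For every $g\in\mathcal{S}_d^r(\triangle)$ the two characterizations
\[
N\langle \sigma\bar{\varepsilon}_{\cdot} - \widehat{\varepsilon},\, g\rangle_N = (\rho_n/n)\langle \widehat{\varepsilon},\, g\rangle_{\mathcal{E}},\qquad \langle \sigma\bar{\varepsilon}_{\cdot} - \widetilde{\varepsilon},\, g\rangle_N = 0
\]
hold. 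Subtracting and inserting $g=\widetilde{\varepsilon}-\widehat{\varepsilon}$ yields
$nN\|\widetilde{\varepsilon}-\widehat{\varepsilon}\|_N^2 = \rho_n\{\langle \widehat{\varepsilon},\widetilde{\varepsilon}\rangle_{\mathcal{E}} - \|\widehat{\varepsilon}\|_{\mathcal{E}}^2\} \ge 0$, so by Cauchy--Schwarz $\|\widehat{\varepsilon}\|_{\mathcal{E}} \le \|\widetilde{\varepsilon}\|_{\mathcal{E}}$. Chaining through Lemma~\ref{LEM:Normratios} gives
\[
\|\widetilde{\varepsilon}-\widehat{\varepsilon}\|_\infty \le S_N\|\widetilde{\varepsilon}-\widehat{\varepsilon}\|_N \le (nN)^{-1}\rho_n S_N\overline{S}_N \|\widetilde{\varepsilon}\|_{\mathcal{E}} = O_P\!\left\{\frac{\rho_n}{nN|\triangle|^3}\right\}\|\widetilde{\varepsilon}\|_{\mathcal{E}},
\]
with an identical chain valid for $\|\widetilde{\eta}-\widehat{\eta}\|_\infty$ after replacing $\widetilde{\varepsilon}$ by $\widetilde{\eta}$ and using the pseudo-response $\bs{z}_j\mapsto \sum_k \bar{\xi}_{\cdot k}\phi_k(\bs{z}_j)$.

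It remains to control the energy norms. Markov's inequality for Bernstein polynomials (used inside the proof of Lemma~\ref{LEM:Gamma_rho}) together with Lemma~\ref{LEM:normequity} yields $\|g\|_{\mathcal{E}} \le C|\triangle|^{-2}\|g\|$ and $\|g\|^2 \asymp |\triangle|^2\|\bs{\gamma}\|^2$ for $g=\mathbf{B}^{\top}\bs{\gamma}$. The second-moment computation of Lemma~\ref{LEM:thetatilde-eta}, carried out with $\rho_n=0$ (allowed because the lower bound on $\lambda_{\min}(\bs{\Gamma}_{N,\rho})$ in Lemma~\ref{LEM:Gamma_rho} is the same when $\rho_n=0$), shows $\|\widetilde{\bs{\theta}}_\eta\|^2 = O_P(n^{-1}|\triangle|^{-2})$ and $\|\widetilde{\bs{\theta}}_\varepsilon\|^2 = O_P((nN)^{-1}|\triangle|^{-4})$. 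Consequently $\|\widetilde{\eta}\|_{\mathcal{E}} = O_P(n^{-1/2}|\triangle|^{-2})$ and $\|\widetilde{\varepsilon}\|_{\mathcal{E}} = O_P((\sqrt{nN}|\triangle|^3)^{-1})$. Substituting these into the gap bounds reproduces the claimed penalty-driven terms $O_P\{\rho_n/(n^{3/2}N|\triangle|^5)\}$ and $O_P\{\rho_n/(n^{3/2}N^{3/2}|\triangle|^6)\}$.

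The structural part of the argument is a direct transcription of Lemma~\ref{LEM:uniformbiasrate}; the genuinely new work sits in the final paragraph, where the energy norms of the random pilots $\widetilde{\eta}$ and $\widetilde{\varepsilon}$ must be controlled in place of the deterministic $\|\widetilde{\mu}^o\|_{\mathcal{E}}$. I expect the main bookkeeping hurdle to be tracking the exponents of $|\triangle|$, $n$ and $N$ correctly through the chain $\|\cdot\|_{\mathcal{E}} \le C|\triangle|^{-2}\|\cdot\| \asymp |\triangle|^{-1}\|\bs{\theta}\|$, and in verifying that the $\rho_n=0$ analog of Lemma~\ref{LEM:thetatilde-eta} goes through without degradation of the rate.
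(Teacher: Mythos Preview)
Your proposal is correct and follows essentially the same route as the paper: triangle inequality against the unpenalized pilot, the orthogonality/Cauchy--Schwarz argument from Lemma~\ref{LEM:uniformbiasrate} to bound $\|\widetilde{\varepsilon}-\widehat{\varepsilon}\|_\infty$ (resp.\ $\|\widetilde{\eta}-\widehat{\eta}\|_\infty$) by $(nN)^{-1}\rho_n S_N\overline{S}_N\|\widetilde{\varepsilon}\|_{\mathcal{E}}$, then Markov's inequality $\|\cdot\|_{\mathcal{E}}\le C|\triangle|^{-2}\|\cdot\|$ combined with the coefficient bounds of Lemma~\ref{LEM:thetatilde-eta}. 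Your observation that Lemma~\ref{LEM:thetatilde-eta} applies verbatim with $\rho_n=0$ (since only the lower eigenvalue bound of $\bs{\Gamma}_{N,\rho}$ is used) is exactly how the paper handles the $L_2$ norms of $\widetilde{\varepsilon}$ and $\widetilde{\eta}$.
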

\noindent\begin{proof}
	We only show the infinity norm of $\widehat{\varepsilon}$. The conclusion of
	$\|\widehat{\eta}\|_{\infty}$ follows similarly. Note that the penalized spline $\widehat{\varepsilon}$ of $\varepsilon$ is characterized by the orthogonality relations: $nN\left\langle \varepsilon -\widehat{\varepsilon},g\right\rangle _{N}=\rho_n
	\left\langle \widehat{\varepsilon},g\right\rangle _{\mathcal{E}}$, for all $g\in \mathcal{S}_{d}^{r}(\triangle)$. In particular, $\widetilde{\varepsilon}$ is characterized by $\left\langle \varepsilon -\widetilde{\varepsilon},g\right\rangle _{N}=0$, for all $g\in \mathcal{S}_{d}^{r}(\triangle)$. Inserting $g=\widetilde{\varepsilon}-\widehat{\varepsilon}$ yield that
	$nN\left\| \widetilde{\varepsilon}-\widehat{\varepsilon}\right\| _{N}^{2}=\rho_n
	\left\langle \widehat{\varepsilon},\widetilde{\varepsilon}-\widehat{\varepsilon}\right\rangle _{\mathcal{E}} =
	\rho_n (\langle \widehat{\varepsilon}, \widetilde{\varepsilon}\rangle_{\mathcal{E}} -\|\widehat{\varepsilon}\|_{\mathcal{E}})$.
	
	It follows, by Cauchy-Schwarz inequality, that
	$
	\left\| \widehat{\varepsilon}\right\|
	_{\mathcal{E}}^{2}\leq \left\langle \widehat{\varepsilon},\widetilde{\varepsilon}\right\rangle _{\mathcal{E}}
	\leq \left\|\widehat{\varepsilon}\right\| _{\mathcal{E}}\left\| \widetilde{\varepsilon}
	\right\|_{\mathcal{E}},
	$
	which implies that $\left\| \widehat{\varepsilon}\right\|
	_{\mathcal{E}}\leq \left\| \widetilde{\varepsilon}\right\|
	_{\mathcal{E}}$.
	Thus, by Cauchy-Schwarz inequality and the definition of $\overline{S}_{N}$ in (\ref{EQ:Snbar}), one has
	\[
	nN\left\| \widetilde{\varepsilon}-\widehat{\varepsilon}\right\| _{N}^{2}\leq
	\rho_n \left\| \widehat{\varepsilon}\right\| _{\mathcal{E}}\left\| \widetilde{\varepsilon}-\widehat{\varepsilon}\right\| _{\mathcal{E}}\leq \overline{S}_{N}\rho_n \left\| \widehat{\varepsilon}\right\|
	_{\mathcal{E}}\left\| \widetilde{\varepsilon}-\widehat{\varepsilon}\right\|
	_{N}.
	\]
	Hence, $\left\| \widetilde{\varepsilon}-\widehat{\varepsilon}\right\| _{N}\leq
	(nN)^{-1}\overline{S}_{N} \rho_n \left\| \widetilde{\varepsilon}\right\|
	_{\mathcal{E}}$.
	Using (\ref{EQ:Sn}), we obtain
	$$
	\left\| \widetilde{\varepsilon}-\widehat{\varepsilon}\right\| _{\infty}\leq S_{N}
	\left\| \widetilde{\varepsilon}-\widehat{\varepsilon}\right\| _{N}\leq
	(nN)^{-1}S_{N} \overline{S}_{N} \rho_n \left\| \widetilde{\varepsilon}\right\|
	_{\mathcal{E}}.
	$$
	Finally, we use Markov's inequality to get
	$\left\| \widetilde{\varepsilon}\right\|_{\mathcal{E}} \le  C_1 |\triangle|^{-2} \left\|\widetilde{\varepsilon}
	\right\|$. It therefore follows that
	$$
	\|\widehat{\varepsilon}\|_{\infty} \le \|\widetilde{\varepsilon}\|_{\infty} + \|\widetilde{\varepsilon}- \widehat{\varepsilon}
	\|_{\infty} \le\|\widetilde{\varepsilon}\|_{\infty}+ \frac{\rho_n}{nN} S_{N} \overline{S}_{N} \frac{C_1}{|\triangle|^2}
	\|\widetilde{\varepsilon}\|_{L_2}.
	$$
	According to Lemmas \ref{LEM:Normratios},  \ref{LEM:thetatilde-eta} and  \ref{LEM:epshatorder}, one has $\|\widetilde{\varepsilon}\|_{\infty}=O_{P}\left\{ (nN)^{-1/2}(\log n)^{1/2}|\triangle|^{-1}\right\}$ and $\Vert \widehat{\varepsilon}\Vert_{L_2}^2 \asymp  |\triangle|^2 \Vert \widehat{\bs{\theta}}_{\varepsilon}\Vert^2 =O_P(n^{-1}N^{-1}|\triangle|^{-2})$. Hence, $\|\widehat{\varepsilon}\|_{\infty} = O_{P}\left\{\frac{(\log n)^{1/2}}{
		\sqrt{nN}|\triangle|}+ \frac{\rho_n}{n^{3/2}N^{3/2} |\triangle|^6} \right\}$.
\end{proof}

\noindent\begin{proof}[Proof of Theorem \ref{THM:convergence}]
	Note that
	$\widehat{\mu} - \mu = \widehat{\mu}^{o} - \mu + \widehat{\eta} + \widehat{\varepsilon}$. Therefore, $
	\Vert \widehat{\mu}-\mu\Vert_{L_2} \leq \Vert \widehat{\mu}^{o}-\mu\Vert_{L_2} + \Vert \widehat{\eta}\Vert_{L_2}+ \Vert \widehat{\varepsilon}\Vert_{L_2}.$
	By Lemmas \ref{LEM:normequity} and \ref{LEM:thetatilde-eta}, one has
	\[
	\Vert \widehat{\eta}\Vert_{L_2}^2 \asymp  |\triangle|^2 \Vert \widehat{\bs{\theta}}_{\eta}\Vert^2 =O_P(n^{-1}), ~
	\Vert \widehat{\varepsilon}\Vert_{L_2}^2 \asymp  |\triangle|^2 \Vert \widehat{\bs{\theta}}_{\varepsilon}\Vert^2 =O_P(n^{-1}N^{-1}|\triangle|^{-2}),
	\]
	and the asymptotic order of $\Vert \widehat{\mu}^{o}-\mu\Vert_{L_2}$ is the same as $\Vert \widehat{\mu}^{o}-\mu\Vert_{\infty}$.
	By Lemmas \ref{LEM:uniformbiasrate} and \ref{LEM:maxnorm},
	\begin{align*}
		&\| \widehat{\mu}^{o} - \mu\|_{\infty}
		=O_{P}\left\{\frac{\rho_{n}}{nN|\triangle|^{3}}|\mu|_{2,\infty}
		+\left(1+\frac{\rho_{n}}{nN|\triangle|^{5}}\right)
		|\triangle|^{d +1}|\mu|_{d+1,\infty}
		\right\}, \\
		&\|\widehat{\eta}\|_{\infty} = O_{P}\left\{\frac{(\log n)^{1/2}}{
			\sqrt{n}}+ \frac{\rho_n}{n^{3/2}N|\triangle|^5} \right\}, ~
		\|\widehat{\varepsilon}\|_{\infty} =  O_P\left\{\frac{(\log n)^{1/2}}{
			\sqrt{nN}|\triangle|}+ \frac{\rho_n}{n^{3/2}N^{3/2} |\triangle|^6}\right\}.
	\end{align*}
	Thus, by Assumption (A5),
	$\| \widehat{\mu} - \mu\|_{\infty}= O_{P}\{(n^{-1}\log (n))^{1/2}\}$ and $\Vert \widehat{\mu}-\mu \Vert_{L _2}=O_{P}(n^{-1/2})$.
\end{proof}

\vskip .05in \noindent \textbf{B.3. Simultaneous Confidence Bands}
\vskip .05in \noindent \textbf{B.3.1. Proof of Theorem \ref{THM:band}}
\begin{lemma}[Lemma A.5, \cite{Cao:Yang:Todem:12}]
	\label{LEM: Cao A.5}
	Let $\bar{\xi}_{\cdot k} = \frac{1}{n} \sum_{i=1}^{n} \xi_{ik}$ and $\bar{\varepsilon}_{\cdot j}= \frac{1}{n} \sum_{i=1}^{n} \varepsilon_{ij}$. If Assumption (A2) holds, then there exists some constant $C_{\beta}>0$ such that
	$\max_{1\leq k \leq \kappa} E|\bar{\xi}_{\cdot k} - \bar{Z}_{\cdot k, \xi}| \leq C_{\beta} n^{\beta -1},~\mathrm{and}~\max_{1 \leq j \leq N} |\bar{\varepsilon}_{\cdot j} - \bar{Z}_{\cdot j, \varepsilon}| = O_{a.s.}(n^{\beta -1}), \mathrm{~for~some~\beta\in (0,1/2)}$,
	where  $\{Z_{ik,\xi}\}_{i=1,k=1}^{n,\kappa}$ and $\{Z_{ij,\varepsilon}\}_{i=1,j=1}^{n,N}$ are iid $N(0,1)$ variables and $\bar{Z}_{\cdot k,\xi} = \frac{1}{n} \sum_{i=1}^{n} Z_{ik,\xi},\ \bar{Z}_{\cdot j,\varepsilon} = \frac{1}{n} \sum_{i=1}^{n} Z_{ij,\varepsilon},\ 1 \leq j \leq N,\ 1 \leq k \leq \kappa$.
\end{lemma}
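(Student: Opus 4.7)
The statement is a KMT/Sakhanenko-type strong approximation bound: it asserts that the i.i.d.\ scores $\xi_{ik}$ and the i.i.d.\ errors $\varepsilon_{ij}$ admit couplings to i.i.d.\ standard normals so that the \emph{averaged} coupling error is $O(n^{\beta-1})$ for some $\beta<1/2$. My plan is to invoke Sakhanenko's refinement of the Hungarian embedding theorem separately for each index $k$ and each index $j$, obtain a quantitative tail bound from the $(4+\delta)$th moment supplied by Assumption (A2), and then pass from partial sums to averages by dividing by $n$.

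More precisely, I would first fix $k\in\{1,\dots,\kappa\}$. Since $\{\xi_{ik}\}_{i\ge1}$ are i.i.d.\ with mean $0$, variance $1$, and $E|\xi_{1k}|^{4+\delta_1}<\infty$, Sakhanenko's theorem provides, on a possibly enlarged probability space, i.i.d.\ $N(0,1)$ variables $Z_{ik,\xi}$ such that
\begin{equation*}
P\!\left(\max_{1\le m\le n}\Big|\sum_{i=1}^{m}(\xi_{ik}-Z_{ik,\xi})\Big|\ge x\right)\le C\,n\,x^{-(4+\delta_1)}E|\xi_{1k}|^{4+\delta_1}
\end{equation*}
for every $x>0$. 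Setting $x=n^{\beta}$ with any fixed $\beta\in\bigl(1/(4+\delta_1),\,1/2\bigr)$ makes the right-hand side summable in $n$, so the Borel--Cantelli lemma yields $|\sum_{i=1}^n(\xi_{ik}-Z_{ik,\xi})|=O_{a.s.}(n^{\beta})$; dividing by $n$ gives $|\bar\xi_{\cdot k}-\bar Z_{\cdot k,\xi}|=O_{a.s.}(n^{\beta-1})$. The corresponding $L^1$ bound is obtained by integrating the tail: $E|\bar\xi_{\cdot k}-\bar Z_{\cdot k,\xi}|\le n^{\beta-1}+n^{-1}\!\int_{n^{\beta}}^{\infty}P(|\sum_{i}(\xi_{ik}-Z_{ik,\xi})|\ge x)\,dx$, and the moment bound above makes the integral $O(n^{\beta-1})$. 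Taking the maximum over the finitely many indices $k=1,\dots,\kappa$ loses only a constant, which supplies the first conclusion with constant $C_{\beta}$.

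For the second conclusion, I would run the same Sakhanenko coupling independently for each $j=1,\dots,N$ to produce i.i.d.\ standard normals $Z_{ij,\varepsilon}$, so that for each $j$,
\begin{equation*}
P\!\left(\Big|\sum_{i=1}^{n}(\varepsilon_{ij}-Z_{ij,\varepsilon})\Big|\ge n^{\beta}\right)\le C\,n^{1-\beta(4+\delta_2)}E|\varepsilon_{1j}|^{4+\delta_2}.
\end{equation*}
A union bound over $j\le N$ gives an overall tail of order $N\,n^{1-\beta(4+\delta_2)}$. Picking $\beta$ slightly larger than $1/(4+\delta_2)$ (still below $1/2$) and using the polynomial growth $N=O(n^{q})$ permitted by Assumption~(A5), this sum is still summable in $n$, so Borel--Cantelli yields $\max_{1\le j\le N}|\bar\varepsilon_{\cdot j}-\bar Z_{\cdot j,\varepsilon}|=O_{a.s.}(n^{\beta-1})$, as claimed.

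The main technical obstacle is the maximum over $j=1,\dots,N$ when $N\to\infty$ with $n$: a single coupling has only an individual a.s.\ rate, so one must use Sakhanenko's \emph{quantitative} tail estimate (not merely the a.s.\ statement) together with a union bound, and verify that the exponent $4+\delta_2$ from Assumption (A2) is large enough that $N\cdot n^{1-\beta(4+\delta_2)}$ is summable once $\beta$ is chosen in the admissible range $\bigl(1/(4+\delta_2),1/2\bigr)$. Independence of the constructed $Z_{ik,\xi}$ across $k$ and of the $Z_{ij,\varepsilon}$ across $j$ is preserved because the embeddings can be carried out on disjoint auxiliary randomizations for each index.
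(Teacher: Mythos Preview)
The paper does not prove this lemma; it simply cites it as Lemma~A.5 of Cao, Yang and Todem (2012). Your approach via Sakhanenko's quantitative strong approximation, integrating the tail to pass from the a.s.\ bound to the $L^1$ bound, and using a union bound over $j$ together with Borel--Cantelli for the $\varepsilon$-part, is exactly the standard argument underlying that cited result, so your proposal is correct and in line with the intended proof.

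One small caveat: the lemma as stated in the paper claims the conclusion under Assumption~(A2) alone, but your argument for $\max_{1\le j\le N}|\bar\varepsilon_{\cdot j}-\bar Z_{\cdot j,\varepsilon}|$ correctly requires a growth restriction on $N$ relative to $n$ so that $\sum_n N(n)\,n^{1-\beta(4+\delta_2)}<\infty$. You appeal to Assumption~(A5) for a polynomial bound $N=O(n^q)$; note that (A5) does not state such a bound explicitly, so you would need to extract it from the chain of inequalities there (or simply observe that the lemma's intended use in Lemma~\ref{LEM:etatilde-etabar-penalized} only needs the $\xi$-part, where the maximum is over at most $\kappa$ indices). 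Similarly, for $\max_{1\le k\le\kappa}$ to be controlled by a single constant $C_\beta$ when $\kappa=\infty$, the moment bound $E|\xi_{1k}|^{4+\delta_1}$ must be uniform in $k$, which (A2) states only pointwise; this is a wrinkle in the lemma's statement rather than in your argument.
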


\begin{lemma}
	\label{LEM:etatilde-etabar-penalized}
	Let
	$\bar{\eta}(\bs{z}) = \frac{1}{n} \sum_{i=1}^{n} \eta_i(\bs{z}) = \sum_{k=1}^{\kappa} \bar{\xi}_{\cdot k}\phi_{k}(\bs{z})$, then under Assumptions (A2)--(A6),  for $\widehat{\eta}(\bs{z})$ defined in (\ref{DEF:pls_estimator}), one has $n^{1/2}  \|\widehat{\eta}- \bar{\eta}\|_{\infty}  = o_{P}(1)$. In addition, as $N\rightarrow \infty$, $n\rightarrow \infty$,
	\begin{align*}
		&P\left\{\sup_{\bs{z} \in \Omega} n^{1/2} G_{\eta}(\bs{z}, \bs{z})^{-1/2} |\bar{\eta}(\bs{z})| \leq q_{1-\alpha}\right\} \rightarrow 1-\alpha,  \\
		&P\left\{ n^{1/2} G_{\eta}(\bs{z}, \bs{z})^{-1/2} |\bar{\eta}(\bs{z})| \leq z_{1-\alpha/2}\right\} \rightarrow 1-\alpha,\  \ \mathrm{~for~any~} \bs{z} \in \Omega.
	\end{align*}
\end{lemma}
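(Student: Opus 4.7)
The plan is to address the three conclusions in sequence, with the coupling to a Gaussian limit serving as the bridge between the deterministic spline bound in conclusion (i) and the probabilistic statements in conclusions (ii) and (iii).

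First I would establish conclusion (i) $n^{1/2}\|\widehat\eta - \bar\eta\|_\infty = o_P(1)$. Since $\widehat\eta$ is defined through the linear operator $\widetilde{\mathbf{B}}(\bs z)^\top \bs\Gamma_{N,\rho}^{-1}\frac{1}{N}\sum_j \widetilde{\mathbf{B}}(\bs z_j)\,\cdot\,$ applied to the values $\bar\eta(\bs z_j) = \sum_{k=1}^\kappa \bar\xi_{\cdot k}\phi_k(\bs z_j)$, I can swap sum and operator to write $\widehat\eta(\bs z) = \sum_{k=1}^\kappa \bar\xi_{\cdot k}\,\widetilde\phi_k(\bs z)$, where $\widetilde\phi_k$ is the penalized bivariate spline fit of $\phi_k$ on the grid. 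Then $\widehat\eta(\bs z) - \bar\eta(\bs z) = \sum_{k=1}^\kappa \bar\xi_{\cdot k}\{\widetilde\phi_k(\bs z) - \phi_k(\bs z)\}$. I split the sum at the truncation level $\kappa_n$. For $k\le\kappa_n$, Lemma \ref{LEM:uniformbiasrate} applied to $\phi_k$ gives
\begin{equation*}
\|\widetilde\phi_k - \phi_k\|_\infty = O_P\!\left\{\tfrac{\rho_n}{nN|\triangle|^{3}}\|\phi_k\|_{2,\infty} + \bigl(1+\tfrac{\rho_n}{nN|\triangle|^{5}}\bigr)|\triangle|^{s+1}\|\phi_k\|_{s+1,\infty}\right\},
\end{equation*}
and Assumption (A6) makes the sum of these bounds over $k\le\kappa_n$ be $o(1)$. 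For the tail $k > \kappa_n$, I bound $\|\widetilde\phi_k-\phi_k\|_\infty \le C\|\phi_k\|_\infty$ using standard spline stability, and then invoke $\sum_{k=\kappa_n+1}^\kappa\|\phi_k\|_\infty = o(1)$. Combined with $E|\bar\xi_{\cdot k}|\le n^{-1/2}$ from Assumption (A2), Markov's inequality yields $n^{1/2}\|\widehat\eta-\bar\eta\|_\infty = o_P(1)$.

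Next, for conclusion (ii) the plan is to couple $\bar\eta$ to a standardized Gaussian process via Lemma \ref{LEM: Cao A.5}. Set $\zeta_n(\bs z) := n^{1/2} G_\eta(\bs z,\bs z)^{-1/2} \sum_{k=1}^\kappa \bar Z_{\cdot k,\xi}\phi_k(\bs z)$. A direct covariance calculation using $G_\eta(\bs z,\bs z') = \sum_k \phi_k(\bs z)\phi_k(\bs z')$ shows that $\zeta_n$ has the same distribution as $\zeta$ in the statement, and in particular
\begin{equation*}
P\bigl\{\sup_{\bs z\in\Omega}|\zeta_n(\bs z)|\le q_{1-\alpha}\bigr\} = 1-\alpha
\end{equation*}
by the definition of $q_{1-\alpha}$. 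It then suffices to show that
\begin{equation*}
\sup_{\bs z\in\Omega}n^{1/2}G_\eta(\bs z,\bs z)^{-1/2}\bigl|\bar\eta(\bs z) - \tsum_{k=1}^\kappa \bar Z_{\cdot k,\xi}\phi_k(\bs z)\bigr| = o_P(1).
\end{equation*}
By Assumption (A3), $G_\eta(\bs z,\bs z)^{-1/2}$ is bounded above, so it is enough to control $n^{1/2}\sum_k|\bar\xi_{\cdot k}-\bar Z_{\cdot k,\xi}|\|\phi_k\|_\infty$. Splitting at $\kappa_n$ again, Lemma \ref{LEM: Cao A.5} gives $E|\bar\xi_{\cdot k}-\bar Z_{\cdot k,\xi}| \le C_\beta n^{\beta-1}$ for some $\beta\in(0,1/2)$, so the head contributes $n^{1/2}\cdot n^{\beta-1}\sum_{k\le\kappa_n}\|\phi_k\|_\infty = O(n^{\beta-1/2}) = o(1)$. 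The tail is handled by $E|\bar\xi_{\cdot k}| + E|\bar Z_{\cdot k,\xi}| = O(n^{-1/2})$ and the summability of $\|\phi_k\|_\infty$ on $k>\kappa_n$ in (A6).

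Finally, for the pointwise conclusion (iii), I would fix $\bs z\in\Omega$ and apply the univariate CLT to the sequence $n^{1/2}\bar\eta(\bs z) = n^{-1/2}\sum_{i=1}^n\sum_{k=1}^\kappa\xi_{ik}\phi_k(\bs z)$, whose variance is $G_\eta(\bs z,\bs z)$ and whose $(2+\delta_1/2)$-th moment is finite by Assumption (A2) combined with $\sum_k\|\phi_k\|_\infty<\infty$; the Lyapunov condition then yields asymptotic normality, which gives the stated pointwise coverage at level $z_{1-\alpha/2}$. The main obstacle is the careful truncation argument controlling the infinite series, since uniform control of $\|\widetilde\phi_k-\phi_k\|_\infty$ and of the Gaussian coupling error must both be summed against the slowly decaying $\bar\xi_{\cdot k}$; this is precisely where Assumption (A6) is used, and care is needed because the penalization rate $\rho_n$ enters the spline bound nontrivially.
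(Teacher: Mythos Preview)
Your proposal is correct and follows essentially the same approach as the paper: you decompose $\widehat\eta$ through the penalized spline fits $\widehat\phi_k$ of each eigenfunction, bound $\|\widehat\phi_k-\phi_k\|_\infty$ via Lemma~\ref{LEM:uniformbiasrate} together with the stability bound $\|\widehat\phi_k\|_\infty\le C\|\phi_k\|_\infty$, split the sum at $\kappa_n$ using Assumption~(A6), and couple $\bar\eta$ to the Gaussian field $\zeta$ via Lemma~\ref{LEM: Cao A.5}. The only minor differences are that you also split at $\kappa_n$ in the Gaussian-coupling step (the paper applies the coupling bound uniformly and sums $\|\phi_k\|_\infty$ directly), and for the pointwise statement you invoke the Lyapunov CLT whereas the paper obtains it implicitly from the same coupling plus Slutsky; both routes are valid.
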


\noindent\begin{proof}
	Denote $\widetilde{\zeta}_k(\bs{z}) = \bar{Z}_{\cdot k, \xi} \phi_k(\bs{z}),\ k=1, \ldots, \kappa$, and define
	\[
	\widetilde{\zeta}(\bs{z}) = n^{1/2} \left\{ \sum_{k=1}^{\kappa}\phi_k^2(\bs{z}) \right\}^{-1/2} \sum_{k=1}^{\kappa} \widetilde{\zeta}_k(\bs{z}) = n^{1/2} G_{\eta}(\bs{z}, \bs{z})^{-1/2} \sum_{k=1}^{\kappa} \widetilde{\zeta}_k(\bs{z}),
	\]
	then $\{\widetilde{\zeta}(\bs{z}), \bs{z} \in \Omega\}$ is a Gaussian random field with mean 0, variance 1 and covariance function $\text{Cov}\left\{\widetilde{\zeta}(\bs{z}), \widetilde{\zeta}(\bs{z}^{\prime})\right\}
	=  G_{\eta}(\bs{z}, \bs{z})^{-1/2} G_{\eta}(\bs{z}, \bs{z}^{\prime}) G_{\eta}(\bs{z}^{\prime}, \bs{z}^{\prime})^{-1/2}$. 
	Therefore, $\widetilde{\zeta}(\bs{z})$ has the same distribution as $\zeta(\bs{z}),\bs{z} \in \Omega$.
	
	Next, let
	$
	\widehat{\phi}_k(\bs{z}) = \widetilde{\mathbf{B}}(\bs{z})^{\top} \bs{\Gamma}_{N,\rho}^{-1}\frac{1}{N}\sum_{j=1}^N \widetilde{\mathbf{B}}(\bs{z}_j) \phi_k(\bs{z}_j).
	$
	Similar to the proof for Lemma \ref{LEM:epshatorder}, by Lemma \ref{LEM:Gamma_rho}, 
	$\| \widehat{\phi}_{k}\|_{\infty}\leq C |\triangle|^{-2} \Big\| \frac{1}{N} \sum_{j=1}^{N}\widetilde{\mathbf{B}}(\bs{z}_j)\phi_k(\bs{z}_j)\Big\|_{\infty}
	\leq C_1  \|\phi_k\|_{\infty}$. According to Lemma \ref{LEM:uniformbiasrate},
	\[
	\|\widehat{\phi}_k - \phi_k  \|_{\infty} = O_{P}
	\left\{\frac{\rho_{n}}{nN|\triangle|^{3}}|\phi_k|_{2,\infty}
	+\left(1+\frac{\rho_{n}}{nN|\triangle|^{5}}\right)|\triangle|^{s+1}|\phi_k|_{s+1,\infty}\right\}.
	\]
	Therefore, by Assumptions (A4)--(A6), one has
	\begin{align*}
		E&\Big\{n^{1/2} \sup_{\bs{z} \in \Omega} G_{\eta}(\bs{z}, \bs{z})^{-1/2} |\widehat{\eta}(\bs{z}) - \bar{\eta}(\bs{z})| \Big\} =  E\left[n^{1/2} \sup_{\bs{z} \in \Omega} G_{\eta}(\bs{z}, \bs{z})^{-1/2} \left| \sum_{k=1}^{\kappa} \bar{\xi}_{\cdot k} \{ \phi_k(\bs{z}) - \widehat{\phi}_k(\bs{z})\}\right| \right]  \\
		\leq& n^{1/2} G_{\eta}(\bs{z}, \bs{z})^{-1/2} \left( \sum_{k=1}^{\kappa_n} \| \widehat{\phi}_k - \phi_k  \|_{\infty} E|\bar{\xi}_{\cdot k}| +C \sum_{k=\kappa_n+1}^{\kappa} \| \phi_k  \|_{\infty} E|\bar{\xi}_{\cdot k}|\right).
	\end{align*}
	Thus,
	\begin{align*}
		E&\Big\{n^{1/2} \sup_{\bs{z} \in \Omega} G_{\eta}(\bs{z}, \bs{z})^{-1/2} |\widehat{\eta}(\bs{z}) - \bar{\eta}(\bs{z})| \Big\}\\
		\leq& n^{1/2} G_{\eta}(\bs{z}, \bs{z})^{-1/2}  C_1\Bigg[ \sum_{k=1}^{\kappa_n}
		\left\{\frac{\rho_{n}}{nN|\triangle|^{3}}|\phi_k|_{2,\infty}
		+\left(1+\frac{\rho_{n}}{nN|\triangle|^{5}}\right)|\triangle|^{s+1}|\phi_k|_{s+1,\infty}\right\} \\
		&\times E|\bar{\xi}_{\cdot k}| +\sum_{k=\kappa_n+1}^{\kappa} E|\bar{\xi}_{\cdot k}|  \| \phi_k \|_{\infty} \Bigg]\\
		\leq& C_2 \left\{\frac{\rho_{n}}{nN|\triangle|^{3}}\sum_{k=1}^{\kappa_n} |\phi_k|_{2,\infty}
		+\left(1+\frac{\rho_{n}}{nN|\triangle|^{5}}\right)
		\sum_{k=1}^{\kappa_n} |\triangle|^{s+1}|\phi_k|_{s+1,\infty}
		+ \sum_{k=\kappa_n+1}^{\kappa} \|\phi_k\|_{\infty} \right\} = o(1).
	\end{align*}
	
	Hence, $n^{1/2} \sup_{\bs{z} \in \Omega} G_{\eta}(\bs{z}, \bs{z})^{-1/2} |\widehat{\eta}(\bs{z}) - \bar{\eta}(\bs{z})| =o_P(1)$.
	Under Assumption (A3), it follows that $ \|\widehat{\eta} - \bar{\eta}\|_{\infty}  = o_{P}(n^{-1/2} )$.
	
	By Lemma \ref{LEM: Cao A.5}, for some $\beta\in (0,1/2)$,
	\begin{align*}
		E\Big\{ \sup_{\bs{z} \in \Omega} \left| \widetilde{\zeta}(\bs{z}) - n^{1/2} G_{\eta}(\bs{z}, \bs{z})^{-1/2} \bar{\eta}(\bs{z})\right| \Big\}
		&=  E\left\{ n^{1/2}\sup_{\bs{z} \in \Omega}G_{\eta}(\bs{z}, \bs{z})^{-1/2}  \left| \sum_{k=1}^{\kappa}
		(\bar{Z}_{\cdot k,\xi} - \bar{\xi}_{\cdot k}) \phi_k(\bs{z}) \right| \right\} \\
		&\leq  C n^{1/2} \sum_{k=1}^{\kappa} \|\phi_k\|_{\infty} E|\bar{Z}_{\cdot k,\xi} - \bar{\xi}_{\cdot k}|
		\leq  \widetilde{C} n^{\beta - 1/2} \sum_{k=1}^{\kappa} \|\phi_k\|_{\infty}.
	\end{align*}
	
	Thus, by Assumption (A6),
	$\sup_{\bs{z} \in \Omega} \left| \widetilde{\zeta}(\bs{z}) - n^{1/2}G_{\eta}(\bs{z}, \bs{z})^{-1/2} \bar{\eta}(\bs{z})\right| = o_{P}(1)$. Finally, note that
	$P\left\{ \sup_{\bs{z} \in \Omega} \left| \widetilde{\zeta}(\bs{z})\right| \leq q_{1-\alpha}\right\} = 1-\alpha.$
	The lemma is proved.
\end{proof}

\noindent\begin{proof}[Proof of Theorem \ref{THM:band}]
	Note that ``oracle" estimator $\bar{\mu}(\bs{z}) = \mu(\bs{z}) +\bar{\eta}(\bs{z})$ implies that
	$\widehat{\mu} - \bar{\mu} = \widehat{\mu}^{o} - \mu + \widehat{\eta} - \bar{\eta} + \widehat{\varepsilon}.$
	By Lemmas \ref{LEM:uniformbiasrate}, \ref{LEM:maxnorm}, Assumptions (A5) and (A6),
	\begin{align*}
		&\| \widehat{\mu}^{o} - \mu\|_{\infty}
		=O_{P}\left\{\frac{\rho_{n}}{nN|\triangle|^{3}}|\mu|_{2,\infty}
		+\left(1+\frac{\rho_{n}}{nN|\triangle|^{5}}\right)
		|\triangle|^{d +1}|\mu|_{d+1,\infty}
		\right\}= o_{P}(n^{-1/2}), \\
		&\|\widehat{\varepsilon}\|_{\infty} =  O_P\left\{\frac{(\log n)^{1/2}}{
			\sqrt{nN}|\triangle|}+ \frac{\rho_n}{n^{3/2}N^{3/2} |\triangle|^6}\right\}= o_{P}(n^{-1/2}).
	\end{align*}
	Thus, according to Lemma \ref{LEM:etatilde-etabar-penalized}, the theorem is established.
\end{proof}

\vskip .05in \noindent \textbf{B.3.2. Proof of Theorem \ref{THM:muA-muB}}
According to (\ref{EQ:pbeta_decompose}), for $H=1,2$, we can decompose the unpenalized spline estimator $\widehat{\mu}_{H}(\cdot)$ as $\widehat{\mu}_{H}(\bs{z}) = \widehat{\mu}^{o}_{H}(\bs{z}) + \widehat{\eta}_{H}(\bs{z}) + \widehat{\varepsilon}_{H}(\bs{z})$. Therefore, asymptotic error $(\widehat{\mu}_{1} -\widehat{\mu}_{2}) -( \mu_{1}  - \mu_{2} ) $ can be
decomposed into three components:
$(\widehat{\mu}^{o}_{1} - \widehat{\mu}^{o}_{2} - \mu_{1} +\mu_{2}) + (\widehat{\eta}_{1} -\widehat{\eta}_{2})
+(\widehat{\varepsilon}_{1} -\widehat{\varepsilon}_{2})$. Similar as the proof of Theorem 2, the first and third components of the decomposition can be proved to have $\sqrt{n}$ asymptotic efficiency. Here we focus on the second component.

By Lemma \ref{LEM: Cao A.5}, one can find i.i.d
$Z_{Hik,\xi}\thicksim N\left( 0,1\right)$, $i=1,\ldots,n_{H}$ such that
$\max_{1\leq k \leq \kappa_{H}} E|\bar{\xi}_{H \cdot k} - \bar{Z}_{H\cdot k, \xi}| \leq C_0 n^{\beta -1}$ and $\bar{Z}_{H\cdot k ,\xi}=n_{H}^{-1}\sum_{i=1}^{n_{H}}Z_{Hik,\xi}$. Likewise,
for the white noise sequence $\{\varepsilon_{Hij}, i\geq 1\}$, one can also find iid
$Z_{Hik,\varepsilon}\thicksim N\left( 0,1\right)$, $i=1,\ldots,n_{H}$ such that $%
\max_{1 \leq j \leq N} |\bar{\varepsilon}_{H\cdot j} - \bar{Z}_{H\cdot j, \varepsilon}| = O_{a.s.}(n^{\beta -1})$, where $\beta \in (0, 1/2)$.
Let  $V\left(\bs{z}, \bs{z}^{\prime} \right)=
G_{\eta, 1}(\bs{z}, \bs{z}^{\prime}) +\tau G_{\eta, 2}(\bs{z}, \bs{z}^{\prime})$, where $\tau=\lim_{n_1\rightarrow \infty}n_{1}/n_{2}$, and define
\[
\widetilde{W}(\bs{z}) =n_{1}^{1/2} V\left(\bs{z}, \bs{z}\right)^{-1/2}\left\{  \sum_{k=1}^{\kappa_{1} }\bar{Z}
_{1\cdot k, \xi }\phi_{1k}(\bs{z}) - \sum_{k=1}^{\kappa_{2} }\bar{Z}_{2\cdot k, \xi}\phi_{2k}(\bs{z}) \right\}.
\]
Then, for any $\bs{z}\in \Omega$, $\widetilde{W}(\bs{z})$ is Gaussian with mean 0 and variance 1, and the covariance
\[
E\left\{\widetilde{W}(\bs{z}) \widetilde{W}(\bs{z}^{\prime})\right\}
=V\left(\bs{z}, \bs{z}\right)^{-1/2} V\left(\bs{z}, \bs{z}^{\prime }\right)
V\left( \bs{z}^{\prime}, \bs{z}^{\prime }\right)^{-1/2}.
\]
That is, the distribution of $\widetilde{W}(\bs{z})$, $\bs{z}\in \Omega$ and the distribution of $W (\bs{z})$, $\bs{z}\in \Omega$ are identical.
Similarly, for $H=1,2$, let $\widehat{\phi}_{Hk}(\bs{z}) = \widetilde{\mathbf{B}}_{H}(\bs{z})^{\top} \bs{\Gamma}_{H,N,\rho}^{-1}\frac{1}{N}\sum_{j=1}^N \widetilde{\mathbf{B}}_{H}(\bs{z}_j) \phi_{Hk}(\bs{z}_j)$.
Note that
\[
\bar{\eta}_{H}(\bs{z}) = \frac{1}{n} \sum_{i=1}^{n} \eta_{H i}(\bs{z}) = \sum_{k=1}^{\kappa_{H}} \bar{\xi}_{H \cdot k}\phi_{Hk}(\bs{z}), ~\widehat{\eta}_{H}(\bs{z}) =  \sum_{k=1}^{\kappa_{H}} \bar{\xi}_{H \cdot k} \widehat{\phi}_{Hk}(\bs{z}).
\]
And we have shown in Lemma \ref{LEM:etatilde-etabar-penalized} that $n_{H}^{1/2}  \|\widehat{\eta}_{H}- \bar{\eta}_{H}\|_{\infty}  = o_{P}(1)$.
\begin{lemma}
	\label{LEM:two_group_penalized}
	If Assumptions (A1)--(A6), are modified for each group accordingly,
	then one has as $N \rightarrow \infty$, $n_{1} \rightarrow \infty$,
	\[
	P\left\{  \sup_{\bs{z} \in \Omega} n_{1}^{1/2} V(\bs{z},\bs{z})^{-1/2} | \bar{\eta}_{1}(\bs{z}) - \bar{\eta}_{2}(\bs{z}) | \leq q_{12,\alpha} \right\} \rightarrow 1-\alpha.
	\]
\end{lemma}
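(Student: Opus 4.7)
The plan is to mirror the one-sample argument in Lemma \ref{LEM:etatilde-etabar-penalized}, now carried out across two independent groups. First I would expand
\[
\bar{\eta}_1(\bs{z}) - \bar{\eta}_2(\bs{z}) \;=\; \sum_{k=1}^{\kappa_1}\bar{\xi}_{1\cdot k}\phi_{1k}(\bs{z}) \;-\; \sum_{k=1}^{\kappa_2}\bar{\xi}_{2\cdot k}\phi_{2k}(\bs{z}),
\]
and apply the strong approximation of Lemma \ref{LEM: Cao A.5} separately to each group to replace the sample means $\bar{\xi}_{H\cdot k}$ by their Gaussian counterparts $\bar{Z}_{H\cdot k,\xi}$, with $H=1,2$. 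This produces exactly the Gaussian field $\widetilde{W}(\bs{z})$ defined just above the lemma statement. A direct covariance calculation (using independence of the two groups and $\tau=\lim n_1/n_2$) shows that $\widetilde{W}$ is a mean-zero, unit-variance Gaussian process with the same covariance kernel as $W$, so by construction $P\{\sup_{\bs{z}\in\Omega}|\widetilde{W}(\bs{z})|\le q_{12,\alpha}\}=1-\alpha$.

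The core step is to show the Gaussian-replacement error is uniformly negligible, i.e.
\[
\Delta_n := \sup_{\bs{z}\in\Omega}\Bigl|n_1^{1/2}V(\bs{z},\bs{z})^{-1/2}\{\bar{\eta}_1(\bs{z})-\bar{\eta}_2(\bs{z})\}-\widetilde{W}(\bs{z})\Bigr| \;=\; o_P(1).
\]
Using Assumption (A3) to bound $V(\bs{z},\bs{z})^{-1/2}$ uniformly from above, together with the triangle inequality, I would bound
\[
E\,\Delta_n \;\le\; C\, n_1^{1/2} \sum_{H=1}^{2}\sum_{k=1}^{\kappa_H}\|\phi_{Hk}\|_{\infty}\,E\bigl|\bar{\xi}_{H\cdot k}-\bar{Z}_{H\cdot k,\xi}\bigr|.
\]
Lemma \ref{LEM: Cao A.5} gives each expectation at rate $O(n_H^{\beta-1})$ for some $\beta\in(0,1/2)$, and the ratio condition $n_1/n_2\to\tau\in(0,\infty)$ allows both terms to be compared on the same $n_1$-scale, yielding an overall bound of order $n_1^{\beta-1/2}\sum_{H,k}\|\phi_{Hk}\|_{\infty}$. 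Summability (possibly after the truncation-at-$\kappa_n$ device used in the one-sample proof) is guaranteed by Assumption (A6) applied to each group, so $\Delta_n=o_P(1)$.

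Finally, I would combine $\Delta_n=o_P(1)$ with the fact that the distribution function of $\sup_{\bs{z}\in\Omega}|W(\bs{z})|$ is continuous (a standard property of the supremum of a smooth Gaussian field, cf.\ the extreme value theory of \cite{Adler:90} invoked earlier in the paper), so that small perturbations inside the absolute supremum do not affect the limiting probability. This yields the stated conclusion $P\{\sup_{\bs{z}}n_1^{1/2}V(\bs{z},\bs{z})^{-1/2}|\bar{\eta}_1-\bar{\eta}_2|\le q_{12,\alpha}\}\to 1-\alpha$.

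The main obstacle I anticipate is not any new analytical difficulty but the two-group bookkeeping: every factor $n$ from the one-sample proof now splits into either $n_1$ or $n_2$, and the normalization $n_1^{1/2}$ must be propagated consistently through the approximation for group 2 via $(n_1/n_2)^{1/2}\to\tau^{1/2}$. Checking that the ``truncated plus tail'' decomposition of $\kappa_H$ in Assumption (A6) survives this doubling is the only delicate point; once that is in place, the remainder of the argument is a direct transcription of the one-sample proof.
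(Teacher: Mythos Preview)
Your proposal is correct and follows essentially the same approach as the paper: expand $\bar{\eta}_1-\bar{\eta}_2$ in its eigenbasis, apply the strong approximation of Lemma~\ref{LEM: Cao A.5} groupwise to pass to the Gaussian field $\widetilde{W}$, bound the expected uniform replacement error using Assumption~(A6) and $n_1/n_2\to\tau$, and conclude via the defining property of $q_{12,\alpha}$. The paper's proof is terser---it simply writes out the $E\Delta_n$ bound and defers the details to ``similar discussion in Lemma~\ref{LEM:etatilde-etabar-penalized}''---but the substance is identical to what you outline.
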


\noindent\begin{proof}
	Note that, by similar discussion in Lemma \ref{LEM:etatilde-etabar-penalized}, 
	\begin{align*}
		E\Bigg[&\sup_{\bs{z}\in \Omega}\Big\vert \widetilde{W}(\bs{z})
		-n_{1}^{1/2}V\left(\bs{z}, \bs{z}\right)^{-1/2}\left\{ \bar{\eta}_{1}(\bs{z}) -
		\bar{\eta}_{2}(\bs{z}) \right\} \Big\vert \Bigg] \\
		= &n_{1}^{1/2} E\left[ \sup_{\bs{z}\in \Omega}V\left(\bs{z}, \bs{z}\right)^{-1/2} \left|
		\sum_{k=1}^{\kappa_{1}}
		(\overline{Z}_{1\cdot k,\xi} - \overline{\xi}_{1\cdot k}) \phi_{1k}(\bs{z}) - \sum_{k=1}^{\kappa_{2}}
		(\overline{Z}_{2\cdot k,\xi} - \overline{\xi}_{2\cdot k}) \phi_{2k}(\bs{z}) \right|\right]
		= o(1).
	\end{align*}
	Therefore, one has $\sup_{\bs{z}\in \Omega}\left\vert \widetilde{W}(\bs{z})
	-n_{1}^{1/2}V\left(\bs{z}, \bs{z}\right)^{-1/2}\left\{ \bar{\eta}_{1}(\bs{z}) -
	\bar{\eta}_{2}(\bs{z}) \right\} \right\vert = o_P(1)$.
	Observe that $P\left\{\sup_{\bs{z} \in \Omega} | \widetilde{W}(\bs{z})| \leq q_{12,\alpha} \right\} = 1-\alpha$, for any $\alpha \in (0,1)$, as $N\rightarrow \infty$, $n_{1}\rightarrow \infty$
	\begin{align*}
		&P\left\{\sup_{\bs{z} \in \Omega}n_{1}^{1/2} V(\bs{z}, \bs{z})^{-1/2} \left| \bar{\eta}_{1}(\bs{z}) - \bar{\eta}_{2}(\bs{z})\right| \leq q_{12,\alpha} \right\} \rightarrow 1-\alpha, \\
		&P\left\{ n_{1}^{1/2} V(\bs{z}, \bs{z})^{-1/2} \left| \bar{\eta}_{1}(\bs{z}) - \bar{\eta}_{2}(\bs{z})\right| \leq z_{1-\alpha/2} \right\} \rightarrow 1-\alpha, \ \ \text{for~all~} \bs{z} \in \Omega.
	\end{align*}
	The conclusion of the lemma is proved.
\end{proof}

\vskip .05in \noindent \textbf{B.4. Convergence of the Covariance Estimator}
Without loss of generality, we prove Theorem \ref{THM:Ghat-G} based on the unpenalized bivariate spline estimator. Using similar arguments in Section B.2, we can easily extend this proof to the penalized case.
Based on the estimated residuals $\widehat{R}_{ij}=Y_{ij}-\widehat{\mu}(\bs{z}_j)$, $i=1,\ldots,n$, $j=1,\ldots, N$, denote
$\widehat{\bs{\beta}}_{i}=\argmin_{\bs{\beta}}\sum_{j=1}^{N}
\left\{\widehat{R}_{ij}-\mathbf{B}^{*}(\bs{z}_{j})^{\top}\mathbf{Q}^{*}_{2}\bs{\beta} \right\}^{2}$, where $\mathbf{B}^{*}(\bs{z})$ is the set of bivariate spline basis functions used to estimate $\eta_i(\bs{z})$, and the transpose of $\mathbf{H}^{*}$ admits the following QR decomposition:
$
\mathbf{H}^{*\top}=\mathbf{Q}^{*}\mathbf{R}^{*}=(\mathbf{Q}^{*}_{1}~\mathbf{Q}^{*}_{2})
\binom{\mathbf{R}^{*}_{1}}{\mathbf{R}^{*}_{2}}.
$
Then, the bivariate spline estimator of $\eta_{i}(\bs{z})$  can be written as
\begin{equation}
	\widehat{\eta}_{i}(\bs{z})=\mathbf{B}^{*}(\bs{z})^{\top}\mathbf{Q}^{*}_{2}
	\widehat{\bs{\beta}}_{i}=\widetilde{\mathbf{B}}^{*}(\bs{z})^{\top}
	\widehat{\bs{\beta}}_{i}.
	\label{DEF:eta_i_hat}
\end{equation}
Let
$
\bs{\Gamma^{*}}_{N}=\frac{1}{N}\sum_{j=1}^{N}\widetilde{\mathbf{B}}^{*}(\bs{z}_{j})
\widetilde{\mathbf{B}}^{*}(\bs{z}_{j})^{\top},
$
then one has
\begin{align*}
	\widehat{\bs{\beta}}_{i}&=\bs{\Gamma}^{*-1}_{N}\frac{1}{N}\sum_{j=1}^{N}\widetilde{\mathbf{B}}^{*}(\bs{z}_{j}) \widehat{R}_{ij}
	=\bs{\Gamma}^{*-1}_{N}\frac{1}{N}\sum_{j=1}^{N}\widetilde{\mathbf{B}}^{*}(\bs{z}_{j}) \{Y_{ij}-\widehat{\mu}(\bs{z}_j)\}\notag \\
	&=\bs{\Gamma}^{*-1}_{N}\frac{1}{N}\sum_{j=1}^{N}\widetilde{\mathbf{B}}^{*}(\bs{z}_{j}) \left[
	\{\mu(\bs{z}_j)-\widehat{\mu}(\bs{z}_j)\}
	+\eta_{i}(\bs{z}_j)+\sigma(\bs{z}_j)\varepsilon_{ij}\right].
\end{align*}
Next we define $\widetilde{r}(\bs{z})=\widetilde{\mathbf{B}}^{*}(\bs{z})^{\top}\bs{\Gamma}^{*-1}_{N}\frac{1}{N}\sum_{j=1}^{N} \widetilde{\mathbf{B}}^{*}(\bs{z}_{j})\{\mu(\bs{z}_j)-\widehat{\mu}(\bs{z}_j)\}$, and 
\[
\widetilde{\eta}_{i}(\bs{z})=\widetilde{\mathbf{B}}^{*}(\bs{z})^{\top}
\bs{\Gamma}^{*-1}_{N}\frac{1}{nN}\sum_{i=1}^n\sum_{j=1}^{N} \widetilde{\mathbf{B}}^{*}(\bs{z}_{j})\eta_{i}(\bs{z}_j), ~
\widetilde{\varepsilon}_{i}(\bs{z})=\widetilde{\mathbf{B}}^{*}(\bs{z})^{\top}
\bs{\Gamma}^{*-1}_{N}\frac{1}{nN}\sum_{i=1}^n\sum_{j=1}^{N} \widetilde{\mathbf{B}}^{*}(\bs{z}_{j})\sigma(\bs{z}_j)\varepsilon_{ij}.
\]
Then, the estimation error $d_i(\bs{z})=\widehat{\eta}_{i}(\bs{z})-\eta_{i}(\bs{z})$ in (\ref{DEF:eta_i_hat})
can be decomposed as the following: $d_i(\bs{z})=\widetilde{r}(\bs{z})+\widetilde{\eta}_{i}(\bs{z}) - \eta_{i}(\bs{z}) +\widetilde{\varepsilon}_{i}(\bs{z})$.

For any $\bs{z}$, $\bs{z}^{\prime}\in \Omega$, denote
$
\widetilde{G}_{\eta}(\bs{z},\bs{z}^{\prime})=n^{-1}\sum_{i=1}^{n}\eta_i(\bs{z})\eta_i(\bs{z}^{\prime}).
$
The following lemma shows the uniform convergence of $\widetilde{G}_{\eta}(\bs{z},\bs{z}^{\prime})$ to $G_{\eta}(\bs{z},\bs{z}^{\prime})$ in probability over all $(\bs{z},\bs{z}^{\prime})\in \Omega^2$.

\begin{lemma}[Lemma B.18, \cite{Yu:etal:18}]
	\label{LEM:Gtilde-G}
	Under Assumptions (A1)--(A7), $\sup_{(\bs{z},\bs{z}^{\prime})\in \Omega^2}|\widetilde{G}_{\eta}(\bs{z},\bs{z}^{\prime})
	-G_{\eta}(\bs{z},\bs{z}^{\prime})|=O_{P}\{n^{-1/2}(\log n)^{1/2}\}$.
\end{lemma}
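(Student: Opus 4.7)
\emph{Strategy.} The lemma is a uniform law of large numbers on $\Omega^{2}$ for the i.i.d.\ centered field $X_{i}(\bs{z},\bs{z}')=\eta_{i}(\bs{z})\eta_{i}(\bs{z}')-G_{\eta}(\bs{z},\bs{z}')$, so that
\[
\widetilde G_{\eta}(\bs{z},\bs{z}')-G_{\eta}(\bs{z},\bs{z}')=n^{-1}\tsum_{i=1}^{n}X_{i}(\bs{z},\bs{z}').
\]
I would follow the truncation/Bernstein template already used in the proof of Lemma \ref{LEM:epshatorder}: truncate the Karhunen--Lo\'{e}ve scores so the summands become effectively bounded, apply Bernstein's inequality at each point of a fine grid over $\Omega^{2}$, and extend to the supremum via a Lipschitz bound on the truncated process.

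\emph{Truncation and pointwise Bernstein.} Fix $\alpha\in\bigl(1/(4+\delta_{1}),\,1/4\bigr)$ and set $L_n=n^{\alpha}$, $\xi_{ik}^{L_n}=\xi_{ik}I\{|\xi_{ik}|\le L_n\}$, and $\eta_{i}^{L_n}(\bs{z})=\sum_{k}\xi_{ik}^{L_n}\phi_{k}(\bs{z})$. By the moment bound in (A2) and Borel--Cantelli, $\xi_{ik}\neq\xi_{ik}^{L_n}$ occurs only finitely often almost surely, and the centering correction satisfies $|E\xi_{ik}^{L_n}|=O(L_n^{-(3+\delta_{1})})$; both contributions are $o(n^{-1/2}\sqrt{\log n})$. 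Under (A6), $\sum_{k}\|\phi_{k}\|_{\infty}<\infty$, so the centered truncated summand $\widetilde X_{i}$ satisfies $|\widetilde X_{i}|=O(L_n^{2})$ while $\operatorname{Var}(\widetilde X_{i})=O(1)$. Bernstein's inequality then gives, at any fixed $(\bs{z},\bs{z}')$ and for any preassigned $r$,
\[
P\Bigl\{\bigl|n^{-1}\tsum_{i}\widetilde X_{i}(\bs{z},\bs{z}')\bigr|>M\sqrt{\log n/n}\Bigr\}\le 2\exp\Bigl(-\frac{c_{1}M^{2}\log n}{1+c_{2}L_n^{2}\sqrt{\log n/n}}\Bigr)\le 2n^{-r},
\]
since $L_n^{2}\sqrt{\log n/n}=o(1)$ under $\alpha<1/4$.

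\emph{Grid maximum and Lipschitz extension.} Cover $\Omega^{2}$ by a grid of mesh $n^{-\gamma}$, which has $O(n^{4\gamma})$ nodes, and take a union bound with $r>4\gamma$ to control the maximum over grid points with probability $1-o(1)$. For $(\bs{z},\bs{z}')$ in the interior of a grid cell, the $\mathcal W^{1,\infty}$ regularity of the eigenfunctions in (A6) yields the Lipschitz bound $|\eta_{i}^{L_n}(\bs{z})-\eta_{i}^{L_n}(\bs{z}')|\le L_n\bigl(\sum_{k}\|\phi_{k}\|_{1,\infty}\bigr)\|\bs{z}-\bs{z}'\|$, so the off-grid oscillation of $\widetilde X_{i}$ is of order $L_n^{2}n^{-\gamma}$, which is $o(\sqrt{\log n/n})$ for $\gamma$ sufficiently large. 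Combining the grid maximum with the Lipschitz remainder, the truncation biases and the almost-sure tail identification returns the stated rate.

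\emph{Main obstacle.} The delicate point is the simultaneous calibration of $L_n$: it must be large enough that the tail of $\xi_{ik}$ is Borel--Cantelli-summable ($\alpha>1/(4+\delta_{1})$), yet small enough that the uniform bound $L_n^{2}$ appearing in Bernstein's denominator leaves room for an exponential rate at scale $\sqrt{\log n/n}$ ($\alpha<1/4$). This window is nonempty precisely because (A2) guarantees $\delta_{1}>0$. A secondary issue, relevant when $\kappa=\infty$, is controlling the Karhunen--Lo\'{e}ve tail past a truncation level $\kappa_{n}$: this is absorbed by splitting the series at $\kappa_{n}$ as in (A6), using $\sum_{k>\kappa_{n}}\|\phi_{k}\|_{\infty}=o(1)$ to handle the bias and a second Bernstein bound on the i.i.d.\ tail cross terms $\sum_{k,k'>\kappa_{n}}\xi_{ik}\xi_{ik'}\phi_{k}(\bs{z})\phi_{k'}(\bs{z}')$ to handle the variance. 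Assembling these pieces delivers $\sup_{(\bs{z},\bs{z}')\in\Omega^{2}}|\widetilde G_{\eta}(\bs{z},\bs{z}')-G_{\eta}(\bs{z},\bs{z}')|=O_{P}(n^{-1/2}\sqrt{\log n})$.
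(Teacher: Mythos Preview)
The paper does not supply its own proof of this lemma; it is stated with attribution to Lemma~B.18 of \cite{Yu:etal:18} and used as a black box in the proof of Theorem~\ref{THM:Ghat-G}. Your self-contained argument---truncate the scores, apply Bernstein pointwise, cover $\Omega^{2}$ by a polynomial-size grid, and close the gaps via a Lipschitz bound on the truncated field---is the standard route to such uniform rates and matches the template already deployed in Lemma~\ref{LEM:epshatorder}, so in spirit it is the same machinery the authors rely on elsewhere.

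One technical point worth tightening: your Lipschitz step invokes $\sum_{k}\|\phi_{k}\|_{1,\infty}$, but Assumption~(A6) only guarantees $\sum_{k}\|\phi_{k}\|_{\infty}<\infty$ together with growth conditions on $\sum_{k\le\kappa_{n}}|\phi_{k}|_{2,\infty}$ and $|\phi_{k}|_{s+1,\infty}$ weighted by negligible factors. After you truncate the Karhunen--Lo\`eve expansion at $\kappa_{n}$, the off-grid oscillation is of order $L_{n}^{2}n^{-\gamma}\sum_{k\le\kappa_{n}}\|\phi_{k}\|_{1,\infty}$, and you should verify that this is $o(n^{-1/2}\sqrt{\log n})$ under (A6) as stated, possibly by letting $\gamma$ grow slowly with $n$; when $\kappa$ is finite this is of course immediate. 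This is a bookkeeping issue rather than a structural gap.
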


\noindent\begin{proof}[Proof of Theorem \ref{THM:Ghat-G} (i)]
	Note that
	\[
	\sup_{(\bs{z},\bs{z}^{\prime})\in \Omega^2}|\widehat{G}_{\eta}(\bs{z},\bs{z}^{\prime})
	-G_{\eta}(\bs{z},\bs{z}^{\prime})|
	\leq \sup_{(\bs{z},\bs{z}^{\prime})\in \Omega^2}\{|\widehat{G}_{\eta}(\bs{z},\bs{z}^{\prime})
	-\widetilde{G}_{\eta}(\bs{z},\bs{z}^{\prime})|+|\widetilde{G}_{\eta}(\bs{z},\bs{z}^{\prime})
	-G_{\eta}(\bs{z},\bs{z}^{\prime})|\}
	\]
	where $\sup_{(\bs{z},\bs{z}^{\prime})\in \Omega^2}|\widetilde{G}_{\eta}(\bs{z},\bs{z}^{\prime})
	-G_{\eta}(\bs{z},\bs{z}^{\prime})|=o_{P}(1)$ according to Lemma \ref{LEM:Gtilde-G}, and
	\begin{align*}
		\sup_{(\bs{z},\bs{z}^{\prime})\in \Omega^2}|\widehat{G}_{\eta}(\bs{z},\bs{z}^{\prime})
		-\widetilde{G}_{\eta}(\bs{z},\bs{z}^{\prime})|
		\leq & \sup_{(\bs{z},\bs{z}^{\prime})\in \Omega^2}\left\vert n^{-1}\sum_{i=1}^{n}\eta_i(\bs{z})d_i(\bs{z}^{\prime})\right\vert+\sup_{(\bs{z},\bs{z}^{\prime})\in \Omega^2}\left\vert n^{-1}\sum_{i=1}^{n}\eta_i(\bs{z}^{\prime})d_i(\bs{z})\right\vert \notag\\
		&+\sup_{(\bs{z},\bs{z}^{\prime})\in \Omega^2}
		\left\vert n^{-1}\sum_{i=1}^{n}d_i(\bs{z})d_i(\bs{z}^{\prime})\right\vert.
	\end{align*}
	Similar to the proof of \cite{Yu:etal:18}, one can show that
	\[
	\sup_{(\bs{z},\bs{z}^{\prime})\in \Omega^2}
	\left\vert \frac{1}{n}\sum_{i=1}^{n}d_i(\bs{z})d_i(\bs{z}^{\prime})\right\vert=o_{P}(1),
	\sup_{(\bs{z},\bs{z}^{\prime})\in \Omega^2}\left\vert \frac{1}{n}\sum_{i=1}^{n}\eta_i(\bs{z})d_i(\bs{z}^{\prime})
	+n^{-1}\sum_{i=1}^{n}\eta_i(\bs{z}^{\prime})d_i(\bs{z})\right\vert
	=o_{P}(1).
	\]
	The desired result is established.
\end{proof}

\noindent\begin{proof}[Proof of Theorem \ref{THM:Ghat-G} (ii)]
	Denote $\Delta \psi_k(\bs{z}) = \int (\widehat{G} - G)(\bs{z}, \bs{z}^{\prime}) \psi_k(\bs{z}^{\prime}) d\bs{z}^{\prime}$.
	By Theorem \ref{THM:Ghat-G} (i), $\| \widehat{G} - G\|_{\infty} = o_P(1)$. Thus, for any $ k \geq 1$, $\|\Delta \psi_k\|_{\infty} = o_P(1)$.
	According to \cite{Hall:Hosseini-Nasab:2006},
	let $\| \mathit{\Delta}\|_2 = [\iint  (\widehat{G}(\bs{z}, \bs{z}^{\prime})-G(\bs{z}, \bs{z}^{\prime}))^2 d\bs{z}d\bs{z}^{\prime} ]^{1/2}$, then $\widehat{\psi}_k - \psi_k = \sum_{j \neq k} (\lambda_k - \lambda_j)^{-1} \langle \Delta \psi_k, \psi_j \rangle \psi_j + O(\| \mathit{\Delta}\|_2^2)$.
	It follows from Bessel's inequality that $\|\widehat{\psi}_k - \psi_k\|_2 \leq C(\|\Delta\psi_k\|^2_{\infty}+O(\| \mathit{\Delta}\|_2^2))=o_P(1)$.
	By (2.9) in \cite{Hall:Hosseini-Nasab:2006},
	$$\widehat{\lambda}_k - \lambda_k = \iint (\widehat{G} - G)(\bs{z}, \bs{z}^{\prime}) \psi_k(\bs{z}) \psi_k(\bs{z}^{\prime}) d\bs{z}d\bs{z}^{\prime} + O(\|\Delta \psi_k\|_2^2). $$
	Thus, using Theorem \ref{THM:Ghat-G} (i), one has $|\widehat{\lambda}_k - \lambda_k| = o_P(1), \forall k \geq 1$.\\
	Next, note that
	\begin{align*}
		\widehat{\lambda}_k &\widehat{\psi}_k(\bs{z}) - \lambda_k \psi_k(\bs{z})
		= \int \widehat{G}(\bs{z}, \bs{z}^{\prime}) \widehat{\psi}_k(\bs{z}^{\prime})d\bs{z}^{\prime} - \int G(\bs{z}, \bs{z}^{\prime}) \psi_k(\bs{z}^{\prime}) d\bs{z}^{\prime} \\
		=& \int (\widehat{G} - G)(\bs{z}, \bs{z}^{\prime}) (\widehat{\psi}(\bs{z}^{\prime}) - \psi_k(\bs{z}^{\prime})) d\bs{z}^{\prime} + \int(\widehat{G}-G)(\bs{z}, \bs{z}^{\prime}) \psi_k(\bs{z}^{\prime}) d\bs{z}^{\prime}\\
		&+ \int G(\bs{z}, \bs{z}^{\prime})\{\widehat{\psi}_k(\bs{z}^{\prime}) - \psi_k(\bs{z}^{\prime})\}d\bs{z}^{\prime}.
	\end{align*}
	By Cauchy-Schwarz inequality and Theorem \ref{THM:Ghat-G} (i), for all $\bs{z} \in \Omega$,
	\begin{align*}
		&\int G(\bs{z}, \bs{z}^{\prime})\{\widehat{\psi}_k(\bs{z}^{\prime}) - \psi_k(\bs{z}^{\prime})\}d\bs{z}^{\prime}
		\leq \left(\int G^2(\bs{z}, \bs{z}^{\prime}) d\bs{z}^{\prime} \right) ^{1/2} \|\widehat{\psi}_k - \psi_k\|_2 = o_P(1), \\
		&\int (\widehat{G} - G)(\bs{z}, \bs{z}^{\prime}) (\widehat{\psi}(\bs{z}^{\prime}) - \psi_k(\bs{z}^{\prime})) d\bs{z}^{\prime}
		\leq \|\widehat{G} -G\|_{\infty} \|\widehat{\psi}_k - \psi_k\|_2 = o_P(1), \\
		&\int(\widehat{G}-G)(\bs{z}, \bs{z}^{\prime}) \psi_k(\bs{z}^{\prime}) d\bs{z}^{\prime}
		\leq \|\widehat{G} -G\|_{\infty} \|\psi_k\|_2 = o_P(1).
	\end{align*}
	Therefore, $\|\widehat{\lambda}_k \widehat{\psi}_k- \lambda_k \psi_k\|_{\infty}=o_P(1)$, and 
	$\lambda_k \|\widehat{\psi}_k - \psi_k\|_{\infty} \leq \|\widehat{\lambda}_k \widehat{\psi}_k - \lambda_k \psi_k\|_{\infty}
	+ |\widehat{\lambda}_k - \lambda_k| \|\widehat{\psi}_k\|_{\infty} = o_P(1)$. 
	It follows that $\|\widehat{\psi}_k - \psi_k\|_{\infty} = o_P(1)$.
\end{proof}

\medskip

\renewcommand{\thetable}{A\arabic{table}}
\renewcommand{\thefigure}{A\arabic{figure}}
\begin{figure}
	\begin{center}
		\begin{tabular}{cccc}
			\hspace{-0.5in}Triangulation &\hspace{-1.3in}$\widehat{\mu}$ &\hspace{-1.3in}Lower SCC &\hspace{-1.3in}Upper SCC\\
			\includegraphics[scale=0.14]{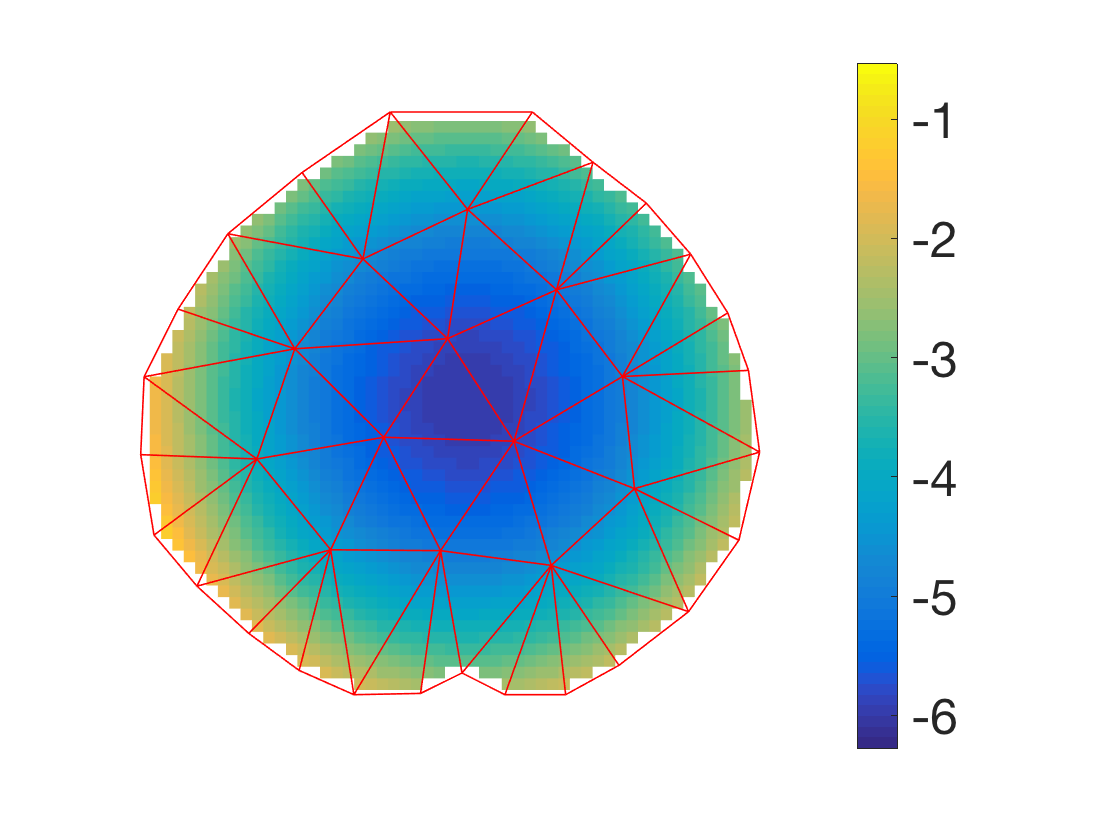} &\hspace{-0.8in}\includegraphics[scale=0.14]{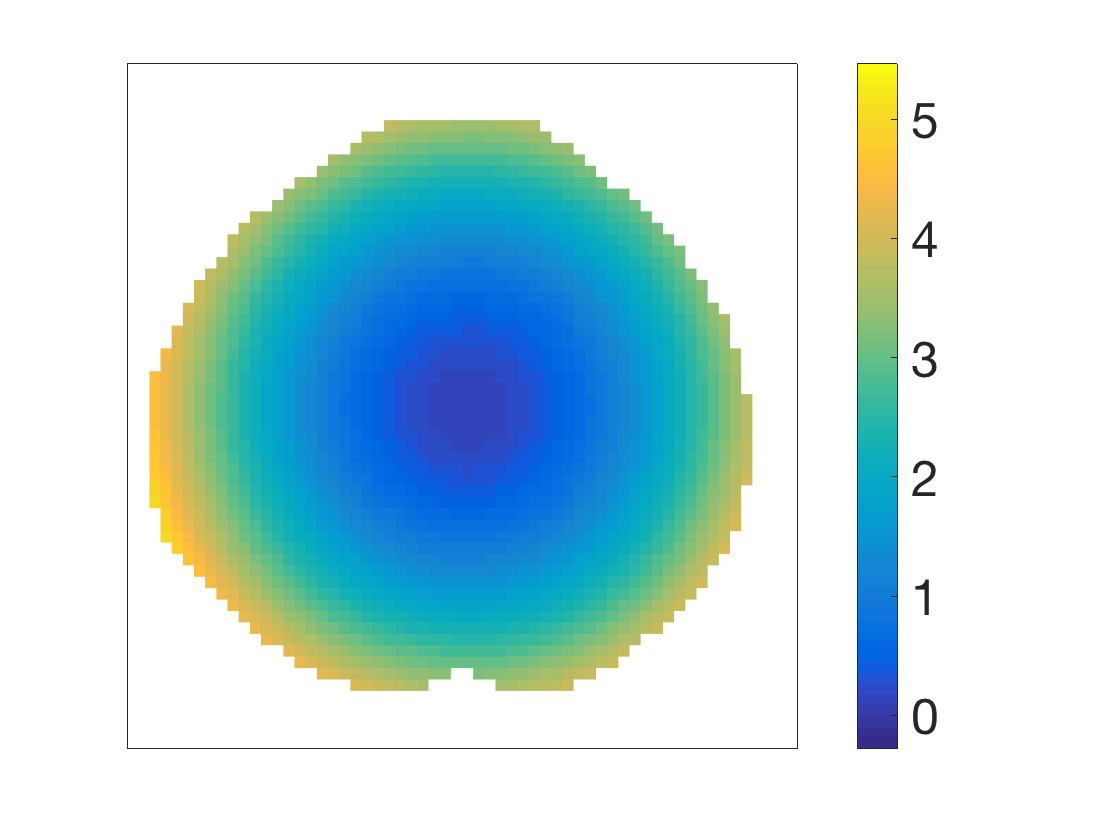} &\hspace{-0.7in}\includegraphics[scale=0.14]{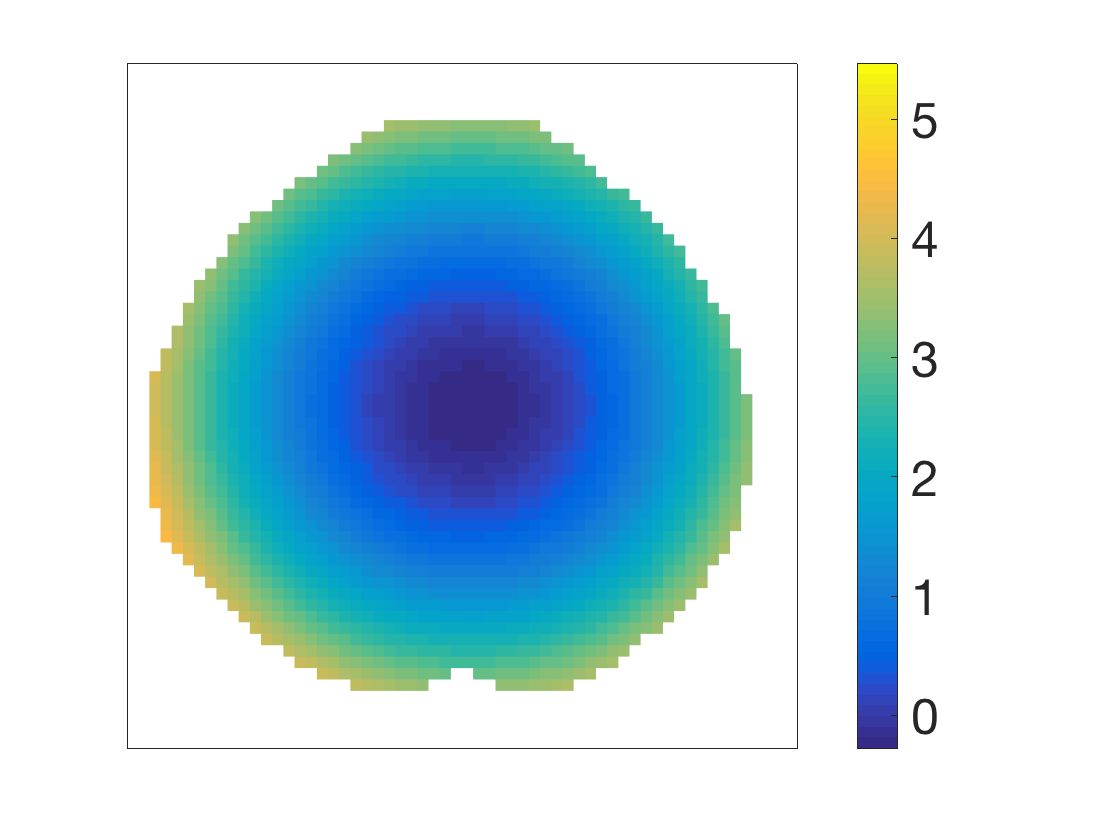} &\hspace{-0.7in}\includegraphics[scale=0.14]{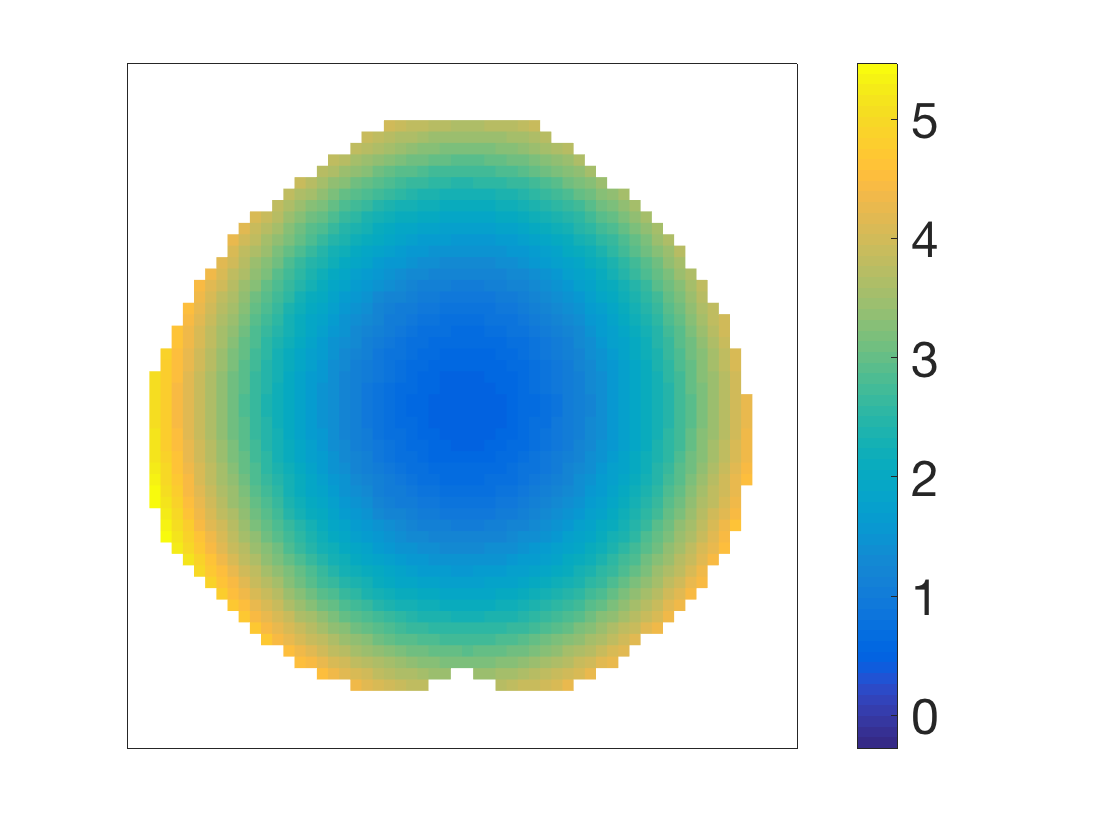}\\ [-10pt]
			\hspace{-0.3in}$\triangle_1$ & & &\\
			
			\includegraphics[scale=0.14]{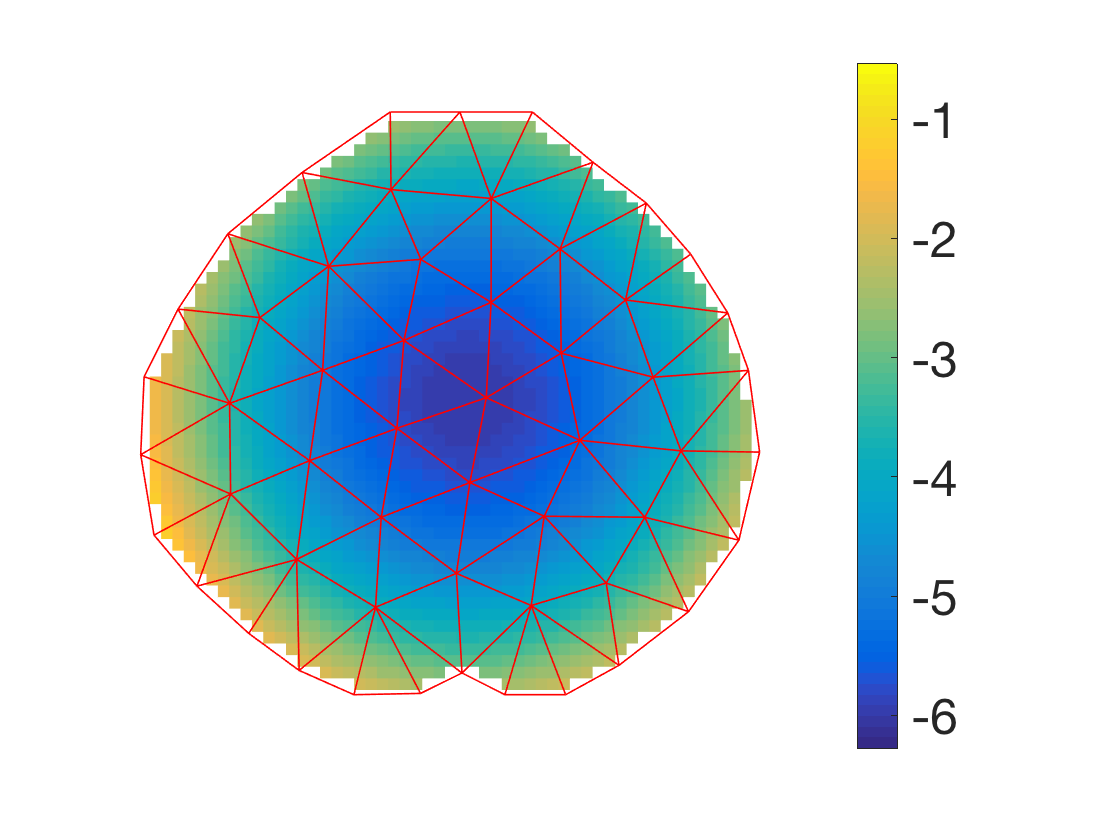} &\hspace{-0.8in}\includegraphics[scale=0.14]{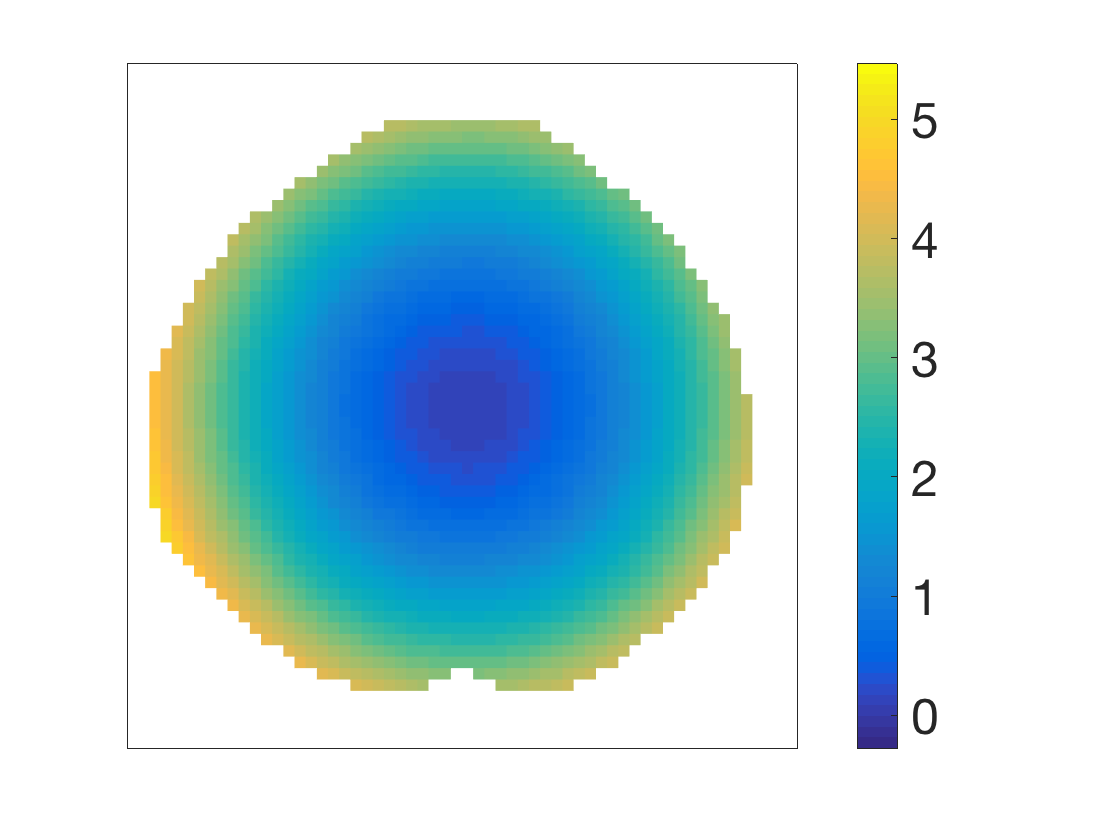} &\hspace{-0.7in}\includegraphics[scale=0.14]{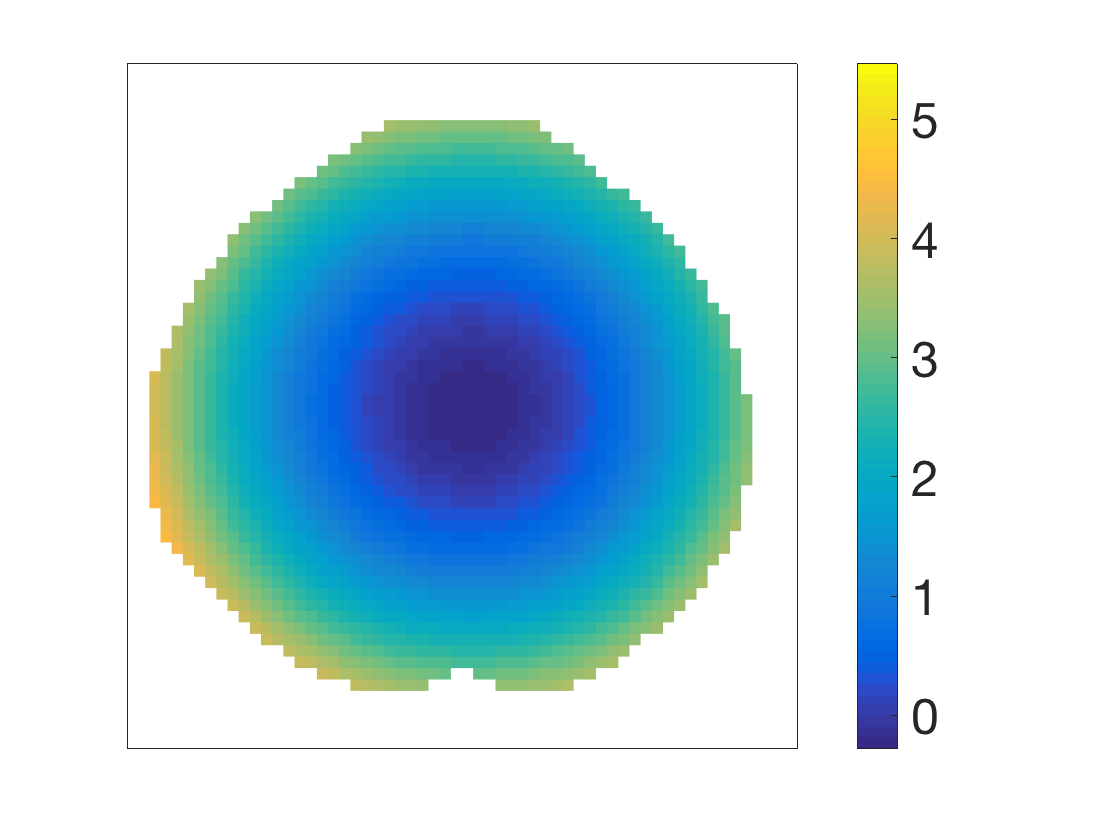} &\hspace{-0.7in}\includegraphics[scale=0.14]{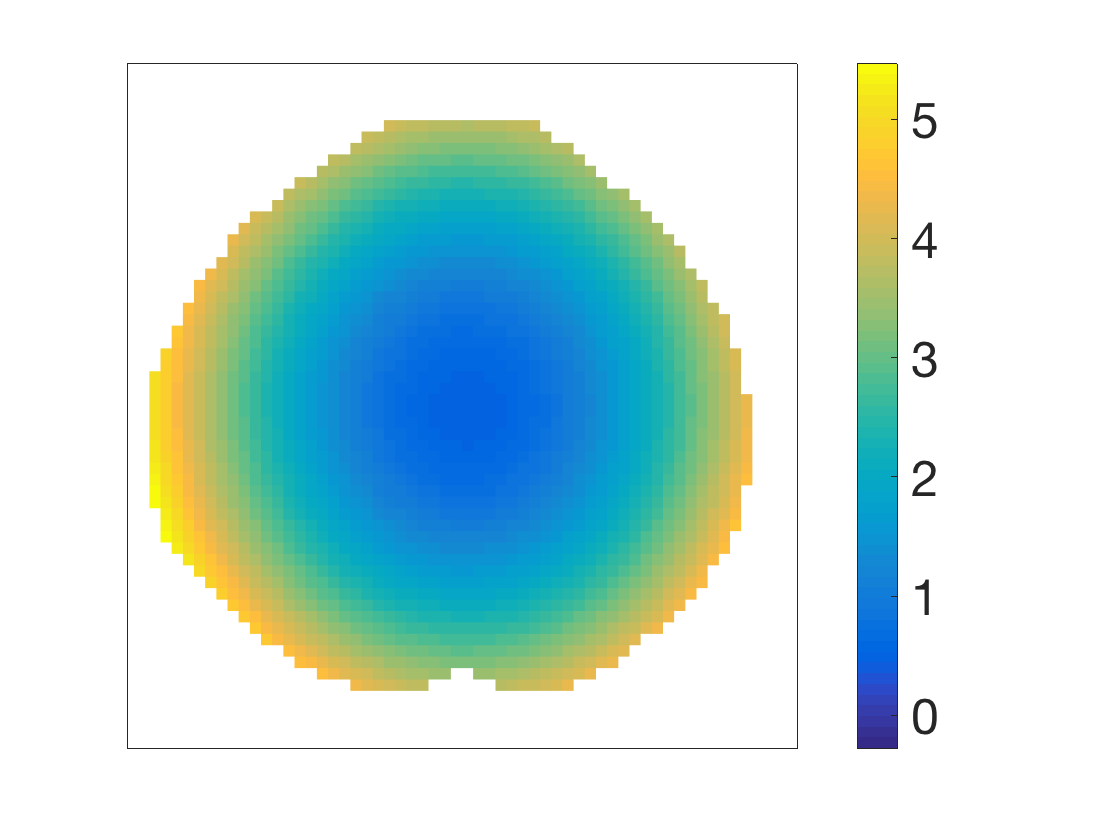}\\ [-10pt]
			\hspace{-0.3in}$\triangle_2$ & & &\\
			
			\includegraphics[scale=0.14]{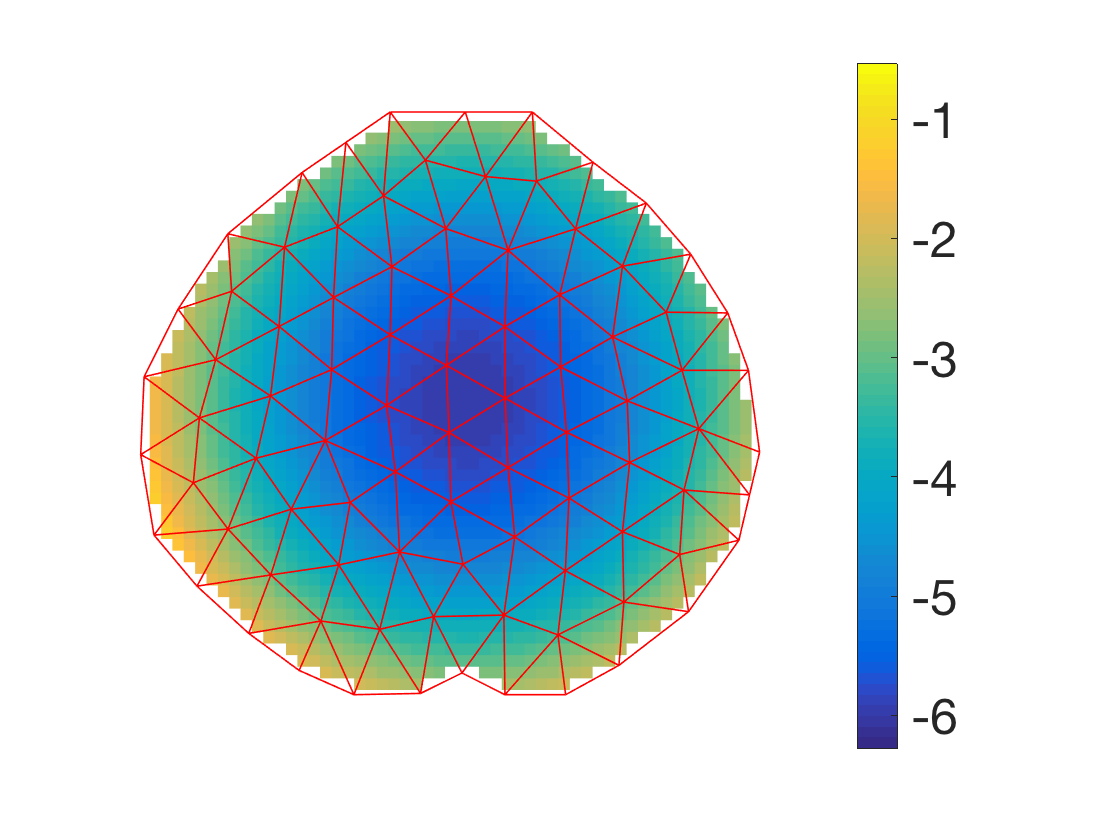} &\hspace{-0.8in}\includegraphics[scale=0.14]{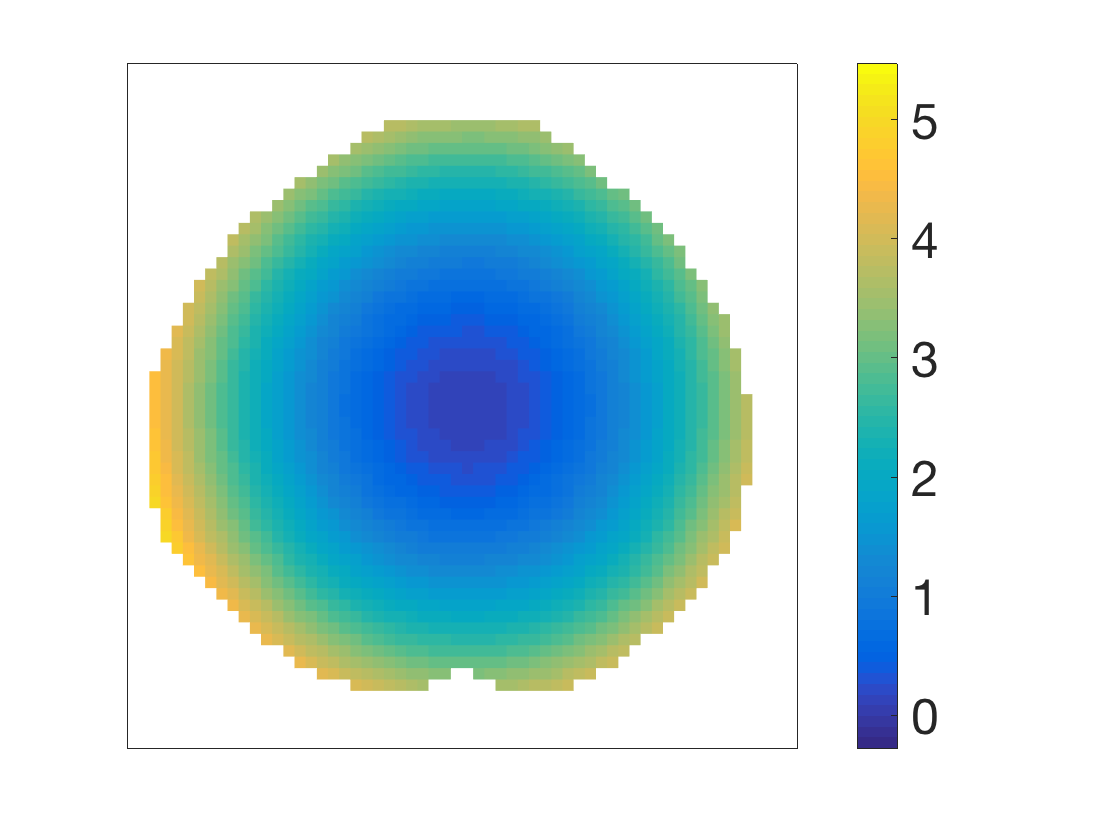} &\hspace{-0.7in}\includegraphics[scale=0.14]{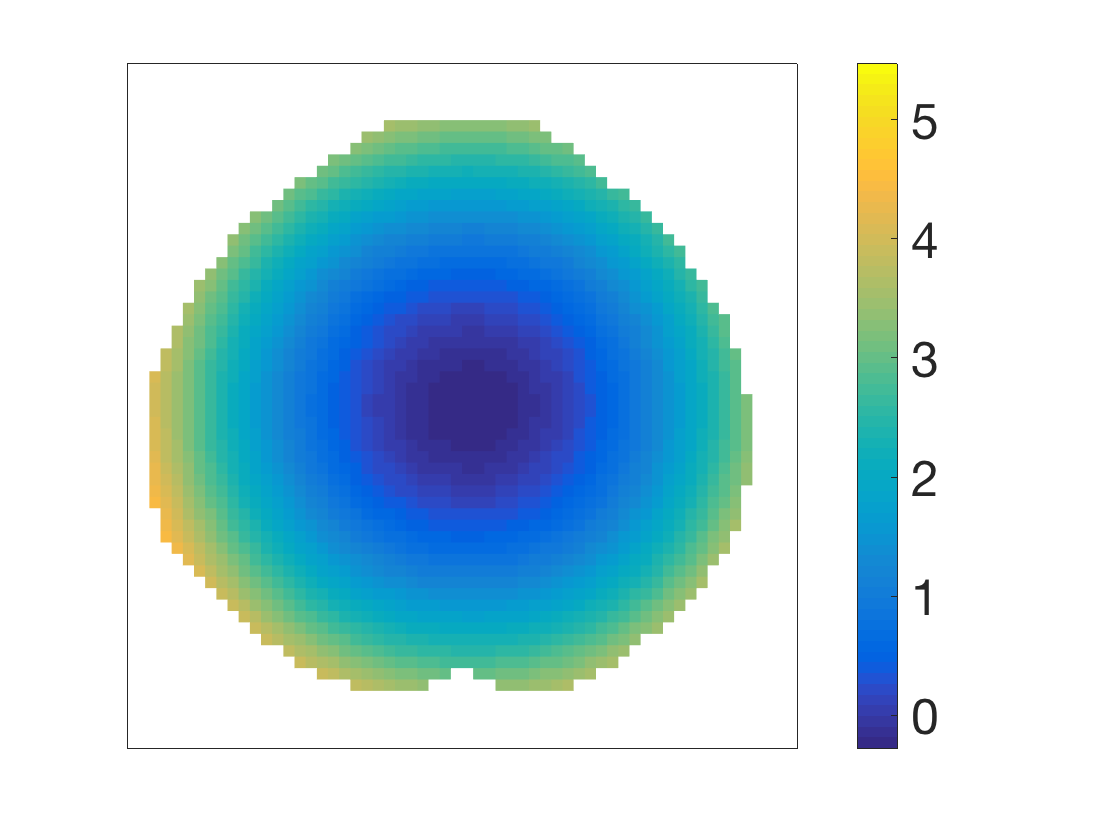} &\hspace{-0.7in}\includegraphics[scale=0.14]{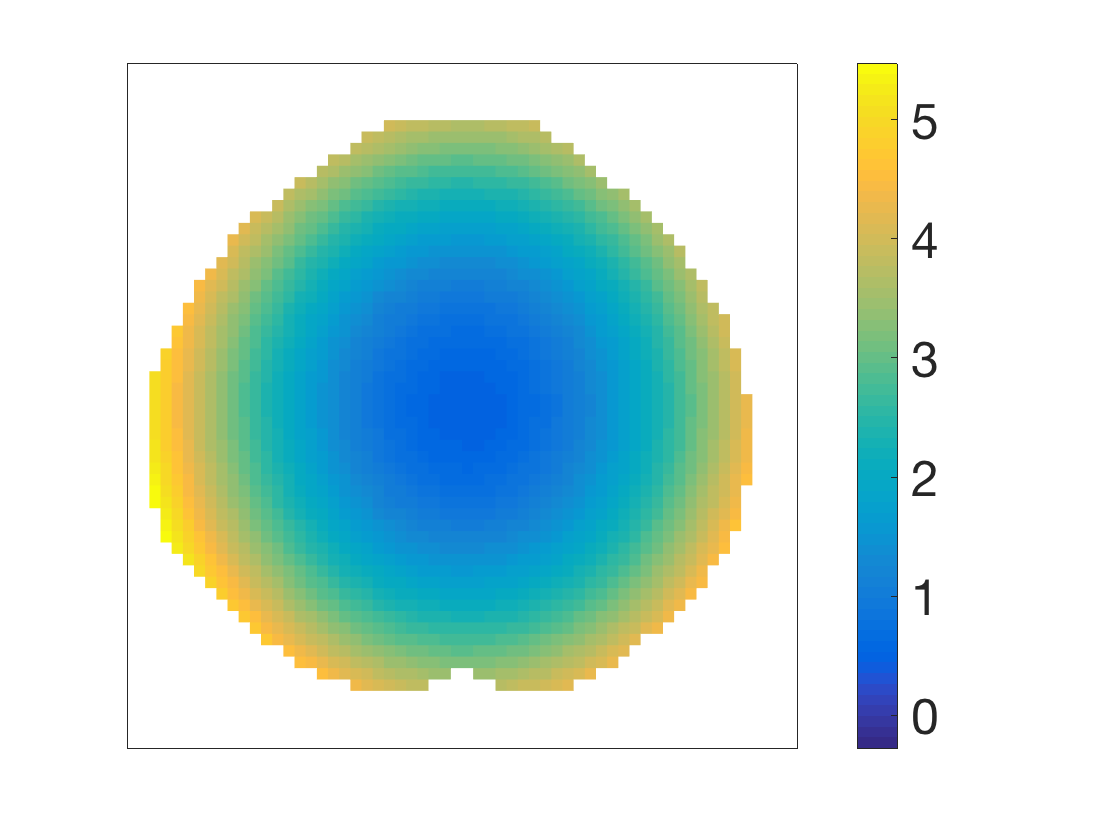}\\ [-10pt]
			\hspace{-0.3in}$\triangle_3$ & & &\\
			& & &
		\end{tabular}
	\end{center}
	\caption{SCCs for quadratic function with $n=50$ and $\alpha=0.01$.}
	\label{FIG:S01}
\end{figure}

\begin{figure}
	\begin{center}
		\begin{tabular}{cccc}
			\hspace{-0.5in}Triangulation &\hspace{-1.3in}$\widehat{\mu}$ &\hspace{-1.3in}Lower SCC &\hspace{-1.3in}Upper SCC\\
			\includegraphics[scale=0.14]{mu_fn1_tri1} &\hspace{-0.8in}\includegraphics[scale=0.14]{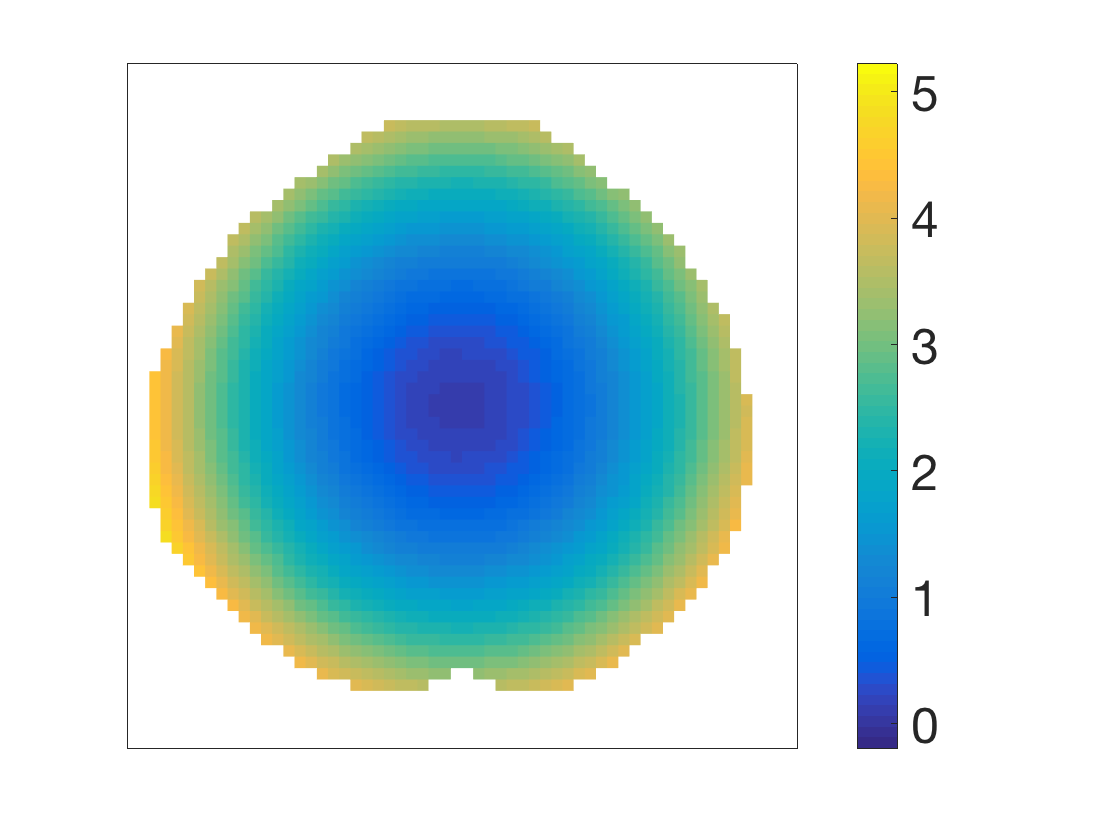} &\hspace{-0.7in}\includegraphics[scale=0.14]{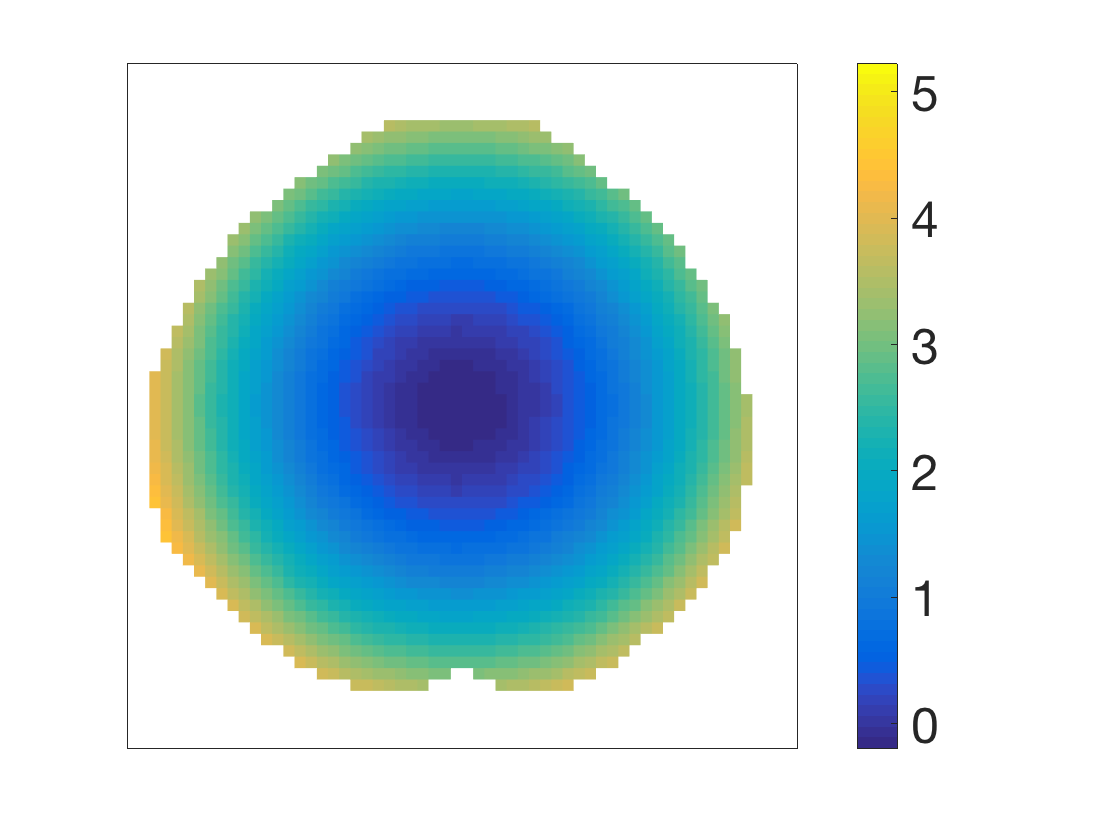} &\hspace{-0.7in}\includegraphics[scale=0.14]{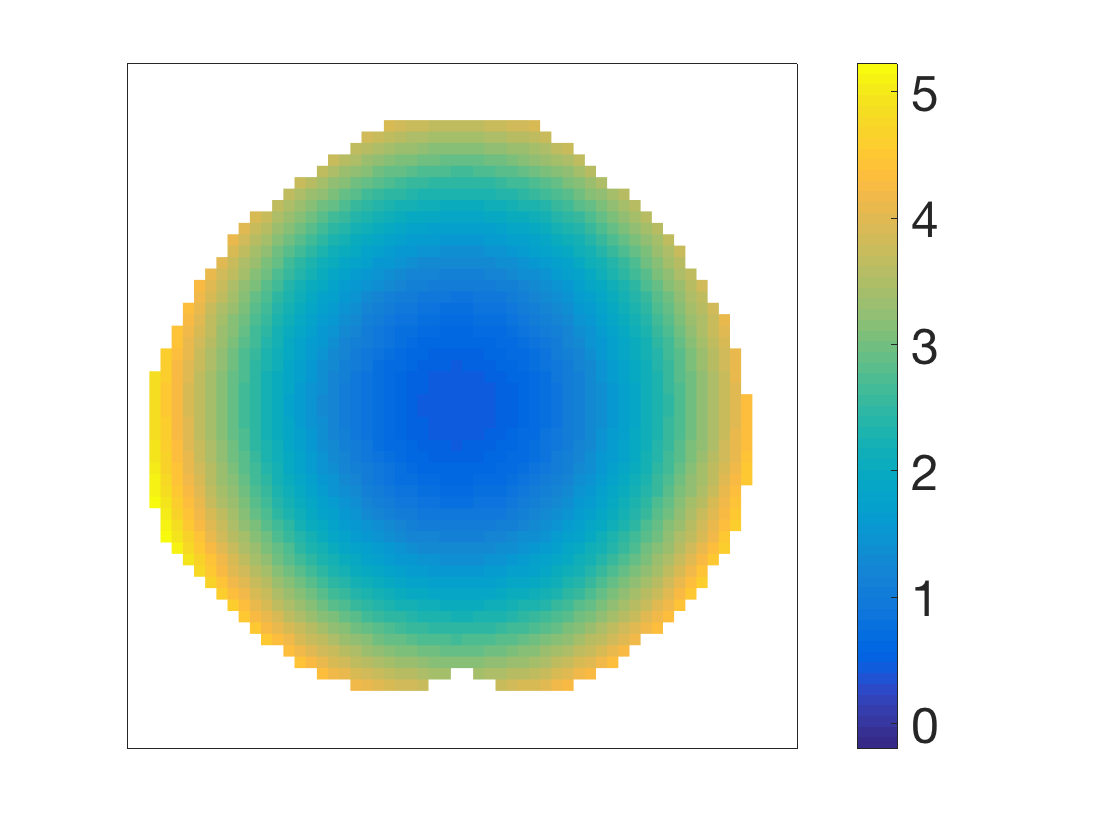}\\ [-10pt]
			\hspace{-0.3in}$\triangle_1$ & & &\\
			
			\includegraphics[scale=0.14]{mu_fn1_tri2} &\hspace{-0.8in}\includegraphics[scale=0.14]{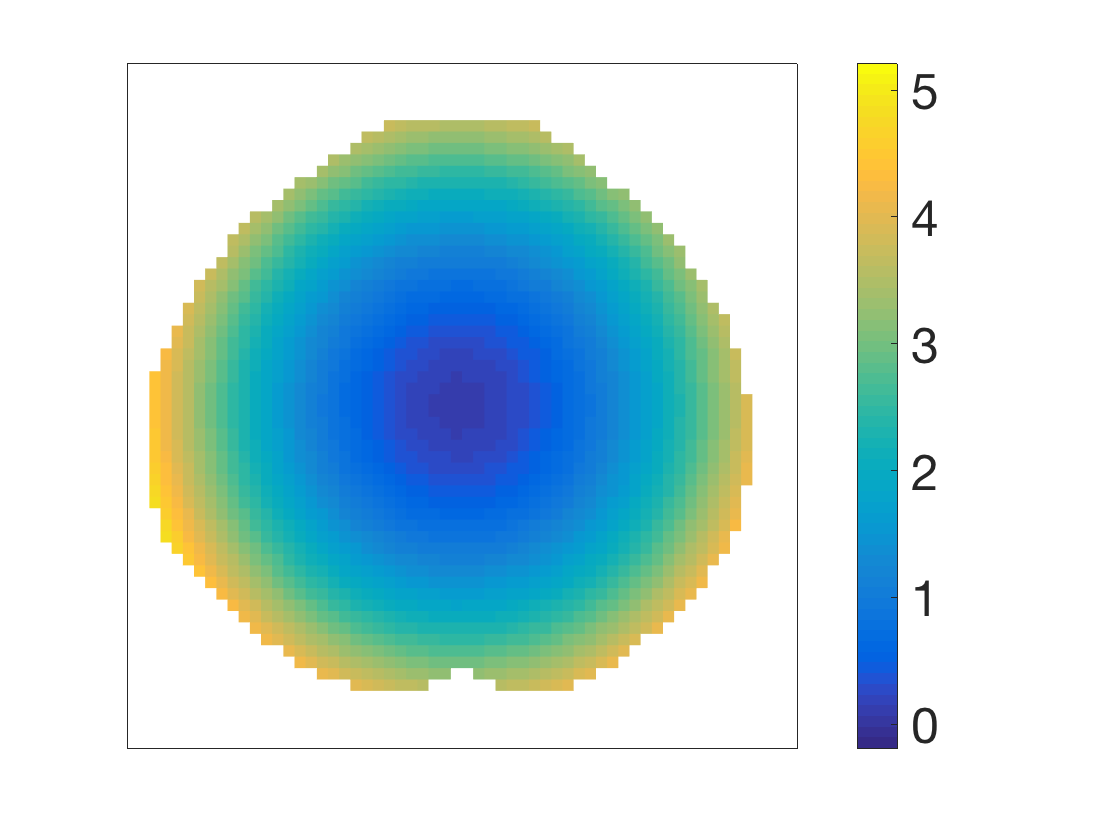} &\hspace{-0.7in}\includegraphics[scale=0.14]{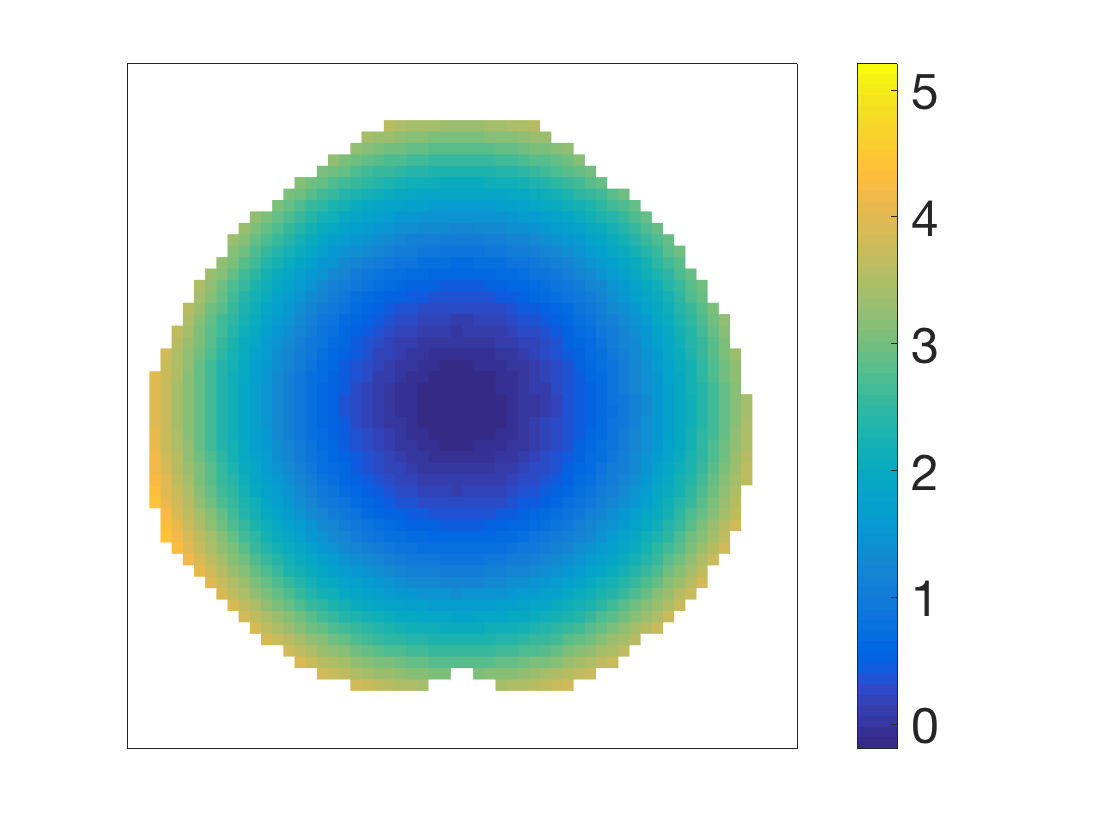} &\hspace{-0.7in}\includegraphics[scale=0.14]{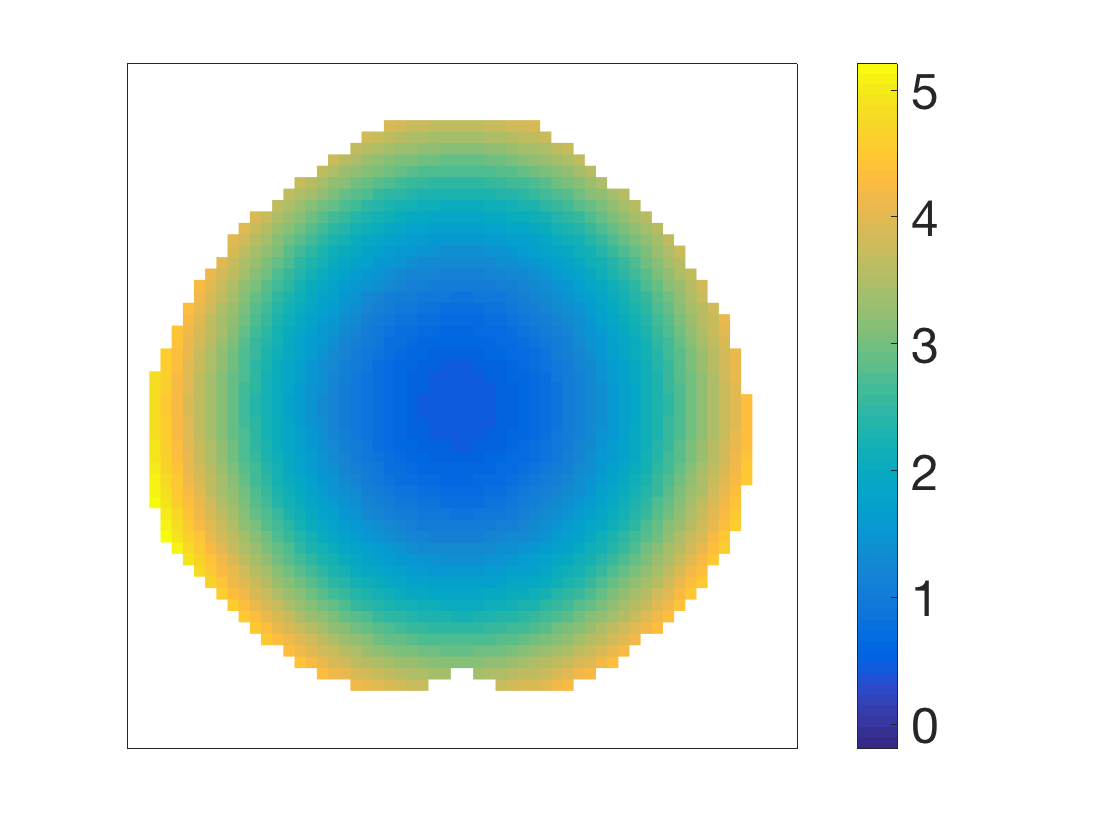}\\ [-10pt]
			\hspace{-0.3in}$\triangle_2$ & & &\\
			
			\includegraphics[scale=0.14]{mu_fn1_tri3} &\hspace{-0.8in}\includegraphics[scale=0.14]{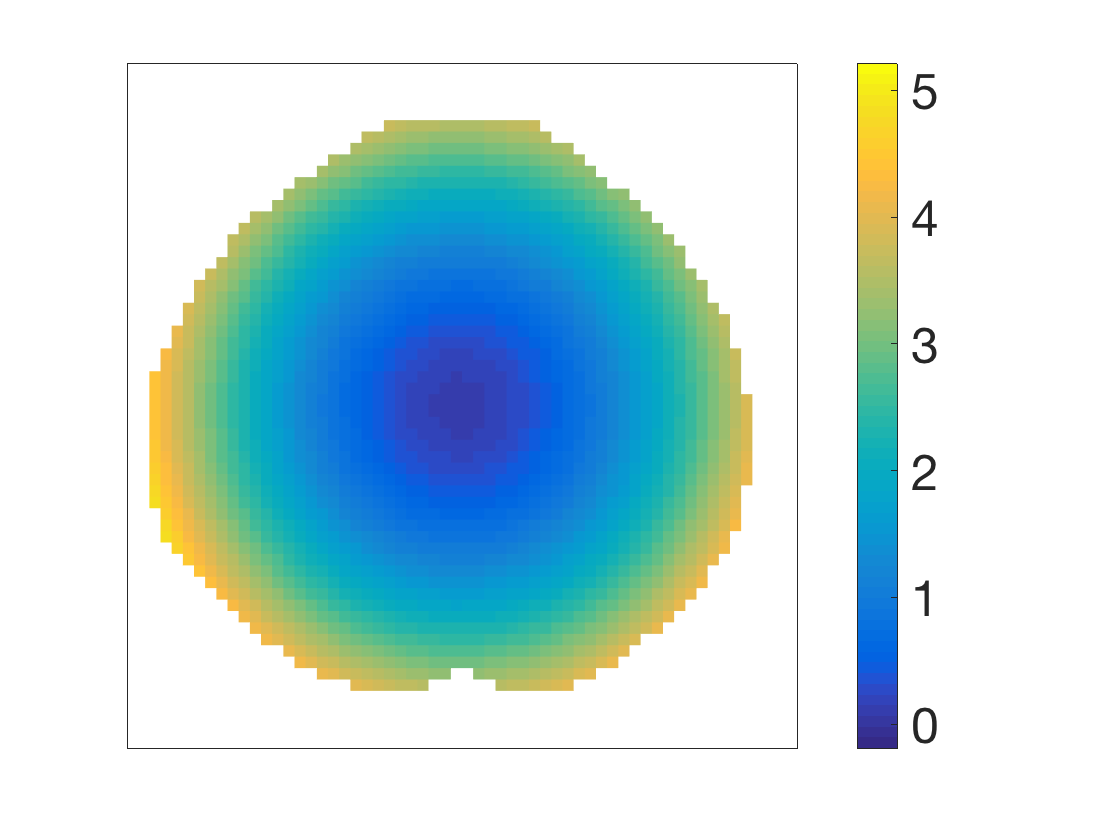} &\hspace{-0.7in}\includegraphics[scale=0.14]{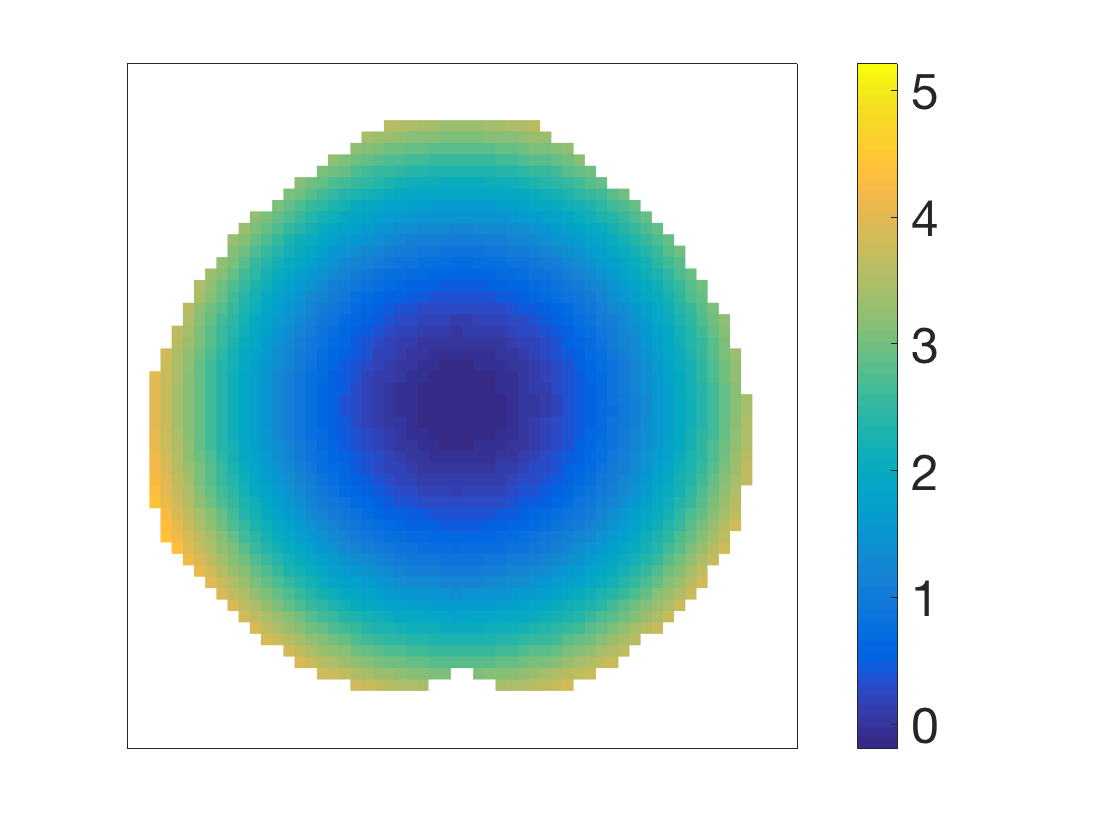} &\hspace{-0.7in}\includegraphics[scale=0.14]{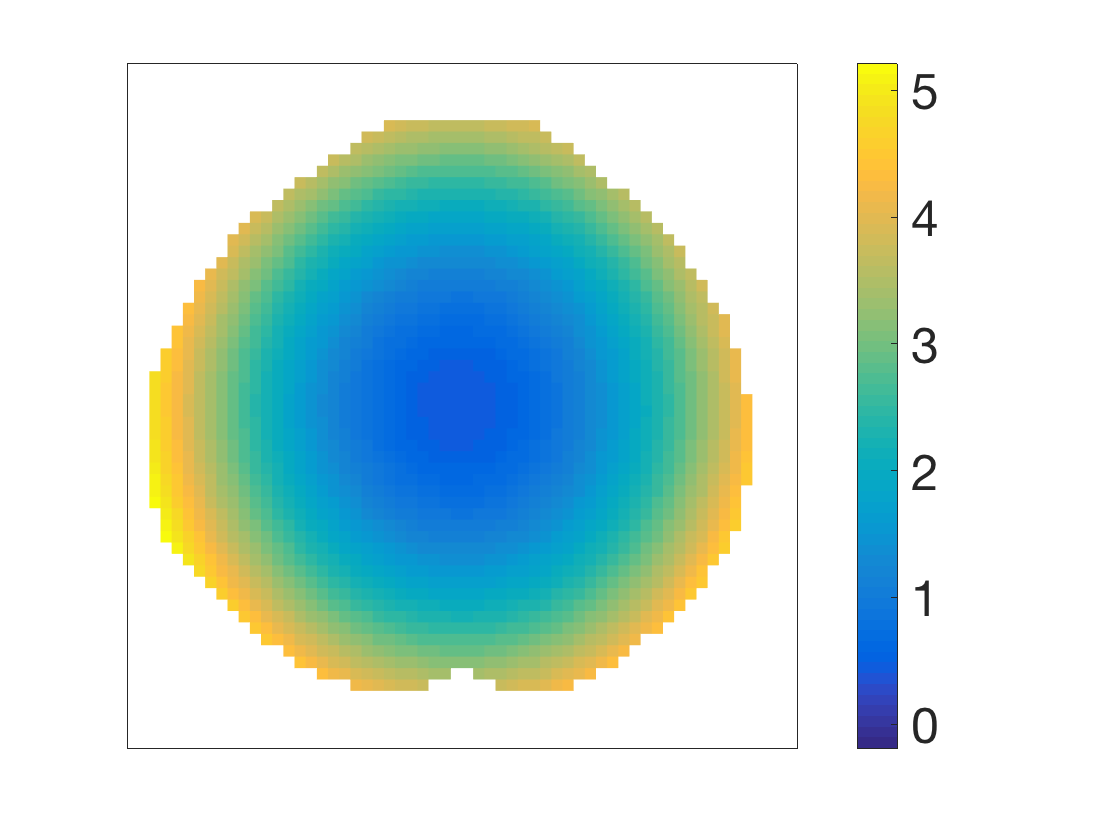}\\ [-10pt]
			\hspace{-0.3in}$\triangle_3$ & & &\\
			& & &
		\end{tabular}
	\end{center}
	\caption{SCCs for quadratic function with $n=100$ and $\alpha=0.01$.}
	\label{FIG:S02}
\end{figure}

\begin{figure}
	\begin{center}
		\begin{tabular}{cccc}
			\hspace{-0.5in}Triangulation &\hspace{-1.3in}$\widehat{\mu}$ &\hspace{-1.3in}Lower SCC &\hspace{-1.3in}Upper SCC\\
			\includegraphics[scale=0.14]{mu_fn1_tri1} &\hspace{-0.8in}\includegraphics[scale=0.14]{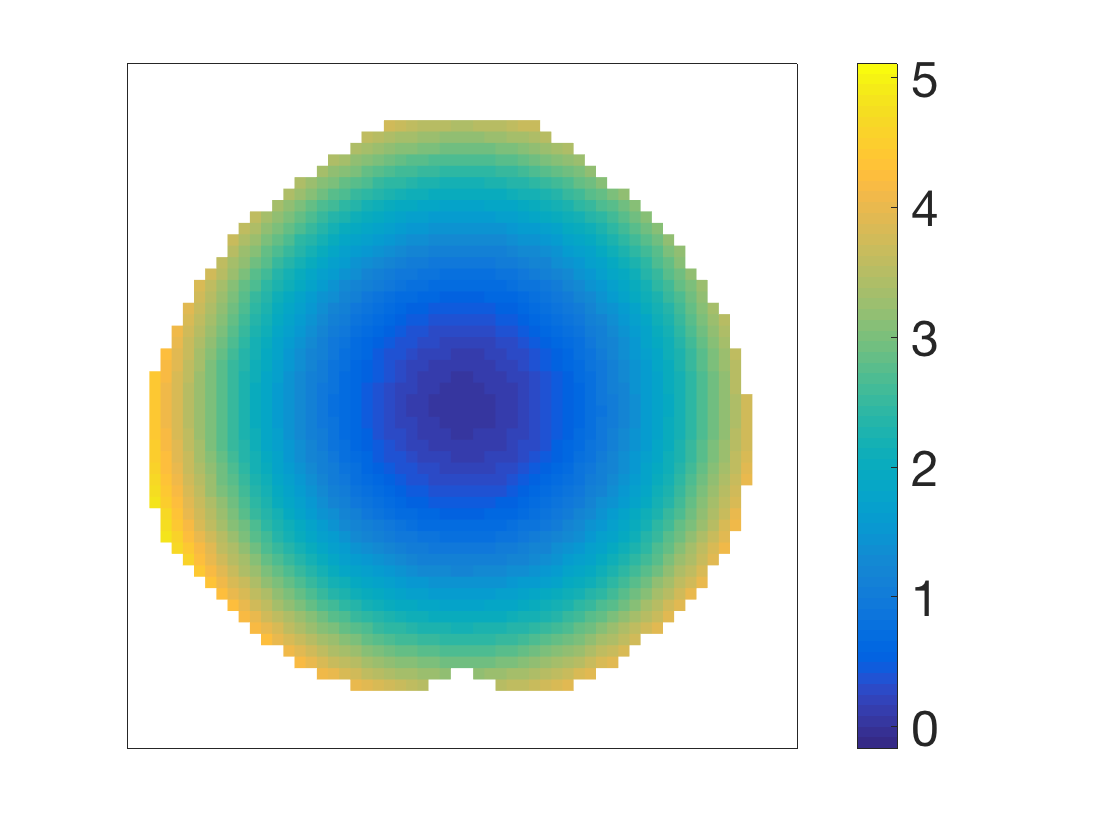} &\hspace{-0.7in}\includegraphics[scale=0.14]{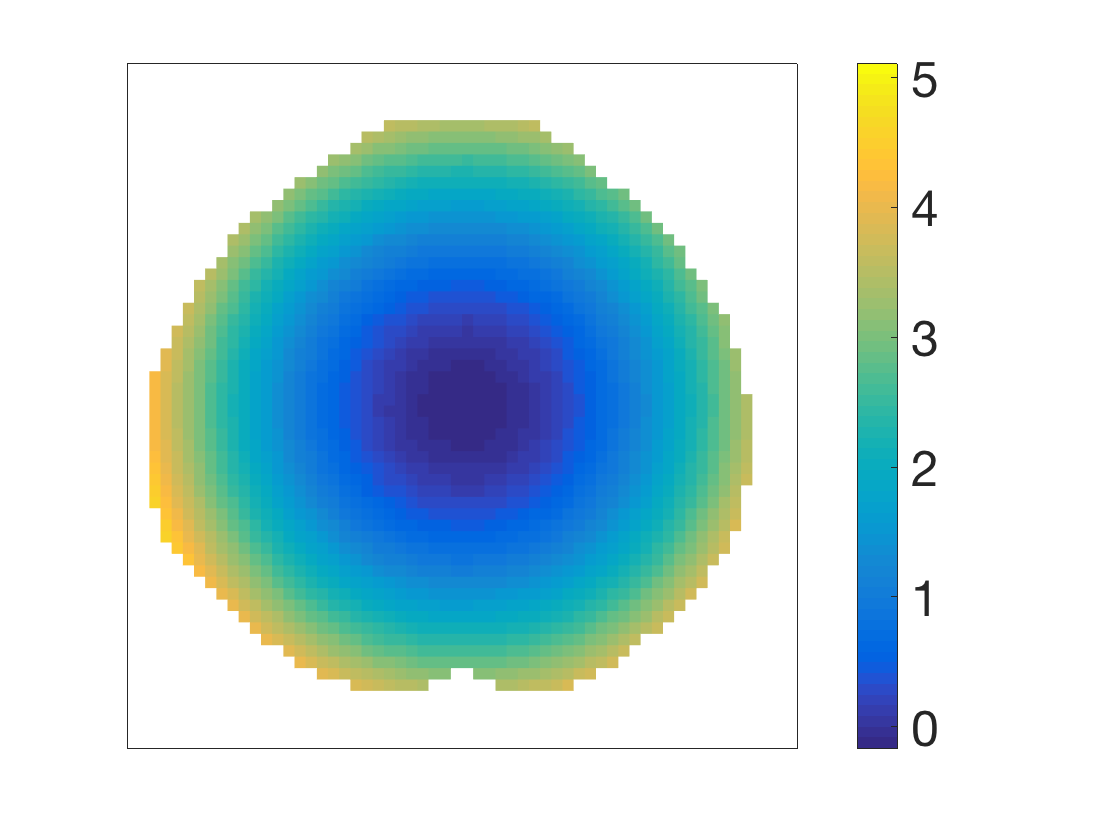} &\hspace{-0.7in}\includegraphics[scale=0.14]{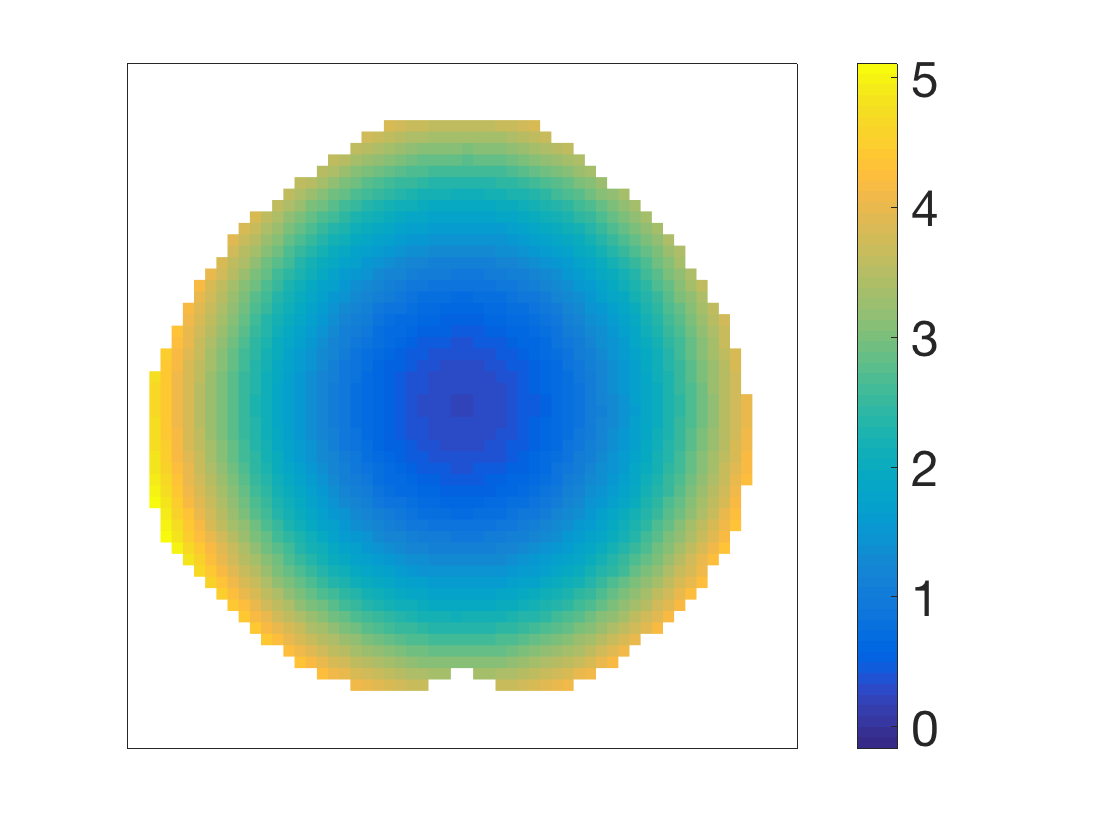}\\ [-10pt]
			\hspace{-0.3in}$\triangle_1$ & & &\\
			
			\includegraphics[scale=0.14]{mu_fn1_tri2} &\hspace{-0.8in}\includegraphics[scale=0.14]{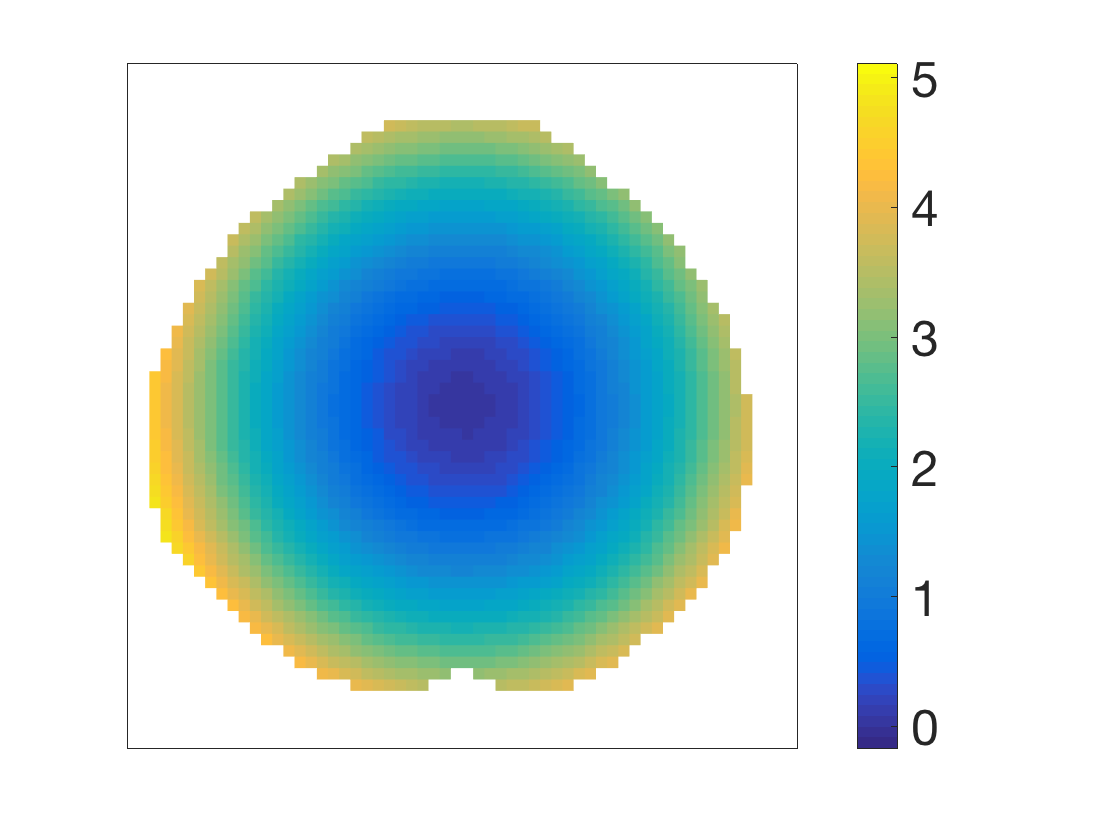} &\hspace{-0.7in}\includegraphics[scale=0.14]{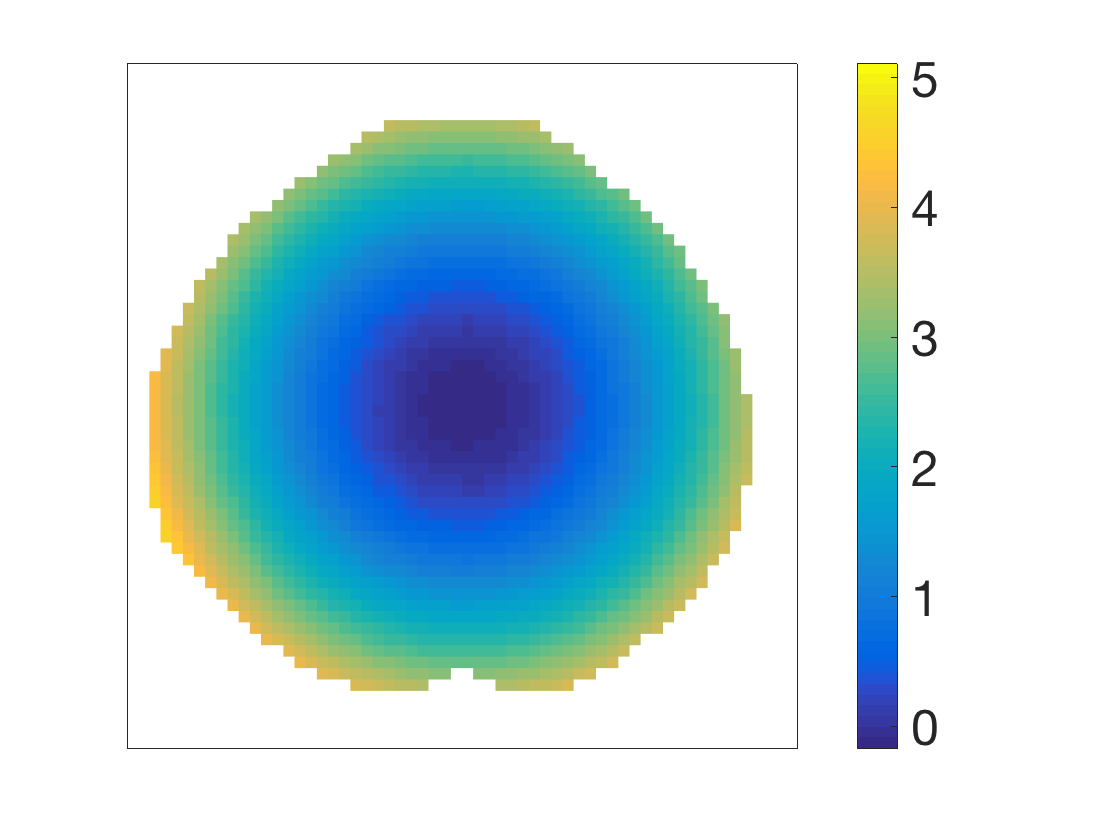} &\hspace{-0.7in}\includegraphics[scale=0.14]{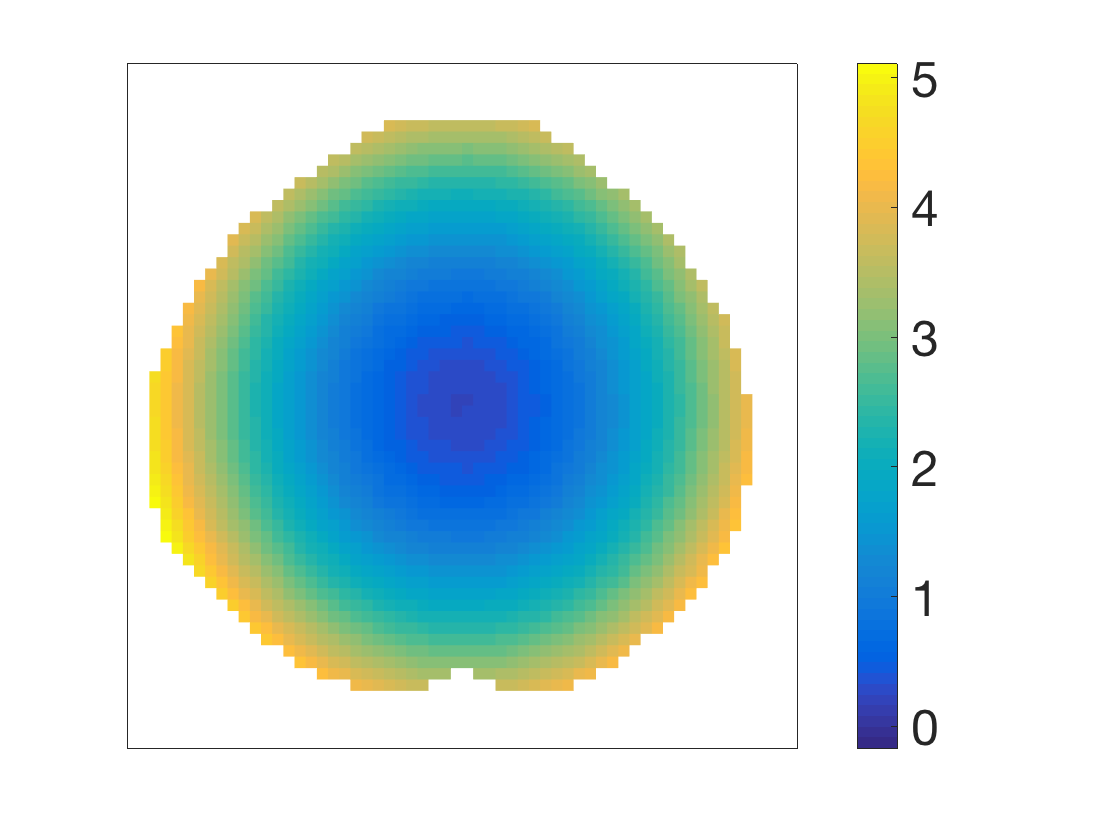}\\ [-10pt]
			\hspace{-0.3in}$\triangle_2$ & & &\\
			
			\includegraphics[scale=0.14]{mu_fn1_tri3} &\hspace{-0.8in}\includegraphics[scale=0.14]{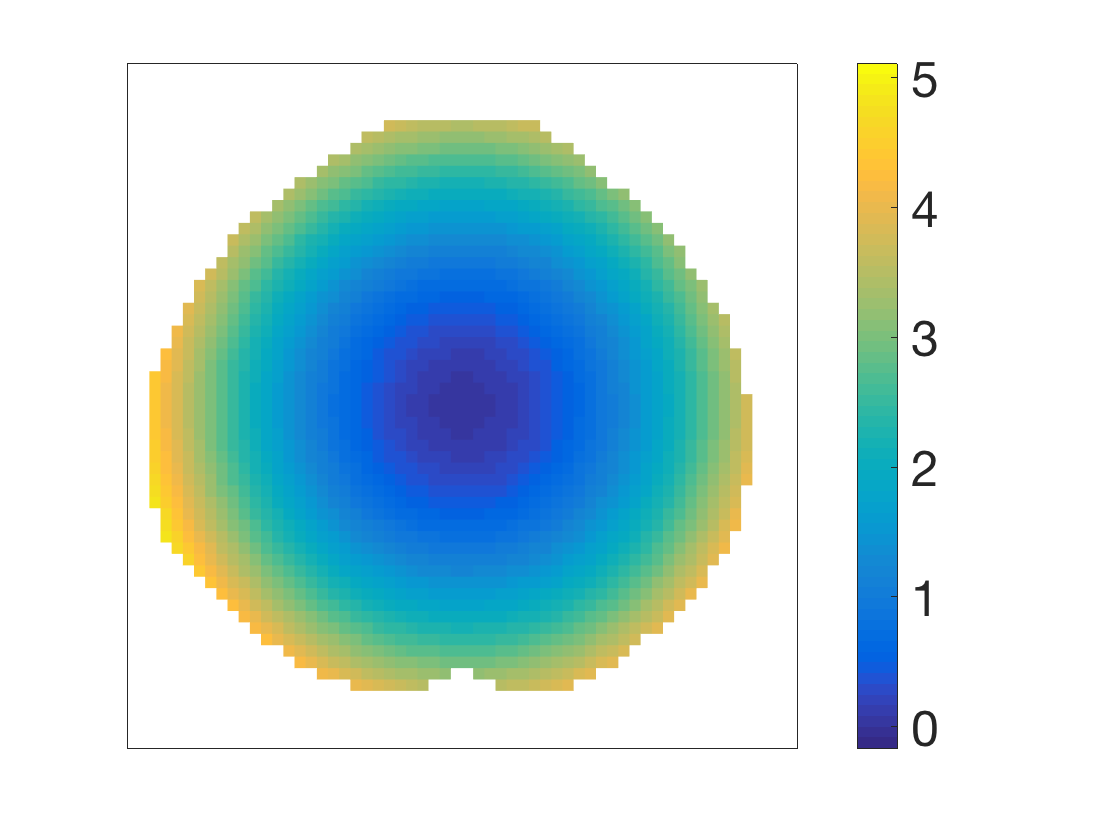} &\hspace{-0.7in}\includegraphics[scale=0.14]{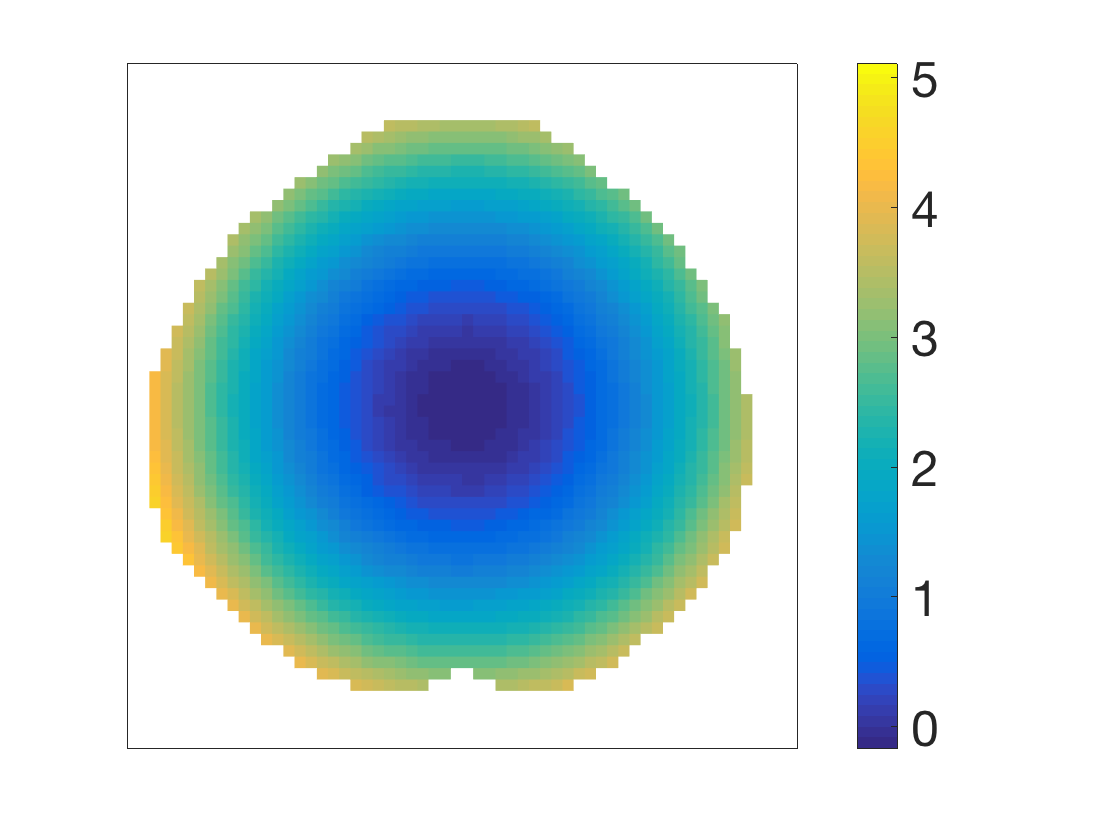} &\hspace{-0.7in}\includegraphics[scale=0.14]{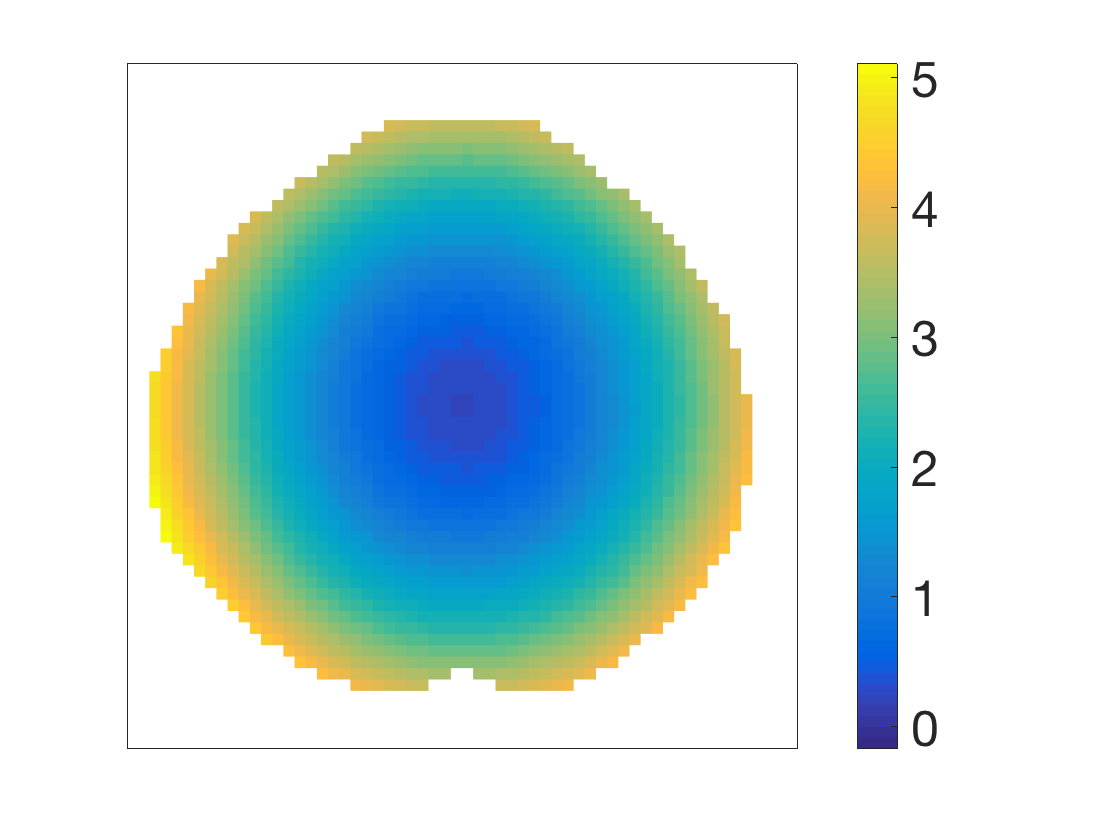}\\ [-10pt]
			\hspace{-0.3in}$\triangle_3$ & & &\\
			& & &
		\end{tabular}
	\end{center}
	\caption{SCCs for quadratic function with $n=200$ and $\alpha=0.01$.}
	\label{FIG:S03}
\end{figure}

\begin{figure}
	\begin{center}
		\begin{tabular}{cccc}
			\hspace{-0.5in}Triangulation &\hspace{-1.3in}$\widehat{\mu}$ &\hspace{-1.3in}Lower SCC &\hspace{-1.3in}Upper SCC\\
			\includegraphics[scale=0.14]{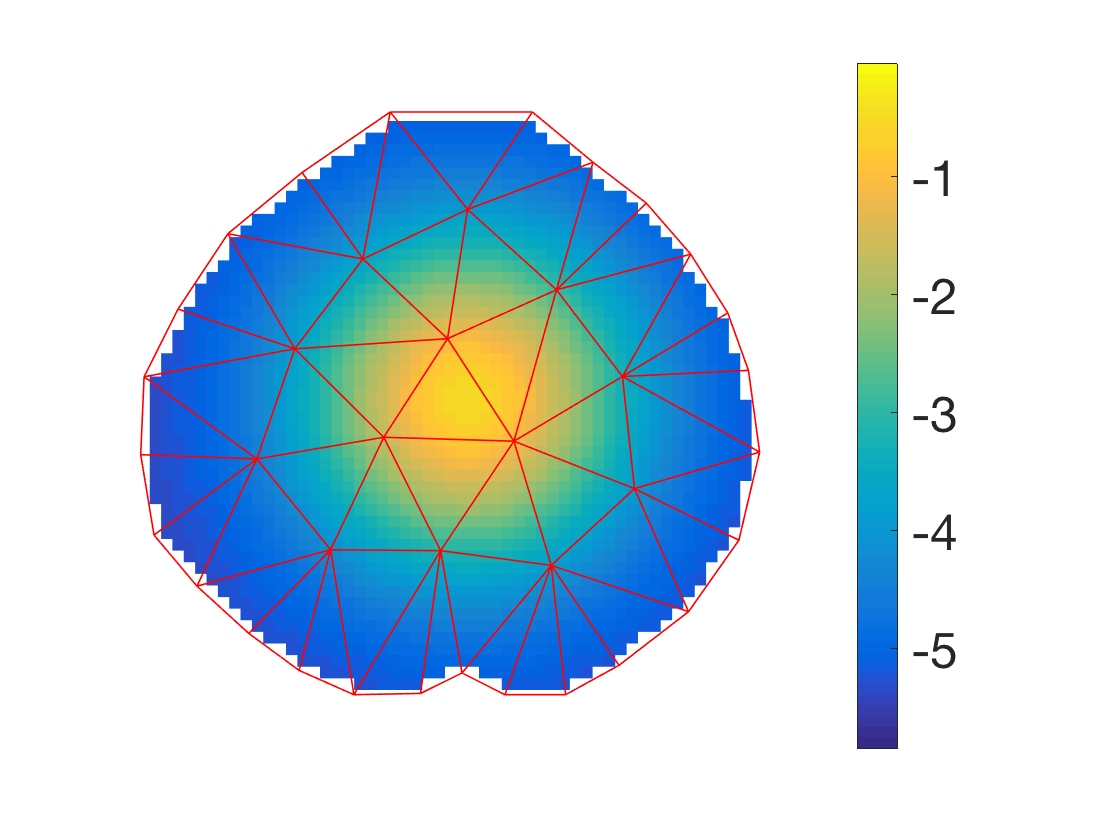} &\hspace{-0.8in}\includegraphics[scale=0.14]{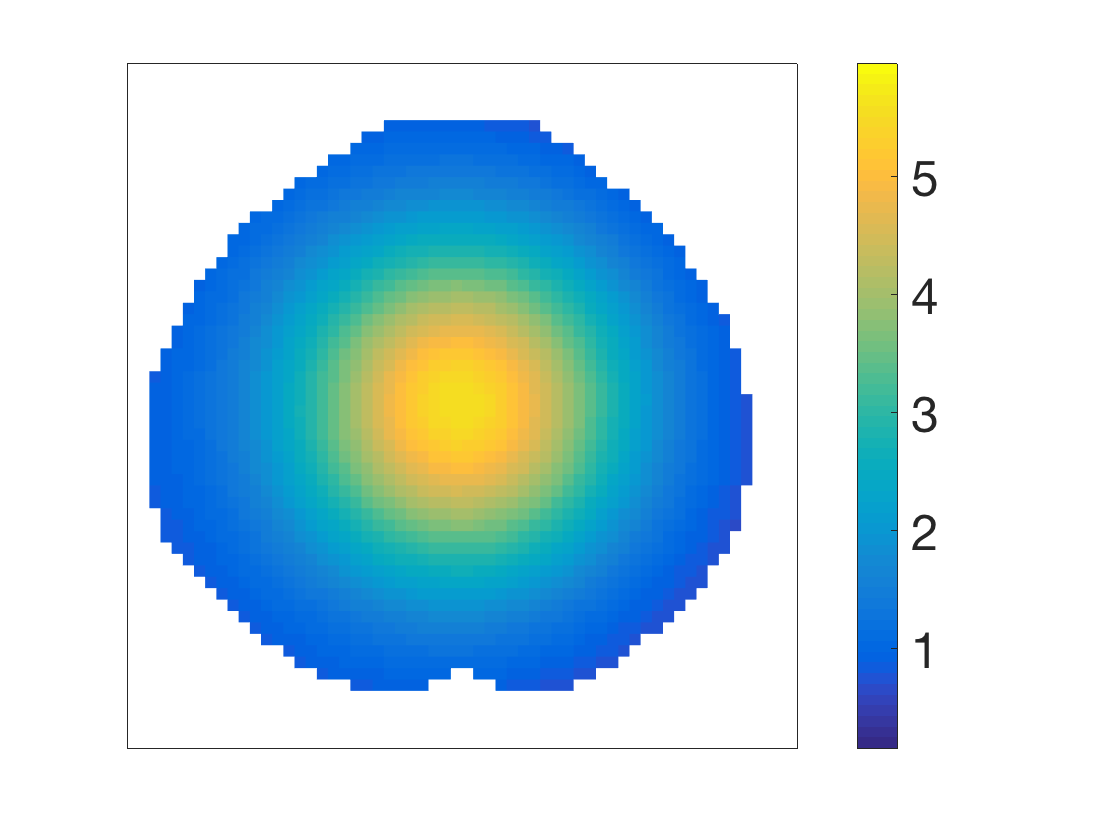} &\hspace{-0.7in}\includegraphics[scale=0.14]{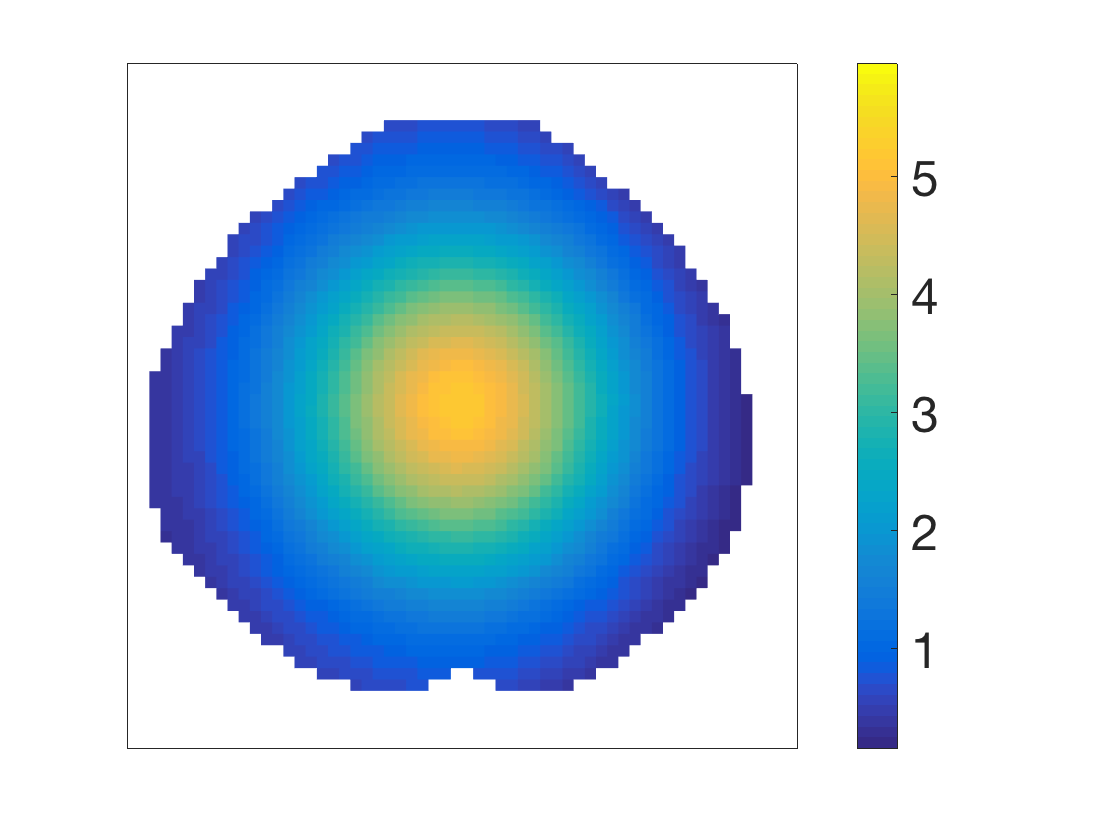} &\hspace{-0.7in}\includegraphics[scale=0.14]{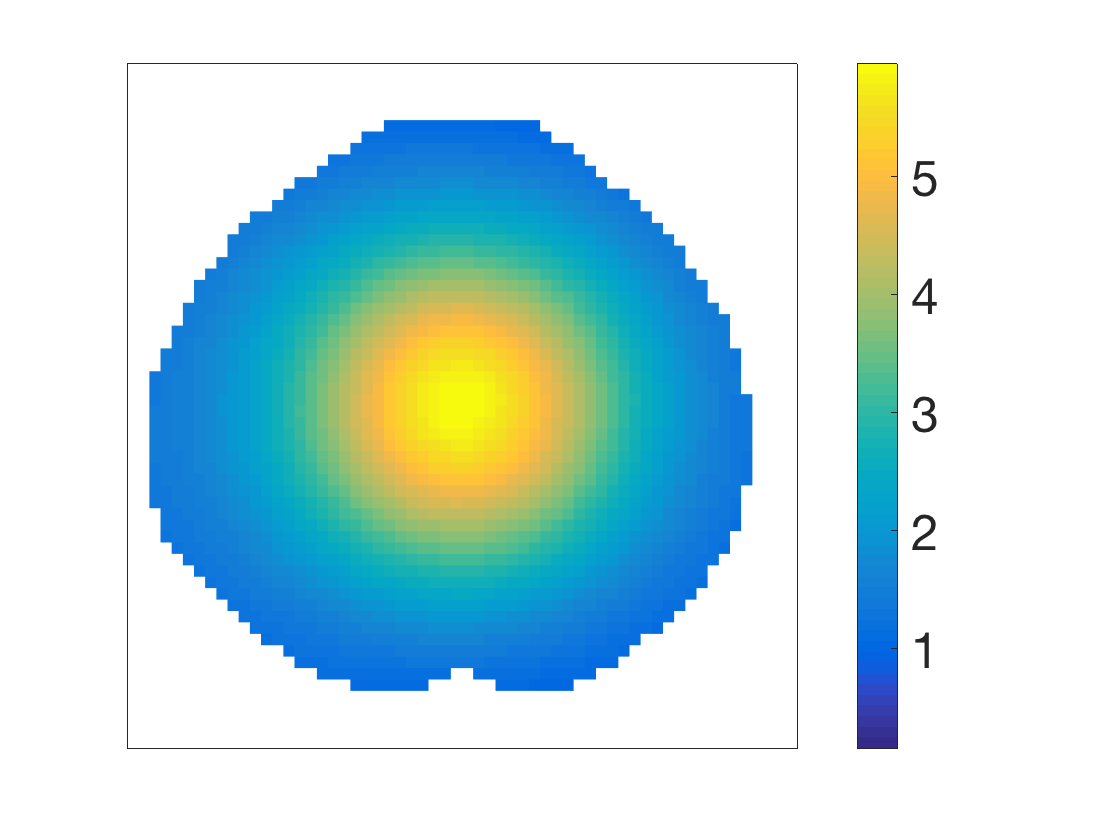}\\ [-10pt]
			\hspace{-0.3in}$\triangle_1$ & & &\\
			
			\includegraphics[scale=0.14]{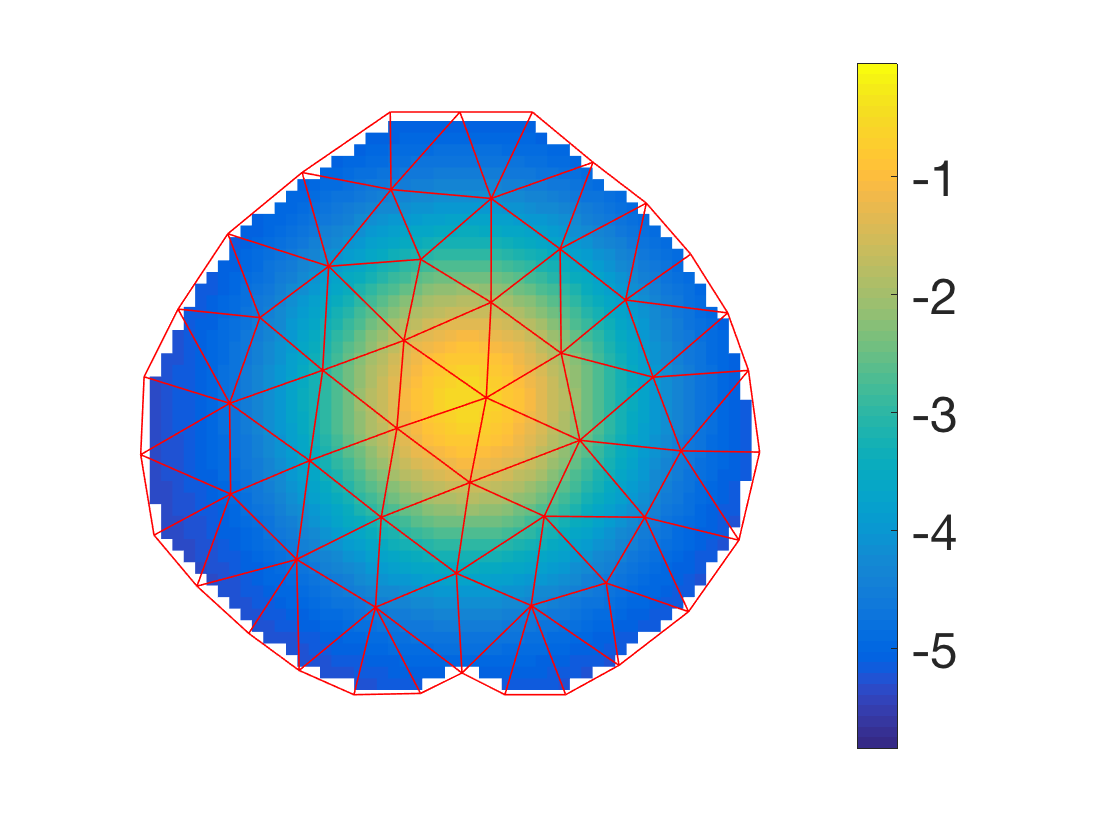} &\hspace{-0.8in}\includegraphics[scale=0.14]{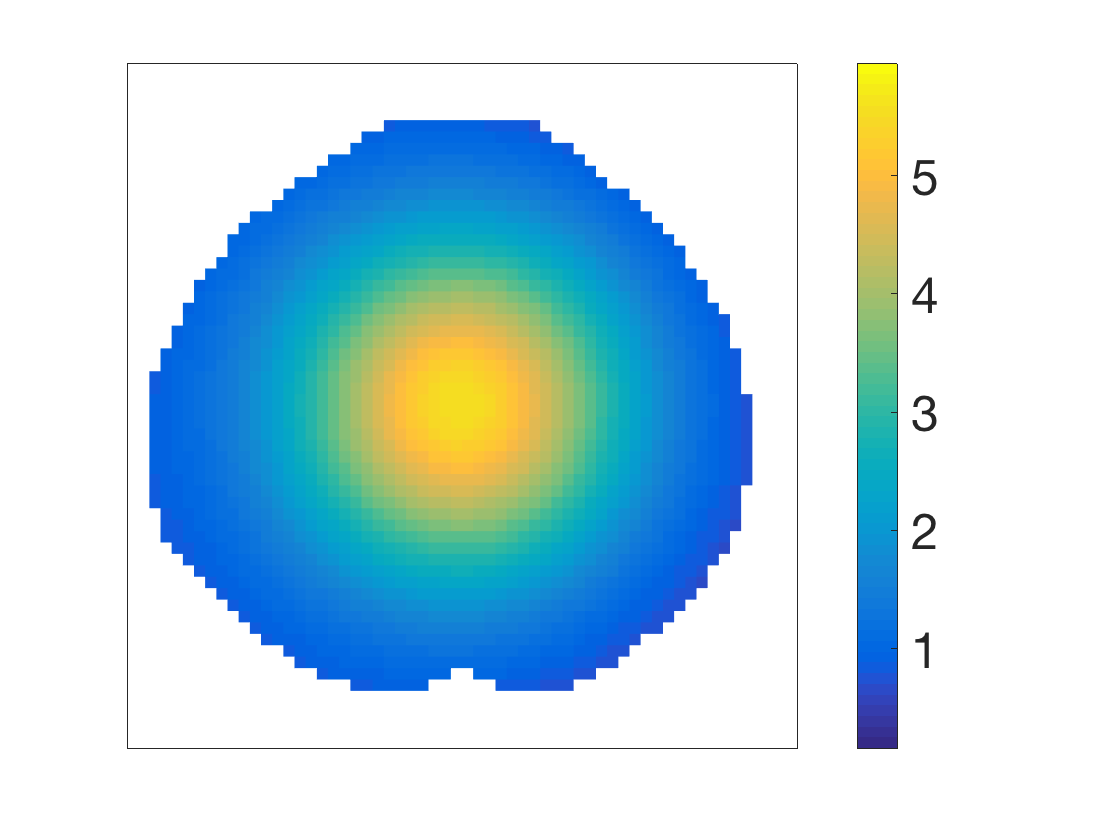} &\hspace{-0.7in}\includegraphics[scale=0.14]{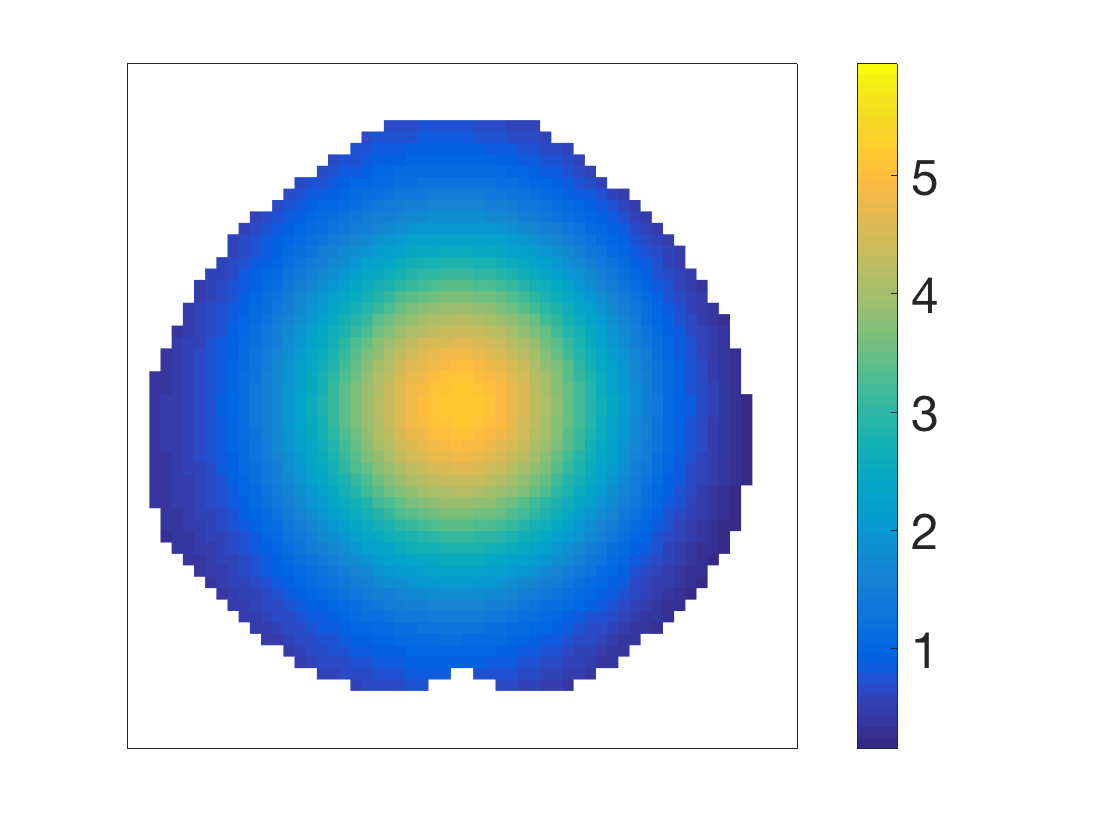} &\hspace{-0.7in}\includegraphics[scale=0.14]{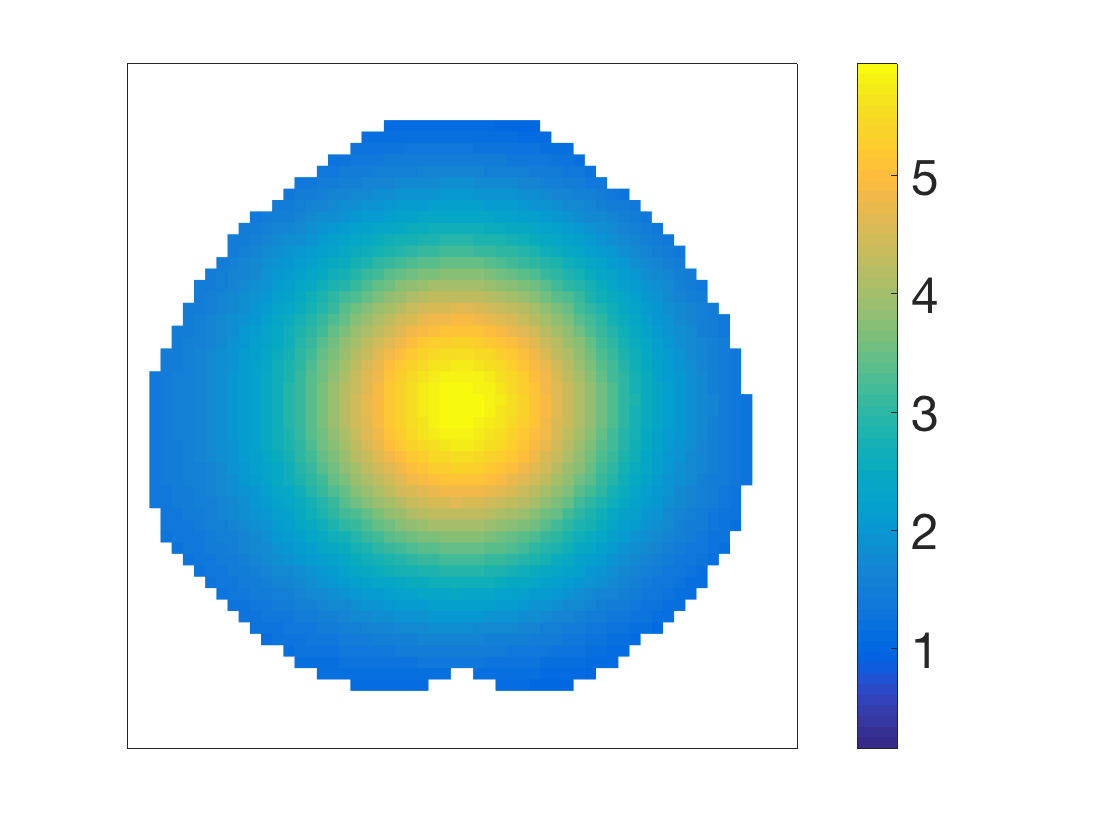}\\ [-10pt]
			\hspace{-0.3in}$\triangle_2$ & & &\\
			
			\includegraphics[scale=0.14]{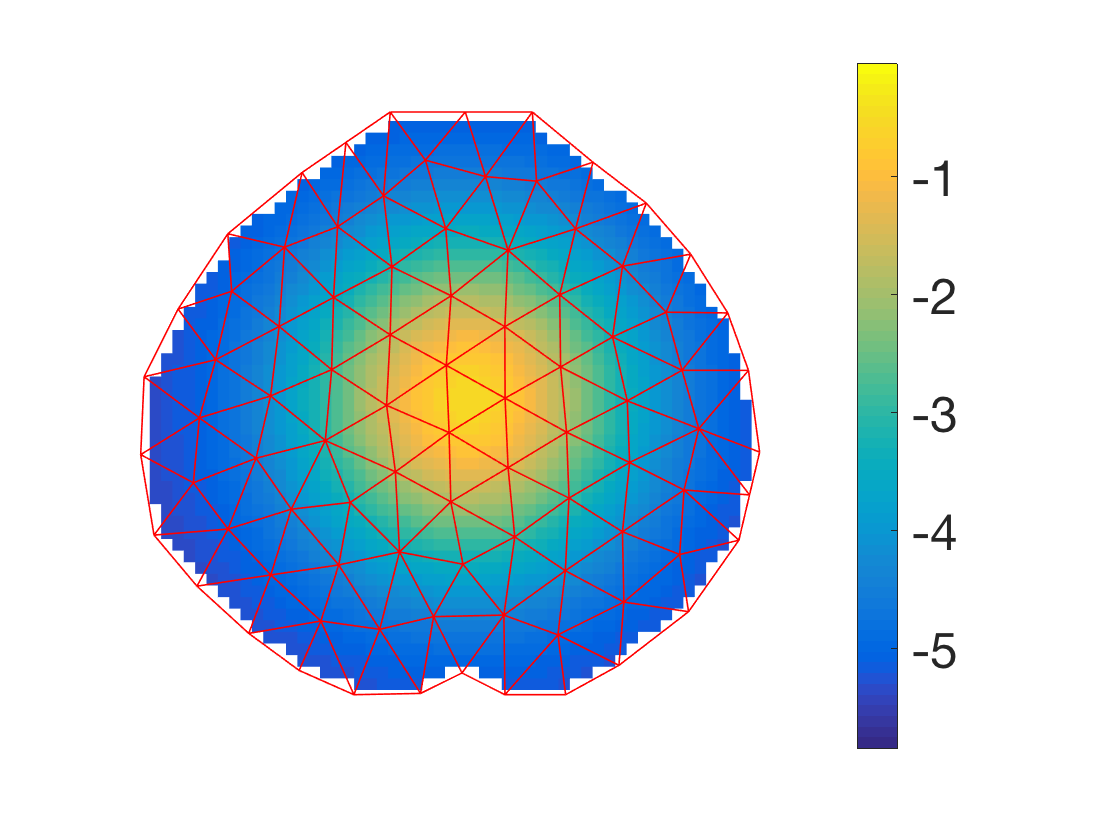} &\hspace{-0.8in}\includegraphics[scale=0.14]{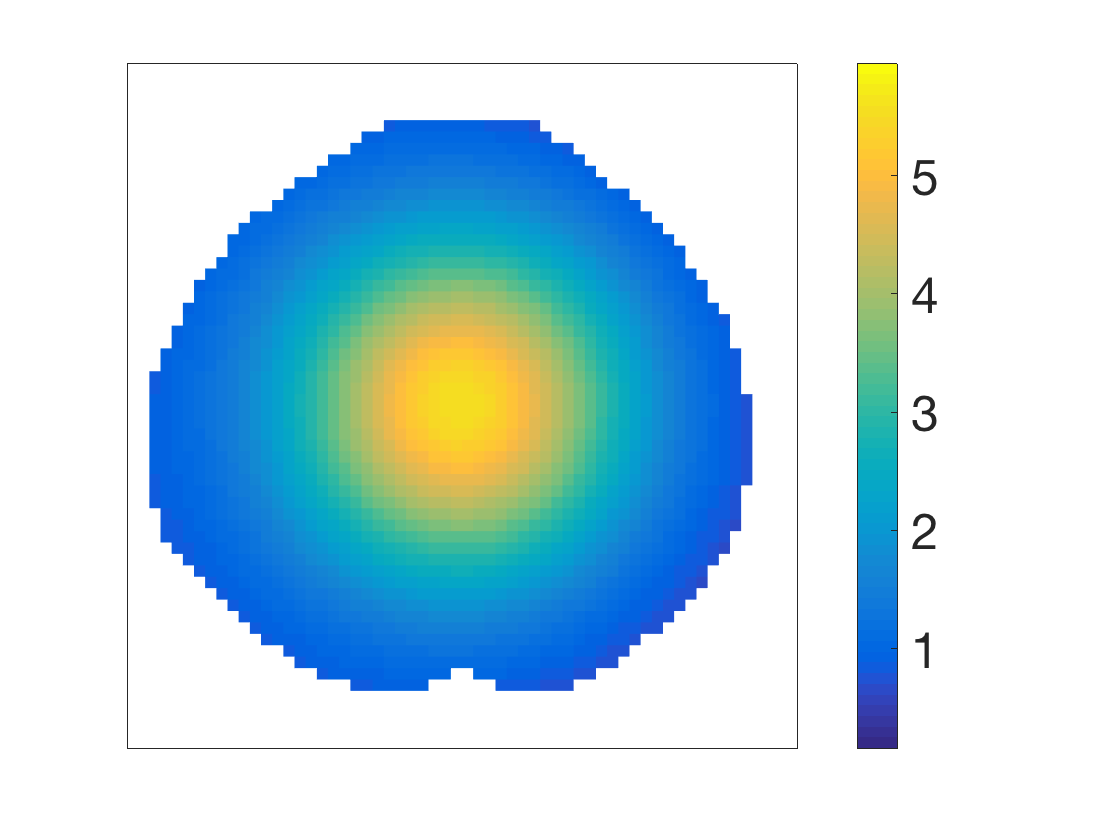} &\hspace{-0.7in}\includegraphics[scale=0.14]{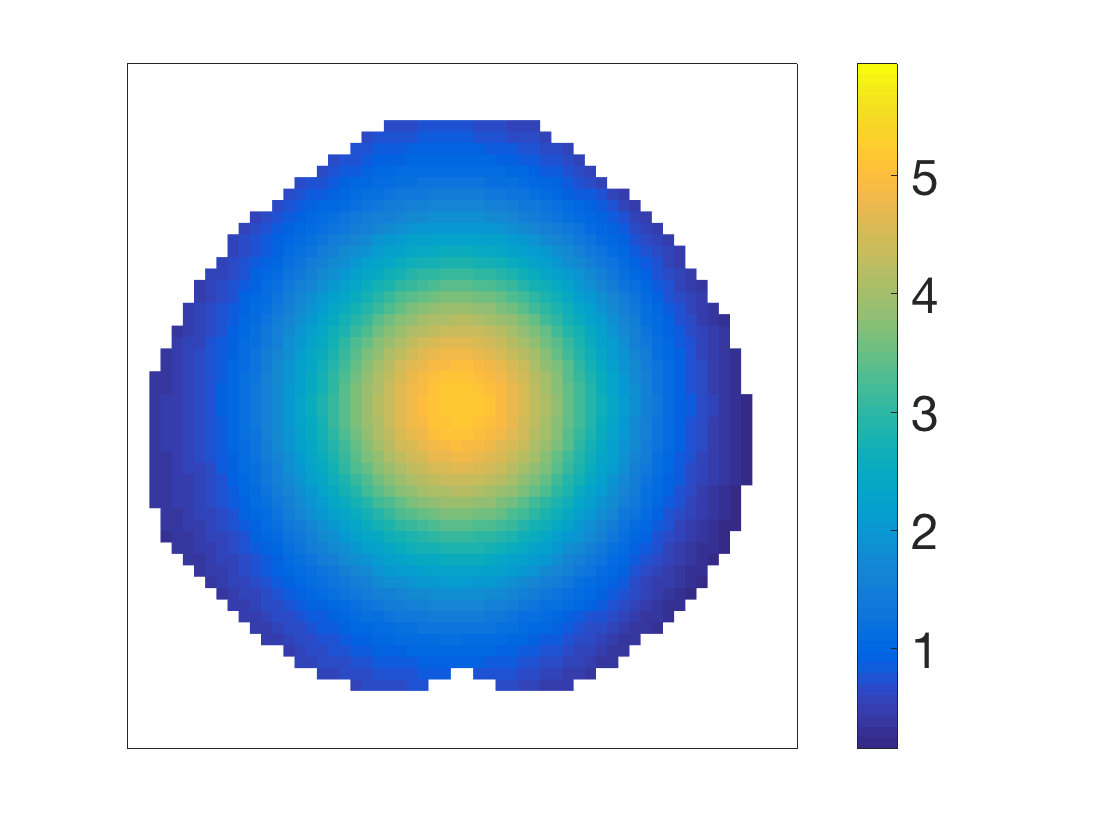} &\hspace{-0.7in}\includegraphics[scale=0.14]{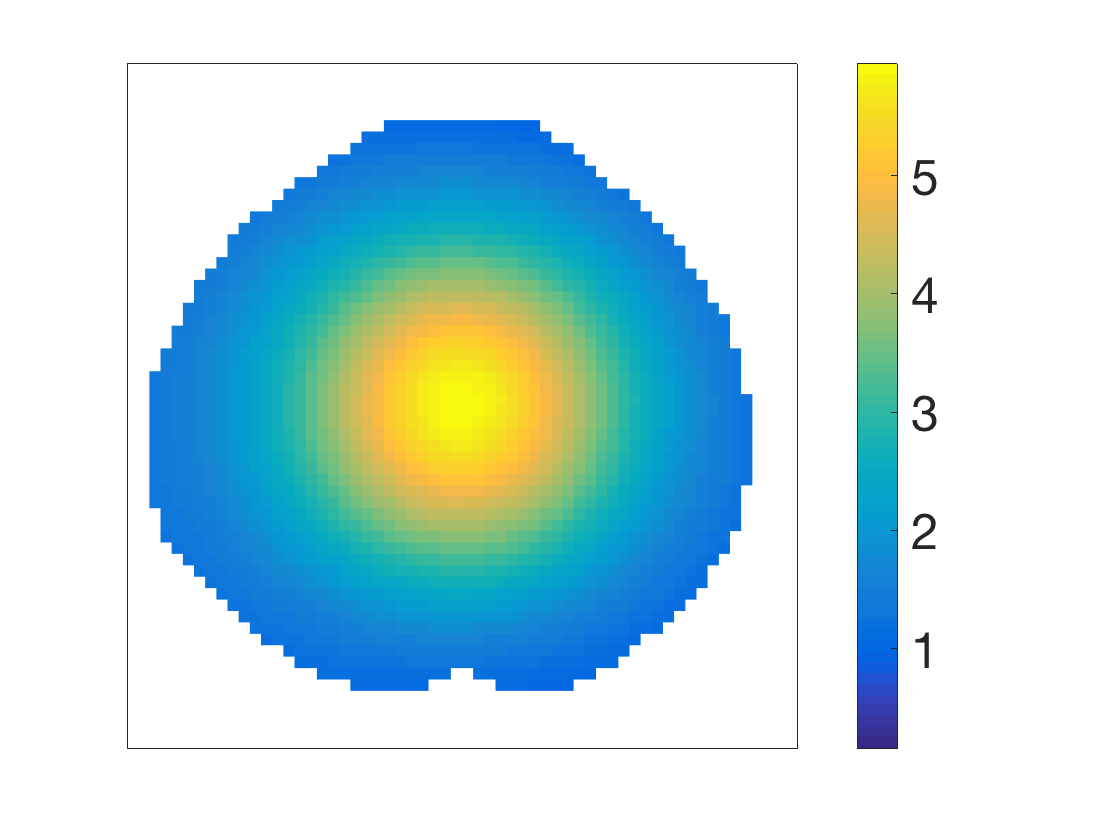}\\ [-10pt]
			\hspace{-0.3in}$\triangle_3$ & & &\\
			& & &
		\end{tabular}
	\end{center}
	\caption{SCCs for bump function with $n=50$ and $\alpha=0.01$.}
	\label{FIG:S04}
\end{figure}

\begin{figure}
	\begin{center}
		\begin{tabular}{cccc}
			\hspace{-0.5in}Triangulation &\hspace{-1.3in}$\widehat{\mu}$ &\hspace{-1.3in}Lower SCC &\hspace{-1.3in}Upper SCC\\
			\includegraphics[scale=0.14]{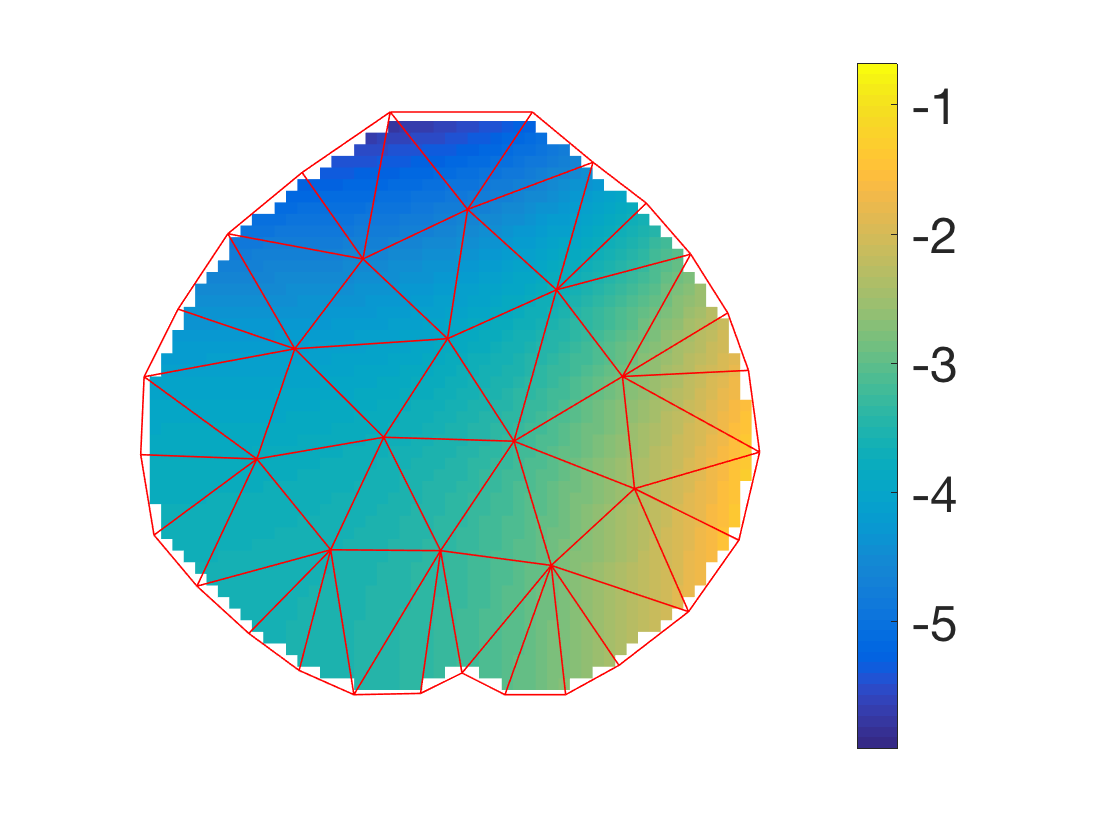} &\hspace{-0.8in}\includegraphics[scale=0.14]{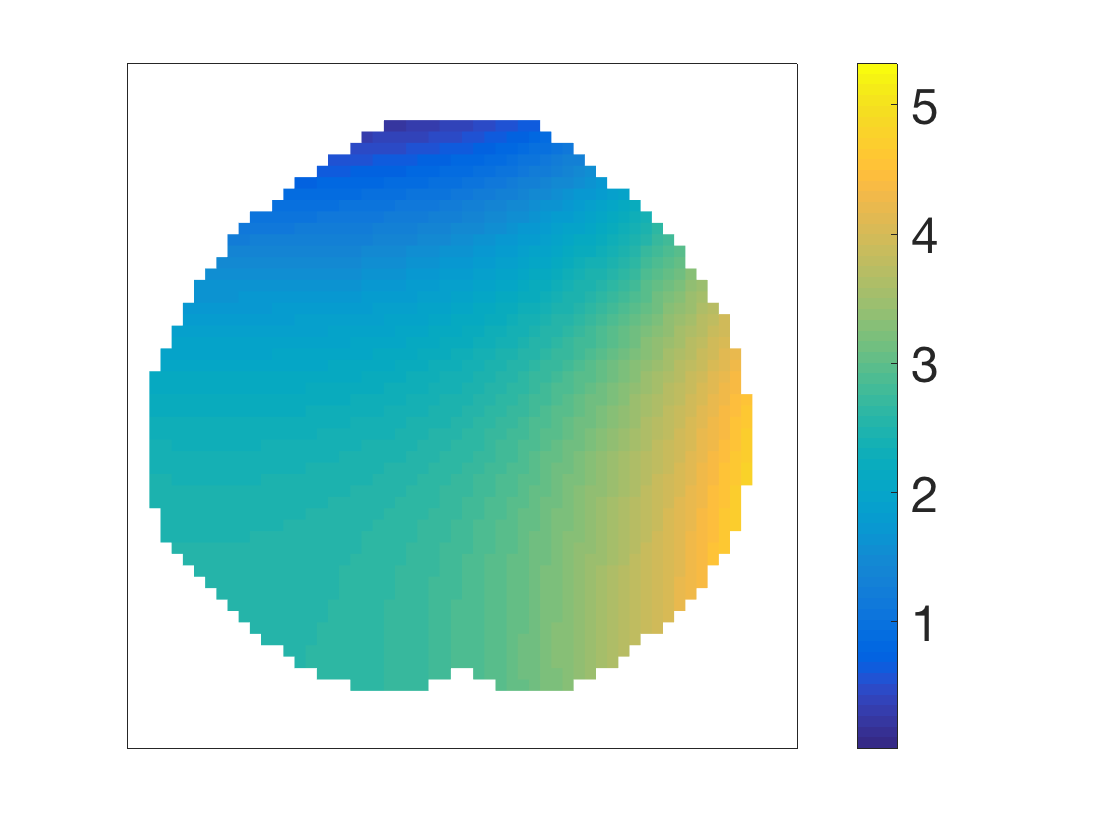} &\hspace{-0.7in}\includegraphics[scale=0.14]{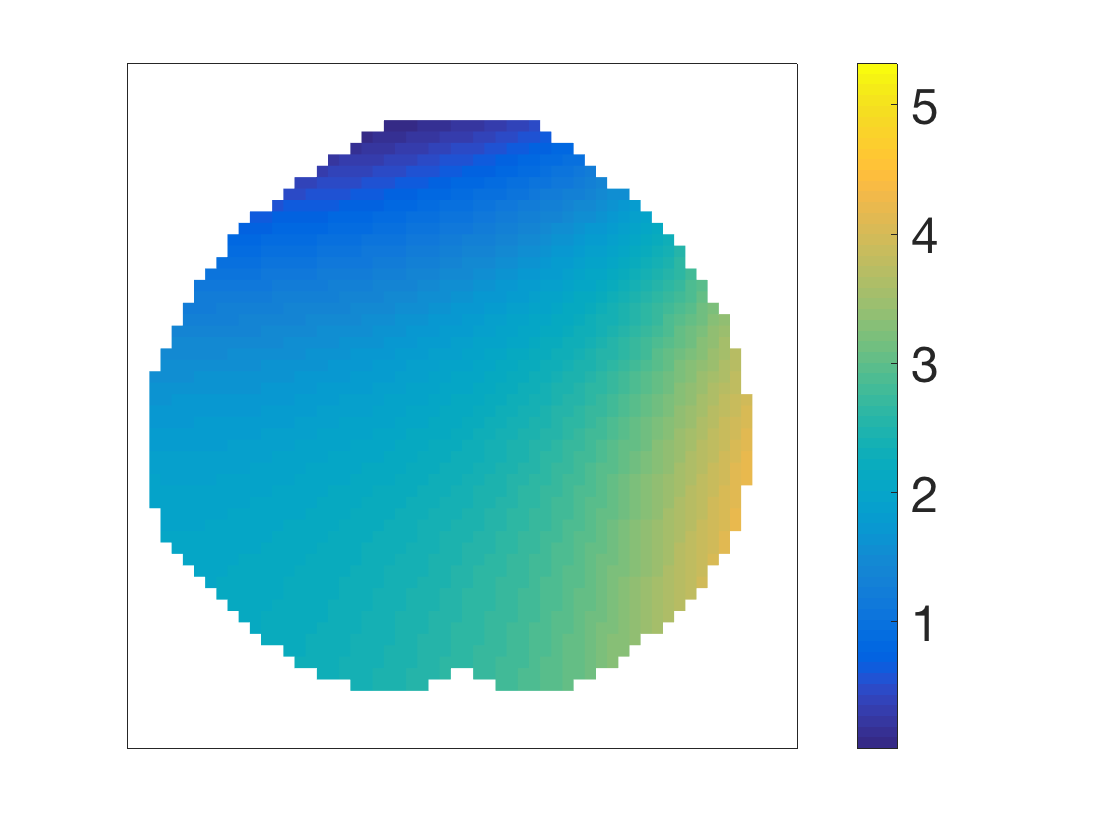} &\hspace{-0.7in}\includegraphics[scale=0.14]{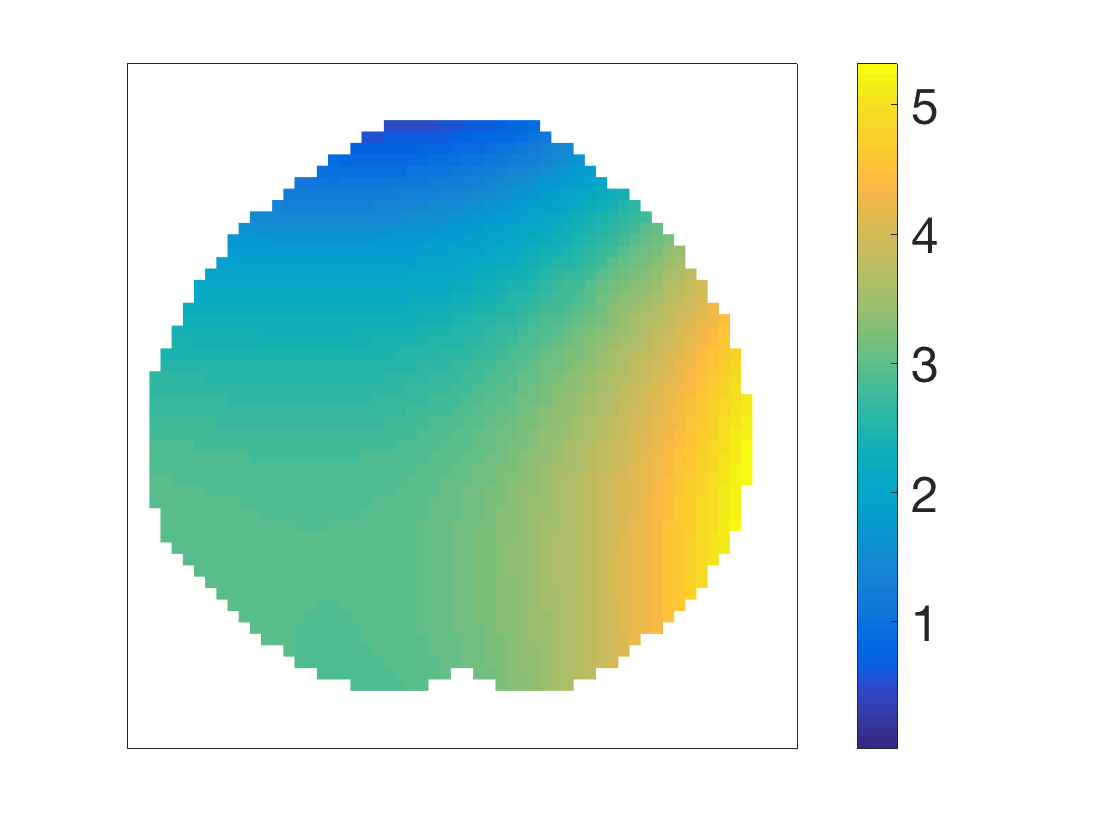}\\ [-10pt]
			\hspace{-0.3in}$\triangle_1$ & & &\\
			
			\includegraphics[scale=0.14]{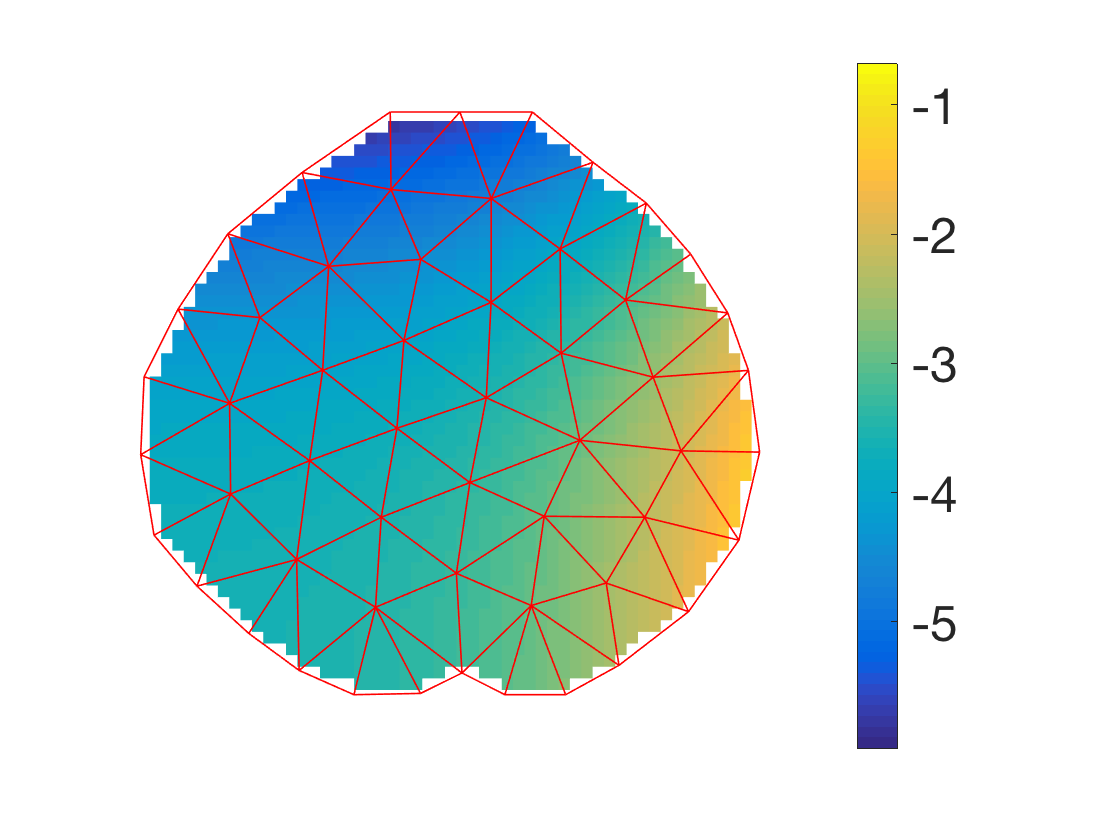} &\hspace{-0.8in}\includegraphics[scale=0.14]{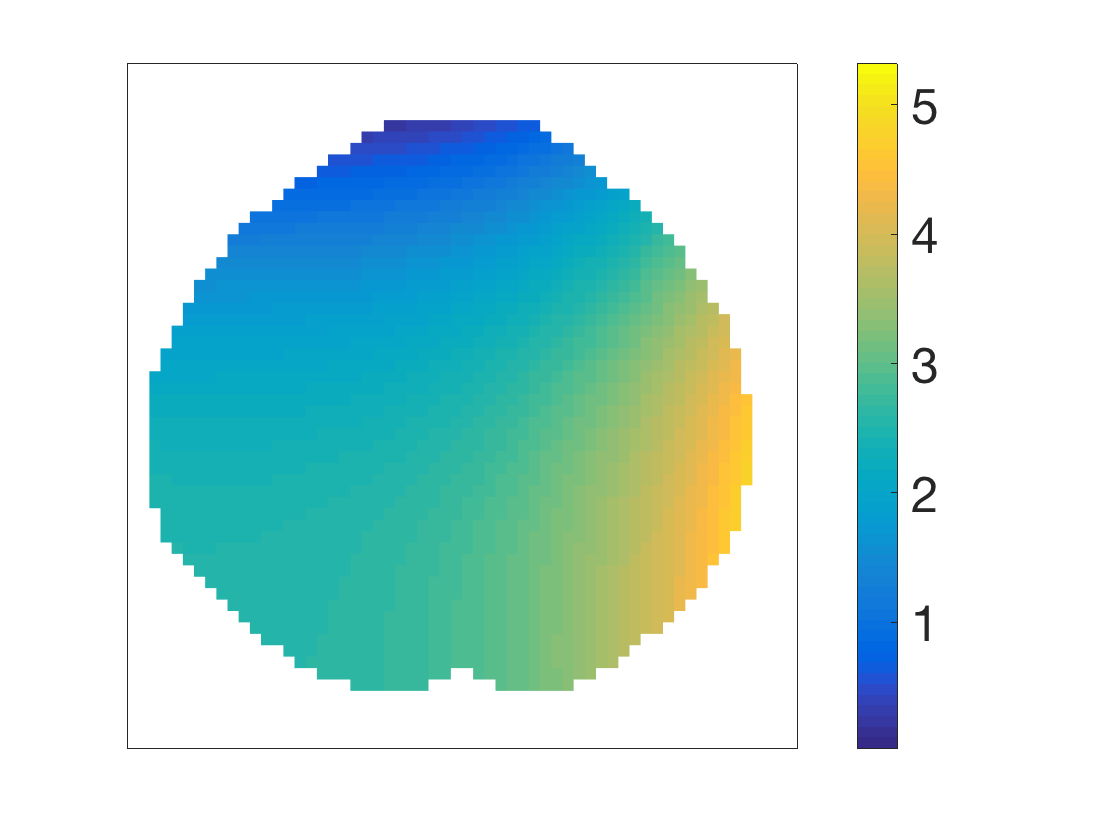} &\hspace{-0.7in}\includegraphics[scale=0.14]{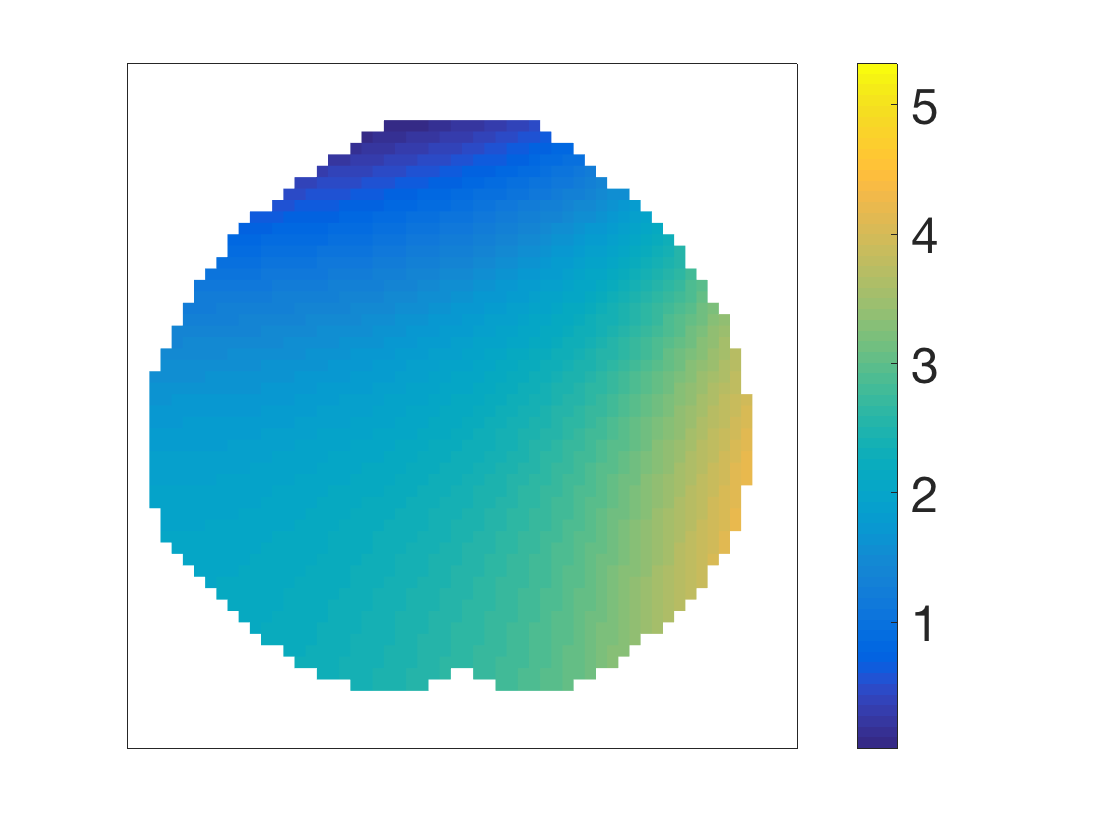} &\hspace{-0.7in}\includegraphics[scale=0.14]{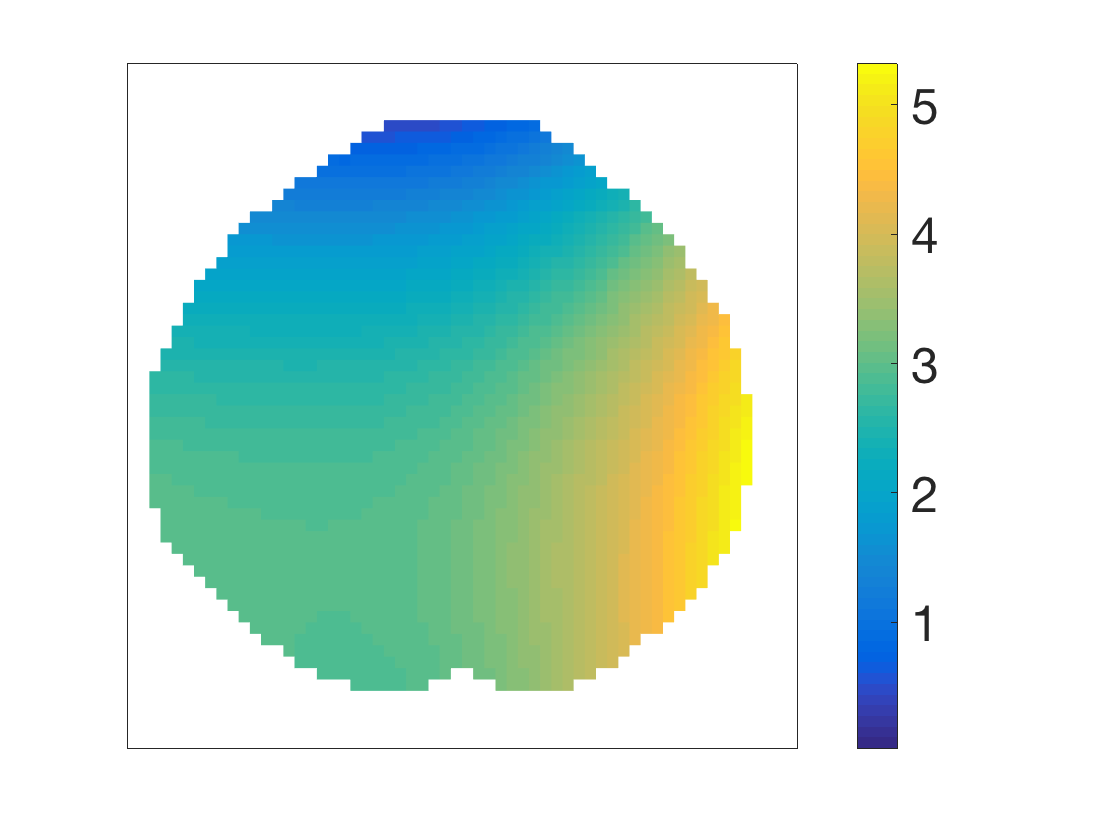}\\ [-10pt]
			\hspace{-0.3in}$\triangle_2$ & & &\\
			
			\includegraphics[scale=0.14]{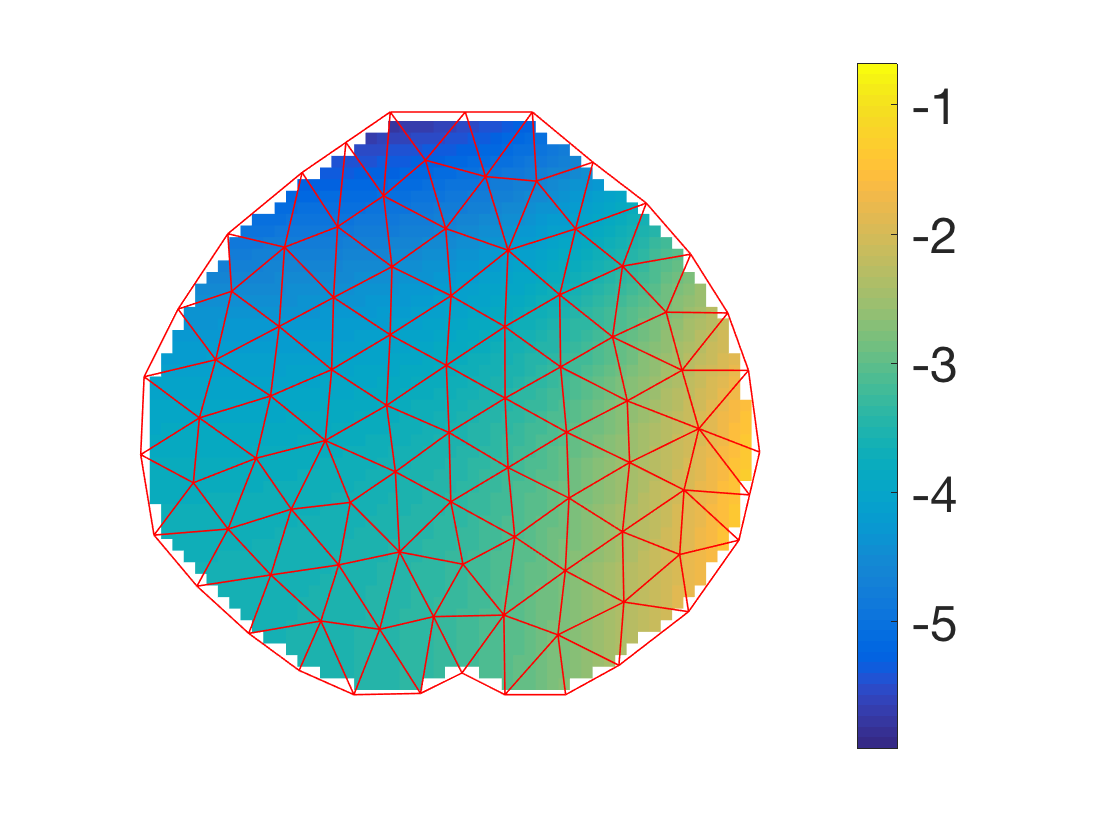} &\hspace{-0.8in}\includegraphics[scale=0.14]{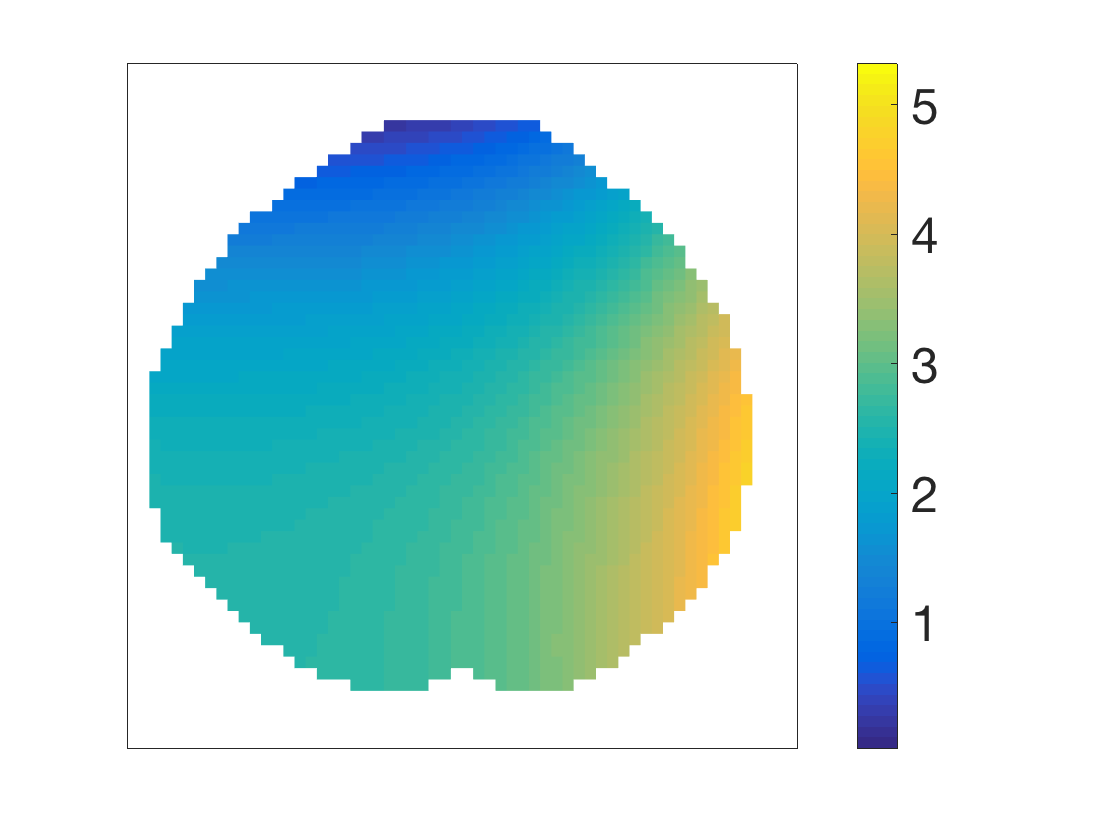} &\hspace{-0.7in}\includegraphics[scale=0.14]{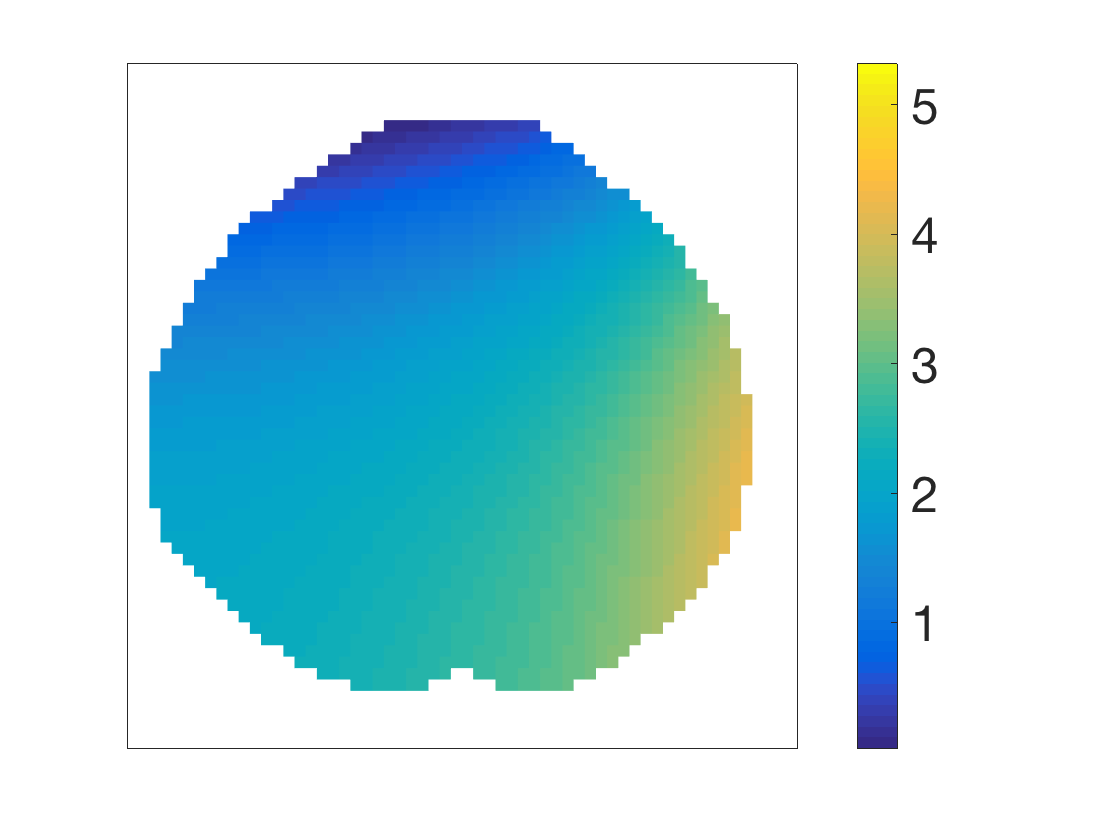} &\hspace{-0.7in}\includegraphics[scale=0.14]{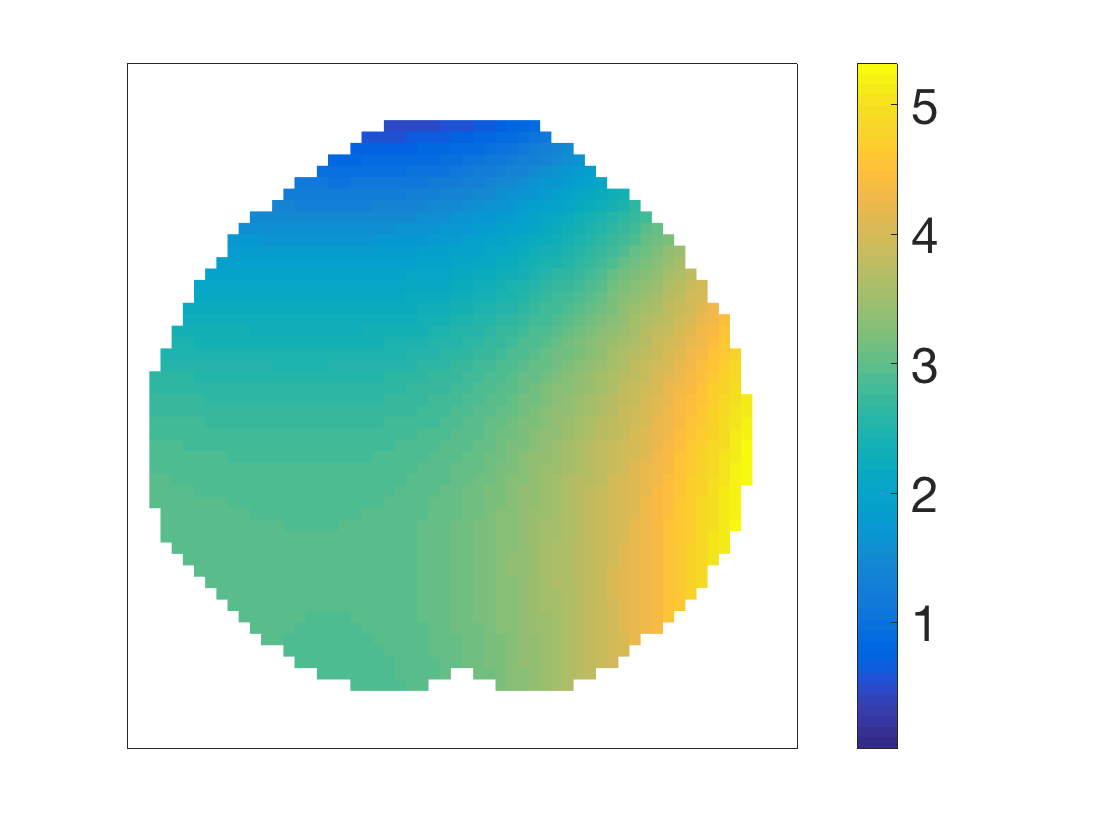}\\ [-10pt]
			\hspace{-0.3in}$\triangle_3$ & & &\\
			& & &
		\end{tabular}
	\end{center}
	\caption{SCCs for cubic function with $n=50$ and $\alpha=0.01$.}
	\label{FIG:S05}
\end{figure}

\begin{figure}
	\begin{center}
		\begin{tabular}{cccc}
			\hspace{-0.5in}Triangulation &\hspace{-1.3in}$\widehat{\mu}$ &\hspace{-1.3in}Lower SCC &\hspace{-1.3in}Upper SCC\\
			\includegraphics[scale=0.14]{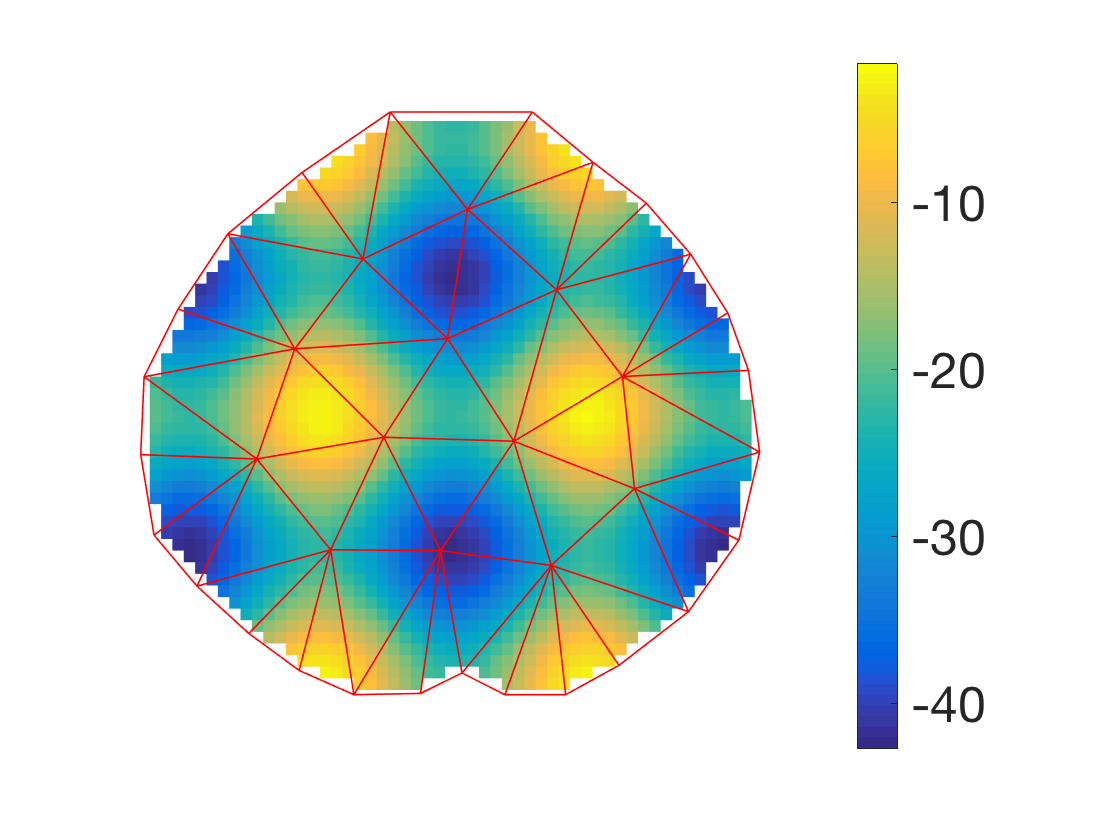} &\hspace{-0.8in}\includegraphics[scale=0.14]{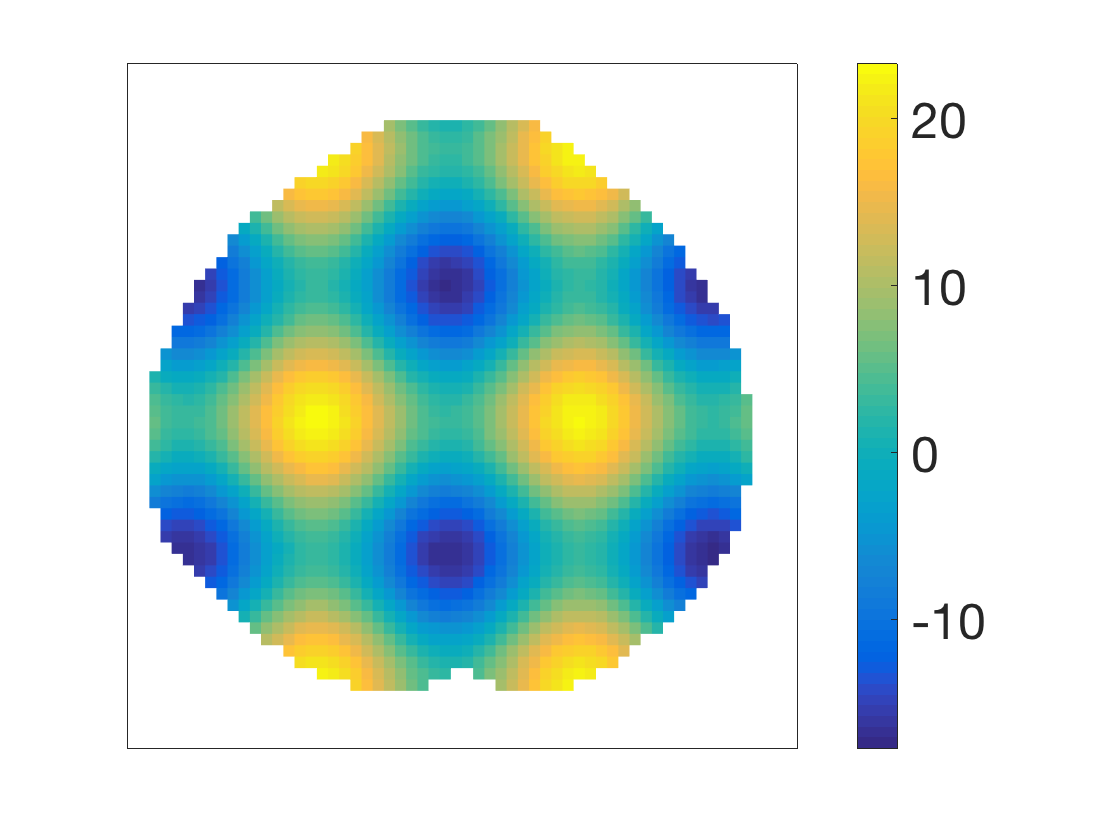} &\hspace{-0.7in}\includegraphics[scale=0.14]{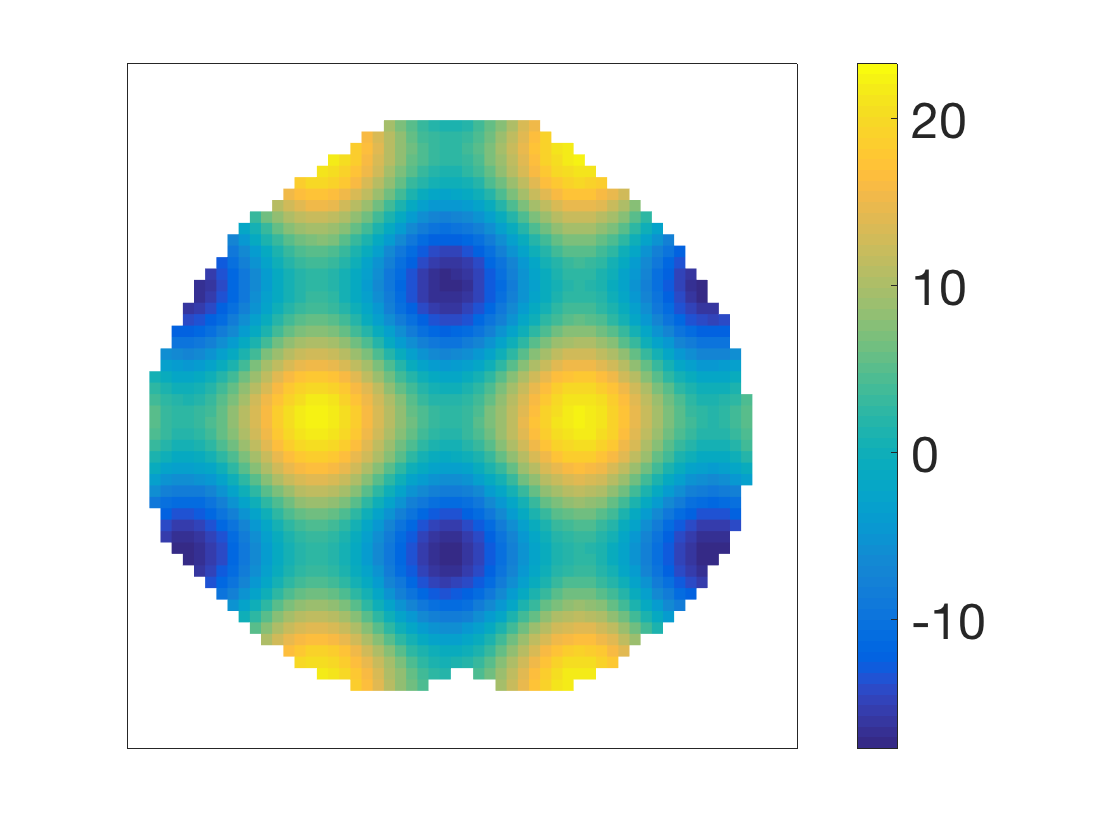} &\hspace{-0.7in}\includegraphics[scale=0.14]{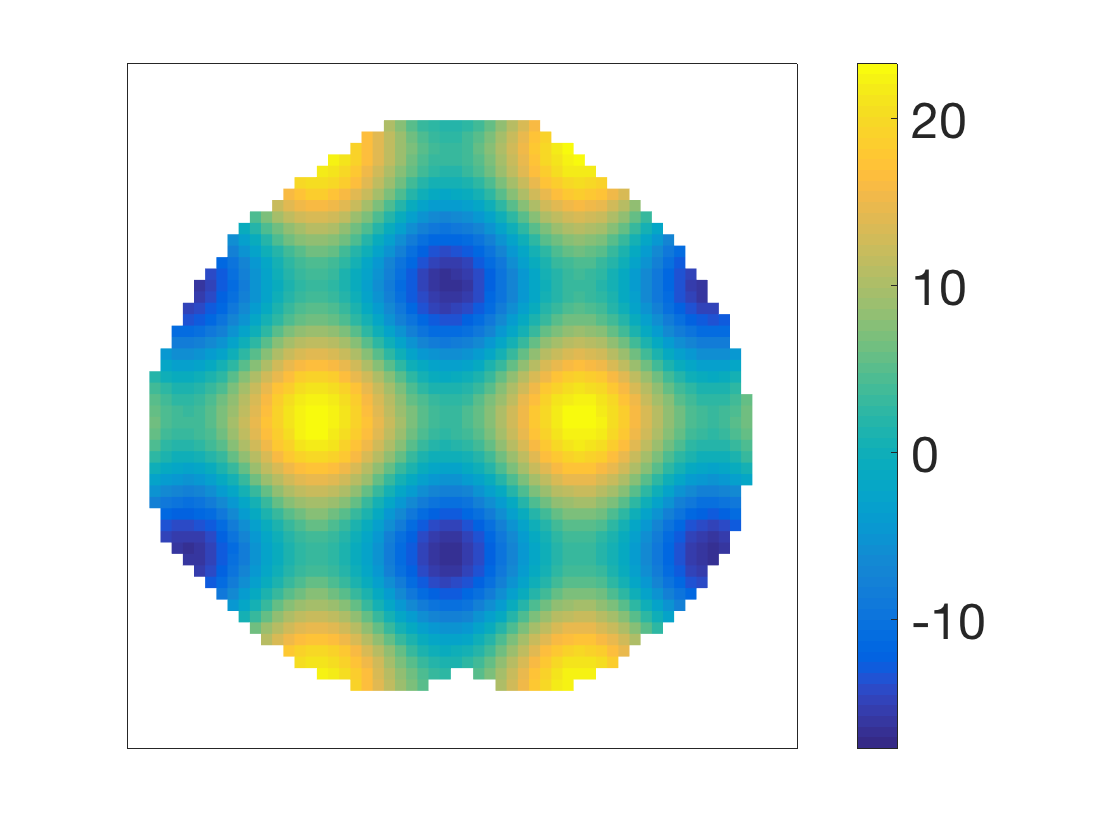}\\ [-10pt]
			\hspace{-0.3in}$\triangle_1$ & & &\\
			
			\includegraphics[scale=0.14]{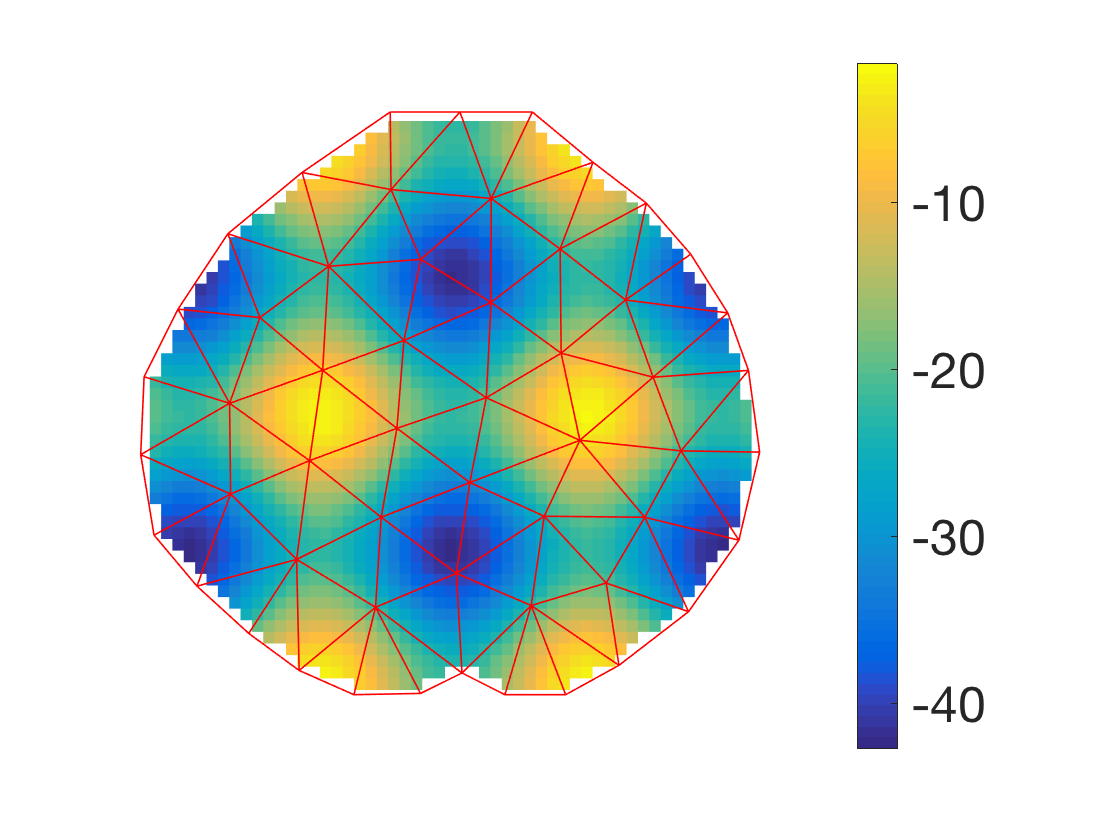} &\hspace{-0.8in}\includegraphics[scale=0.14]{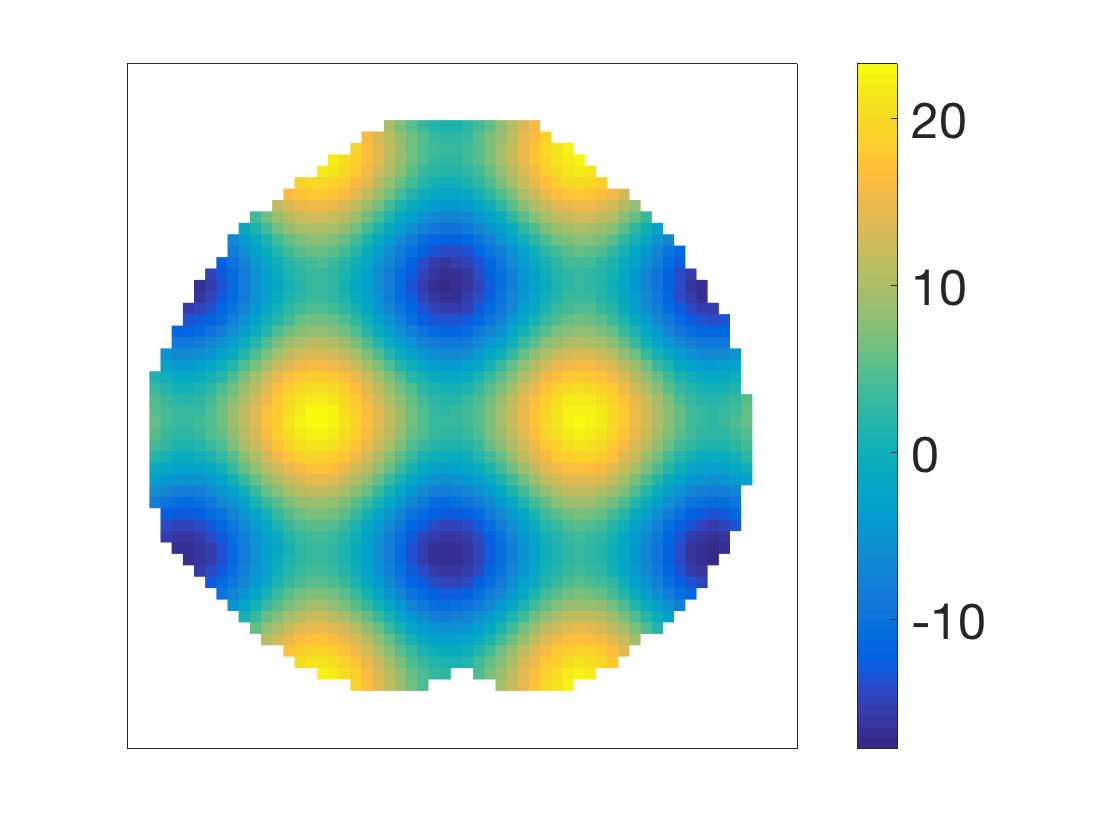} &\hspace{-0.7in}\includegraphics[scale=0.14]{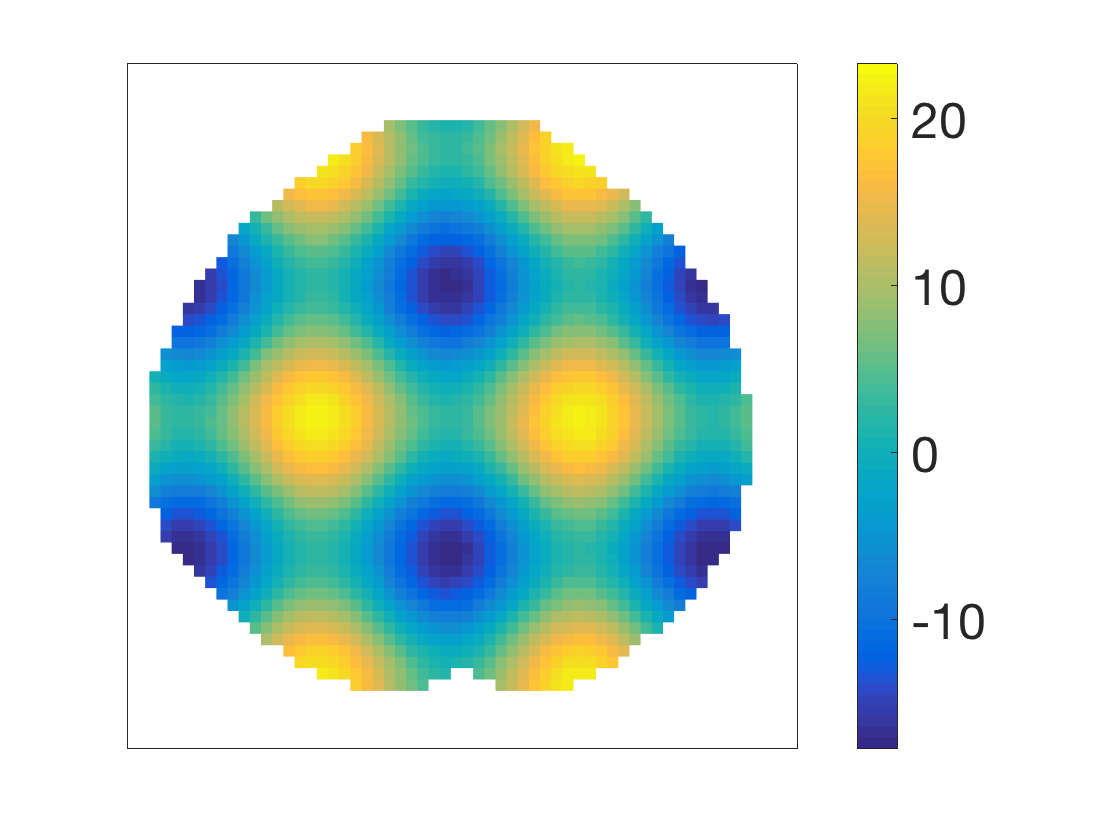} &\hspace{-0.7in}\includegraphics[scale=0.14]{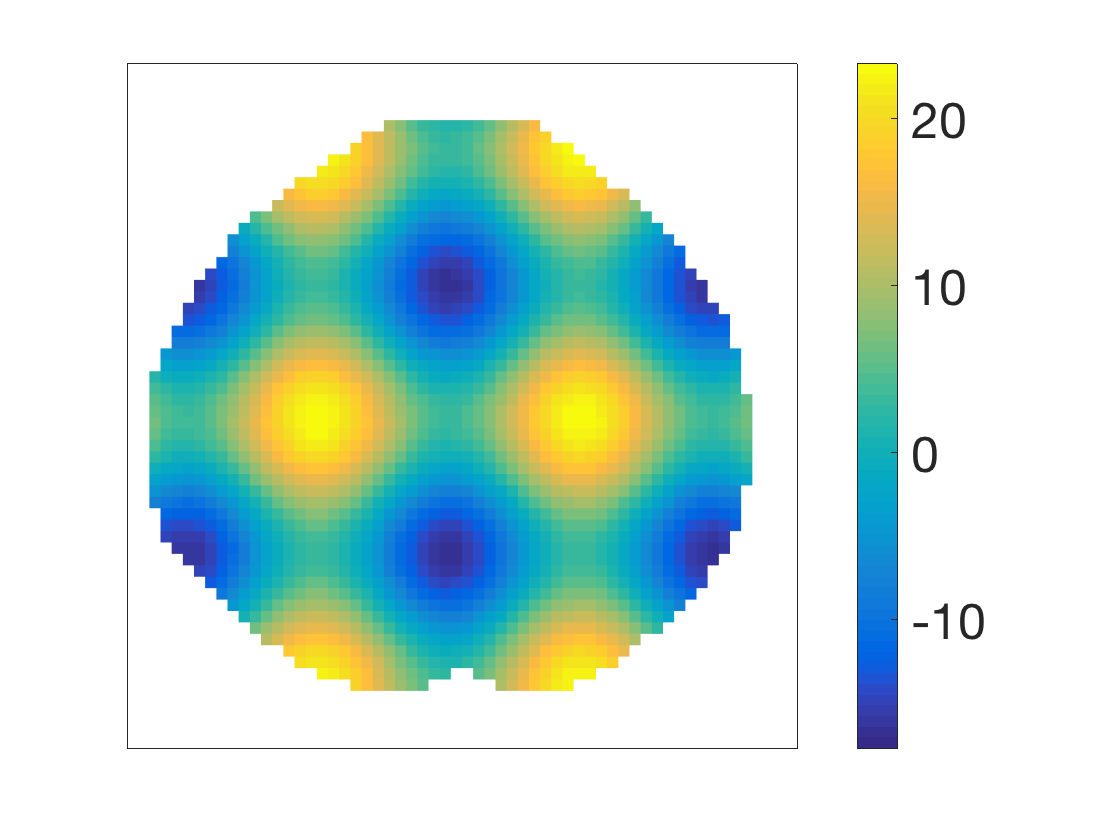}\\ [-10pt]
			\hspace{-0.3in}$\triangle_2$ & & &\\
			
			\includegraphics[scale=0.14]{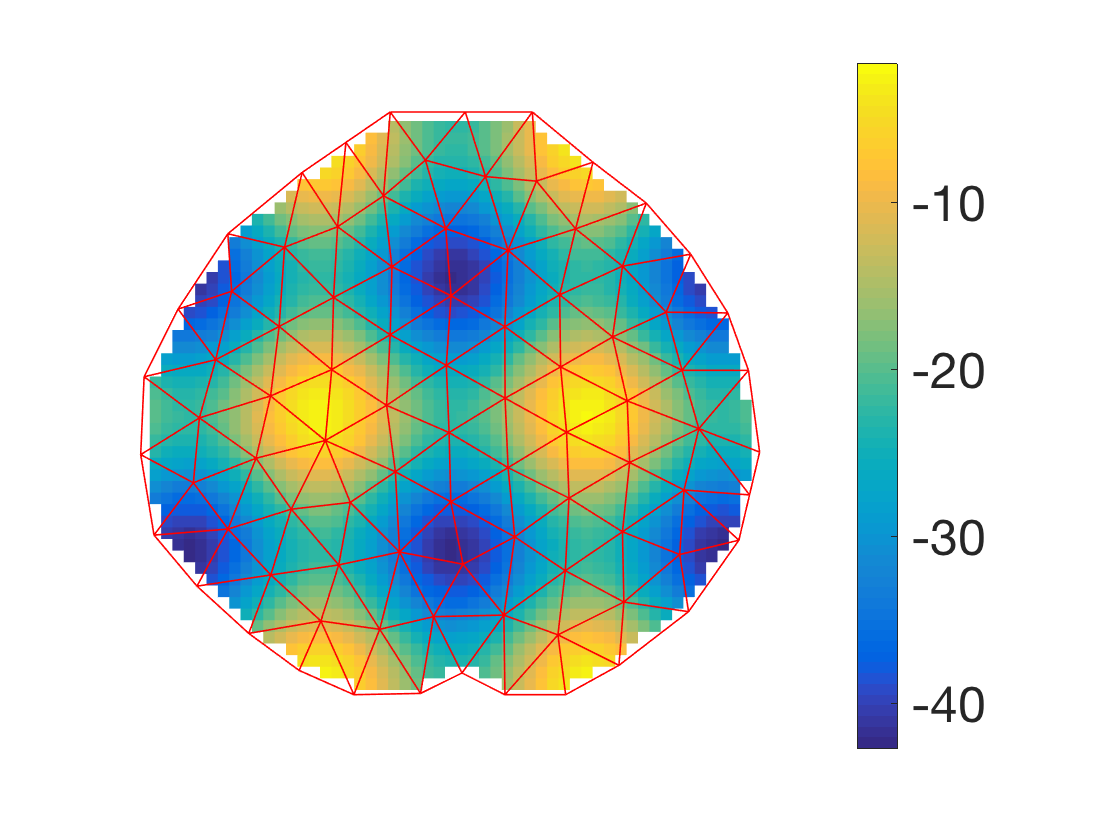} &\hspace{-0.8in}\includegraphics[scale=0.14]{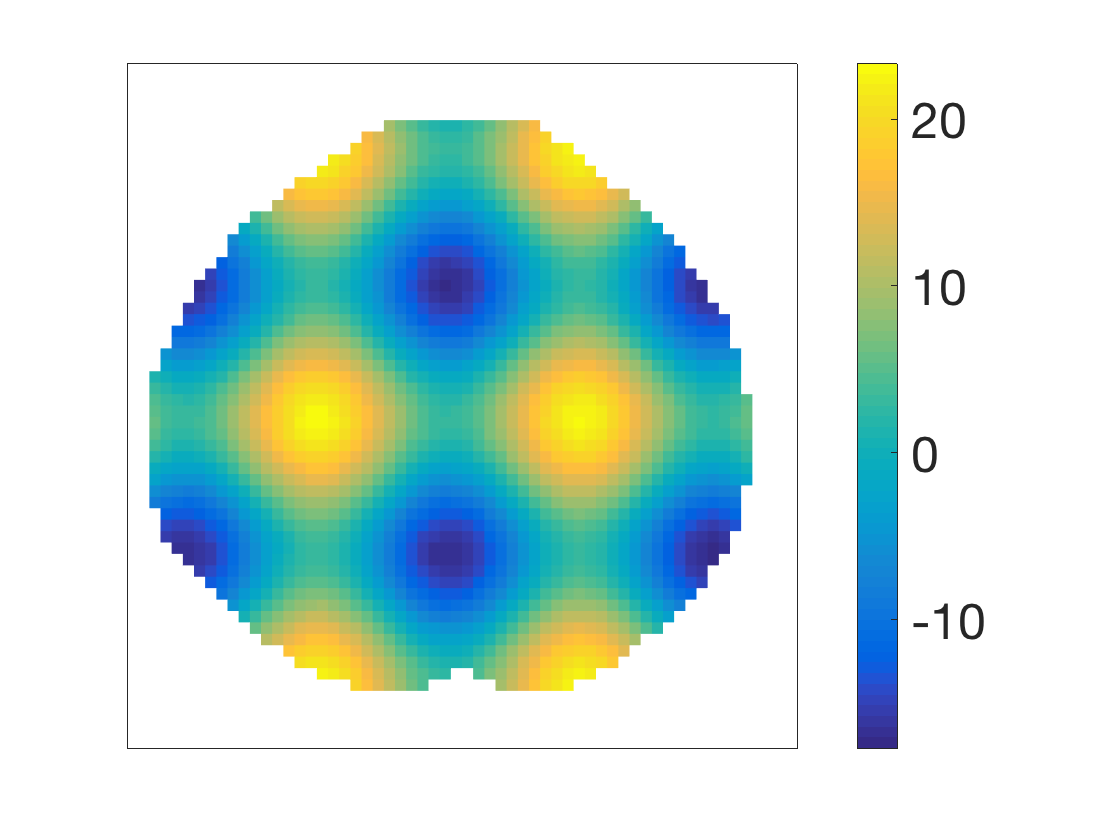} &\hspace{-0.7in}\includegraphics[scale=0.14]{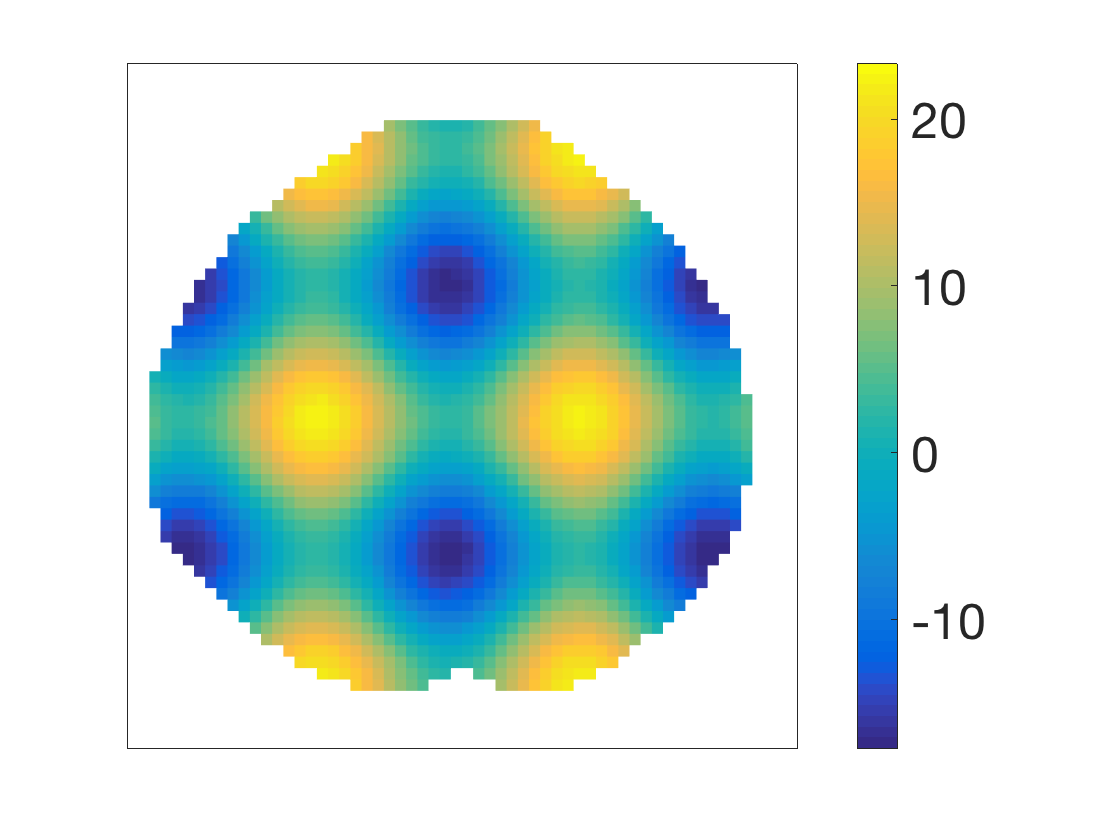} &\hspace{-0.7in}\includegraphics[scale=0.14]{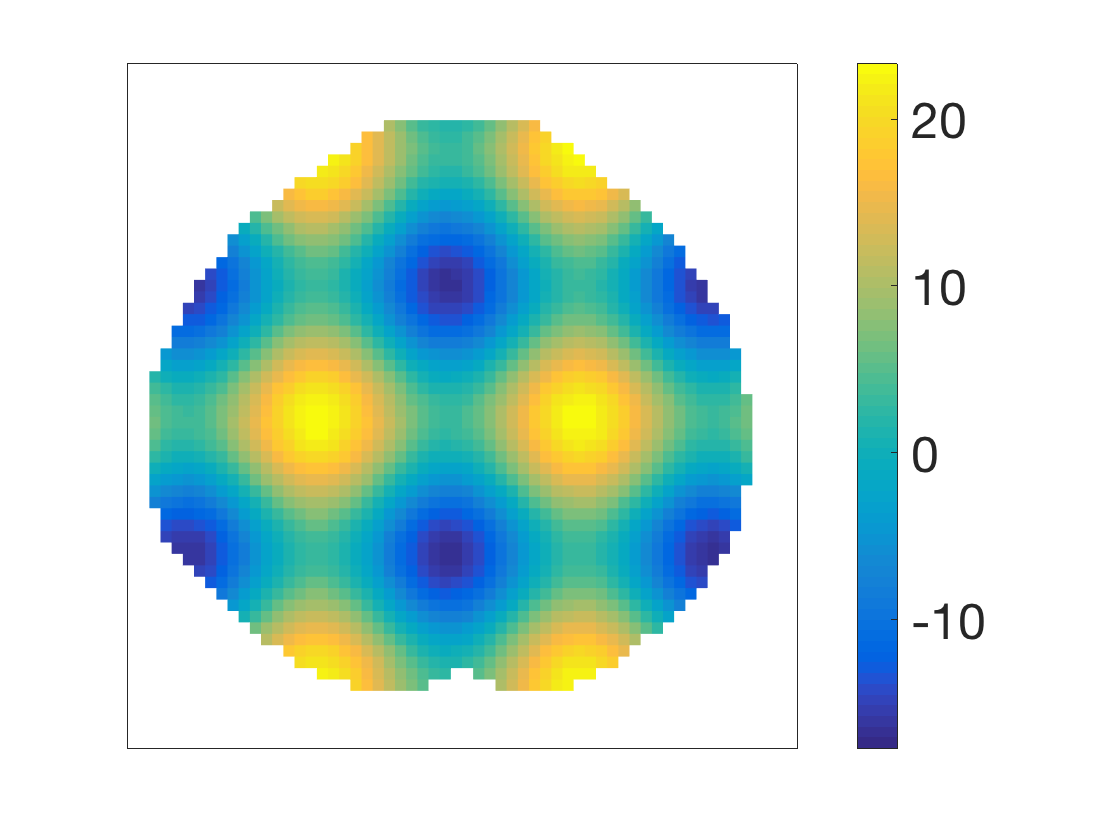}\\ [-10pt]
			\hspace{-0.3in}$\triangle_3$ & & &\\
			& & &
		\end{tabular}
	\end{center}
	\caption{SCCs for sine function with $n=50$ and $\alpha=0.01$.}
	\label{FIG:S06}
\end{figure}

\begin{figure}
	\begin{center}
		\begin{tabular}{cc}
			\includegraphics[scale=0.14]{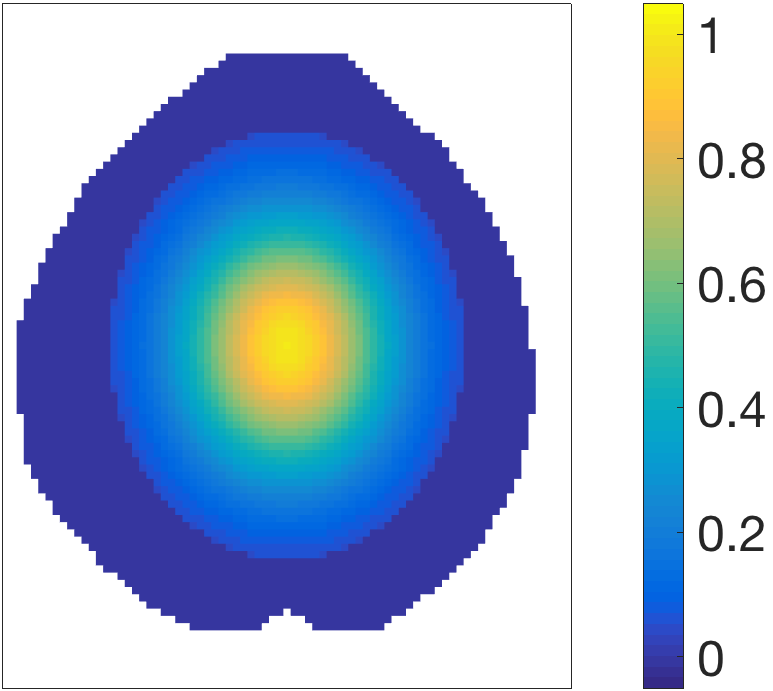} & \includegraphics[scale=0.14]{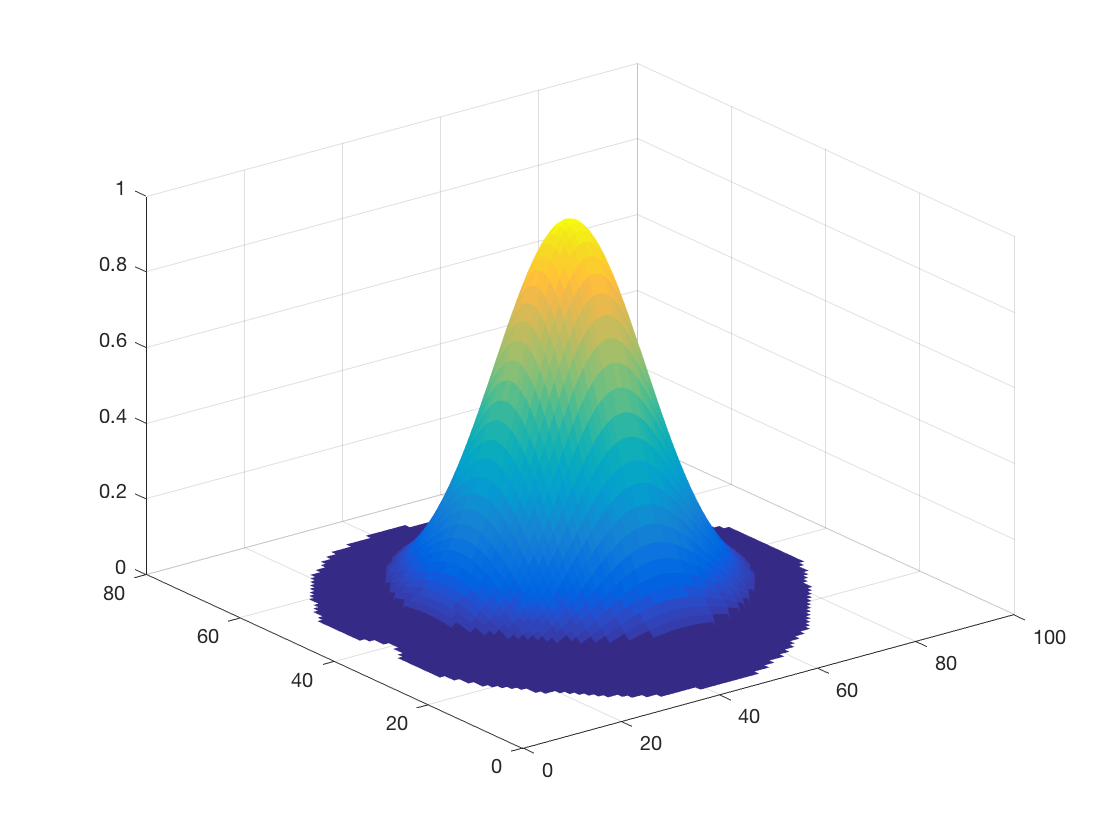}
		\end{tabular}
		\caption{True mean function: (a) image map and (b) surface plot.}
		\label{Fig:S07}
	\end{center}
\end{figure}

{\renewcommand{\baselinestretch}{1.2}
	\begin{figure}{h}
		\begin{center}
			\begin{tabular}{ccccc}
				\includegraphics[scale=0.14]{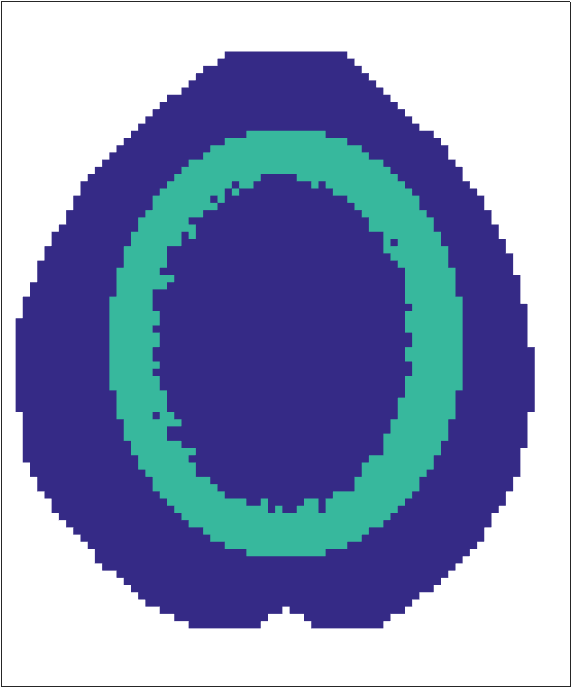} &\includegraphics[scale=0.14]{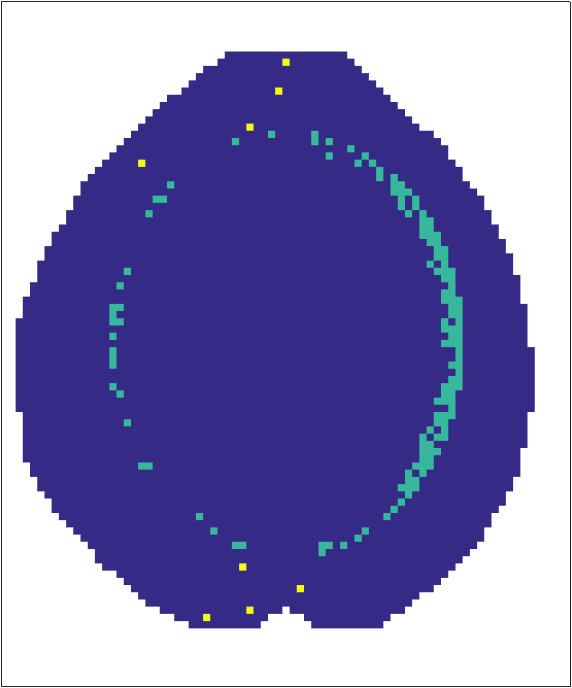} &\includegraphics[scale=0.14]{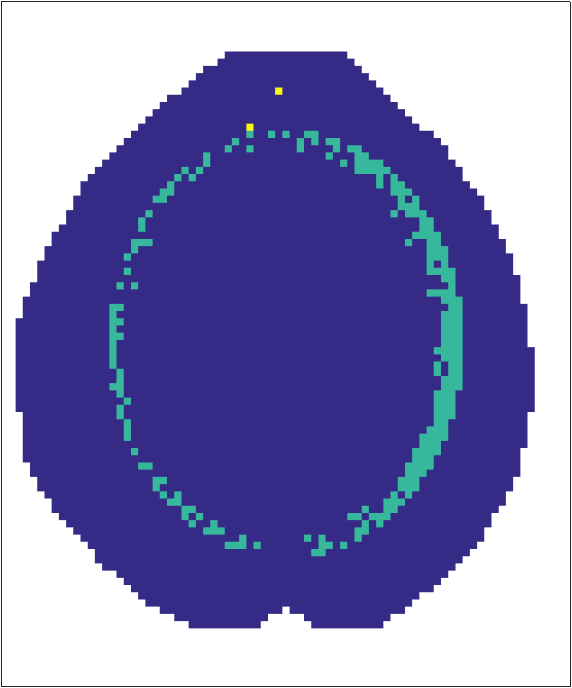} &\includegraphics[scale=0.14]{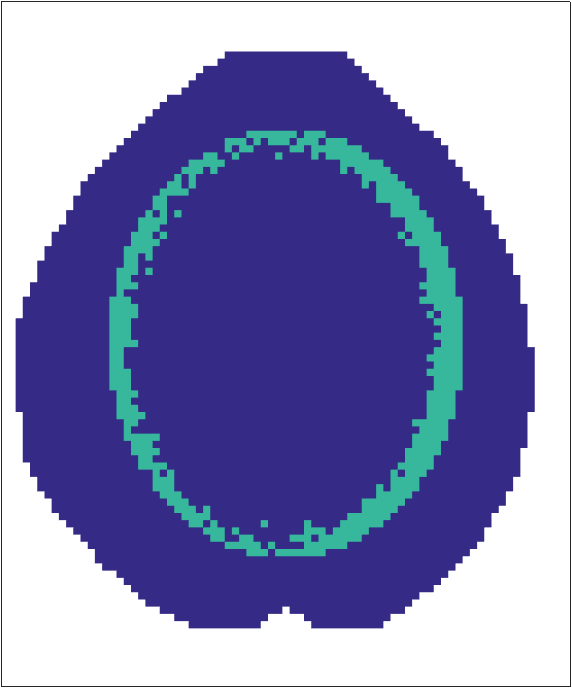} &\includegraphics[scale=0.14]{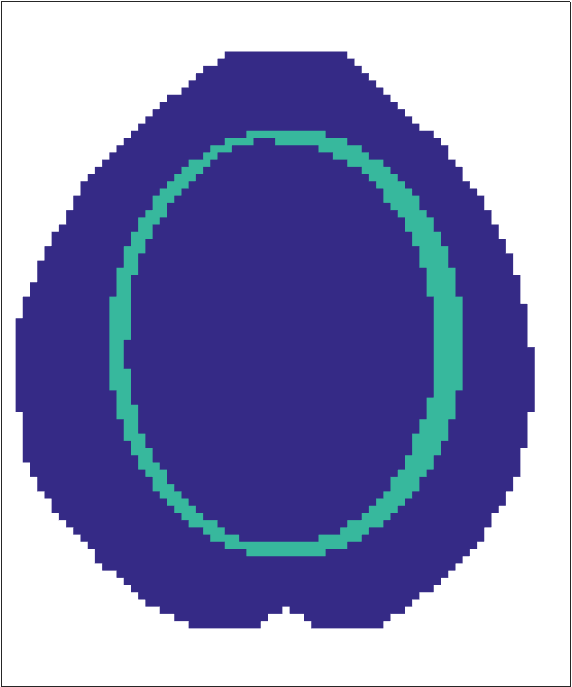}\\[6pt]
				(a) Bonferroni &(b) Cluster (0.1) &(b) Cluster (0.05) &(c) Cluster (0.01) &(d) SCC
			\end{tabular}
		\end{center}
		\caption{Signal discovery for one typical replication. Blue area shows the pixels correctly detected; yellow area shows the false positive pixels; and green area shows the false negative pixels.}
		\label{FIG:S08}
\end{figure}}

\begin{table}[t]
	\caption{Empirical coverage rates of the SCCs ($N=3682$).}
	\begin{center}
		\begin{tabular}{ccccccccccccccc} \hline \hline
			\multirow{2}{*}{$n$} &\multicolumn{3}{c}{$\alpha=0.10$} &\multicolumn{3}{c}{$\alpha=0.05$} &\multicolumn{3}{c}{$\alpha=0.01$}\\ \cline{2-10}
			&$\triangle_1$ &$\triangle_2$ &$\triangle_3$ &$\triangle_1$ &$\triangle_2$ &$\triangle_3$ &$\triangle_1$ &$\triangle_2$ &$\triangle_3$\\ \hline
			
			\multicolumn{10}{c}{$\mu(\bs{z})=20\left\{(z_1-0.5)^2+(z_2-0.5)^2\right\}$}\\ \hline
			\multirow{2}{*}{50} & 0.871 & 0.876 & 0.876 & 0.937 & 0.938 & 0.937 & 0.982 & 0.984 & 0.984 \\
			& (0.643) & (0.644) & (0.644) & (0.731) & (0.732) & (0.733) & (0.902) & (0.903) & (0.903) \\
			\multirow{2}{*}{100} & 0.885 & 0.881 & 0.882 & 0.939 & 0.942 & 0.941 & 0.979 & 0.979 & 0.979 \\
			& (0.460) & (0.458) & (0.458) & (0.522) & (0.521) & (0.521) & (0.643) & (0.641) & (0.642)  \\
			\multirow{2}{*}{200} & 0.901 & 0.902 & 0.883 & 0.949 & 0.949 & 0.941 & 0.987 & 0.988 & 0.987  \\
			& (0.330) & (0.331) & (0.326) & (0.374) & (0.375) & (0.370) & (0.460) & (0.461) & (0.457)  \\ \hline
			
			\multicolumn{10}{c}{$\mu(\bs{z})=5\exp\left[-15\left\{(z_1-0.5)^2+(z_2-0.5)^2\right\}\right]+0.5$}\\ \hline
			\multirow{2}{*}{50} & 0.868 & 0.871 & 0.871 & 0.934 & 0.934 & 0.934 & 0.982 & 0.984 & 0.982 \\
			& (0.643) & (0.644) & (0.644) & (0.731) & (0.732) & (0.733) & (0.902) & (0.903) & (0.903) \\
			\multirow{2}{*}{100} & 0.896 & 0.893 & 0.880 & 0.945 & 0.944 & 0.938 & 0.980 & 0.981 & 0.979 \\
			& (0.465) & (0.464) & (0.458) & (0.527) & (0.526) & (0.521) & (0.648) & (0.647) & (0.642)  \\
			\multirow{2}{*}{200} & 0.901 & 0.899 & 0.898 & 0.947 & 0.947 & 0.949 & 0.987 & 0.988 & 0.988  \\
			& (0.330) & (0.331) & (0.331) & (0.374) & (0.375) & (0.376) & (0.460) & (0.461) & (0.462)  \\ \hline
			
			\multicolumn{10}{c}{$\mu(\bs{z})=3.2(-z_1^3+z_2^3)+2.4$}\\ \hline	
			\multirow{2}{*}{50} & 0.860 & 0.870 & 0.869 & 0.927 & 0.931 & 0.929 & 0.985 & 0.987 & 0.987 \\
			& (0.628) & (0.628) & (0.629) & (0.716) & (0.716) & (0.718) & (0.887) & (0.887) & (0.889) \\
			\multirow{2}{*}{100} & 0.892 & 0.894 & 0.895 & 0.942 & 0.947 & 0.947 & 0.982 & 0.983 & 0.983  \\
			& (0.451) & (0.451) & (0.452) & (0.514) & (0.514) & (0.515) & (0.635) & (0.635) & (0.635)  \\
			\multirow{2}{*}{200} & 0.899 & 0.902 & 0.898 & 0.942 & 0.947 & 0.949 & 0.988 & 0.988 & 0.989  \\
			& (0.320) & (0.320) & (0.320) & (0.364) & (0.365) & (0.365) & (0.451) & (0.451) & (0.451)  \\ \hline
			
			\multicolumn{10}{c}{$\mu(\bs{z})=-10[\sin\{5\pi (z_{1}+0.22)\}-\sin\{5\pi (z_{2}-0.18)\}]+2.8$}\\ \hline
			\multirow{2}{*}{50} & 0.885 & 0.892 & 0.867 & 0.940 & 0.943 & 0.928 & 0.983 & 0.985 & 0.982  \\
			& (0.703) & (0.717) & (0.719) & (0.790) & (0.804) & (0.807) & (0.959) & (0.973) & (0.977) \\
			\multirow{2}{*}{100} & 0.894 & 0.890 & 0.883 & 0.943 & 0.946 & 0.934 & 0.979 & 0.981 & 0.977 \\
			& (0.500) & (0.509) & (0.516) & (0.562) & (0.571) & (0.578) & (0.681) & (0.691) & (0.699)  \\
			\multirow{2}{*}{200} & 0.899 & 0.899 & 0.892 & 0.946 & 0.947 & 0.946 & 0.988 & 0.988 & 0.987 \\
			& (0.354) & (0.361) & (0.368) & (0.398) & (0.405) & (0.412) & (0.483) & (0.490) & (0.497)  \\ 
			\hline \hline
			
		\end{tabular}
	\end{center}
	\label{TAB:S02}
\end{table}

\begin{table}[t]
	\caption{Type I error and empirical power of two-sample test}
	\begin{center}
		\begin{tabular}{p{0.80cm}p{0.85cm}p{0.85cm}p{0.85cm}p{0.85cm}p{0.85cm}p{0.85cm}p{0.85cm}p{0.85cm}p{0.85cm}p{0.85cm}}
			\hline \hline
			\multirow{2}{*}{$n$}
			& \multirow{2}{*}{$\triangle$}
			& \multicolumn{9}{c}{$\delta$}\\
			\cline{3-11}
			& & 0.00  & 0.10 &  0.20 &  0.30  & 0.40  & 0.50  & 0.60  & 0.70 & 0.80\\
			\hline
			\multicolumn{11}{c}{ $\alpha=0.10$}\\
			\hline
			\multirow{3}{*}{50} & 49 & 0.110 & 0.204 & 0.374 & 0.620 & 0.842 & 0.974 & 1.000 & 1.000 & 1.000 \\
			& 80 & 0.102 & 0.194 & 0.366 & 0.612 & 0.844 & 0.972 & 1.000 & 1.000 & 1.000  \\
			& 144 & 0.101 & 0.192 & 0.367 & 0.616 & 0.844 & 0.976 & 1.000 & 1.000 & 1.000  \\
			\multirow{3}{*}{100} & 49 & 0.108 & 0.249 & 0.560 & 0.886 & 0.999 & 1.000 & 1.000 & 1.000 & 1.000 \\
			& 80 & 0.106 & 0.242 & 0.549 & 0.884 & 1.000 & 1.000 & 1.000 & 1.000 & 1.000 \\
			& 144 & 0.103 & 0.247 & 0.559 & 0.886 & 0.999 & 1.000 & 1.000 & 1.000 & 1.000 \\
			\multirow{3}{*}{200} & 49 & 0.087 & 0.334 & 0.848 & 1.000 & 1.000 & 1.000 & 1.000 & 1.000 & 1.000 \\
			& 80 & 0.085 & 0.319 & 0.836 & 1.000 & 1.000 & 1.000 & 1.000 & 1.000 & 1.000 \\
			& 144 & 0.082 & 0.325 & 0.844 & 1.000 & 1.000 & 1.000 & 1.000 & 1.000 & 1.000  \\\hline
			\multicolumn{11}{c}{ $\alpha=0.05$}\\
			\hline
			\multirow{3}{*}{50} & 49 & 0.053 & 0.110 & 0.250 & 0.474 & 0.700 & 0.900 & 0.992 & 1.000 & 1.000 \\
			& 80 & 0.049 & 0.101 & 0.244 & 0.467 & 0.692 & 0.894 & 0.988 & 1.000 & 1.000  \\
			& 144 & 0.051 & 0.107 & 0.252 & 0.472 & 0.699 & 0.899 & 0.989 & 1.000 & 1.000 \\
			\multirow{3}{*}{100} & 49 & 0.058 & 0.153 & 0.414 & 0.779 & 0.973 & 1.000 & 1.000 & 1.000 & 1.000  \\
			& 80 &  0.056 & 0.150 & 0.405 & 0.766 & 0.966 & 1.000 & 1.000 & 1.000 & 1.000\\
			& 144 & 0.056 & 0.151 & 0.415 & 0.770 & 0.969 & 1.000 & 1.000 & 1.000 & 1.000  \\
			\multirow{3}{*}{200} & 49 & 0.037 & 0.217 & 0.697 & 0.992 & 1.000 & 1.000 & 1.000 & 1.000 & 1.000 \\
			& 80 & 0.037 & 0.211 & 0.685 & 0.992 & 1.000 & 1.000 & 1.000 & 1.000 & 1.000  \\
			& 144 & 0.035 & 0.217 & 0.696 & 0.992 & 1.000 & 1.000 & 1.000 & 1.000 & 1.000 \\\hline
			
			\multicolumn{11}{c}{ $\alpha=0.01$}\\
			\hline
			\multirow{3}{*}{50} & 49 & 0.014 & 0.026 & 0.088 & 0.241 & 0.462 & 0.692 & 0.882 & 0.982 & 1.000 \\
			& 80 & 0.012 & 0.025 & 0.087 & 0.228 & 0.453 & 0.677 & 0.875 & 0.977 & 1.000 \\
			& 144 & 0.010 & 0.027 & 0.089 & 0.235 & 0.463 & 0.690 & 0.882 & 0.983 & 1.000 \\
			\multirow{3}{*}{100} & 49 & 0.013 & 0.032 & 0.181 & 0.509 & 0.825 & 0.976 & 1.000 & 1.000 & 1.000 \\
			& 80 & 0.012 & 0.032 & 0.172 & 0.486 & 0.817 & 0.978 & 0.999 & 1.000 & 1.000 \\
			& 144 & 0.012 & 0.032 & 0.186 & 0.509 & 0.828 & 0.979 & 0.999 & 1.000 & 1.000 \\
			\multirow{3}{*}{200} & 49 & 0.009 & 0.071 & 0.417 & 0.890 & 0.999 & 1.000 & 1.000 & 1.000 & 1.000 \\
			& 80 & 0.009 & 0.065 & 0.402 & 0.884 & 0.998 & 1.000 & 1.000 & 1.000 & 1.000 \\
			& 144 &0.009 & 0.068 & 0.420 & 0.884 & 0.998 & 1.000 & 1.000 & 1.000 & 1.000 \\\hline \hline
		\end{tabular}
	\end{center}
	\label{TAB:S01}
\end{table}

\begin{table}[t]
	\caption{FPRs, FNRs and FDRs for different methods.}
	\begin{center}
		\begin{tabular}{ccccccc} \hline \hline
			\multirow{2}{*}{$n$} &\multirow{2}{*}{Criterion} &\multicolumn{5}{c}{Method}\\ \cline{3-7}
			& &Bonferroni &Cluster (0.10) &Cluster (0.05) &Cluster (0.01) &SCC\\ \hline
			\multirow{3}{*}{100} &FPR &0.0000 &0.0472 &0.0233 &0.0067 &0.0090\\
			&FNR &0.3158 &0.1288 &0.1567 &0.2071 &0.1868\\
			&FDR &0.0000 &0.0876 &0.0449 &0.0142 &0.0169\\ \hline
			
			\multirow{3}{*}{200} &FPR &0.0000 &0.0534 &0.0260 &0.0044 &0.0043\\
			&FNR &0.2497 &0.0836 &0.1051 &0.1485 &0.1377\\
			&FDR &0.0000 &0.0893 &0.0478 &0.0081 &0.0062\\ \hline\hline
		\end{tabular}
	\end{center}
	\label{TAB:S00}
\end{table}

\bibliographystyle{asa}
\bibliography{biomreferences}

\begin{thebibliography}{35}
\newcommand{\enquote}[1]{``#1''}
\expandafter\ifx\csname natexlab\endcsname\relax\def\natexlab#1{#1}\fi

\bibitem[{Adler(1990)}]{Adler:90}
Adler, R.~J. (1990), \textit{An introduction to continuity, extrema, and
  related topics for general Gaussian processes}, Hayward, CA: Institute of
  Mathematical Statistics.

\bibitem[{Adler and Taylor(2007)}]{Adler:Taylor:07}
Adler, R.~J. and Taylor, J.~E. (2007), \textit{Random fields and geometry},
  Springer.

\bibitem[{Cai et~al.(2019)Cai, Liu, Wang, and Yang}]{Cai:Liu:Wang:Yang:19}
Cai, L., Liu, R., Wang, S., and Yang, L. (2019), \enquote{Simultaneous
  confidence bands for mean and variance functions based on deterministic
  design,} \textit{Statistica Sinica}, 29, 505--525.

\bibitem[{Cao(2014)}]{Cao:2014}
Cao, G. (2014), \enquote{Simultaneous confidence bands for derivatives of
  dependent functional data,} \textit{Electronic Journal of Statistics}, 8,
  2639--2663.

\bibitem[{Cao and Wang(2018)}]{Cao:Wang:18}
Cao, G. and Wang, L. (2018), \enquote{Simultaneous inference for the mean of
  repeated functional data,} \textit{Journal of Multivariate Analysis}, 165,
  279--295.

\bibitem[{Cao et~al.(2012)Cao, Yang, and Todem}]{Cao:Yang:Todem:12}
Cao, G., Yang, L., and Todem, D. (2012), \enquote{Simultaneous inference for
  the mean function based on dense functional data,} \textit{Journal of
  nonparametric statistics}, 24, 359--377.

\bibitem[{Chang et~al.(2017)Chang, Lin, and Ogden}]{Chang:Lin:Ogden:2017}
Chang, C., Lin, X., and Ogden, R.~T. (2017), \enquote{Simultaneous confidence
  bands for functional regression models,} \textit{Journal of Statistical
  Planning and Inference}, 188, 67--81.

\bibitem[{Choi and Reimherr(2018)}]{Choi:Reimherr:18}
Choi, H. and Reimherr, M. (2018), \enquote{A geometric approach to confidence
  regions and bands for functional parameters,} \textit{Journal of the Royal
  Statistical Society: Series B (Statistical Methodology)}, 80, 239--260.

\bibitem[{De~Loera et~al.(2010)De~Loera, Rambau, and
  Santos}]{DeLoera:etal:2010}
De~Loera, J.~A., Rambau, J., and Santos, F. (2010), \textit{Triangulations
  Structures for algorithms and applications}, Springer.

\bibitem[{Degras(2011)}]{Degras:11}
Degras, D.~A. (2011), \enquote{Simultaneous confidence bands for nonparametric
  regression with functional data,} \textit{Statistica Sinica}, 21, 1735--1765.

\bibitem[{Degras(2017)}]{Degras:17}
--- (2017), \enquote{Simultaneous confidence bands for the mean of functional
  data,} \textit{Wiley Interdisciplinary Reviews: Computational Statistics}, 9,
  e1397.

\bibitem[{Forman et~al.(1995)Forman, Cohen, Fitzgerald, Eddy, Mintun, and
  Noll}]{Forman:etal:95}
Forman, S.~D., Cohen, J.~D., Fitzgerald, M., Eddy, W.~F., Mintun, M.~A., and
  Noll, D.~C. (1995), \enquote{Improved assessment of significant activation in
  functional magnetic resonance imaging (fMRI): use of a cluster-size
  threshold,} \textit{Magnetic Resonance in medicine}, 33, 636--647.

\bibitem[{Goldsmith et~al.(2013)Goldsmith, Greven, and
  Crainiceanu}]{Goldsmith:etal:13}
Goldsmith, J., Greven, S., and Crainiceanu, C. (2013), \enquote{Corrected
  confidence bands for functional data using principal components,}
  \textit{Biometrics}, 69, 41--51.

\bibitem[{Gu et~al.(2014)Gu, Wang, H{\"a}rdle, and
  Yang}]{Gu:Wang:Wolfgang:Yang:2014}
Gu, L., Wang, L., H{\"a}rdle, W.~K., and Yang, L. (2014), \enquote{A
  simultaneous confidence corridor for varying coefficient regression with
  sparse functional data,} \textit{Test}, 23, 806--843.

\bibitem[{Hall and Hosseini-Nasab(2006)}]{Hall:Hosseini-Nasab:2006}
Hall, P. and Hosseini-Nasab, M. (2006), \enquote{On properties of functional
  principal components analysis,} \textit{Journal of the Royal Statistical
  Society: Series B (Statistical Methodology)}, 68, 109--126.

\bibitem[{Hall et~al.(2006)Hall, M{\"u}ller, and Wang}]{Hall:Muller:Wang:06}
Hall, P., M{\"u}ller, H.-G., and Wang, J.-L. (2006), \enquote{Properties of
  principal component methods for functional and longitudinal data analysis,}
  \textit{The Annals of Statistics}, 34, 1493--1517.

\bibitem[{Krivobokova et~al.(2010)Krivobokova, Kneib, and
  Claeskens}]{Krivobokova:Kneib:Claeskens:10}
Krivobokova, T., Kneib, T., and Claeskens, G. (2010), \enquote{Simultaneous
  confidence bands for penalized spline estimators,} \textit{Journal of the
  American Statistical Association}, 105, 852--863.

\bibitem[{Lai and Schumaker(2007)}]{Lai:Schumaker:07}
Lai, M.~J. and Schumaker, L.~L. (2007), \textit{Spline functions on
  triangulations.}, Cambridge University Press.

\bibitem[{Lai and Wang(2013)}]{Lai:Wang:13}
Lai, M.~J. and Wang, L. (2013), \enquote{Bivariate penalized splines for
  regression.} \textit{Statistica Sinica}, 23, 1399--1417.

\bibitem[{Li et~al.(2013)Li, Wang, and Carroll}]{Li:Wang:Carroll:13}
Li, Y., Wang, N., and Carroll, R.~J. (2013), \enquote{Selecting the number of
  principal components in functional data,} \textit{Journal of the American
  Statistical Association}, 108, 1284--1294.

\bibitem[{Marcus et~al.(2014)Marcus, Mena, and
  Subramaniam}]{Marcus:Mena:Subramaniam:14}
Marcus, C., Mena, E., and Subramaniam, R.~M. (2014), \enquote{Brain PET in the
  diagnosis of Alzheimer's disease,} \textit{Clinical Nuclear Medicine}, 39,
  e413--22.

\bibitem[{Parsey et~al.(2010)Parsey, Ogden, Miller, Tin, Hesselgrave,
  Goldstein, Mikhno, Milak, Zanderigo, Sullivan, Oquendo, and
  Mann}]{Parsey:Ogden:Tin:10}
Parsey, R.~V., Ogden, R.~T., Miller, J.~M., Tin, A., Hesselgrave, N.,
  Goldstein, E., Mikhno, A., Milak, M., Zanderigo, F., Sullivan, G.~M.,
  Oquendo, M.~A., and Mann, J.~J. (2010), \enquote{Higher serotonin 1A binding
  in a second major depression cohort: modeling and reference region
  considerations,} \textit{Biological Psychiatry}, 68, 170--178.

\bibitem[{Parsey et~al.(2006)Parsey, Oquendo, Ogden, Olvet, Simpson, Huang,
  Van~Heertum, Arango, and Mann}]{Parsey:Oquendo:Ogden:06}
Parsey, R.~V., Oquendo, M.~A., Ogden, R.~T., Olvet, D.~M., Simpson, N., Huang,
  Y.~Y., Van~Heertum, R.~L., Arango, V., and Mann, J.~J. (2006),
  \enquote{Altered serotonin 1A binding in major depression: a [carbonyl-C-11]
  WAY100635 positron emission tomography study,} \textit{Biological
  Psychiatry}, 59, 106--113.

\bibitem[{Poldrack et~al.(2011)Poldrack, Mumford, and
  Nichols}]{Poldrack:Mumford:Nichols:11}
Poldrack, R.~A., Mumford, J.~A., and Nichols, T.~E. (2011), \textit{Statistical
  inference on images}, Cambridge University Press.

\bibitem[{Sang and Huang(2012)}]{Sang:Huang:2012}
Sang, H. and Huang, J.~Z. (2012), \enquote{A full scale approximation of
  covariance functions for large spatial data sets,} \textit{Journal of the
  Royal Statistical Society: Series B (Statistical Methodology)}, 74, 111--132.

\bibitem[{Siegmund et~al.(2011)Siegmund, Zhang, and Yakir}]{Siegmund:etal:11}
Siegmund, D., Zhang, N., and Yakir, B. (2011), \enquote{False discovery rate
  for scanning statistics,} \textit{Biometrika}, 98, 979--985.

\bibitem[{Wang et~al.(2014)Wang, Liu, Cheng, and Yang}]{Wang:Liu:Cheng:Yang:14}
Wang, J., Liu, R., Cheng, F., and Yang, L. (2014), \enquote{Oracally efficient
  estimation of autoregressive error distribution with simultaneous confidence
  band,} \textit{The Annals of Statistics}, 42, 654--668.

\bibitem[{Wang and Yang(2009)}]{Wang:Yang:2009}
Wang, J. and Yang, L. (2009), \enquote{Polynomial spline confidence bands for
  regression curves,} \textit{Statistica Sinica}, 19, 325--342.

\bibitem[{Wang and Yang(2010)}]{Wang:Yang:10}
Wang, L. and Yang, L. (2010), \enquote{Simultaneous confidence bands for
  time-series prediction function,} \textit{Journal of Nonparametric
  Statistics}, 22, 999--1018.

\bibitem[{Worsley et~al.(2004)Worsley, Taylor, Tomaiuolo, and
  Lerch}]{Worsley:etal:04}
Worsley, K.~J., Taylor, J.~E., Tomaiuolo, F., and Lerch, J. (2004),
  \enquote{Unified univariate and multivariate random field theory,}
  \textit{NeuroImage}, 23, S189--S195, mathematics in Brain Imaging.

\bibitem[{Yao et~al.(2005)Yao, M{\"u}ller, and Wang}]{Yao:Muller:Wang:05}
Yao, F., M{\"u}ller, H.-G., and Wang, J.-L. (2005), \enquote{Functional data
  analysis for sparse longitudinal data,} \textit{Journal of the American
  Statistical Association}, 100, 577--590.

\bibitem[{Yu et~al.(2019{\natexlab{a}})Yu, Wang, Wang, Liu, and
  Yang}]{Yu:etal:19}
Yu, S., Wang, G., Wang, L., Liu, C., and Yang, L. (2019{\natexlab{a}}),
  \enquote{Estimation and Inference for Generalized Geoadditive Models,}
  \textit{Journal of the American Statistical Association}, 0, 1--27.

\bibitem[{Yu et~al.(2019{\natexlab{b}})Yu, Wang, Wang, Yang, and
  Initiative}]{Yu:etal:18}
Yu, S., Wang, G., Wang, L., Yang, L., and Initiative, A. D.~N.
  (2019{\natexlab{b}}), \enquote{Estimation and inference for spatially varying
  coefficient models with applications to image-on-scalar regression,}
  Submitted.

\bibitem[{Zheng et~al.(2014)Zheng, Yang, and H{\"a}rdle}]{Zheng:Yang:Hardle:14}
Zheng, S., Yang, L., and H{\"a}rdle, W.~K. (2014), \enquote{A smooth
  simultaneous confidence corridor for the mean of sparse functional data,}
  \textit{Journal of the American Statistical Association}, 109, 661--673.

\bibitem[{Zhu et~al.(2012)Zhu, Li, and Kong}]{Zhu:Li:Kong:12}
Zhu, H., Li, R., and Kong, L. (2012), \enquote{Multivariate varying coefficient
  model for functional responses,} \textit{The Annals of statistics}, 40,
  2634--2666.

\end{thebibliography}

\end{document}